\providecommand{\U}[1]{\protect\rule{.1in}{.1in}}
\newtheorem{theorem}{Theorem}
\newtheorem{corollary}{Corollary}
\newtheorem{definition}{Definition}
\newtheorem{lemma}{Lemma}
\newtheorem{proposition}{Proposition}
\newtheorem{remark}{Remark}
\newenvironment{proof}[1][Proof]{\noindent\textbf{#1.} }{\ \rule{0.5em}{0.5em}}
\begin{document}
\preprint{ }
\title[Resource theory of asymmetric distinguishability for quantum channels]{Resource theory of asymmetric distinguishability for quantum channels}
\author{Xin Wang}
\email{wangxinfelix@gmail.com}
\affiliation{Joint Center for Quantum Information and Computer Science, University of
Maryland, College Park, Maryland 20742, USA}
\affiliation{Institute for Quantum Computing, Baidu Research, Beijing 100193, China}
\author{Mark M. Wilde}
\email{mwilde@lsu.edu}
\affiliation{Hearne Institute for Theoretical Physics, Department of Physics and Astronomy,
Center for Computation and Technology, Louisiana State University, Baton
Rouge, Louisiana 70803, USA}
\keywords{}
\pacs{}

\begin{abstract}
This paper develops the resource theory of asymmetric distinguishability for
quantum channels, generalizing the related resource theory for states
[Matsumoto, arXiv:1010.1030; Wang and Wilde, Phys.~Rev.~Research~1, 033170 (2019)]. The key constituents of the channel
resource theory are quantum channel boxes, consisting of a pair of quantum
channels, which can be manipulated for free by means of an arbitrary quantum
superchannel (the most general physical transformation of a quantum channel).
One main question of the resource theory is the approximate channel box
transformation problem, in which the goal is to transform an initial channel
box (or boxes) to a final channel box (or boxes), while allowing for an
asymmetric error in the transformation. The channel resource theory is richer
than its counterpart for states because there is a wider variety of ways in
which this question can be framed, either in the one-shot or $n$-shot regimes,
with the latter having parallel and sequential variants. As in our prior work
[Wang and Wilde, Phys.~Rev.~Research~1, 033170 (2019)], we consider two special cases of the general channel box
transformation problem, known as distinguishability distillation and
dilution. For the one-shot case, we find that the optimal
values of the various tasks are equal to the non-smooth or smooth channel min-
or max-relative entropies, thus endowing all of these quantities with
operational interpretations. In the asymptotic sequential setting, we prove
that the exact distinguishability cost is equal to the channel max-relative
entropy and the distillable distinguishability is equal to the amortized
channel relative entropy of [arXiv:1808.01498]. This latter result can also be
understood as a solution to Stein's lemma for quantum channels in the
sequential setting. Finally, the theory simplifies significantly for
environment-seizable and classical--quantum channel boxes.

\end{abstract}
\date{\today}
\startpage{1}
\endpage{10}
\maketitle

%\tableofcontents

\section{Introduction}

In many scientific fields of interest, distinguishability is an important
concept. More generally, it can be considered as a resource in that it allows
for making decisions, and furthermore, the more distinguishable that two
possibilities are, the easier and faster it is to make a decision.

In a recent paper, we formalized the notion of distinguishability as a
resource by developing the resource theory of asymmetric distinguishability in
detail \cite{WW19states}, following the original proposal from
\cite{Mats10,M11}. This resource theory demonstrates that distinguishability
is truly a fundamental resource that can be manipulated and interconverted
into different forms. The benefit of developing this resource theory is that,
not only can fundamental tasks such as quantum hypothesis testing \cite{HP91,ON00,Hay03,Hay04,WR12,Hay17} be recast
into an intuitive approach based on resource-theoretic thinking, but also new
information processing tasks emerge, such as distinguishability dilution,
which is related to concepts such as simulation and synthesis of quantum
states. The present paper illustrates further benefits of the
resource-theoretic approach by using it to solve some outstanding questions in
the theory of quantum channel discrimination.

In the resource theory of asymmetric distinguishability for states \cite{WW19states}, the basic object to
be manipulated is a quantum \textquotedblleft box\textquotedblright%
\ $(\rho,\sigma)$\ consisting of two quantum states $\rho$ and $\sigma$. The descriptor ``asymmetric'' applies to this resource theory because it allows for a slight error in the transformation of the first state of the box, while not allowing for any error in the transformation of the second state of the box.   One
basic task is to distill as many bits of asymmetric distinguishability as
possible from this box by processing it with an arbitrary quantum channel \cite{WW19states}.
Another basic task is to dilute bits of asymmetric distinguishability to
prepare the box $(\rho,\sigma)$, with the goal being to use as few bits of
asymmetric distinguishability as possible in order to do so \cite{WW19states}. These tasks give
operational meaning to fundamental entropic measures such as the min-relative
entropy \cite{D09}, the smooth min-relative entropy \cite{BD10,BD11,WR12}, the
max-relative entropy \cite{D09}, and the smooth max-relative entropy
\cite{D09}. One of the core results for this resource theory is that it is
reversible, and the fundamental rate of interconversion is characterized by
the quantum relative entropy \cite{WW19states}.

The main goal of the present paper is to generalize these concepts from
quantum states to quantum channels, given the prominent role of the latter in
quantum information and beyond. We note here that recently there has been much effort
more generally in extending concepts from resource theories of quantum states
to resource theories for quantum channels (see, e.g.,
\cite{BHLS03,BenDana2017,DBW17,GFWRSCW18,LBL18,BDW18,TR19,TEZP19,WW19PPT,Seddon2019,WWS19channels,LY19,LW19,YLZRTG19}%
). In the resource theory of asymmetric distinguishability for quantum
channels presented here, the basic object to be manipulated is a quantum
channel box $(\mathcal{N},\mathcal{M})$, which consists of two quantum
channels $\mathcal{N}$ and $\mathcal{M}$. The idea is that the input and
output ports of the channel box are accessible to an agent in the resource
theory, while the particular choice of the channel is unknown to the agent. A
key difference between this resource theory and the former one for quantum
states is that a quantum channel can be probed by means of both an input port
and an output port, which implies that the way they are manipulated is by
means of a quantum superchannel \cite{CDP08}. As a simple example of a superchannel, consider that the encoding and decoding, i.e., pre- and post-processing, of a channel commonly employed in quantum Shannon theory \cite{W17book} realize a physical transformation of a channel.  The incorporation of superchannels into the resource theory implies that the channel resource
theory is more involved\ than it is for boxes consisting only of states.

More
generally, we allow for quantum strategy boxes
\cite{GW07,CDP08a,CDP09,G09,G12,Gutoski2018fidelityofquantum} and manipulate
them by means of general physical transformations \cite{CDP09} that take
quantum strategies to other quantum strategies (note that quantum strategies
are in one-to-one correspondence with quantum combs \cite{CDP09}). By the results of \cite{CDP09}, such physical transformations are in fact quantum strategies themselves, so that our generalization of the resource theory to quantum strategies is a significant generalization.

We consider several fundamental tasks in this resource theory, which can be
understood as extensions of the tasks considered in \cite{WW19states}. The
first basic one is distinguishability distillation, in which the goal is to
distill as many bits of asymmetric distinguishability as possible from a
single channel box in the one-shot setting, or multiple channel boxes in the
$n$-shot setting. This task is intimately related to asymmetric hypothesis
testing for quantum channels \cite{Cooney2016} (see \cite{Hayashi09} for the
classical case), which is a particular kind of quantum channel discrimination.
For this task, there are a variety of possibilities to consider, including the
one-shot case and the $n$-shot case, in the latter using either a parallel or
sequential strategy \cite{GW07,CDP08a,G09,Duan09,Harrow10,G12,Cooney2016}. We
also consider this task for quantum strategy boxes. Another basic task of
interest is distinguishability dilution, in which the goal is to dilute bits
of asymmetric distinguishability to a single or multiple channel boxes, using
as few bits of asymmetric distinguishability as possible. This task also has a
variety of possibilities, including one- and $n$-shot, the latter having
parallel and sequential variants as well. We likewise consider this task for
quantum strategy boxes. This task is also intimately related to quantum
channel simulation
\cite{ieee2002bennett,BDHSW09,BCR09,BBCW13,BRW14,B13,BenDana2017,GFWRSCW18,FBB18,FWTB18,Wilde2018}%
, but here takes on a specific form due to the structure of the resource
theory of asymmetric distinguishability.

One of the major tasks in this resource theory is to convert one channel box
to another, doing so either exactly or approximately. As a variant of this
problem, another task is to determine the rate at which it is possible to
convert $n$ channel boxes, with each box consisting of the same pair of
channels, to $m$ boxes consisting of another pair of channels, when $n$ is
allowed to be arbitrarily large. More generally, we consider the
conversion of an $n$-round quantum strategy box to an $m$-round strategy box.
The simpler transformation problem for state boxes was solved in \cite{WW19states} and is
relevant for addressing the channel box transformation problem for particular
channel boxes that are environment-seizable, as defined in~\cite{BHKW18}.

\section{Summary of results}

We now summarize the main contributions and results of our paper:

\begin{enumerate}
\item We establish the resource theory of asymmetric distinguishability for
quantum channels, with the basic objects being quantum channel boxes, the free
operations to manipulate them being quantum superchannels \cite{CDP08}, and
the basic units of currency being bits of asymmetric distinguishability (see
Section~\ref{sec:channel-res-theory}). Later we accomplish the same for
quantum strategy boxes, with the free operations to manipulate them being
quantum strategies (see Section~\ref{sec:seq-case-gen-trans}).

\item We prove that the approximate channel box transformation problem is
characterized by a semi-definite program and thus can be calculated
efficiently with respect to the input and output dimensions of the channels   (see
Section~\ref{sec:ch-box-tr-problem}).

\item The exact one-shot distillable distinguishability of a quantum channel
box is equal to the channel min-relative entropy, which is a particular case
of the generalized channel divergence of \cite{Cooney2016,LKDW18}. The exact
one-shot distinguishability cost of a quantum channel box is equal to the
channel max-relative entropy, which is a particular case of the generalized
channel divergence of \cite{Cooney2016,LKDW18} and explored in more detail in
\cite{GFWRSCW18,BHKW18}. See
Section~\ref{sec:exact-1shot-dilution-distillation} for both of these results.

\item The approximate one-shot distillable distinguishability of a quantum
channel box is equal to the smooth channel min-relative entropy of
\cite{Cooney2016}, the latter also known as channel hypothesis testing
relative entropy \cite{Cooney2016}. The approximate one-shot
distinguishability cost of a quantum channel box is equal to the 
smooth channel max-relative entropy, again a particular case of the generalized channel
divergence of \cite{LKDW18}\ and explored in more detail in \cite{GFWRSCW18}.
See Section~\ref{sec:approx-1shot-dilution-distillation} for both of these results.

\item We consider asymptotic parallel versions of the above tasks in
Section~\ref{sec:parallel-asymptotic-tasks}. We find that the exact
distillable distinguishability is given by the regularized channel
min-relative entropy (see Section~\ref{sec:exact-parallel-distill}). By means
of an example from \cite{Acin01}, we conclude that the regularization seems to be
necessary because the channel min-relative entropy is highly non-additive. We
then prove that the exact distinguishability cost is equal to the channel
max-relative entropy (see Section~\ref{sec:exact-parallel-cost}). The
distillable distinguishability is equal to the regularized channel relative
entropy (see Section~\ref{sec:approx-parallel-distill}), and the same
quantity is a lower bound on the distinguishability cost (see
Section~\ref{sec:approx-parallel-cost}). These latter operational tasks
simplify for both environment-seizable and classical--quantum channel boxes.

\item Section~\ref{sec:gen-box-trans-parallel}\ considers the asymptotic
parallel version of the general channel box transformation problem, giving
basic definitions and some bounds that apply to this case. Again, the results
simplify for the case of environment-seizable and classical--quantum channel boxes.

\item Section~\ref{sec:seq-case-gen-trans} considers the quantum strategy box
transformation problem. To begin with, this section introduces the generalized
quantum strategy divergence as a generalization of the strategy distance of
\cite{CDP08a,CDP09,G12} and establishes a data processing inequality for this
distinguishability measure. The section then establishes several bounds on how
well one can perform a physical transformation from one strategy box to
another strategy box. All of the results apply to sequential channel boxes
because these are special cases of strategy boxes. Furthermore, we consider an
asymptotic version of the box transformation problem for sequential channel
boxes and prove concrete results for environment-seizable and
classical--quantum channel boxes.

\item We then consider distillation and dilution of strategy boxes in
Section~\ref{sec:dist-dil-strat-seq-boxes}. Our key results here, specialized
to sequential channel boxes, include single-letter formulas for the asymptotic exact
sequential distinguishability cost and the asymptotic sequential distillable
distinguishability, expressed respectively as the channel max-relative entropy
and the amortized channel relative entropy of \cite{BHKW18}, giving these
quantities fundamental operational interpretations in the resource theory of
asymmetric distinguishability. The latter result can be alternatively
understood as a solution to Stein's lemma for quantum channels in the
sequential setting.
\end{enumerate}

In the rest of the paper, we discuss details of the resource theory of
asymmetric distinguishability for quantum channels, as well as the
contributions listed above.

\section{Resource theory of asymmetric distinguishability for quantum
channels}

\label{sec:channel-res-theory}We begin by generalizing the resource theory of
asymmetric distinguishability from \cite{WW19states}\ to the setting of
quantum channels, by considering a channel box of the following form:%
\begin{equation}
(\mathcal{N},\mathcal{M}), \label{eq:channel-box}%
\end{equation}
where $\mathcal{N}$ and $\mathcal{M}$ are quantum channels, each acting on an
input system $A$ and outputting a system $B$. Recall that a quantum channel is a completely positive, trace-preserving (CPTP) map. We also write these as
$\mathcal{N}_{A\rightarrow B}$ and $\mathcal{M}_{A\rightarrow B}$ in what
follows in order to indicate the input and output systems explicitly. The
channel box generalizes the state box $(\rho,\sigma)$ from \cite{WW19states},
which consists of a pair of quantum states $\rho$ and $\sigma$. In fact, a
state box is a special case of a channel box in which the input systems are trivial.

One interpretation of the channel box in \eqref{eq:channel-box} is that a
distinguisher is allowed to prepare any state $\rho_{RA}$ of a reference
system $R$ and the channel input $A$, either the channel $\mathcal{N}%
_{A\rightarrow B}$ or $\mathcal{M}_{A\rightarrow B}$ is applied, and then the
distinguisher is allowed to perform any post-processing on the reference $R$
and the channel output $B$ in order to decide which channel was applied. That
is, by inputting an arbitrary state $\rho_{RA}$ to the channel box
(pre-processing) and then applying the channel $\mathcal{P}_{RB\rightarrow S}$
(post-processing), one can transform it to the following state box:%
\begin{equation}
(\mathcal{P}_{RB\rightarrow S}(\mathcal{N}_{A\rightarrow B}(\rho
_{RA})),\mathcal{P}_{RB\rightarrow S}(\mathcal{M}_{A\rightarrow B}(\rho
_{RA}))).
\end{equation}
More generally, the agent who has access to the channel box in
\eqref{eq:channel-box} can perform a quantum superchannel \cite{CDP08}\ on it
in order to transform it to another channel box, as discussed in
Section~\ref{sec:superchannel}\ below.

As stated earlier, the channel box in \eqref{eq:channel-box} indeed
generalizes the state box $(\rho,\sigma)$ considered previously in
\cite{WW19states}. Another way of seeing this is to take the channels
$\mathcal{N}$ and $\mathcal{M}$ in \eqref{eq:channel-box} to be replacer
channels with the following action:%
\begin{align}
\mathcal{N}_{A\rightarrow B}(\omega_{A})  &  =\operatorname{Tr}_{A}[\omega
_{A}]\rho_{B},\label{eq:replacer-ch-1}\\
\mathcal{M}_{A\rightarrow B}(\omega_{A})  &  =\operatorname{Tr}_{A}[\omega
_{A}]\sigma_{B}. \label{eq:replacer-ch-2}%
\end{align}
Then no matter what state $\tau_{RA}$ is input to the channel box
$(\mathcal{N},\mathcal{M})$, it reduces to the state box $(\tau_{R}\otimes
\rho_{A},\tau_{R}\otimes\sigma_{B})$, which, by the discussion in
\cite[Section~III]{WW19states},\ is equivalent by a free operation  to the state box $(\rho_{B},\sigma_{B})$.

\subsection{Environment-parametrized and -seizable channels}

\label{sec:env-param-seize}

Other simple classes of channel boxes that are strongly related to state
boxes, generalizing the above example of a replacer channel box in
\eqref{eq:replacer-ch-1}--\eqref{eq:replacer-ch-2}, include those that are
environment parametrized \cite{TW2016}\ and the subclass of
environment-seizable channel boxes \cite{BHKW18}. Note that 
environment-parametrized channel boxes are related to
programmable channels \cite{NC97,DP05}.

A channel box $(\mathcal{N}_{A\rightarrow B},\mathcal{M}_{A\rightarrow B}%
)$\ is \textit{environment parametrized} with associated environment states $\rho_E$ and $\sigma_E$ if there exists a common interaction
channel $\mathcal{P}_{AE\rightarrow B}$  such that
\begin{align}
\mathcal{N}_{A\rightarrow B}(\omega_{A})  &  =\mathcal{P}_{AE\rightarrow
B}(\omega_{A}\otimes\rho_{E}), \label{eq:envir-param-1}\\
\mathcal{M}_{A\rightarrow B}(\omega_{A})  &  =\mathcal{P}_{AE\rightarrow
B}(\omega_{A}\otimes\sigma_{E}) \label{eq:envir-param-2}
\end{align}
for all inputs $\omega_{A}$ \cite{TW2016}. In this way, any pre-processing of an
environment-parametrized channel box as%
\begin{equation}
(\mathcal{N}_{A\rightarrow B},\mathcal{M}_{A\rightarrow B})\rightarrow
(\mathcal{N}_{A\rightarrow B}(\omega_{A}),\mathcal{M}_{A\rightarrow B}%
(\omega_{A}))\
\end{equation}
can be viewed as a postprocessing of the state box $(\rho_{E},\sigma_{E})$,
via%
\begin{align}
(\rho_{E},\sigma_{E})  &  \rightarrow(\omega_{A}\otimes\rho_{E},\omega
_{A}\otimes\sigma_{E})\nonumber\\
&  \rightarrow(\mathcal{P}_{AE\rightarrow B}(\omega_{A}\otimes\rho
_{E}),\mathcal{P}_{AE\rightarrow B}(\omega_{A}\otimes\sigma_{E}))\nonumber\\
&  =(\mathcal{N}_{A\rightarrow B}(\omega_{A}),\mathcal{M}_{A\rightarrow
B}(\omega_{A})), \label{eq:post-process-env-param}%
\end{align}
so that the distinguishability of the channel box $(\mathcal{N},\mathcal{M})$
is always limited by that of the state box $(\rho_{E},\sigma_{E})$, as
observed in \cite{TW2016} (see \cite{JWDFY08,DM14}\ for related observations
in quantum estimation theory).

We should emphasize that an arbitrary channel box $(\mathcal{N},\mathcal{M})$ is environment-parametrized with associated environment states that are orthogonal \cite{DW17}. That is, we can set $\rho_E = |0\rangle \langle 0 |_E$ and $\sigma_E = |1\rangle \langle 1|_E$ and the common interaction channel $\mathcal{P}_{AE \to B}$ as
\begin{multline}
\mathcal{P}_{AE \to B} (\zeta_{AE}) = \mathcal{N}_{A\to B}(\langle 0 |_E \zeta_{AE} | 0 \rangle_E) \\+ \mathcal{M}_{A\to B}(\langle 1 |_E \zeta_{AE} | 1 \rangle_E).
\label{eq:trivial-env-param-int-ch}
\end{multline}
In this way, the channels $\mathcal{N}_{A\to B}$ and $\mathcal{M}_{A \to B}$ are realized as in \eqref{eq:envir-param-1}--\eqref{eq:envir-param-2}, by starting from the state box $(|0\rangle \langle 0 |_E, |1\rangle \langle 1 |_E)$ and applying the common interaction channel $\mathcal{P}_{AE \to B}$ in \eqref{eq:trivial-env-param-int-ch}. However, this realization of the channels is the least efficient from the perspective of the resource theory of asymmetric distinguishability, because a state box consisting of a pair of orthogonal states is equivalent to an infinite number of bits of asymmetric distinguishability \cite{WW19states}. (See \cite{WW19states} for the notion of  bits of asymmetric distinguishability, and Section~\ref{sec:one-shot-dist-dil} for this notion in the channel resource theory.) In this sense, the realization of an arbitrary channel box in the above way is trivial because it requires an infinite number of bits of asymmetric distinguishability in order to do so. The concept of environment-parametrized channel boxes becomes non-trivial when the background environment states have finite distinguishability, when measured according to some divergence, so that the channel box can be realized starting from a finite number of bits of asymmetric distinguishability.

\textit{Environment-seizable channel boxes} are defined to be
environment-parametrized with associated environment states $\rho_E$ and $\sigma_E$ and additionally have the property that it is
possible to find a common pre- and post-processing of the channel box
$(\mathcal{N}_{A\rightarrow B},\mathcal{M}_{A\rightarrow B})$ to retrieve the
state box $(\rho_{E},\sigma_{E})$ from it \cite{BHKW18}. That is, for
environment-seizable channels, there exists a common input state $\tau_{RA}$
and a common post-processing channel $\mathcal{D}_{RB\rightarrow E}$ such that%
\begin{align}
\mathcal{D}_{RB\rightarrow E}(\mathcal{N}_{A\rightarrow B}(\tau_{RA}))  &
=\rho_{E},\\
\mathcal{D}_{RB\rightarrow E}(\mathcal{M}_{A\rightarrow B}(\tau_{RA}))  &
=\sigma_{E}.
\end{align}

In this way, we have the following equivalence for environment-seizable
channels:%
\begin{equation}
(\mathcal{N}_{A\rightarrow B},\mathcal{M}_{A\rightarrow B})\leftrightarrow
(\rho_{E},\sigma_{E}), \label{eq:env-seize-ch-boxes}%
\end{equation}
with the direction $\leftarrow$\ of the equivalence following from
\eqref{eq:post-process-env-param} and the other direction $\rightarrow
$\ following from the seizable property. Thus, environment-seizable channel
boxes represent a broader generalization of state boxes than do channel boxes
consisting of replacer channels. Furthermore, environment-seizable channel boxes are fully identified with the background environment states $\rho_E$ and $\sigma_E$ in the above sense. As we show later, and as observed in earlier
work \cite{TW2016,BHKW18}, the equivalence in
\eqref{eq:env-seize-ch-boxes}\ simplifies the resource theory of asymmetric
distinguishability significantly for environment-seizable channel boxes. Finally, several examples of environment-seizable channel boxes were presented in \cite{BHKW18}, and the notion of environment-seizable channel boxes is related to the notion of resource-seizable channels from~\cite{Wilde2018}.

\subsection{Superchannels as transformations of channel boxes}

\label{sec:superchannel}The most general physical transformation allowed on a channel
box is a superchannel $\Theta$, which is a quantum physical transformation of
channels \cite{CDP08}. That is, a superchannel is a linear map that preserves
the set of quantum channels, even when the quantum channel is an arbitrary
bipartite channel with external input and output systems that are arbitrarily
large. In this sense, superchannels are completely CPTP\ preserving. Note that the terminology ``superchannel'' was introduced in~\cite{G18}.

\begin{figure}[ptb]
\begin{center}
\includegraphics[
width=\linewidth
]{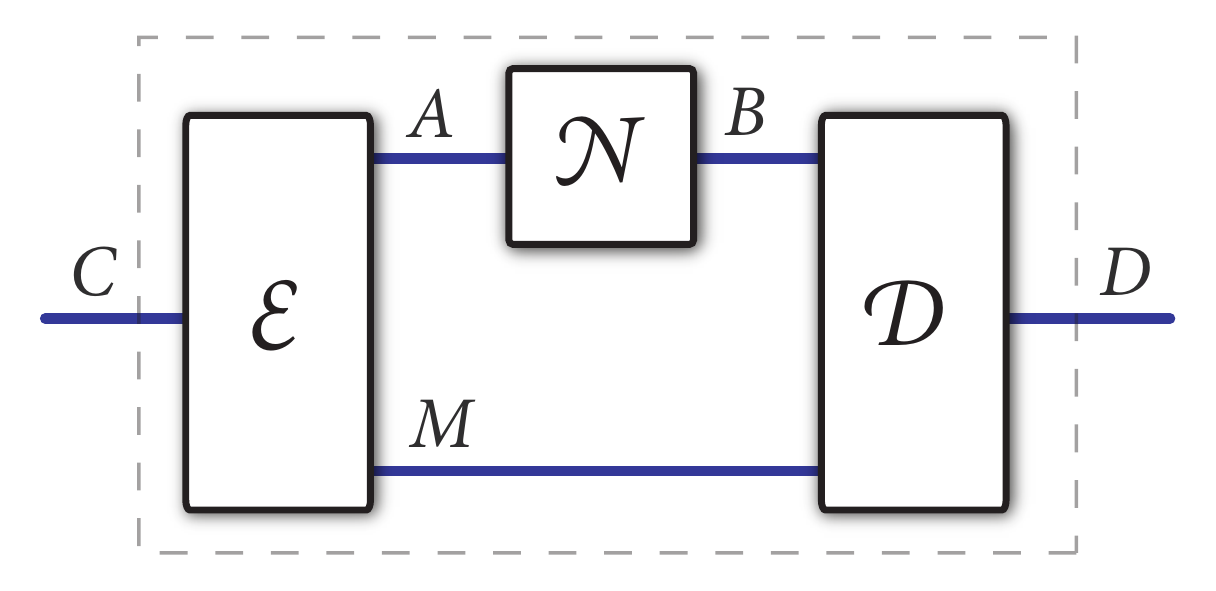}
\end{center}
\caption{The figure depicts the transformation of a channel $\mathcal{N}_{A\to B}$ by a superchannel $\Theta_{(A\to B) \to (C\to D)}$, the latter of which consists of the pre-processing channel $\mathcal{E}_{C\to AM}$ and the post-processing channel $\mathcal{D}_{BM\to D}$.}%
\label{fig:superchannel}%
\end{figure}

To see this, a superchannel $\Theta_{\left(  A\rightarrow B\right)
\rightarrow\left(  C\rightarrow D\right)  }$ takes as input a quantum channel
$\mathcal{N}_{A\rightarrow B}$ and outputs a quantum channel $\mathcal{K}%
_{C\rightarrow D}$, which we denote by%
\begin{equation}
\Theta_{\left(  A\rightarrow B\right)  \rightarrow\left(  C\rightarrow
D\right)  }(\mathcal{N}_{A\rightarrow B})=\mathcal{K}_{C\rightarrow D}.
\end{equation}
The superchannel $\Theta_{\left(  A\rightarrow B\right)  \rightarrow\left(
C\rightarrow D\right)  }$ is completely CPTP\ preserving in the sense that the
following output channel%
\begin{equation}
\left(  \operatorname{id}_{\left(  R\right)  \rightarrow\left(  R\right)
}\otimes\Theta_{\left(  A\rightarrow B\right)  \rightarrow\left(  C\rightarrow
D\right)  }\right)  (\mathcal{M}_{RA\rightarrow RB}),
\end{equation}
is a CPTP map for all input quantum channels $\mathcal{M}_{RA\rightarrow RB}$,
where $\operatorname{id}_{\left(  R\right)  \rightarrow\left(  R\right)  }$
denotes the identity superchannel \cite{CDP08}.

One of the fundamental theorems of superchannels is that each superchannel
$\Theta_{\left(  A\rightarrow B\right)  \rightarrow\left(  C\rightarrow
D\right)  }$ has a physical realization as a pre- and post-processing of the
channel $\mathcal{N}_{A\rightarrow B}$ along with a quantum memory system:%
\begin{multline}
\Theta_{\left(  A\rightarrow B\right)  \rightarrow\left(  C\rightarrow
D\right)  }(\mathcal{N}_{A\rightarrow B})\label{eq:pre-post-proc-superch}\\
=\mathcal{D}_{BM\rightarrow D}\circ\mathcal{N}_{A\rightarrow B}\circ
\mathcal{E}_{C\rightarrow AM},
\end{multline}
where $\mathcal{E}_{C\rightarrow AM}$ and $\mathcal{D}_{BM\rightarrow D}$ are
pre- and post-processing quantum channels, respectively \cite{CDP08}. 
This transformation is depicted in Figure~\ref{fig:superchannel}.

\section{General channel box transformation problem}

\label{sec:ch-box-tr-problem}We can now state one main problem for the
resource theory of asymmetric distinguishability for quantum channels, which
we call the channel box transformation problem. The goal of this problem is to
determine, for an input channel box $(\mathcal{N}_{A\rightarrow B}%
,\mathcal{M}_{A\rightarrow B})$ and an output channel box $(\mathcal{K}%
_{C\rightarrow D},\mathcal{L}_{C\rightarrow D})$, whether there exists a
superchannel $\Theta_{\left(  A\rightarrow B\right)  \rightarrow\left(
C\rightarrow D\right)  }$ such that the following transformation is possible:%
\begin{equation}
(\mathcal{N}_{A\rightarrow B},\mathcal{M}_{A\rightarrow B})~\underrightarrow
{\Theta}~(\mathcal{K}_{C\rightarrow D},\mathcal{L}_{C\rightarrow D}),
\label{eq:ch-box-tr-prob}%
\end{equation}
where the notation means that the following equations should be satisfied%
\begin{align}
\Theta_{\left(  A\rightarrow B\right)  \rightarrow\left(  C\rightarrow
D\right)  }(\mathcal{N}_{A\rightarrow B})  &  =\mathcal{K}_{C\rightarrow D},\\
\Theta_{\left(  A\rightarrow B\right)  \rightarrow\left(  C\rightarrow
D\right)  }(\mathcal{M}_{A\rightarrow B})  &  =\mathcal{L}_{C\rightarrow D}.
\end{align}
This problem was introduced and solved in \cite{G18}, in the sense that the answer to this
question can be determined by means of a semi-definite program or by employing
the extended conditional min-entropy and a quantum dynamic generalization of majorization. The problem there was called ``comparison of quantum channels.'' Note that the simpler problem regarding transformation of state boxes via a common quantum channel has a long history, having been considered extensively both in classical and quantum information theory \cite{B53,AU80,CJW04,MOA11,Buscemi2012,HJRW12,BDS14,BHNOW15,Ren16,BD16,Buscemi2016,GJBDM18,B17,BG17}.

In many cases of interest, the transformation in \eqref{eq:ch-box-tr-prob} is
simply not possible. Thus, it is sensible to modify the problem to allow for
approximation, and the way that we do so is consistent with how we did so for the related problem in
the resource theory of asymmetric distinguishability for states
\cite{WW19states}. Namely, we allow for an approximation error in the
transformation of the first channel in the box, but we demand that the second
channel be simulated exactly (hence the descriptor ``asymmetric'' in ``resource theory of asymmetric distinguishability''). Mathematically, this corresponds to the
following optimization problem:%
\begin{multline}
\varepsilon((\mathcal{N},\mathcal{M})\rightarrow(\mathcal{K},\mathcal{L}%
)):=\label{eq:approx-box-trans-prob}\\
\inf_{\Theta\in\mathrm{SC}}\left\{  \varepsilon\in\left[  0,1\right]
:\Theta(\mathcal{N})\approx_{\varepsilon}\mathcal{K},\ \Theta(\mathcal{M}%
)=\mathcal{L}\right\}  ,
\end{multline}
where SC\ denotes the set of superchannels and the shorthand $\mathcal{N}%
^{1}\approx_{\varepsilon}\mathcal{N}^{2}$ for channels $\mathcal{N}^{1}$ and
$\mathcal{N}^{2}$ is defined as follows:%
\begin{equation}
\mathcal{N}^{1}\approx_{\varepsilon}\mathcal{N}^{2}\qquad\Longleftrightarrow
\qquad\frac{1}{2}\left\Vert \mathcal{N}^{1}-\mathcal{N}^{2}\right\Vert
_{\diamond}\leq\varepsilon. \label{eq:approx-channels-sense}%
\end{equation}
In the above, $\left\Vert \mathcal{P}_{A\rightarrow B}\right\Vert _{\diamond}$
denotes the diamond norm \cite{Kitaev1997}\ of a Hermiticity-preserving map
$\mathcal{P}_{A\rightarrow B}$, defined as%
\begin{equation}
\left\Vert \mathcal{P}_{A\rightarrow B}\right\Vert _{\diamond}:=\sup
_{\rho_{RA}}\left\Vert \mathcal{P}_{A\rightarrow B}(\rho_{RA})\right\Vert
_{1},
\end{equation}
where the optimization is with respect to quantum states $\rho_{RA}$ and the
reference system $R$ can be arbitrarily large. However, note that the
following significant simplification holds%
\begin{equation}
\left\Vert \mathcal{P}_{A\rightarrow B}\right\Vert _{\diamond}:=\sup
_{\psi_{RA}}\left\Vert \mathcal{P}_{A\rightarrow B}(\psi_{RA})\right\Vert
_{1},
\end{equation}
where the optimization is with respect to pure-state inputs $\psi_{RA}$ with
the reference system $R$ isomorphic to the input system $A$.

Why do we adopt the diamond norm to measure the distance between two quantum
channels $\mathcal{N}_{A\rightarrow B}$ and $\mathcal{M}_{A\rightarrow B}$?
Related, how should we assess the performance of a quantum information
processing protocol in which the ideal channel to be simulated is
$\mathcal{N}_{A\rightarrow B}$ but the channel realized in practice is
$\mathcal{M}_{A\rightarrow B}$? Suppose that a third party is trying to assess
how distinguishable the actual channel $\mathcal{M}_{A\rightarrow B}$ is from
the ideal channel $\mathcal{N}_{A\rightarrow B}$. Such an individual has
access to both the input and output ports of the channel, and so the most
general strategy for the distinguisher to employ is to prepare a state
$\rho_{RA}$ of a reference system $R$ and the channel input system $A$. The
distinguisher transmits the $A$ system of $\rho_{RA}$ into the unknown
channel. After that, the distinguisher receives the channel output system $B$
and then performs a measurement described by the POVM $\{\Lambda_{RB}%
^{x}\}_{x}$ on the reference system $R$ and the channel output system~$B$. The
probability of obtaining a particular outcome $\Lambda_{RB}^{x}$ is given by
the Born rule. In the case that the unknown channel is $\mathcal{N}%
_{A\rightarrow B}$, this probability is $\operatorname{Tr}[\Lambda_{RB}%
^{x}\mathcal{N}_{A\rightarrow B}(\rho_{RA})]$, and in the case that the
unknown channel is $\mathcal{M}_{A\rightarrow B}$, this probability is
$\operatorname{Tr}[\Lambda_{RB}^{x}\mathcal{M}_{A\rightarrow B}(\rho_{RA})]$.
What we demand is that the deviation between the two probabilities
$\operatorname{Tr}[\Lambda_{RB}^{x}\mathcal{N}_{A\rightarrow B}(\rho_{RA})]$
and $\operatorname{Tr}[\Lambda_{RB}^{x}\mathcal{M}_{A\rightarrow B}(\rho
_{RA})]$ is no larger than some tolerance $\varepsilon$. Since this should be
the case for all possible input states and measurement outcomes, what we
demand mathematically is that
\begin{equation}
\sup_{\substack{\rho_{RA},\,\\\Lambda_{RB}}}|\operatorname{Tr}[\Lambda
_{RB}\mathcal{N}_{A\rightarrow B}(\rho_{RA})]-\operatorname{Tr}[\Lambda
_{RB}\mathcal{M}_{A\rightarrow B}(\rho_{RA})]|\leq\varepsilon,
\end{equation}
where $\rho_{RA}\geq0$, $\operatorname{Tr}[\rho_{RA}]=1$, and $0\leq
\Lambda_{RB}\leq I_{RB}$. As a consequence of a well known characterization of
trace distance from \cite{H69,Hel76}, we have that%
\begin{align}
&  \sup_{\rho_{RA},\Lambda_{RB}}|\operatorname{Tr}[\Lambda_{RB}(\mathcal{N}%
_{A\rightarrow B}-\mathcal{M}_{A\rightarrow B})(\rho_{RA})]|\nonumber\\
&  =\sup_{\rho_{RA}}\frac{1}{2}\left\Vert \mathcal{N}_{A\rightarrow B}%
(\rho_{RA})-\mathcal{M}_{A\rightarrow B}(\rho_{RA})\right\Vert _{1}\\
&  =\frac{1}{2}\left\Vert \mathcal{N}-\mathcal{M}\right\Vert _{\diamond},
\label{eq-diamond_dist_meas}%
\end{align}
where $\frac{1}{2}\left\Vert \mathcal{N}-\mathcal{M}\right\Vert _{\diamond}$
is  the \textit{normalized diamond distance} between
$\mathcal{N}$ and $\mathcal{M}$. This indicates that if $\frac{1}{2}\left\Vert
\mathcal{N}-\mathcal{M}\right\Vert _{\diamond}\leq\varepsilon$, then the
deviation between probabilities for any possible input state and measurement
operator never exceeds $\varepsilon$, so that the approximation between
quantum channels $\mathcal{N}_{A\rightarrow B}$ and $\mathcal{M}_{A\rightarrow
B}$ is naturally quantified by the normalized diamond distance $\frac{1}{2}\left\Vert
\mathcal{N}-\mathcal{M}\right\Vert _{\diamond}$. We note that related
interpretations of the diamond distance of channels have been given in
\cite{KWerner04,RW05,GLN04}.

%\XW{
As we indicated above, the approximate channel box transformation problem is fundamental to the resource theory of asymmetric distinguishability, indicating exactly how well one can convert channel boxes. It captures distinguishability in a fundamental way: as pointed out in \cite{G18}, a necessary condition for a transformation to be possible exactly is if the two channels in one channel box are more distinguishable than the two channels in the target channel box, as quantified by a channel divergence \cite{LKDW18}. Thinking along the lines of \cite{BHNOW15}, these kinds of limitations from channel divergences can be interpreted as ``second laws'' for distinguishability that draw the line between the possible and impossible. As these lines might be too sharp for practical purposes (i.e., if the transformation were to be possible with small error), then it is sensible to consider the relaxation  presented in \eqref{eq:approx-box-trans-prob}. Furthermore, generalizations of the approximate box transformation problem will have applications in other resource theories of channels, such as entanglement, thermodynamics, purity, magic, etc., and therein can also be interpreted as second laws or approximate second laws.
%}

In Appendix~\ref{app:gen-ch-box-trans-SDP}, we show that the optimization in
\eqref{eq:approx-box-trans-prob}\ for the approximate channel box
transformation problem can be calculated by a semi-definite program, and thus
can be efficiently solved, where the complexity of the problem is polynomial
in the dimension of the inputs and outputs of the channels $(\mathcal{N}%
_{A\rightarrow B},\mathcal{M}_{A\rightarrow B})$ and $(\mathcal{K}%
_{C\rightarrow D},\mathcal{L}_{C\rightarrow D})$. This result generalizes the
recent finding in \cite{G18} mentioned after \eqref{eq:ch-box-tr-prob} above.

\section{One-shot distillation and dilution of quantum channel boxes}

\label{sec:one-shot-dist-dil}

Another way of addressing the general approximate channel box transformation
problem, which is helpful for considering asymptotic versions of the problem,
is to break it into two steps, as was done in \cite{WW19states} for the case
of states. Namely, one can first distill a standard channel box from the
original one, and then dilute this standard channel box to the final target
one. In this work, we take the standard channel box to be the following one:%
\begin{equation}
(\mathcal{R}_{|0\rangle\langle0|},\mathcal{R}_{\pi_{M}}),
\label{eq:standard-channel-box}%
\end{equation}
where $\mathcal{R}_{\sigma}$ denotes a replacer channel, which has the following
action on an arbitrary input $\rho$:%
\begin{equation}
\mathcal{R}_{\sigma}(\rho)=\operatorname{Tr}[\rho]\sigma,
\end{equation}
which is simply to discard the input $\rho$ and replace it with a state
$\sigma$. Also, the state $\pi_{M}$ is defined as%
\begin{equation}
\pi_{M}:=\frac{1}{M}|0\rangle\langle0|+\left(  1-\frac{1}{M}\right)
|1\rangle\langle1|,
\end{equation}
for $M\geq1$. Our interpretation of the channel box in
\eqref{eq:standard-channel-box} is that it contains $\log_{2}M$ bits of
asymmetric distinguishability. Since the replacer channel box in
\eqref{eq:standard-channel-box} is equivalent to the state channel box
$(|0\rangle\langle0|,\pi_{M})$, this interpretation is consistent with the
interpretation given in \cite{WW19states}.

\subsection{Exact one-shot distillation and dilution of quantum channel boxes}

\label{sec:exact-1shot-dilution-distillation}A primary goal in this setting is
the task of exact distillation of as many bits of asymmetric
distinguishability as possible, which is similar to the task for states
considered in \cite{WW19states}, but instead we allow for the most general
processing of the channel box according to a superchannel. Mathematically, we
can phrase this problem as the following optimization:%
\begin{multline}
D_{d}^{0}(\mathcal{N},\mathcal{M}):=\label{eq:exact-dist-channels}\\
\log_{2}\sup_{\Theta\in\mathrm{SC}}\{M:\Theta(\mathcal{N})=\mathcal{R}%
_{|0\rangle\langle0|},\ \Theta(\mathcal{M})=\mathcal{R}_{\pi_{M}}\}.
\end{multline}

We also consider exact dilution of the channel box, starting from as few bits
of asymmetric distinguishability as possible. The requirement here is to
convert bits of asymmetric distinguishability by the action of a common
superchannel to the channel box $(\mathcal{N},\mathcal{M})$ exactly, in such a
way that the number of bits $\log_{2}M$ of asymmetric distinguishability is as
small as possible. Mathematically, this corresponds to the following
optimization problem:%
\begin{multline}
D_{c}^{0}(\mathcal{N},\mathcal{M}):=\label{eq:exact-cost-channels}\\
\log_{2}\inf_{\Theta\in\mathrm{SC}}\{M:\Theta(\mathcal{R}_{|0\rangle\langle
0|})=\mathcal{N},\ \Theta(\mathcal{R}_{\pi_{M}})=\mathcal{M}\}.
\end{multline}

Let $D_{\min}(\mathcal{N}\Vert\mathcal{M})$ denote the channel min-relative
entropy, defined as%
\begin{equation}
D_{\min}(\mathcal{N}\Vert\mathcal{M}):=\sup_{\psi_{RA}}D_{\min}(\mathcal{N}%
_{A\rightarrow B}(\psi_{RA})\Vert\mathcal{M}_{A\rightarrow B}(\psi_{RA})),
\label{eq:ch-min-rel-ent}%
\end{equation}
with the min-relative entropy of states $\rho$ and $\sigma$ defined as
\cite{D09}%
\begin{equation}
D_{\min}(\rho\Vert\sigma):=-\log_2\operatorname{Tr}[\Pi_{\rho}\sigma],
\end{equation}
and $\Pi_{\rho}$ is the projection onto the support of $\rho$. Note that the
min-relative entropy of states is also equal to the Petz--R\'enyi relative
entropy of order zero \cite{P85,P86}, as observed in \cite{D09}.

Let $D_{\max}(\mathcal{N}\Vert\mathcal{M})$ denote the channel max-relative
entropy \cite{Cooney2016,LKDW18,GFWRSCW18}, defined as%
\begin{align}
&  D_{\max}(\mathcal{N}\Vert\mathcal{M})\nonumber\\
&  :=\sup_{\psi_{RA}}D_{\max}(\mathcal{N}_{A\rightarrow B}(\psi_{RA}%
)\Vert\mathcal{M}_{A\rightarrow B}(\psi_{RA}))\label{eq:ch-max-rel-ent}\\
&  =D_{\max}(\mathcal{N}_{A\rightarrow B}(\Phi_{RA})\Vert\mathcal{M}%
_{A\rightarrow B}(\Phi_{RA})), \label{eq:dmax-simplify}%
\end{align}
with the maximally entangled state $\Phi_{RA}$ of Schmidt rank $d$ defined as%
\begin{equation}
\Phi_{RA}:=\frac{1}{d}\sum_{i,j}|i\rangle\langle j|_{R}\otimes|i\rangle\langle
j|_{A},
\end{equation}
and the max-relative entropy of states $\rho$ and $\sigma$ defined as
\cite{D09}%
\begin{equation}
D_{\max}(\rho\Vert\sigma):=\inf\left\{  \lambda:\rho\leq2^{\lambda}%
\sigma\right\}  .
\end{equation}
The equality in \eqref{eq:dmax-simplify} was proved in \cite{GFWRSCW18,BHKW18}.

We then have the following fundamental result for exact distillation and dilution:

\begin{theorem}
\label{thm:exact-distill-cost}The exact one-shot distillable
distinguishability of the channel box $(\mathcal{N},\mathcal{M})$ is equal to
the channel min-relative entropy:%
\begin{equation}
D_{d}^{0}(\mathcal{N},\mathcal{M})=D_{\min}(\mathcal{N}\Vert\mathcal{M}),
\label{eq:exact-distill-dist-min}%
\end{equation}
and the exact one-shot distinguishability cost is equal to the channel
max-relative entropy:%
\begin{equation}
D_{c}^{0}(\mathcal{N},\mathcal{M})=D_{\max}(\mathcal{N}\Vert\mathcal{M}).
\label{eq:exact-dist-cost-max}%
\end{equation}

\end{theorem}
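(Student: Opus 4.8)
The plan is to establish each equality via matching bounds. The two converse bounds---$D_d^0(\mathcal{N},\mathcal{M}) \leq D_{\min}(\mathcal{N}\Vert\mathcal{M})$ and $D_c^0(\mathcal{N},\mathcal{M}) \geq D_{\max}(\mathcal{N}\Vert\mathcal{M})$---both follow from one structural fact: the channel min- and max-relative entropies, being generalized channel divergences, are monotone non-increasing under an arbitrary superchannel $\Theta$. This monotonicity is a direct consequence of the pre/post-processing form \eqref{eq:pre-post-proc-superch}: for any state divergence $\mathbf{D}$ obeying data processing and any input state $\phi_{RC}$, setting $\psi_{RMA} := \mathcal{E}_{C\to AM}(\phi_{RC})$ one has $\mathbf{D}(\Theta(\mathcal{N})(\phi_{RC})\Vert\Theta(\mathcal{M})(\phi_{RC})) = \mathbf{D}(\mathcal{D}_{BM\to D}(\mathcal{N}(\psi_{RMA}))\Vert\mathcal{D}_{BM\to D}(\mathcal{M}(\psi_{RMA}))) \leq \mathbf{D}(\mathcal{N}(\psi_{RMA})\Vert\mathcal{M}(\psi_{RMA})) \leq \mathbf{D}(\mathcal{N}\Vert\mathcal{M})$, and taking a supremum over $\phi_{RC}$ gives $\mathbf{D}(\Theta(\mathcal{N})\Vert\Theta(\mathcal{M})) \leq \mathbf{D}(\mathcal{N}\Vert\mathcal{M})$.

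Given this, the distillation converse follows from the chain $\log_2 M = D_{\min}(|0\rangle\langle 0|\Vert\pi_M) = D_{\min}(\mathcal{R}_{|0\rangle\langle 0|}\Vert\mathcal{R}_{\pi_M}) = D_{\min}(\Theta(\mathcal{N})\Vert\Theta(\mathcal{M})) \leq D_{\min}(\mathcal{N}\Vert\mathcal{M})$, valid for every superchannel $\Theta$ feasible in \eqref{eq:exact-dist-channels}; here the second equality reduces the replacer-channel divergence to a state divergence using additivity of $D_{\min}$ and $D_{\min}(\tau\Vert\tau)=0$, while $D_{\min}(|0\rangle\langle 0|\Vert\pi_M) = -\log_2\langle 0|\pi_M|0\rangle = \log_2 M$. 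Taking a supremum over feasible $\Theta$ proves $D_d^0(\mathcal{N},\mathcal{M}) \leq D_{\min}(\mathcal{N}\Vert\mathcal{M})$. The identical argument with $D_{\max}$ replacing $D_{\min}$, using $D_{\max}(|0\rangle\langle 0|\Vert\pi_M)=\log_2 M$, yields $D_c^0(\mathcal{N},\mathcal{M}) \geq D_{\max}(\mathcal{N}\Vert\mathcal{M})$.

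For the distillation achievability, fix a pure state $\psi_{RA}$ and write $\rho := \mathcal{N}_{A\to B}(\psi_{RA})$, $\sigma := \mathcal{M}_{A\to B}(\psi_{RA})$. Take the superchannel whose pre-processing discards its input and prepares $\psi_{RA}$, keeping $R$ as the memory, and whose post-processing performs the binary measurement $\{\Pi_\rho, I-\Pi_\rho\}$ on $RB$, outputting $|0\rangle\langle 0|$ on the first outcome and $|1\rangle\langle 1|$ on the second. Feeding $\mathcal{N}$ through it gives $\mathcal{R}_{|0\rangle\langle 0|}$ (as $\operatorname{Tr}[\Pi_\rho\rho]=1$) and feeding $\mathcal{M}$ gives $\mathcal{R}_{\pi_M}$ with $1/M = \operatorname{Tr}[\Pi_\rho\sigma] = 2^{-D_{\min}(\rho\Vert\sigma)}$, so $D_d^0(\mathcal{N},\mathcal{M}) \geq D_{\min}(\mathcal{N}_{A\to B}(\psi_{RA})\Vert\mathcal{M}_{A\to B}(\psi_{RA}))$; optimizing over $\psi_{RA}$ and using \eqref{eq:ch-min-rel-ent} matches the converse. (Equivalently, pre-processing reduces the channel box to a state box, whence the state-box distillation result of \cite{WW19states} applies.)

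The dilution achievability is the step I expect to require the most care. Set $\lambda := D_{\max}(\mathcal{N}\Vert\mathcal{M})$ and $M := 2^\lambda$, assuming $M<\infty$ (otherwise $D_c^0(\mathcal{N},\mathcal{M})=\infty$ already by the converse). By \eqref{eq:dmax-simplify}, $\mathcal{N}_{A\to B}(\Phi_{RA}) \leq M\,\mathcal{M}_{A\to B}(\Phi_{RA})$, so $\omega_{RB} := (M\,\mathcal{M}_{A\to B}(\Phi_{RA}) - \mathcal{N}_{A\to B}(\Phi_{RA}))/(M-1)$ is positive semidefinite with marginal $I_R/d$ on $R$ (as $\mathcal{N},\mathcal{M}$ are trace preserving), hence is the Choi state of a channel $\mathcal{D}'_{A\to B}$; this yields $\mathcal{M} = \tfrac1M\mathcal{N} + (1-\tfrac1M)\mathcal{D}'$. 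Now take the diluting superchannel $\Theta$ to have trivial pre-processing that routes the input system $A$ into the memory, and post-processing $\mathcal{D}_{YA\to B}$ that measures in the computational basis the qubit $Y$ emitted by the standard channel box and, conditioned on the outcome, applies $\mathcal{N}_{A\to B}$ (outcome $0$) or $\mathcal{D}'_{A\to B}$ (outcome $1$) to $A$. Feeding $\mathcal{R}_{|0\rangle\langle 0|}$ through $\Theta$ reproduces $\mathcal{N}$ exactly, and feeding $\mathcal{R}_{\pi_M}$ reproduces $\tfrac1M\mathcal{N} + (1-\tfrac1M)\mathcal{D}' = \mathcal{M}$, so $D_c^0(\mathcal{N},\mathcal{M}) \leq \log_2 M = D_{\max}(\mathcal{N}\Vert\mathcal{M})$. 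The crux---and the reason the channel max-relative entropy rather than a larger quantity appears---is exactly that \eqref{eq:dmax-simplify} converts the scalar bound $D_{\max}(\mathcal{N}\Vert\mathcal{M})\leq\lambda$ into an honest completely positive, trace-preserving ``remainder'' channel $\mathcal{D}'$.
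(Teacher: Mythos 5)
Your proposal is correct and follows essentially the same route as the paper: the data-processing inequality for channel divergences under superchannels gives both converse bounds (with the replacer-channel box reduced to the state box $(|0\rangle\langle0|,\pi_M)$, which you justify via additivity and $\mathbf{D}(\tau\Vert\tau)=0$ where the paper invokes environment-seizability—the same evaluation), the measure-onto-the-support-of-$\mathcal{N}(\psi_{RA})$ superchannel gives distillation achievability, and the convex decomposition $\mathcal{M}=\tfrac1M\mathcal{N}+(1-\tfrac1M)\mathcal{D}'$ built from the Choi-state inequality \eqref{eq:dmax-simplify} gives the dilution protocol, exactly as in Appendices~\ref{app:ch-min-rel-ent-exact-distillable-dist} and \ref{app:ch-max-rel-ent-exact-dist-cost}.
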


The equality in \eqref{eq:exact-distill-dist-min} is proved in
Appendix~\ref{app:ch-min-rel-ent-exact-distillable-dist}, and the equality in
\eqref{eq:exact-dist-cost-max} is proved in
Appendix~\ref{app:ch-max-rel-ent-exact-dist-cost}.

We remark that it is appealing that the exact one-shot distinguishability cost
of a channel box has a simple characterization in terms of the Choi states of
the channels $\mathcal{N}$ and $\mathcal{M}$, as indicated by the equality in \eqref{eq:dmax-simplify}.

\subsection{Approximate one-shot distillation and dilution of quantum channel
boxes}

\label{sec:approx-1shot-dilution-distillation}We also consider approximate
versions of these tasks. The goal of approximate distillation is to transform
the channel box $(\mathcal{N},\mathcal{M})$ into as many $\varepsilon
$-approximate bits of asymmetric distinguishability as possible.
Mathematically, this corresponds to the following optimization:%
\begin{multline}
D_{d}^{\varepsilon}(\mathcal{N},\mathcal{M}):=\\
\log_{2}\sup_{\Theta\in\mathrm{SC}}\{M:\Theta(\mathcal{N})\approx
_{\varepsilon}\mathcal{R}_{|0\rangle\langle0|},\ \Theta(\mathcal{M}%
)=\mathcal{R}_{\pi_{M}}\}.
\end{multline}

The goal of approximate dilution is to transform as few bits of asymmetric
distinguishability into a channel box $(\widetilde{\mathcal{N}}%
,\mathcal{M})$, such that $\widetilde{\mathcal{N}}\approx
_{\varepsilon}\mathcal{N}$. Mathematically, this corresponds to the following
optimization:%
\begin{multline}
D_{c}^{\varepsilon}(\mathcal{N},\mathcal{M}):=\\
\log_{2}\inf_{\Theta\in\mathrm{SC}}\{M:\Theta(\mathcal{R}_{|0\rangle\langle
0|})\approx_{\varepsilon}\mathcal{N},\ \Theta(\mathcal{R}_{\pi_{M}%
})=\mathcal{M}\}.
\end{multline}

Let $D_{\min}^{\varepsilon}(\mathcal{N}\Vert\mathcal{M})$ denote the 
smooth channel min-relative entropy from \cite{Cooney2016}, defined as%
\begin{equation}
D_{\min}^{\varepsilon}(\mathcal{N}\Vert\mathcal{M}):=\sup_{\psi_{RA}}D_{\min
}^{\varepsilon}(\mathcal{N}_{A\rightarrow B}(\psi_{RA})\Vert\mathcal{M}%
_{A\rightarrow B}(\psi_{RA})), \label{eq:ch-HT-rel-ent}%
\end{equation}
with the optimization being with respect to all pure states $\psi_{RA}$ with
system $R$ isomorphic to the channel input system $A$. The smooth min-relative entropy of states $\rho$ and $\sigma$ is defined as \cite{BD10,BD11,WR12}
\begin{equation}
D_{\min}^{\varepsilon}(\rho\Vert\sigma):=-\log_2 \inf_{\Lambda \geq 0} \{ \operatorname{Tr}[\Lambda \sigma] : 
\Lambda \leq I, \operatorname{Tr}[\Lambda \rho]\geq 1 - \varepsilon \}.
\end{equation}
The quantity $D_{\min}^{\varepsilon}(\rho\Vert\sigma)$ is also known as the hypothesis testing relative entropy \cite{WR12},
and $D_{\min
}^{\varepsilon}(\mathcal{N}\Vert\mathcal{M})$\ is also known as the channel
hypothesis testing relative entropy~\cite{Cooney2016}.

Let $D_{\max}^{\varepsilon}(\mathcal{N}\Vert\mathcal{M})$ denote the smooth channel
 max-relative entropy \cite[Definition~19]{GFWRSCW18}, defined as%
\begin{equation}
D_{\max}^{\varepsilon}(\mathcal{N}\Vert\mathcal{M}):=\inf_{\widetilde
{\mathcal{N}}\approx_{\varepsilon}\mathcal{N}}D_{\max}(\widetilde{\mathcal{N}%
}\Vert\mathcal{M}), \label{eq:ch-smooth-max-rel-ent}%
\end{equation}
 with the optimization being with respect to all quantum channels
$\widetilde{\mathcal{N}}$ satisfying $\widetilde{\mathcal{N}}\approx
_{\varepsilon}\mathcal{N}$, in the sense of \eqref{eq:approx-channels-sense}.
We note here that the smooth channel max-relative entropy has been studied
extensively in \cite{LW19}, in the context of resource erasure.

In Appendix~\ref{app:SDPs-channel-smooth-entropies}, we prove that the 
smooth channel min- and max-relative entropies can be calculated by semi-definite
programs. It follows from these characterizations that the non-smooth
quantities can be as well.

We then have the following result, endowing both the smooth channel min- and
max-relative entropies with fundamental operational meanings in the context of
the resource theory of asymmetric distinguishability:

\begin{theorem}
\label{thm:approx-distill-cost}The approximate one-shot distillable
distinguishability of the channel box $(\mathcal{N},\mathcal{M})$ is equal to
the smooth channel min-relative entropy:%
\begin{equation}
D_{d}^{\varepsilon}(\mathcal{N},\mathcal{M})=D_{\min}^{\varepsilon
}(\mathcal{N}\Vert\mathcal{M}), \label{eq:approx-distill-dist-smooth-min}%
\end{equation}
and the approximate one-shot distinguishability cost is equal to the 
smooth channel max-relative entropy:%
\begin{equation}
D_{c}^{\varepsilon}(\mathcal{N},\mathcal{M})=D_{\max}^{\varepsilon
}(\mathcal{N}\Vert\mathcal{M}). \label{eq:approx-dist-cost-smooth-max}%
\end{equation}

\end{theorem}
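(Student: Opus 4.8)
The plan is to reduce both claims to the corresponding state-box results of \cite{WW19states} together with the exact one-shot results of Theorem~\ref{thm:exact-distill-cost}, using the superchannel realization in \eqref{eq:pre-post-proc-superch}. \emph{For the distillation equality \eqref{eq:approx-distill-dist-smooth-min}:} the key structural observation is that any superchannel sending a channel $(A\rightarrow B)$ to a channel whose input system is trivial acts, by \eqref{eq:pre-post-proc-superch}, as ``prepare a probe state on $RA$ with $R$ serving as the memory, apply the unknown channel, then post-process the output $RB$ by a channel $\mathcal{D}$.'' Since the output channel has trivial input, it is automatically a replacer channel, so $\Theta(\mathcal{N})=\mathcal{R}_{\tau}$ and $\Theta(\mathcal{M})=\mathcal{R}_{\omega}$ with $\tau=\mathcal{D}_{RB\rightarrow Q}(\mathcal{N}_{A\rightarrow B}(\psi_{RA}))$ and $\omega=\mathcal{D}_{RB\rightarrow Q}(\mathcal{M}_{A\rightarrow B}(\psi_{RA}))$. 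Purifying the probe into the memory and using that a pure state has Schmidt rank at most $\dim A$, we may take $\psi_{RA}$ pure with $R\cong A$. Combining this with $\frac{1}{2}\|\mathcal{R}_{\tau}-\mathcal{R}_{\omega}\|_{\diamond}=\frac{1}{2}\|\tau-\omega\|_{1}$, the optimization $D_{d}^{\varepsilon}(\mathcal{N},\mathcal{M})$ decouples into an outer supremum over probes $\psi_{RA}$ and an inner state-box distillation of $(\mathcal{N}_{A\rightarrow B}(\psi_{RA}),\mathcal{M}_{A\rightarrow B}(\psi_{RA}))$; by \cite{WW19states} the inner quantity equals $D_{\min}^{\varepsilon}(\mathcal{N}_{A\rightarrow B}(\psi_{RA})\Vert\mathcal{M}_{A\rightarrow B}(\psi_{RA}))$, and taking the supremum gives $D_{\min}^{\varepsilon}(\mathcal{N}\Vert\mathcal{M})$ by \eqref{eq:ch-HT-rel-ent}. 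Achievability ($\geq$) builds $\Theta$ from an optimal probe and the state-box distillation channel; the converse ($\leq$) reads an arbitrary distillation superchannel as a probe followed by a state-box protocol.

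\emph{For the dilution equality \eqref{eq:approx-dist-cost-smooth-max}:} to show $D_{c}^{\varepsilon}(\mathcal{N},\mathcal{M})\leq D_{\max}^{\varepsilon}(\mathcal{N}\Vert\mathcal{M})$, choose $\widetilde{\mathcal{N}}\approx_{\varepsilon}\mathcal{N}$ with $D_{\max}(\widetilde{\mathcal{N}}\Vert\mathcal{M})$ arbitrarily close to the infimum in \eqref{eq:ch-smooth-max-rel-ent}, apply the exact dilution part of Theorem~\ref{thm:exact-distill-cost} to the box $(\widetilde{\mathcal{N}},\mathcal{M})$ to obtain a superchannel $\Theta$ with $\Theta(\mathcal{R}_{|0\rangle\langle0|})=\widetilde{\mathcal{N}}$ and $\Theta(\mathcal{R}_{\pi_{M}})=\mathcal{M}$ for $\log_{2}M$ equal to (or arbitrarily close to) $D_{\max}(\widetilde{\mathcal{N}}\Vert\mathcal{M})$, and note that this same $\Theta$ is an $\varepsilon$-dilution of $(\mathcal{N},\mathcal{M})$ since $\widetilde{\mathcal{N}}\approx_{\varepsilon}\mathcal{N}$. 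For the reverse inequality, given any dilution superchannel $\Theta$ with parameter $M$, set $\widetilde{\mathcal{N}}:=\Theta(\mathcal{R}_{|0\rangle\langle0|})$, which satisfies $\widetilde{\mathcal{N}}\approx_{\varepsilon}\mathcal{N}$ and $\Theta(\mathcal{R}_{\pi_{M}})=\mathcal{M}$; using that the channel max-relative entropy is monotone under superchannels (being a generalized channel divergence \cite{LKDW18}; equivalently, from \eqref{eq:pre-post-proc-superch} one sees $\widetilde{\mathcal{N}}$ and $\mathcal{M}$ are environment-parametrized with common environment states $|0\rangle\langle0|$ and $\pi_{M}$), one obtains $D_{\max}(\widetilde{\mathcal{N}}\Vert\mathcal{M})\leq D_{\max}(\mathcal{R}_{|0\rangle\langle0|}\Vert\mathcal{R}_{\pi_{M}})=D_{\max}(|0\rangle\langle0|\Vert\pi_{M})=\log_{2}M$, hence $D_{\max}^{\varepsilon}(\mathcal{N}\Vert\mathcal{M})\leq\log_{2}M$; optimizing over $M$ finishes the argument.

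The routine ingredients are the state-box distillation and dilution formulas of \cite{WW19states}, the identities $\frac{1}{2}\|\mathcal{R}_{\tau}-\mathcal{R}_{\omega}\|_{\diamond}=\frac{1}{2}\|\tau-\omega\|_{1}$ and $D_{\max}(|0\rangle\langle0|\Vert\pi_{M})=\log_{2}M$, and additivity of $D_{\max}$ over tensor products. I expect the main obstacle to be the careful bookkeeping that translates superchannel realizations to and from state-box protocols: on the distillation side, justifying that the probe may be taken pure with reference $R\cong A$ so that exactly the channel quantity $D_{\min}^{\varepsilon}(\mathcal{N}\Vert\mathcal{M})$ of \eqref{eq:ch-HT-rel-ent} appears (rather than an a priori larger supremum over mixed, large-reference inputs); and on the dilution side, verifying that feeding a replacer box through a superchannel genuinely yields an environment-parametrized channel pair with environment states $|0\rangle\langle0|$ and $\pi_{M}$, so that the data-processing bound on the channel max-relative entropy is applicable.
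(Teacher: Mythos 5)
Your proposal is correct, and for the dilution equality it coincides with the paper's argument essentially line for line: achievability by running the exact dilution protocol of Theorem~\ref{thm:exact-distill-cost} on a near-optimal $\widetilde{\mathcal{N}}\approx_{\varepsilon}\mathcal{N}$, and the converse by setting $\widetilde{\mathcal{N}}:=\Theta(\mathcal{R}_{|0\rangle\langle0|})$ and invoking monotonicity of $D_{\max}$ of channels under superchannels together with $D_{\max}(\mathcal{R}_{|0\rangle\langle0|}\Vert\mathcal{R}_{\pi_{M}})=\log_{2}M$. For the distillation equality your achievability construction (probe plus the optimal state-box measurement channel) is also the paper's, but your converse takes a somewhat different route: you decompose an arbitrary distillation superchannel into ``probe state followed by a state-box protocol,'' invoke the one-shot state converse of \cite{WW19states}, and then purify the probe and use the Schmidt-rank argument to land on the definition \eqref{eq:ch-HT-rel-ent}; the paper instead applies data processing of the smooth channel min-relative entropy under superchannels, $D_{\min}^{\varepsilon}(\mathcal{N}\Vert\mathcal{M})\geq D_{\min}^{\varepsilon}(\Theta(\mathcal{N})\Vert\mathcal{R}_{\pi_{M}})$, and then lower bounds the right-hand side by $\log_{2}M$ by choosing $\Lambda_{RD}=I_{R}\otimes|0\rangle\langle0|_{D}$ and showing $\langle0|\Theta(\mathcal{N})(\psi_{C})|0\rangle\geq1-\varepsilon$ for every input. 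The one point where your route needs a small patch is the structural claim itself: the standard box consists of replacer channels $\mathcal{R}_{C\rightarrow D}$ with a generally nontrivial input $C$, so the pre-processing $\mathcal{E}_{C\rightarrow AM}$ of an arbitrary distillation superchannel may depend on $C$ and the superchannel is not literally of ``prepare a probe'' form; you must first fix an input state $\tau_{C}$ (which preserves $\Theta(\mathcal{M})(\tau_{C})=\pi_{M}$ exactly and gives $\frac{1}{2}\Vert\Theta(\mathcal{N})(\tau_{C})-|0\rangle\langle0|\Vert_{1}\leq\varepsilon$ from the diamond-norm condition), or equivalently use the equivalence of the replacer box with the state box $(|0\rangle\langle0|,\pi_{M})$, before your decomposition and the purification/Schmidt-rank bookkeeping apply. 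With that step made explicit your argument goes through; the paper's data-processing converse buys a shorter path because the superchannel structure never has to be opened up, while your reduction makes the dependence on the state-box results of \cite{WW19states} fully modular.
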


The equality in \eqref{eq:approx-distill-dist-smooth-min} is proved in
Appendix~\ref{app:ch-smooth-min-rel-ent-approx-distillable-dist}, and the
equality in \eqref{eq:approx-dist-cost-smooth-max} is proved in
Appendix~\ref{app:ch-smooth-max-rel-ent-approx-dist-cost}.

As a consequence of Theorems~\ref{thm:exact-distill-cost}\ and
\ref{thm:approx-distill-cost}, and the facts that%
\begin{align}
\lim_{\varepsilon\rightarrow0}D_{d}^{\varepsilon}(\mathcal{N},\mathcal{M})  &
=D_{d}^{0}(\mathcal{N},\mathcal{M}),\\
\lim_{\varepsilon\rightarrow0}D_{c}^{\varepsilon}(\mathcal{N},\mathcal{M})  &
=D_{c}^{0}(\mathcal{N},\mathcal{M}),
\end{align}
we conclude the following limits:%
\begin{align}
\lim_{\varepsilon\rightarrow0}D_{\min}^{\varepsilon}(\mathcal{N}%
\Vert\mathcal{M})  &  =D_{\min}(\mathcal{N}\Vert\mathcal{M}%
),\label{eq:smooth-to-exact-limits}\\
\lim_{\varepsilon\rightarrow0}D_{\max}^{\varepsilon}(\mathcal{N}%
\Vert\mathcal{M})  &  =D_{\max}(\mathcal{N}\Vert\mathcal{M}).
\label{eq:smooth-to-exact-limits-2}%
\end{align}
We give alternative proofs of these limits in
Appendix~\ref{app:limits-smooth-min-max-to-exact}.

As an application of the operational approach taken here, we arrive at the
following bound relating $D_{\min}^{\varepsilon_{1}}$ and~$D_{\max
}^{\varepsilon_{2}}$:%
\begin{equation}
D_{\min}^{\varepsilon_{1}}(\mathcal{N}\Vert\mathcal{M})\leq D_{\max
}^{\varepsilon_{2}}(\mathcal{N}\Vert\mathcal{M})+\log_{2}\!\left(  \frac
{1}{1-\varepsilon_{1}-\varepsilon_{2}}\right)  ,
\label{eq:smooth-min-max-bnd-ineq}%
\end{equation}
where $\varepsilon_{1},\varepsilon_{2}\geq0$ and $\varepsilon_{1}%
+\varepsilon_{2}<1$. This bound represents a generalization of a related bound
for quantum states in \cite{WW19states}, and it in fact reduces to it when the
channel box $(\mathcal{N},\mathcal{M})$ is environment seizable.

The main idea for arriving at the bound in
\eqref{eq:smooth-min-max-bnd-ineq}\ can be understood as a channel
generalization of the operational argument from \cite{WW19states}. As shown in \cite{WW19states}, any
approximate distillation protocol performed on the state box $(|0\rangle\langle0|,\pi_{M})$ that leads to the state box $(\widetilde{0}_{\varepsilon},\pi_{K})$, for
$\varepsilon\in\lbrack0,1)$ and $\widetilde{0}_{\varepsilon
}$  a state such that $\widetilde{0}_{\varepsilon
} \approx_{\varepsilon} |0\rangle \langle 0|$, is required to obey the bound%
\begin{equation}
\log_{2}K\leq\log_{2}M+\log_{2}(1/\left[  1-\varepsilon\right]  ).
\label{eq:fundamental-limit-2-step}
\end{equation}
One way to realize the full transformation%
\begin{equation}
(|0\rangle\langle0|,\pi_{M})
\rightarrow(\widetilde{0}_{\varepsilon},\pi_{K})
\end{equation}
is to proceed in two steps: use the equivalence $(|0\rangle\langle0|,\pi_{M}) \leftrightarrow (\mathcal{R}_{C\rightarrow D}^{|0\rangle\langle0|},\mathcal{R}_{C\rightarrow
D}^{\pi_{M}})$, first perform an optimal dilution protocol
$(\mathcal{R}_{C\rightarrow D}^{|0\rangle\langle0|},\mathcal{R}_{C\rightarrow
D}^{\pi_{M}})\rightarrow(\widetilde{\mathcal{N}},\mathcal{M})$, where
$\widetilde{\mathcal{N}}$ is a channel satisfying $\widetilde{\mathcal{N}%
}\approx_{\varepsilon_{2}}\mathcal{N}$ such that $\log_{2}M=D_{\max
}^{\varepsilon_{2}}(\mathcal{N}\Vert\mathcal{M})$ and then perform an optimal
distillation protocol $(\mathcal{N},\mathcal{M})\rightarrow(\widetilde{\mathcal{R}}%
_{C\rightarrow D}^{|0\rangle\langle 0|},\mathcal{R}_{C\rightarrow
D}^{\pi_{K}})$ such that $\log_{2}K=D_{\min}^{\varepsilon_{1}}(\mathcal{N}%
\Vert\mathcal{M})$. Finally, we realize the transformation $(\widetilde{\mathcal{R}}%
_{C\rightarrow D}^{|0\rangle\langle 0|},\mathcal{R}_{C\rightarrow
D}^{\pi_{K}})\rightarrow (\widetilde{0}_{\varepsilon_1},\pi_{K})$ by inputting any state to the final channel box. By employing the triangle inequality for the diamond
distance, the error of the overall transformation is no larger than
$\varepsilon_{1}+\varepsilon_{2}$. Since the fundamental limitation in
\eqref{eq:fundamental-limit-2-step}\ applies to any protocol, the bound
in \eqref{eq:smooth-min-max-bnd-ineq} follows.

\section{Parallel $n$-shot distillation and dilution of quantum channel boxes}

\label{sec:parallel-asymptotic-tasks}An important case to consider in the
resource theory of asymmetric distinguishability for channels is the case of
parallel tasks. In particular, we are interested in $n$-shot \textit{parallel}
distillation and dilution of channel boxes, which essentially amounts to the
replacement $(\mathcal{N},\mathcal{M})\rightarrow(\mathcal{N}^{\otimes
n},\mathcal{M}^{\otimes n})$ in our previous one-shot results from Section~\ref{sec:one-shot-dist-dil}. However, here
we are interested in optimal \textit{rates} at which one can distill or dilute
bits of asymmetric distinguishability from or to a channel box, respectively,
both in the exact and approximate cases.

\subsection{Exact case:\ distillable distinguishability}

\label{sec:exact-parallel-distill}We define the $n$-shot, parallel, exact
distillable distinguishability of a channel box $(\mathcal{N},\mathcal{M}%
)$\ as follows:%
\begin{equation}
\frac{1}{n}D_{d}^{0}(\mathcal{N}^{\otimes n},\mathcal{M}^{\otimes n})=\frac
{1}{n}D_{\min}(\mathcal{N}^{\otimes n}\Vert\mathcal{M}^{\otimes n}),
\end{equation}
noting that it is equal to the optimal rate at which one can distill exact
bits of asymmetric distinguishability for fixed $n\geq1$.\ The equality above
is a direct consequence of \eqref{eq:exact-distill-dist-min}.

The asymptotic parallel exact distillable distinguishability is then defined as%
\begin{align}
D_{d}^{0,p}(\mathcal{N},\mathcal{M})  &  :=\lim_{n\rightarrow\infty}\frac
{1}{n}D_{d}^{0}(\mathcal{N}^{\otimes n},\mathcal{M}^{\otimes n})\\
&  =\lim_{n\rightarrow\infty}\frac{1}{n}D_{\min}(\mathcal{N}^{\otimes n}%
\Vert\mathcal{M}^{\otimes n}), \label{eq:asymptotic-exact-dist-distingui}%
\end{align}
where the equality is again a direct consequence of \eqref{eq:exact-distill-dist-min}.

We note that the regularization in \eqref{eq:asymptotic-exact-dist-distingui}
seems to be necessary in general, due to the fact that $D_{\min}$ for channels
can be non-additive. As an example, suppose that $\mathcal{N}$ is the identity
channel and $\mathcal{M}$ is a unitary channel characterized by a unitary
operator $U$. Then it follows that%
\begin{align}
D_{\min}(\mathcal{N}\Vert\mathcal{M})  &  =-\log_2\inf_{|\psi\rangle_{RA}%
}\left\vert \langle\psi|_{RA}\left(  I_{R}\otimes U_{A}\right)  |\psi
\rangle_{RA}\right\vert ^{2}\nonumber\\
&  :=-\log_2 F(I,U).
\end{align}
It is known from \cite{Acin01}\ that there are unitaries for which
$F(I,U)\in(0,1)$ but $F(I^{\otimes n},U^{\otimes n})=0$ for some finite $n$.
Turning this around, we conclude that there are channels for which%
\begin{equation}
D_{\min}(\mathcal{N}\Vert\mathcal{M})<\infty,
\end{equation}
but%
\begin{equation}
D_{\min}(\mathcal{N}^{\otimes n}\Vert\mathcal{M}^{\otimes n})=\infty,
\end{equation}
for some finite $n$, indicating that the channel min-relative entropy exhibits
an extreme form of non-additivity.

A special case for which the exact distillable distinguishability simplifies
is for environment-seizable channels. As a consequence of the observation in
\eqref{eq:env-seize-ch-boxes}, an immediate conclusion is the following
equality:%
\begin{align}
\frac{1}{n}D_{d}^{0}(\mathcal{N}^{\otimes n},\mathcal{M}^{\otimes n})  &
=\frac{1}{n}D_{\min}(\rho_{E}^{\otimes n}\Vert\sigma_{E}^{\otimes n})\\
&  =D_{\min}(\rho_{E}\Vert\sigma_{E}),
\end{align}
which holds for any channel box $(\mathcal{N},\mathcal{M})$ that is
environment seizable in the sense of \eqref{eq:env-seize-ch-boxes}. The first
equality follows from \eqref{eq:env-seize-ch-boxes}, and the second follows
from the additivity of the min-relative entropy for states. We thus conclude
that the asymptotic exact parallel distillable distinguishability has the
following single-letter formula for the case of environment-seizable channel
boxes:%
\begin{equation}
D_{d}^{0,p}(\mathcal{N},\mathcal{M})=D_{\min}(\rho_{E}\Vert\sigma_{E}).
\end{equation}

\subsection{Exact case:\ distinguishability cost}

\label{sec:exact-parallel-cost}We define the $n$-shot, parallel, exact
distinguishability cost of a channel box $(\mathcal{N},\mathcal{M})$\ as
follows:%
\begin{equation}
\frac{1}{n}D_{c}^{0}(\mathcal{N}^{\otimes n},\mathcal{M}^{\otimes n})=D_{\max
}(\mathcal{N}\Vert\mathcal{M}),
\end{equation}
noting that it is equal to the optimal rate at which one can dilute exact
bits of asymmetric distinguishability to the channel box $(\mathcal{N}^{\otimes n},\mathcal{M}^{\otimes n})$ for fixed $n\geq1$.\ The equality above
is a direct consequence of \eqref{eq:exact-dist-cost-max} and the additivity
of the max-relative entropy of channels, due to the fact that \eqref{eq:dmax-simplify}\ holds.

The asymptotic exact distinguishability cost is then defined as%
\begin{align}
D_{c}^{0,p}(\mathcal{N},\mathcal{M})  &  :=\lim_{n\rightarrow\infty}\frac
{1}{n}D_{c}^{0}(\mathcal{N}^{\otimes n},\mathcal{M}^{\otimes n})\\
&  =D_{\max}(\mathcal{N}\Vert\mathcal{M}), \label{eq:par-exact-cost-dmax}%
\end{align}
where the equality is again a direct consequence of \eqref{eq:exact-dist-cost-max}.

Thus, exact distinguishability dilution in the parallel case is rather
different from exact distinguishability distillation, given that we have a simple
single-letter formula characterizing all channel boxes for the former case but not for the latter.

\subsection{Approximate case:\ distillable distinguishability}

\label{sec:approx-parallel-distill}

We define the $n$-shot, parallel,
$\varepsilon$-approximate distillable distinguishability as follows:%
\begin{equation}
\frac{1}{n}D_{d}^{\varepsilon}(\mathcal{N}^{\otimes n},\mathcal{M}^{\otimes
n}),
\end{equation}
noting that it is equal to the optimal rate at which one can distill
approximate bits of asymmetric distinguishability for fixed $n\geq1$ and
$\varepsilon\in(0,1)$. The asymptotic parallel distillable distinguishability
of the channel box $(\mathcal{N},\mathcal{M})$ is then defined as the
following limit of the above formula:%
\begin{align}
D_{d}^{p}(\mathcal{N},\mathcal{M})  &  :=\lim_{\varepsilon\rightarrow0}%
\lim_{n\rightarrow\infty}\frac{1}{n}D_{d}^{\varepsilon}(\mathcal{N}^{\otimes
n},\mathcal{M}^{\otimes n}) \label{eq:parallel-distill-dist-asymp}\\
&  =\lim_{\varepsilon\rightarrow0}\lim_{n\rightarrow\infty}\frac{1}{n}D_{\min
}^{\varepsilon}(\mathcal{N}^{\otimes n}\Vert\mathcal{M}^{\otimes n}),
\end{align}
where the latter equality follows from \eqref{eq:approx-distill-dist-smooth-min}.

Note that the quantity in \eqref{eq:parallel-distill-dist-asymp} is equal to the optimal exponent in Stein's lemma for the case of parallel  quantum channel discrimination \cite{Cooney2016}. The following theorem gives a formal expression for this quantity in terms of the regularized channel relative entropy.

\begin{theorem}
\label{thm:parallel-distillable-dist-regularized-rel-ent}The parallel
distillable distinguishability of the channel box $(\mathcal{N},\mathcal{M}%
)$\ is equal to the regularized channel relative entropy:%
\begin{equation}
D_{d}^{p}(\mathcal{N},\mathcal{M})=\lim_{m\rightarrow\infty}\frac{1}%
{m}D(\mathcal{N}^{\otimes m}\Vert\mathcal{M}^{\otimes m}),
\end{equation}
and it is finite if and only if $D_{\max}(\mathcal{N}\Vert\mathcal{M})<\infty$.
\end{theorem}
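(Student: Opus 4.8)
The plan is to establish the equality $D_d^p(\mathcal{N},\mathcal{M}) = \lim_{m\to\infty} \tfrac{1}{m} D(\mathcal{N}^{\otimes m}\Vert\mathcal{M}^{\otimes m})$ via matching achievability (lower bound) and converse (upper bound) arguments, mirroring the state-case strategy of \cite{WW19states} but lifting it to the channel setting using the results already in hand, and then to argue the finiteness characterization separately. By Theorem~\ref{thm:approx-distill-cost} and the definition \eqref{eq:parallel-distill-dist-asymp}, it suffices to show that $\lim_{\varepsilon\to0}\lim_{n\to\infty}\tfrac1n D_{\min}^{\varepsilon}(\mathcal{N}^{\otimes n}\Vert\mathcal{M}^{\otimes n})$ equals the regularized channel relative entropy. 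For the converse (that $D_d^p$ is at most the regularized relative entropy), I would invoke a channel analogue of the quantum Stein's lemma: a protocol distilling $(\mathcal{N}^{\otimes n},\mathcal{M}^{\otimes n})$ to an $\varepsilon$-approximate box $(\widetilde{\mathcal{R}}_{|0\rangle\langle0|}, \mathcal{R}_{\pi_K})$ with $\tfrac1n\log_2 K$ near the rate, combined with data processing for $D_{\min}^{\varepsilon}$ under superchannels (which follows from data processing of the generalized channel divergence, established earlier in the paper for the strategy divergence and specializable here), yields $\tfrac1n D_{\min}^{\varepsilon}(\mathcal{N}^{\otimes n}\Vert\mathcal{M}^{\otimes n}) \geq \tfrac1n\log_2 K - o(1)$; then one upper-bounds $D_{\min}^{\varepsilon}$ by the relative entropy via the standard inequality $D_{\min}^{\varepsilon}(\rho\Vert\sigma) \leq \tfrac{1}{1-\varepsilon}(D(\rho\Vert\sigma)+h_2(\varepsilon))$ applied to the optimal input state $\psi_{R^nA^n}$, giving $\tfrac1n D_{\min}^{\varepsilon}(\mathcal{N}^{\otimes n}\Vert\mathcal{M}^{\otimes n}) \leq \tfrac{1}{n(1-\varepsilon)} D(\mathcal{N}^{\otimes n}\Vert\mathcal{M}^{\otimes n}) + o(1)$, and taking $n\to\infty$ then $\varepsilon\to0$ closes this direction.

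For achievability (that $D_d^p$ is at least the regularized relative entropy), I would run a two-stage argument: first use the one-shot distillation of Theorem~\ref{thm:approx-distill-cost} on the block channel box $(\mathcal{N}^{\otimes m\ell}, \mathcal{M}^{\otimes m\ell})$ with a small smoothing parameter, reducing the task to lower-bounding $D_{\min}^{\varepsilon}(\mathcal{N}^{\otimes N}\Vert\mathcal{M}^{\otimes N})$ for large $N$. The key technical input is that there exists a fixed input state on $m$ copies — namely the maximizer $\psi_{R^mA^m}^{(m)}$ of $D(\mathcal{N}^{\otimes m}(\psi)\Vert\mathcal{M}^{\otimes m}(\psi))$, used in tensor-power form $(\psi^{(m)})^{\otimes \ell}$ on $N = m\ell$ channel uses — such that the induced state boxes $\big((\mathcal{N}^{\otimes m}(\psi^{(m)}))^{\otimes \ell}, (\mathcal{M}^{\otimes m}(\psi^{(m)}))^{\otimes \ell}\big)$ are i.i.d.; applying the achievability part of quantum Stein's lemma (Hiai–Petz / Ogawa–Nagaoka, as used in \cite{WW19states}) to these i.i.d. state boxes yields $\lim_{\ell\to\infty}\tfrac{1}{\ell} D_{\min}^{\varepsilon}(\ldots) = D(\mathcal{N}^{\otimes m}(\psi^{(m)})\Vert\mathcal{M}^{\otimes m}(\psi^{(m)})) = D(\mathcal{N}^{\otimes m}\Vert\mathcal{M}^{\otimes m})$. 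Since $D_{\min}^{\varepsilon}(\mathcal{N}^{\otimes N}\Vert\mathcal{M}^{\otimes N}) \geq D_{\min}^{\varepsilon}$ evaluated on the specific input $(\psi^{(m)})^{\otimes\ell}$, we obtain $\liminf_{N\to\infty}\tfrac1N D_{\min}^{\varepsilon}(\mathcal{N}^{\otimes N}\Vert\mathcal{M}^{\otimes N}) \geq \tfrac1m D(\mathcal{N}^{\otimes m}\Vert\mathcal{M}^{\otimes m})$, and optimizing over $m$ gives the regularized relative entropy as a lower bound; the $\varepsilon\to0$ limit is harmless since the bound is $\varepsilon$-independent on the right-hand side.

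For the finiteness claim, the "only if" direction follows because $D(\mathcal{N}^{\otimes m}\Vert\mathcal{M}^{\otimes m}) \leq D_{\max}(\mathcal{N}^{\otimes m}\Vert\mathcal{M}^{\otimes m}) = m\,D_{\max}(\mathcal{N}\Vert\mathcal{M})$ by the ordering of Rényi divergences and the additivity of the channel max-relative entropy noted after \eqref{eq:par-exact-cost-dmax}, so $D_d^p \leq D_{\max}(\mathcal{N}\Vert\mathcal{M})$; hence $D_{\max}(\mathcal{N}\Vert\mathcal{M}) = \infty$ is forced if $D_d^p = \infty$. For the converse "if" direction, when $D_{\max}(\mathcal{N}\Vert\mathcal{M}) < \infty$ we have $\operatorname{supp}(\mathcal{N}_{A\to B}(\Phi)) \subseteq \operatorname{supp}(\mathcal{M}_{A\to B}(\Phi))$, which forces $D(\mathcal{N}\Vert\mathcal{M}) < \infty$, and more importantly the single-copy relative entropy maximizer is finite and $\tfrac1m D(\mathcal{N}^{\otimes m}\Vert\mathcal{M}^{\otimes m})$ is bounded above by $D_{\max}(\mathcal{N}\Vert\mathcal{M})$ uniformly in $m$, so the regularized quantity is finite. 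The main obstacle I anticipate is the converse: one needs a clean data-processing inequality for $D_{\min}^{\varepsilon}$ under arbitrary superchannels at the level of channel divergences — this is precisely where the channel/strategy divergence data-processing machinery developed earlier in the paper must be invoked carefully, since a superchannel acts on both ports and the reduction to the state-case data processing is not entirely immediate.
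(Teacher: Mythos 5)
Your argument for the main equality is essentially the paper's. The converse is identical: via Theorem~\ref{thm:approx-distill-cost} you reduce $D_{d}^{p}$ to $\lim_{\varepsilon\rightarrow0}\lim_{n\rightarrow\infty}\frac{1}{n}D_{\min}^{\varepsilon}(\mathcal{N}^{\otimes n}\Vert\mathcal{M}^{\otimes n})$ and then apply the state-level bound $D_{\min}^{\varepsilon}(\rho\Vert\sigma)\leq\frac{1}{1-\varepsilon}\left[D(\rho\Vert\sigma)+h_{2}(\varepsilon)\right]$, lifted to channels by optimizing over inputs, exactly as in the paper. For achievability the paper substitutes $(\mathcal{N},\mathcal{M})\rightarrow(\mathcal{N}^{\otimes m},\mathcal{M}^{\otimes m})$ into the second-order lower bound of \cite{li14,tomamichel2013hierarchy}; your version---feed a (near-)optimal input $\psi^{(m)}$ for the $m$-copy channel relative entropy in tensor-power form and invoke the first-order achievability of quantum Stein's lemma for the resulting i.i.d.\ state boxes---is the same blocking idea, and first order suffices since only the rate is needed, with the $\varepsilon$-independence of the resulting lower bound making the order of limits harmless, as you note. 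Two small points of hygiene: the maximizer $\psi^{(m)}$ need not be attained (relative entropy is only lower semicontinuous), so one should take a $\delta$-near-maximizer and let $\delta\rightarrow0$; and for $N$ not a multiple of $m$ one discards the leftover channel uses, which does not affect the rate. The data-processing worry you flag at the end is not an additional obstacle: it is already subsumed in Theorem~\ref{thm:approx-distill-cost}, which you invoke at the outset.

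The one genuine gap is in the finiteness claim. Your two sub-arguments there prove the same implication twice, namely that $D_{\max}(\mathcal{N}\Vert\mathcal{M})<\infty$ implies the regularized relative entropy (hence $D_{d}^{p}$) is finite; the passage you label ``only if'' ($D_{d}^{p}\leq D_{\max}(\mathcal{N}\Vert\mathcal{M})$) is just the contrapositive of that same statement. What is missing is the other direction: finiteness of $\lim_{m\rightarrow\infty}\frac{1}{m}D(\mathcal{N}^{\otimes m}\Vert\mathcal{M}^{\otimes m})$ forces $D_{\max}(\mathcal{N}\Vert\mathcal{M})<\infty$. It is easy to supply: by \eqref{eq:dmax-simplify}, $D_{\max}(\mathcal{N}\Vert\mathcal{M})=D_{\max}(\mathcal{N}_{A\rightarrow B}(\Phi_{RA})\Vert\mathcal{M}_{A\rightarrow B}(\Phi_{RA}))$, and if this is infinite then $\operatorname{supp}(\mathcal{N}_{A\rightarrow B}(\Phi_{RA}))\not\subseteq\operatorname{supp}(\mathcal{M}_{A\rightarrow B}(\Phi_{RA}))$, so that $D(\mathcal{N}\Vert\mathcal{M})\geq D(\mathcal{N}_{A\rightarrow B}(\Phi_{RA})\Vert\mathcal{M}_{A\rightarrow B}(\Phi_{RA}))=\infty$; since $D(\mathcal{N}^{\otimes m}\Vert\mathcal{M}^{\otimes m})\geq mD(\mathcal{N}\Vert\mathcal{M})$ (tensor-power inputs and additivity of the relative entropy), the regularized quantity is then infinite as well. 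The paper itself simply cites \cite[Remark~19]{BHKW18} for the whole equivalence, so your proof should either cite that remark or include the short argument above.
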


\begin{proof}
By exploiting the following bound for states $\rho$ and $\sigma$
\cite{WR12,MW12,KW17},%
\begin{equation}
D_{\min}^{\varepsilon}(\rho\Vert\sigma)\leq\frac{1}{1-\varepsilon}\left[
D(\rho\Vert\sigma)+h_{2}(\varepsilon)\right]  ,
\end{equation}
where $h_{2}(\varepsilon):=-\varepsilon\log_{2}\varepsilon-\left(
1-\varepsilon\right)  \log_{2}(1-\varepsilon)$ is the binary entropy, we
conclude the following bound for channels after an optimization:%
\begin{equation}
D_{\min}^{\varepsilon}(\mathcal{N}\Vert\mathcal{M})\leq\frac{1}{1-\varepsilon
}\left[  D(\mathcal{N}\Vert\mathcal{M})+h_{2}(\varepsilon)\right]  .
\end{equation}
By making the substitution $(\mathcal{N},\mathcal{M})\rightarrow
(\mathcal{N}^{\otimes m},\mathcal{M}^{\otimes m})$, dividing by $m$, and taking the limit
$m\rightarrow\infty$ followed by $\varepsilon\rightarrow0$, we conclude that%
\begin{equation}
D_{d}^{p}(\mathcal{N},\mathcal{M})\leq\lim_{m\rightarrow\infty}\frac{1}%
{m}D(\mathcal{N}^{\otimes m}\Vert\mathcal{M}^{\otimes m}).
\label{eq:distill-disting-regularized-up-bnd}%
\end{equation}

Also, note that the following lower bound holds as a consequence of the lower
bound from \cite{li14,tomamichel2013hierarchy}:%
\begin{multline}
D_{\min}^{\varepsilon}(\mathcal{N}^{\otimes n}\Vert\mathcal{M}^{\otimes
n})\geq nD(\mathcal{N}\Vert\mathcal{M}%
)\label{eq:iid-case-distill-parallel-2nd-order}\\
+\sqrt{nV_{\varepsilon}(\mathcal{N}\Vert\mathcal{M})}\Phi^{-1}(\varepsilon
)+O(\log n),
\end{multline}
where $\Phi^{-1}$ is the inverse of the cumulative standard normal
distribution function, $V_{\varepsilon}(\mathcal{N}\Vert\mathcal{M})$ is the
channel relative entropy variance, defined as%
\begin{multline}
V_{\varepsilon}(\mathcal{N}\Vert\mathcal{M}):=\\
\left\{
\begin{array}
[c]{cc}%
\inf_{\psi_{RA}\in\Pi}V(\mathcal{N}_{A\rightarrow B}(\psi_{RA})\Vert
\mathcal{M}_{A\rightarrow B}(\psi_{RA})) & \text{if\ }\varepsilon<1/2\\
\sup_{\psi_{RA}\in\Pi}V(\mathcal{N}_{A\rightarrow B}(\psi_{RA})\Vert
\mathcal{M}_{A\rightarrow B}(\psi_{RA})) & \text{else}%
\end{array}
\right.  ,
\end{multline}
with $\Pi\,$the set of all bipartite pure states achieving the optimal value of
$D(\mathcal{N}\Vert\mathcal{M})$ and the relative entropy variance
$V(\rho\Vert\sigma)$ of states $\rho$ and $\sigma$ defined as%
\begin{equation}
V(\rho\Vert\sigma):=\operatorname{Tr}[\rho\left(  \log_{2}\rho-\log_{2}%
\sigma-D(\rho\Vert\sigma)\right)  ^{2}].
\end{equation}
Taking the limit as $n\rightarrow\infty$ and $\varepsilon\rightarrow0$, we
find that%
\begin{equation}
\lim_{\varepsilon\rightarrow0}\lim_{n\rightarrow\infty}\frac{1}{n}D_{\min
}^{\varepsilon}(\mathcal{N}^{\otimes n}\Vert\mathcal{M}^{\otimes n})\geq
D(\mathcal{N}\Vert\mathcal{M}).
\end{equation}
However, we can also conclude the following bound%
\begin{multline}
D_{\min}^{\varepsilon}(\mathcal{N}^{\otimes nm}\Vert\mathcal{M}^{\otimes
nm})\geq nD(\mathcal{N}^{\otimes m}\Vert\mathcal{M}^{\otimes m})\\
+\sqrt{nV_{\varepsilon}(\mathcal{N}^{\otimes m}\Vert\mathcal{M}^{\otimes m}%
)}\Phi^{-1}(\varepsilon)+O(\log n),
\end{multline}
by making the substitution $(\mathcal{N},\mathcal{M})\rightarrow
(\mathcal{N}^{\otimes m},\mathcal{M}^{\otimes m})$ in
\eqref{eq:iid-case-distill-parallel-2nd-order}, from which we conclude that%
\begin{equation}
D_{d}^{p}(\mathcal{N},\mathcal{M})\geq\frac{1}{m}D(\mathcal{N}^{\otimes
m}\Vert\mathcal{M}^{\otimes m}),
\end{equation}
for all $m\geq1$. Since this bound holds for all $m$, we can take the limit,
and when combining with \eqref{eq:distill-disting-regularized-up-bnd}, we
conclude that%
\begin{equation}
D_{d}^{p}(\mathcal{N},\mathcal{M})=\lim_{m\rightarrow\infty}\frac{1}%
{m}D(\mathcal{N}^{\otimes m}\Vert\mathcal{M}^{\otimes m}).
\end{equation}

As observed in \cite[Remark~19]{BHKW18}, the regularized channel relative
entropy on the right-hand side is finite if and only if $D_{\max}%
(\mathcal{N}\Vert\mathcal{M})<\infty$.
\end{proof}

\bigskip

An important case in which the situation simplifies considerably is for
environment-seizable channel boxes, as identified in \cite{BHKW18}. As a
consequence of the observation in \eqref{eq:env-seize-ch-boxes}, an immediate
conclusion is the following equality%
\begin{equation}
\frac{1}{n}D_{d}^{\varepsilon}(\mathcal{N}^{\otimes n},\mathcal{M}^{\otimes
n})=\frac{1}{n}D_{\min}^{\varepsilon}(\rho_{E}^{\otimes n}\Vert\sigma
_{E}^{\otimes n}),
\end{equation}
for any channel box $(\mathcal{N},\mathcal{M})$ that is environment seizable
in the sense of \eqref{eq:env-seize-ch-boxes}. For such channels, we can even
conclude the following expansion:%
\begin{multline}
\frac{1}{n}D_{d}^{\varepsilon}(\mathcal{N}^{\otimes n},\mathcal{M}^{\otimes
n})=D(\rho_{E}\Vert\sigma_{E})\\
+\sqrt{\frac{1}{n}V(\rho_{E}\Vert\sigma_{E})}\Phi^{-1}(\varepsilon)+O(\log n),
\end{multline}
so that%
\begin{equation}
D_{d}^{p}(\mathcal{N},\mathcal{M})=D(\rho_{E}\Vert\sigma_{E})
\end{equation}
for such environment-seizable channel boxes.

Another important case for which we have a handle on the distillable
distinguishability is classical--quantum channel boxes, defined as%
\begin{align}
\mathcal{N}_{X\rightarrow B}(\omega_{X})  &  :=\sum_{x}\langle x|_{X}%
\omega_{X}|x\rangle_{X}\rho_{B}^{x},\label{eq:cq-ch-def-1}\\
\mathcal{M}_{X\rightarrow B}(\omega_{X})  &  :=\sum_{x}\langle x|_{X}%
\omega_{X}|x\rangle_{X}\sigma_{B}^{x}, \label{eq:cq-ch-def-2}%
\end{align}
where $\omega_{X}$ is an arbitrary input state, $\{|x\rangle_{X}\}_{x}$ is an
orthonormal basis, and $\left\{  \rho_{B}^{x}\right\}  _{x}$ and $\left\{
\sigma_{B}^{x}\right\}  _{x}$ are sets of states. An immediate consequence of
\cite[Corollary~28]{BHKW18}\ is the following equality for classical--quantum
channel boxes:%
\begin{equation}
D_{d}^{p}(\mathcal{N}_{X\rightarrow B},\mathcal{M}_{X\rightarrow B})=\sup
_{x}D(\rho_{B}^{x}\Vert\sigma_{B}^{x}).
\label{eq:cq-parallel-distill}
\end{equation}
%\XW{
Eq.~\eqref{eq:cq-parallel-distill} indicates that the asymptotic parallel distillable distinguishability  of a classical-quantum channel box  depends only on the maximum quantum relative entropy that can be realized by the input of a single classical state to the channels.
%}

\subsection{Approximate case:\ distinguishability cost}

\label{sec:approx-parallel-cost}We define the $n$-shot, parallel,
$\varepsilon$-approximate distinguishability cost as follows:%
\begin{equation}
\frac{1}{n}D_{c}^{\varepsilon}(\mathcal{N}^{\otimes n},\mathcal{M}^{\otimes
n}),
\end{equation}
noting that it is equal to the optimal rate at which one can dilute the
channel box $(\mathcal{N}^{\otimes m},\mathcal{M}^{\otimes m})$ approximately
from bits of asymmetric distinguishability for fixed $n\geq1$ and
$\varepsilon\in(0,1)$. The asymptotic parallel distinguishability cost of the
channel box $(\mathcal{N},\mathcal{M})$ is then defined as the following limit
of the above formula:%
\begin{align}
D_{c}^{p}(\mathcal{N},\mathcal{M})  &  :=\lim_{\varepsilon\rightarrow0}%
\lim_{n\rightarrow\infty}\frac{1}{n}D_{c}^{\varepsilon}(\mathcal{N}^{\otimes
n},\mathcal{M}^{\otimes n})\\
&  =\lim_{\varepsilon\rightarrow0}\lim_{n\rightarrow\infty}\frac{1}{n}D_{\max
}^{\varepsilon}(\mathcal{N}^{\otimes n}\Vert\mathcal{M}^{\otimes n}),
\end{align}
where the latter equality follows from \eqref{eq:approx-dist-cost-smooth-max}.

As a direct consequence of the inequality in
\eqref{eq:smooth-min-max-bnd-ineq} and
Theorem~\ref{thm:parallel-distillable-dist-regularized-rel-ent}, we find that%
\begin{equation}
D_{c}^{p}(\mathcal{N},\mathcal{M})\geq\lim_{m\rightarrow\infty}\frac{1}%
{m}D(\mathcal{N}^{\otimes m}\Vert\mathcal{M}^{\otimes m}).
\label{eq:lower-bnd-asymp-cost-regularized-rel-ent}%
\end{equation}
We note that an inequality similar to the above one, which does not include
regularization, has been reported as \cite[Theorem~11]{LW19}. Whether the
lower bound in \eqref{eq:lower-bnd-asymp-cost-regularized-rel-ent} is also an upper bound remains an open question. However, the
following upper bound holds as a consequence of definitions and the fact that
the channel max-relative entropy is single-letter:%
\begin{equation}
D_{c}^{p}(\mathcal{N},\mathcal{M})\leq D_{\max}(\mathcal{N}\Vert\mathcal{M}).
\end{equation}
Furthermore, from this upper bound and \cite[Remark~19]{BHKW18}, we conclude
that the asymptotic parallel distinguishability cost is finite if and only if
$D_{\max}(\mathcal{N}\Vert\mathcal{M})$ is.

Although we have not been able to solve the asymptotic parallel
distinguishability cost in general, we can do so for some interesting special
cases. First, for any channel box $(\mathcal{N},\mathcal{M})$ that is
environment seizable, in the sense of \eqref{eq:env-seize-ch-boxes}, an
immediate conclusion is the following equality:%
\begin{equation}
\frac{1}{n}D_{c}^{\varepsilon}(\mathcal{N}^{\otimes n},\mathcal{M}^{\otimes
n})=\frac{1}{n}D_{\max}^{\varepsilon}(\rho_{E}^{\otimes n}\Vert\sigma
_{E}^{\otimes n}). \label{eq:n-shot-cost-env-seize}%
\end{equation}
Then as a consequence of the asymptotic equipartition property for states \cite{TCR09}, by
taking the limit $n\rightarrow\infty$ of \eqref{eq:n-shot-cost-env-seize}, it
follows that%
\begin{equation}
D_{c}^{p}(\mathcal{N},\mathcal{M})=D(\rho_{E}\Vert\sigma_{E}),
\end{equation}
thus demonstrating a complete understanding of the asymptotic cost for these
channel boxes. As in \cite{WW19states}, one can make refined statements (for second-order expansions) of $\frac{1}{n}D_{c}^{\varepsilon}(\mathcal{N}^{\otimes n},\mathcal{M}^{\otimes
n})$ for such channels.

Another important case for which we have a handle on the distinguishability
cost are classical--quantum channel boxes $(\mathcal{N}_{X\rightarrow
B},\mathcal{M}_{X\rightarrow B})$, with a common classical input alphabet and
output Hilbert space, defined as in \eqref{eq:cq-ch-def-1}--\eqref{eq:cq-ch-def-2}:

\begin{proposition}
\label{prop:cq-cost-parallel}
Let $(\mathcal{N}_{X\rightarrow B},\mathcal{M}_{X\rightarrow B})$ be a
classical--quantum channel box as in \eqref{eq:cq-ch-def-1}--\eqref{eq:cq-ch-def-2}. Then the asymptotic parallel distinguishability cost is equal to the channel relative entropy:
\begin{equation}
D_{c}^{p}(\mathcal{N}_{X\rightarrow B},\mathcal{M}_{X\rightarrow B})=\sup
_{x}D(\rho_{B}^{x}\Vert\sigma_{B}^{x}).
\end{equation}

\end{proposition}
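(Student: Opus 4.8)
The plan is to prove the two bounds $\sup_x D(\rho_B^x\|\sigma_B^x) \leq D_c^p(\mathcal{N}_{X\to B},\mathcal{M}_{X\to B})$ and $D_c^p(\mathcal{N}_{X\to B},\mathcal{M}_{X\to B}) \leq \sup_x D(\rho_B^x\|\sigma_B^x)$ separately. The lower bound follows at once by combining the general inequality \eqref{eq:lower-bnd-asymp-cost-regularized-rel-ent}, the identification of $D_d^p$ with the regularized channel relative entropy in Theorem~\ref{thm:parallel-distillable-dist-regularized-rel-ent}, and the classical--quantum evaluation \eqref{eq:cq-parallel-distill}, which together give
\[
D_c^p(\mathcal{N}_{X\to B},\mathcal{M}_{X\to B}) \;\geq\; \lim_{m\to\infty}\frac{1}{m}D(\mathcal{N}^{\otimes m}\|\mathcal{M}^{\otimes m}) \;=\; D_d^p(\mathcal{N}_{X\to B},\mathcal{M}_{X\to B}) \;=\; \sup_x D(\rho_B^x\|\sigma_B^x).
\]
Thus the entire content of the proposition lies in the matching upper bound.

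For the upper bound, I would use Theorem~\ref{thm:approx-distill-cost} to write $\frac{1}{n}D_c^{\varepsilon}(\mathcal{N}^{\otimes n},\mathcal{M}^{\otimes n}) = \frac{1}{n}D_{\max}^{\varepsilon}(\mathcal{N}^{\otimes n}\|\mathcal{M}^{\otimes n})$, so that it suffices, for each $n$ and each fixed $\varepsilon\in(0,1)$, to exhibit a channel $\widetilde{\mathcal{N}}^{(n)}$ with $\widetilde{\mathcal{N}}^{(n)}\approx_{\varepsilon}\mathcal{N}^{\otimes n}$ and $D_{\max}(\widetilde{\mathcal{N}}^{(n)}\|\mathcal{M}^{\otimes n}) \leq n\sup_x D(\rho_B^x\|\sigma_B^x) + O(\sqrt{n})$. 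The structural observation driving this is that $\mathcal{N}^{\otimes n}$ is itself a classical--quantum channel from $X^n$ to $B^n$, with output states $\rho_{B^n}^{x^n} := \rho_B^{x_1}\otimes\cdots\otimes\rho_B^{x_n}$ (and likewise for $\mathcal{M}^{\otimes n}$); grouping the $n$ channel uses by input letter, $\rho_{B^n}^{x^n}$ equals, up to a permutation of the output systems that affects neither the diamond distance nor $D_{\max}$, the state $\bigotimes_x (\rho_B^x)^{\otimes n_x}$, where $n_x$ is the number of occurrences of the letter $x$ in $x^n$ and $\sum_x n_x = n$. I would then take $\widetilde{\mathcal{N}}^{(n)}$ to be the classical--quantum channel whose output on input $x^n$ is the corresponding permutation of $\bigotimes_x \widetilde{\rho}^{(x),n_x}$, where $\widetilde{\rho}^{(x),m}$ is a state achieving $D_{\max}^{\varepsilon/|\mathcal{X}|}\!\big((\rho_B^x)^{\otimes m}\|(\sigma_B^x)^{\otimes m}\big)$ with $\widetilde{\rho}^{(x),m}\approx_{\varepsilon/|\mathcal{X}|}(\rho_B^x)^{\otimes m}$, and $|\mathcal{X}|$ denotes the size of the input alphabet.

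Two elementary facts then finish the argument. First, two classical--quantum channels with a common classical input are $\varepsilon$-close in normalized diamond distance whenever all of their corresponding output states are $\varepsilon$-close in normalized trace distance; here this follows by telescoping the trace distance over the $|\mathcal{X}|$ letters, using $\sum_x \varepsilon/|\mathcal{X}| = \varepsilon$, so that $\widetilde{\mathcal{N}}^{(n)}\approx_{\varepsilon}\mathcal{N}^{\otimes n}$. Second, for classical--quantum channels one has $D_{\max}(\mathcal{N}^1\|\mathcal{N}^2) \leq \max_x D_{\max}(\rho_B^{1,x}\|\rho_B^{2,x})$, which follows from the quasi-convexity of $D_{\max}$: if $\rho_i \leq 2^{\lambda}\sigma_i$ for all $i$, then $\sum_i p_i\rho_i \leq 2^{\lambda}\sum_i p_i\sigma_i$, and evaluating the channel divergence on any input reduces to exactly such a mixture over the letters. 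Combining the two facts with the definition \eqref{eq:ch-smooth-max-rel-ent} gives $D_{\max}^{\varepsilon}(\mathcal{N}^{\otimes n}\|\mathcal{M}^{\otimes n}) \leq D_{\max}(\widetilde{\mathcal{N}}^{(n)}\|\mathcal{M}^{\otimes n}) \leq \max_{x^n}\sum_x D_{\max}^{\varepsilon/|\mathcal{X}|}\!\big((\rho_B^x)^{\otimes n_x}\|(\sigma_B^x)^{\otimes n_x}\big)$, and the asymptotic equipartition property \cite{TCR09}, in the form $D_{\max}^{\delta}(\tau^{\otimes m}\|\omega^{\otimes m}) \leq mD(\tau\|\omega) + c(\delta,\tau,\omega)\sqrt{m}$, bounds each summand; since there are at most $|\mathcal{X}|$ letters, $\sqrt{n_x}\leq\sqrt{n}$, and $\sum_x n_x = n$, the right-hand side is at most $n\sup_x D(\rho_B^x\|\sigma_B^x) + C(\varepsilon)\sqrt{n}$ uniformly in $x^n$. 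Dividing by $n$, sending $n\to\infty$ and then $\varepsilon\to 0$, and recalling $D_c^p = \lim_{\varepsilon\to0}\lim_{n\to\infty}\frac{1}{n}D_{\max}^{\varepsilon}(\mathcal{N}^{\otimes n}\|\mathcal{M}^{\otimes n})$, yields the upper bound.

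The main obstacle I anticipate is not any individual step but the bookkeeping that makes the asymptotic-equipartition estimate uniform over all input strings $x^n$: the types $(n_x)_x$ depend on $n$ and on the maximizing string in an uncontrolled way, so one must keep track of the fact that the $O(\sqrt{m})$ corrections in the AEP depend only on the smoothing parameter $\varepsilon/|\mathcal{X}|$ and on the pairs $(\rho_B^x,\sigma_B^x)$, not on how $n_x$ scales with $n$ — this is precisely why the order of limits, fixing $\varepsilon$ before sending $n\to\infty$, is essential. A secondary point requiring care is the verification that the supremum over inputs in the channel max-relative entropy of a classical--quantum channel box is attained (up to the inequality we need) on a classical input, which is where the completely-dephasing pre-processing, equivalently the quasi-convexity of $D_{\max}$, does the work.
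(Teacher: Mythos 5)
Your proof is correct, but your upper bound follows a genuinely different route from the paper's. The paper proves the converse (upper) direction entirely at the channel level: it invokes Proposition~\ref{prop:cq-smooth-dmax-up-bnd}, a classical--quantum analogue of the state-level bound relating smooth max-relative entropy to sandwiched R\'enyi divergences, applies it to $(\mathcal{N}_{X\to B}^{\otimes n},\mathcal{M}_{X\to B}^{\otimes n})$, uses the additivity/letter-wise evaluation of $\widetilde{D}_{\alpha}$ for classical--quantum channels from \cite{BHKW18} to make the right-hand side single-letter, and then sends $n\to\infty$ followed by $\alpha\to 1$; the whole type/string bookkeeping you worry about is thereby absorbed into the single-letter R\'enyi quantity, at the cost of needing the minimax-based Proposition~\ref{prop:cq-smooth-dmax-up-bnd}. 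You instead build the smoothing channel $\widetilde{\mathcal{N}}^{(n)}$ explicitly, letter block by letter block, and reduce everything to the state-level AEP of \cite{TCR09} together with two structural facts about classical--quantum channels (diamond distance and $D_{\max}$ are computed on classical inputs, the latter by quasi-convexity of $D_{\max}$), both of which are consistent with the lemma in Appendix~\ref{app:smooth-max-rel-ent-bnd-cq-ch}. Your approach is more elementary and makes the optimal dilution strategy concrete, but it does require the uniform-in-type control of the $O(\sqrt{n_x})$ corrections that you flag (including the degenerate case of letters with $n_x$ bounded, which can be handled by the crude bound $D_{\max}^{\delta}\le n_x D_{\max}(\rho_B^x\Vert\sigma_B^x)$, finite whenever the relative entropies are), whereas the paper's R\'enyi route sidesteps this entirely. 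Your lower bound, routed through Theorem~\ref{thm:parallel-distillable-dist-regularized-rel-ent} and \eqref{eq:cq-parallel-distill}, is essentially the same as the paper's use of \eqref{eq:lower-bnd-asymp-cost-regularized-rel-ent} with \eqref{eq:cq-rel-ent}.
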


\begin{proof}
It is known from \cite{BHKW18} that the following identity holds for
classical--quantum channel boxes:%
\begin{equation}
D(\mathcal{N}_{X\rightarrow B}\Vert\mathcal{M}_{X\rightarrow B})=\sup
_{x}D(\rho_{B}^{x}\Vert\sigma_{B}^{x}). \label{eq:cq-rel-ent}%
\end{equation}
Thus, the lower bound%
\begin{equation}
D_{c}^{p}(\mathcal{N}_{X\rightarrow B},\mathcal{M}_{X\rightarrow B})\geq
\sup_{x}D(\rho_{B}^{x}\Vert\sigma_{B}^{x})
\end{equation}
is a direct consequence of \eqref{eq:lower-bnd-asymp-cost-regularized-rel-ent}
and \eqref{eq:cq-rel-ent}.

To establish the upper bound, we make use of
Proposition~\ref{prop:cq-smooth-dmax-up-bnd}\ from
Appendix~\ref{app:smooth-max-rel-ent-bnd-cq-ch}, which states that the
following inequality holds for all $\alpha>1$ and $\varepsilon\in(0,1)$:%
\begin{multline}
D_{\max}^{\varepsilon}(\mathcal{N}_{X\rightarrow B}\Vert\mathcal{M}%
_{X\rightarrow B})\leq\widetilde{D}_{\alpha}(\mathcal{N}_{X\rightarrow B}%
\Vert\mathcal{M}_{X\rightarrow B})\\
+\frac{1}{\alpha-1}\log_{2}\!\left(  \frac{1}{\varepsilon^{2}}\right)
+\log_{2}\!\left(  \frac{1}{1-\varepsilon^{2}}\right)  .
\end{multline}
As such, we apply this inequality to the channel box $(\mathcal{N}%
_{X\rightarrow B}^{\otimes n},\mathcal{M}_{X\rightarrow B}^{\otimes n})$, as
well as \cite[Lemma~25]{BHKW18}, to find that the following inequality holds for all $\alpha>1$ and
$\varepsilon\in(0,1)$:%
\begin{multline}
\frac{1}{n}D_{\max}^{\varepsilon}(\mathcal{N}^{\otimes n}_{X\rightarrow B}\Vert
\mathcal{M}^{\otimes n}_{X\rightarrow B})\leq\widetilde{D}_{\alpha}(\mathcal{N}%
_{X\rightarrow B}\Vert\mathcal{M}_{X\rightarrow B})\\
+\frac{1}{n\left(  \alpha-1\right)  }\log_{2}\!\left(  \frac{1}{\varepsilon
^{2}}\right)  +\frac{1}{n}\log_{2}\!\left(  \frac{1}{1-\varepsilon^{2}%
}\right)  .
\end{multline}
Taking the limit as $n\rightarrow\infty$, we find that the following
inequality holds for all $\alpha>1$:%
\begin{multline}
\limsup_{n\rightarrow\infty}\frac{1}{n}D_{\max}^{\varepsilon}(\mathcal{N}%
_{X\rightarrow B}\Vert\mathcal{M}_{X\rightarrow B})\\
\leq\widetilde{D}_{\alpha}(\mathcal{N}_{X\rightarrow B}\Vert\mathcal{M}%
_{X\rightarrow B}).
\end{multline}
Now taking the limit as $\alpha\rightarrow1$, we conclude that%
\begin{align}
&  \limsup_{n\rightarrow\infty}\frac{1}{n}D_{\max}^{\varepsilon}%
(\mathcal{N}_{X\rightarrow B}\Vert\mathcal{M}_{X\rightarrow B})\nonumber\\
&  \leq D(\mathcal{N}_{X\rightarrow B}\Vert\mathcal{M}_{X\rightarrow B})\\
&  =\sup_{x}D(\rho_{B}^{x}\Vert\sigma_{B}^{x}).
\end{align}
This concludes the proof.
\end{proof}

\bigskip
%\XW{
Proposition~\ref{prop:cq-cost-parallel} indicates that the asymptotic parallel distinguishability cost of a classical-quantum channel box  depends only on the maximum quantum relative entropy that can be realized by the input of a single classical state to the channels. As such, when combined with the result from \eqref{eq:cq-parallel-distill}, we conclude that the resource theory of asymmetric distinguishability is reversible in the asymptotic setting of parallel channel box transformations when restricted to classical--quantum channel boxes, meaning that one can convert between such channel boxes without any loss. We provide further related remarks about this observation in the next section.
%}

\section{General channel box transformation: Parallel case}

\label{sec:gen-box-trans-parallel}We can now address the general channel box
transformation problem for the parallel case. Before doing so, let us
formalize the problem. Let $n,m\in\mathbb{Z}^{+}$ and $\varepsilon\in
\lbrack0,1]$. An $(n,m,\varepsilon)$ parallel channel box transformation
protocol for the channel boxes $(\mathcal{N},\mathcal{M})$ and $(\mathcal{K}%
,\mathcal{L})$ consists of a superchannel $\Theta^{(n)}$ such that
\begin{align}
\Theta^{(n)}(\mathcal{N}^{\otimes n})
& \approx_{\varepsilon}\mathcal{K}
^{\otimes m} , \\ \Theta^{(n)}(\mathcal{M}^{\otimes n}) & =\mathcal{L}^{\otimes
m}.
\end{align}

A rate $R$ is achievable if for all $\varepsilon\in(0,1]$, $\delta>0$, and
sufficiently large $n$, there exists an $(n,n\left[  R-\delta\right]
,\varepsilon)$ parallel channel box transformation protocol. The optimal
parallel channel box transformation rate $R^p((\mathcal{N},\mathcal{M}%
)\rightarrow(\mathcal{K},\mathcal{L}))$ is equal to the supremum of all
achievable rates.

On the other hand, a rate $R$ is a strong converse rate if for all
$\varepsilon\in\lbrack0,1)$, $\delta>0$, and sufficiently large $n$, there
does not exist an $(n,n\left[  R+\delta\right]  ,\varepsilon)$ parallel
channel box transformation protocol. The strong converse parallel channel box
transformation rate $\widetilde{R}^p((\mathcal{N},\mathcal{M})\rightarrow
(\mathcal{K},\mathcal{L}))$ is equal to the infimum of all strong converse rates.

Note that the following inequality is a consequence of the definitions:%
\begin{equation}
R^p((\mathcal{N},\mathcal{M})\rightarrow(\mathcal{K},\mathcal{L}))\leq
\widetilde{R}^p((\mathcal{N},\mathcal{M})\rightarrow(\mathcal{K},\mathcal{L})).
\end{equation}

An important result is that if the channel boxes $(\mathcal{N},\mathcal{M})$
and $(\mathcal{K},\mathcal{L})$ are either classical--quantum or
environment-seizable, then the following equality holds%
\begin{align}
R^p((\mathcal{N},\mathcal{M})\rightarrow(\mathcal{K},\mathcal{L}))  &
=\widetilde{R}^p((\mathcal{N},\mathcal{M})\rightarrow(\mathcal{K},\mathcal{L}%
))\\
&  =\frac{D(\mathcal{N}\Vert\mathcal{M})}{D(\mathcal{K}\Vert\mathcal{L})}, \label{eq:main-result-single-letter-parallel-trans}
\end{align}
indicating that the channel relative entropy plays a central role as the
optimal conversion rate between these kinds of channel boxes.
Appendix~\ref{app:gen-box-trans-bnds}\ provides detailed proofs of converse bounds that justify the claim in \eqref{eq:main-result-single-letter-parallel-trans}, by starting with converse bounds for generic one-shot channel box
transformation protocols and then applying them to the parallel case of interest (see also Appendix~\ref{app:smooth-max-lower-bnds} for how to translate some of these bounds to lower bounds on the smooth channel max-relative entropy). The achievability part follows from combining a distillation protocol with a dilution protocol (as was done for states in \cite{WW19states}) and the fact that these tasks have simple characterizations for these channel boxes.

\section{General box transformation:\ Sequential channels and quantum
strategies}

\label{sec:seq-case-gen-trans}We now move on to consider another variant of
the general channel box transformation problem corresponding to the sequential
case. This case is more involved than the parallel case considered above
because it cannot be reduced to the one-shot case. That is, it is
fundamentally a multi-shot problem, and the theory relies upon key
developments from \cite{CDP09}. As such, we develop the theory more generally
for quantum strategies \cite{GW07} or quantum combs \cite{CDP09} and then apply it to sequential
channel boxes, which are a special case of quantum strategies. A quantum strategy consists of a sequence of quantum channels, each of which has an accessible input and output, while passing along an internal memory system that can vary in size \cite{GW07}.  We remark here that there are various terms to refer to this same physical object, including quantum memory channels \cite{PhysRevA.72.062323,RevModPhys.86.1203}, quantum strategies \cite{GW07,G09,G12,Gutoski2018fidelityofquantum}, and quantum combs \cite{CDP09}, and there are even earlier works where similar notions appear \cite{PhysRevA.64.052309,ESW02}. Here we adopt the terminology ``quantum strategy'' to refer to such an object.

%\XW{
The main reason for considering the more complicated quantum strategies  is that doing so leads to a better understanding and simplification of the analysis of sequential channel boxes, while at the same time providing a significant generalization of the theory. Indeed, regarding this latter point, one might think of generalizing the theory even further by considering physical transformations of quantum strategies and even an infinite hierarchy of this sort, just as we generalized the resource theory of states to channels by considering physical transformations of channels in the form of superchannels. However, a key insight of \cite{CDP09} is that quantum strategies are the end of the line: physical transformations of quantum strategies are simply quantum strategies, so that the hierarchy ends with quantum strategies. Thus, the theory developed here in this sense is a rather general resource theory of asymmetric distinguishability.
%}

\subsection{Quantum strategies and sequential channel boxes}

The basic object to manipulate in this setting is a quantum strategy box or a
sequential channel box $(\mathcal{N}^{(n)},\mathcal{M}^{(n)})$. A sequential
channel box is a special case of a quantum strategy box, and since it is
simpler, we discuss it briefly first. For a sequential channel box, the
notation $\mathcal{N}^{(n)}$ indicates $n$ sequential uses of the channel
$\mathcal{N}_{A\rightarrow B}$ and $\mathcal{M}^{(n)}$ indicates $n$
sequential uses of the channel $\mathcal{M}_{A\rightarrow B}$. Sequential
channel boxes have been considered implicitly in previous work on sequential
quantum channel discrimination \cite{CDP08a,CDP09,Duan09,Harrow10,Cooney2016,BHKW18}.

More generally, a quantum strategy $\mathcal{N}^{(n)}$ consists of a sequence
of channels $\mathcal{N}_{A_{1}\rightarrow M_{1}B_{1}}^{1}$, $\mathcal{N}%
_{M_{1}A_{2}\rightarrow M_{2}B_{2}}^{2}$, \ldots, $\mathcal{N}_{M_{n-2}%
A_{n-1}\rightarrow M_{n-1}B_{n-1}}^{n-1}$, and $\mathcal{N}_{M_{n-1}%
A_{n}\rightarrow B_{n}}^{n}$, and the quantum strategy $\mathcal{M}^{(n)}$
consists of a sequence of channels $\mathcal{M}_{A_{1}\rightarrow M_{1}B_{1}%
}^{1}$, $\mathcal{M}_{M_{1}A_{2}\rightarrow M_{2}B_{2}}^{2}$, \ldots,
$\mathcal{M}_{M_{n-2}A_{n-1}\rightarrow M_{n-1}B_{n-1}}^{n-1}$, and
$\mathcal{M}_{M_{n-1}A_{n}\rightarrow B_{n}}^{n}$. As indicated above, quantum
strategies are in one-to-one correspondence with quantum combs
\cite{GW07,CDP08a,CDP09}. In order to have a uniform notation, we sometimes
write%
\begin{align}
\mathcal{N}^{(n)}  &  =(\mathcal{N}_{M_{i-1}A_{i}\rightarrow M_{i}B_{i}}%
^{i})_{i=1}^{n},\label{eq:q-strat-shorthand}\\
\mathcal{M}^{(n)}  &  =(\mathcal{M}_{M_{i-1}A_{i}\rightarrow M_{i}B_{i}}%
^{i})_{i=1}^{n},\label{eq:q-strat-shorthand-2}
\end{align}
where $M_{0}$ and $M_{n}$ are trivial registers.

It is straightforward to see that a quantum strategy box generalizes a
sequential channel box discussed above, with each element of a sequential
channel box being a sequence of the same channel without any memory. That is,
the sequential channel box is a special case of
\eqref{eq:q-strat-shorthand} and \eqref{eq:q-strat-shorthand-2} with $\mathcal{N}_{M_{i-1}A_{i}\rightarrow
M_{i}B_{i}}^{i}=\mathcal{N}_{A_{i}\rightarrow B_{i}}$ and
$\mathcal{M}_{M_{i-1}A_{i}\rightarrow
M_{i}B_{i}}^{i}=\mathcal{M}_{A_{i}\rightarrow B_{i}}$ for all $i\in\left\{
1,\ldots,n\right\}  $.

A quantum co-strategy \cite{GW07} (or tester \cite{CDP08a,CDP09}) for distinguishing two quantum strategies
consists of an input state $\rho_{R_{1}A_{1}}$ and a set of testing channels
$\{\mathcal{A}_{R_{i}B_{i}\rightarrow R_{i+1}A_{i+1}}^{i}\}_{i=1}^{n-1}$, such
that the final state when processing the first quantum strategy
$\mathcal{N}^{(n)}$ is given by%
\begin{multline}
\rho_{R_{n}B_{n}}:=\mathcal{N}_{M_{n-1}A_{n}\rightarrow B_{n}}^{n}%
\circ\label{eq:state-rho-seq-ch-1}\\
(\bigcirc_{i=1}^{n-1}\mathcal{A}_{R_{i}B_{i}\rightarrow R_{i+1}A_{i+1}}%
^{i}\circ\mathcal{N}_{M_{i-1}A_{i}\rightarrow M_{i}B_{i}}^{i})(\rho
_{R_{1}A_{1}})
\end{multline}
and the final state when processing the second quantum strategy
$\mathcal{M}^{(n)}$ is given by%
\begin{multline}
\sigma_{R_{n}B_{n}}:=\mathcal{M}_{M_{n-1}A_{n}\rightarrow B_{n}}^{n}%
\circ\label{eq:state-sig-seq-ch-2}\\
(\bigcirc_{i=1}^{n-1}\mathcal{A}_{R_{i}B_{i}\rightarrow R_{i+1}A_{i+1}}%
^{i}\circ\mathcal{M}_{M_{i-1}A_{i}\rightarrow M_{i}B_{i}}^{i})(\rho
_{R_{1}A_{1}}).
\end{multline}
Figure~\ref{fig:strat-with-co-strat} depicts the state $\rho_{R_n B_n}$ in \eqref{eq:state-rho-seq-ch-1} when $n=3$.
\begin{figure}[ptb]
\begin{center}
\includegraphics[
width=\linewidth
]{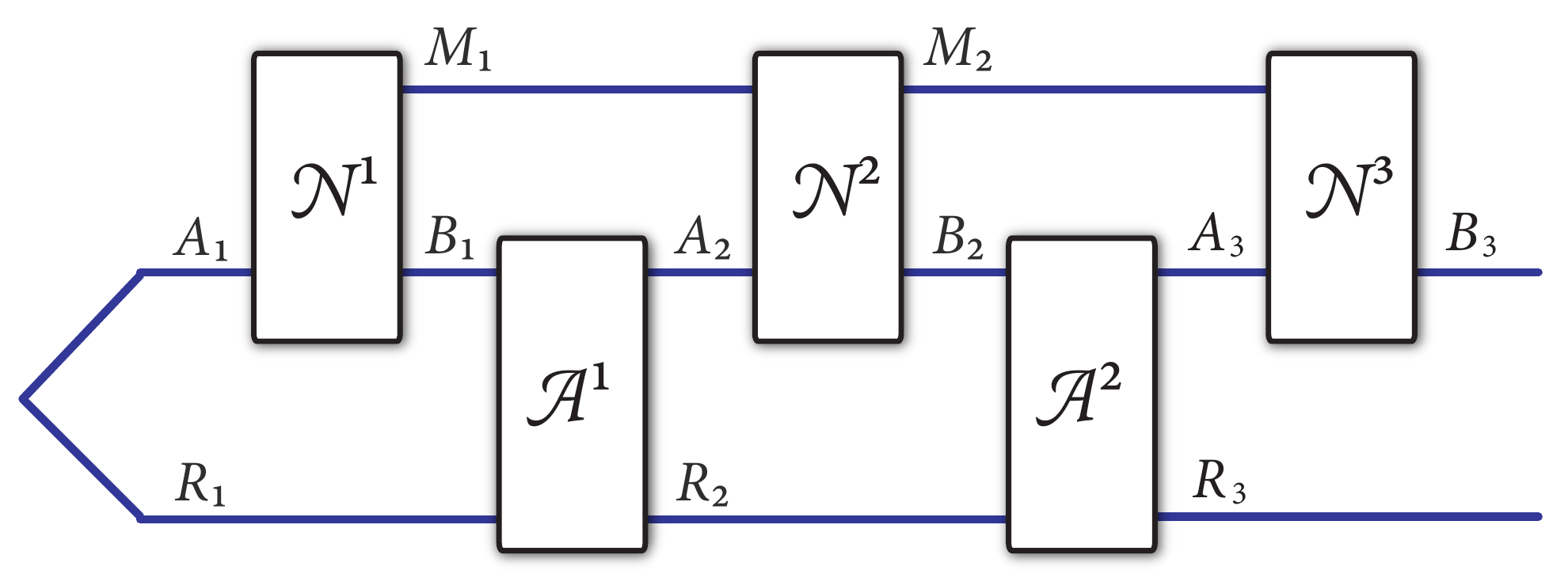}
\end{center}
\caption{A depiction of the state $\rho_{R_n B_n}$ in \eqref{eq:state-rho-seq-ch-1} with $n=3$, which results from the interaction of a three-round quantum strategy
$\mathcal{N}^{(3)}$ with a co-strategy.}%
\label{fig:strat-with-co-strat}%
\end{figure}

For our developments in this and the next section, it is helpful to define a generalized
quantum strategy divergence as an abstract measure of how distinguishable two
quantum strategies are.

\begin{definition}
[Generalized q.~strategy divergence]
\label{def:gen-strat-div}
The generalized quantum strategy
divergence of a quantum strategy box $(\mathcal{N}^{(n)},\mathcal{M}^{(n)})$
is defined as%
\begin{multline}
\mathbf{D}(\mathcal{N}^{(n)}\Vert\mathcal{M}^{(n)}):=\\
\sup_{\rho_{R_{1}A_{1}},\{\mathcal{A}_{R_{i}B_{i}\rightarrow R_{i+1}A_{i+1}%
}^{i}\}_{i=1}^{n-1}}\mathbf{D}(\rho_{R_{n}B_{n}}\Vert\sigma_{R_{n}B_{n}}),
\end{multline}
where the generalized divergence $\mathbf{D}$ for states is defined by \eqref{eq:DP-gen-div}, the states $\rho_{R_{n}B_{n}}$ and $\sigma_{R_{n}B_{n}}$ are defined in
\eqref{eq:state-rho-seq-ch-1} and \eqref{eq:state-sig-seq-ch-2}, respectively,
and\ the optimization is with respect to all quantum co-strategies or testers
that could be used to distinguish the quantum strategies $\mathcal{N}^{(n)}$
and~$\mathcal{M}^{(n)}$.
\end{definition}

Note that this quantity generalizes the quantum strategy distance and quantum
strategy fidelity of \cite{CDP08a,CDP09,G12,Gutoski2018fidelityofquantum}, as well as the strategy max-relative entropy of \cite{Chiribella_2016}, to
arbitrary divergences. Those quantities employ trace distance, fidelity, and max-relative entropy as the underlying divergences, respectively, but in what follows, we make extensive use of the generality afforded by Definition~\ref{def:gen-strat-div}.

\subsection{Physical transformations of quantum strategy boxes and data
processing}

\label{sec:phys-trans-strats}

Just as quantum channels model physical transformations of quantum states and
superchannels model physical transformations of quantum channels, we can also
consider physical transformations of quantum strategies. Given a quantum
strategy $\mathcal{N}^{(n)}$, we consider a general linear and completely
positive transformation $\Theta^{(n\rightarrow m)}$ of it, which takes as input
an $n$-round quantum strategy and outputs an $m$-round quantum strategy. A
fundamental result of \cite{CDP09} is that such a physical transformation
$\Theta^{(n\rightarrow m)}$ of a quantum strategy $\mathcal{N}^{(n)}$ is in
turn described by an $(n+m)$-round quantum strategy that interconnects with
$\mathcal{N}^{(n)}$ to generate an output, $m$-round quantum strategy.

Due to various choices of time ordering involved, there is not a unique way to
describe this physical transformation \cite{CDP09}, but here we adopt the choice that the
physical transformation $\Theta^{(n\rightarrow m)}$ first processes all
channels involved in the quantum strategy $\mathcal{N}^{(n)}$, and then it
generates the output $m$-round strategy $\Theta^{(n\rightarrow m)}%
(\mathcal{N}^{(n)})$. As such, the physical transformation $\Theta
^{(n\rightarrow m)}$ consists of $n+m$ channels $\mathcal{F}^{i}$ for
$i\in\{1,\ldots,n+m\}$, and the output quantum strategy $\mathcal{K}%
^{(m)}=\Theta^{(n\rightarrow m)}(\mathcal{N}^{(n)})$ then consists of the
following $m$ channels:
\begin{multline}
\mathcal{K}_{C_{1}\rightarrow M_{1}^{\prime}D_{1}}^{1}:=\\
\left(  \bigcirc_{i=1}^{n}\mathcal{F}_{R_{i}B_{i}\rightarrow R_{i+1}A_{i+1}%
}^{i+1}\circ\mathcal{N}_{M_{i-1}A_{i}\rightarrow M_{i}B_{i}}^{i}\right)
\circ\mathcal{F}_{C_{1}\rightarrow R_{1}A_{1}}^{1},
\end{multline}%
\begin{align}
\mathcal{K}_{M_{j-1}^{\prime}C_{j}\rightarrow M_{j}^{\prime}D_{j}}^{j}  &
:=\mathcal{F}_{R_{n+j}C_{j}\rightarrow R_{n+1+j}D_{j}}^{n+j},\\
\mathcal{K}_{M_{m-1}^{\prime}C_{m}\rightarrow D_{m}}^{m}  &  :=\mathcal{F}%
_{R_{n+m}C_{m}\rightarrow D_{m}}^{n+m},
\end{align}
for $j\in\left\{  2,\ldots,m-1\right\}  $, where we identify the memory
systems for the output strategy $\mathcal{K}^{(m)}$\ as $M_{k}^{\prime}\equiv
R_{n+k}$ for $k\in\left\{  1,\ldots,m\right\}  $.
Figure~\ref{fig:phys-trans-strat} depicts the transformation of a three-round
quantum strategy $\mathcal{N}^{(3)}$ to a three-round quantum strategy
$\mathcal{K}^{(3)}$ by a physical transformation $\Theta^{(3\rightarrow3)}$
consisting of the channels $\mathcal{F}^{1}$, \ldots, $\mathcal{F}^{6}$, along
with the pairing of the transformed strategy with a quantum co-strategy.

\begin{figure*}[ptb]
\begin{center}
\includegraphics[
width=\linewidth
]{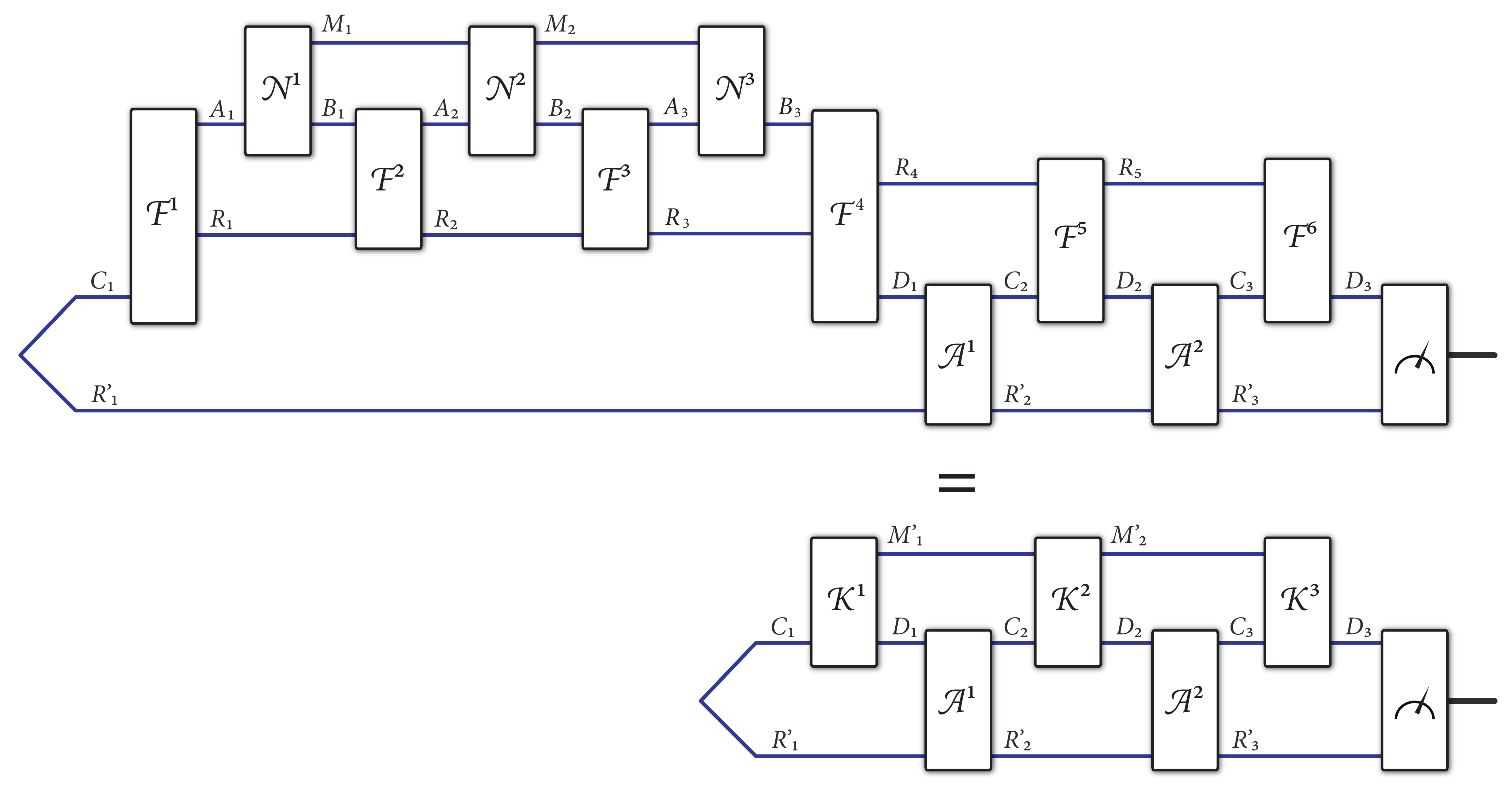}
\end{center}
\caption{A depiction of the transformation of a three-round quantum strategy
$\mathcal{N}^{(3)}$ to a three-round quantum strategy $\mathcal{K}^{(3)}$ by a
physical transformation $\Theta^{(3\rightarrow3)}$ consisting of the channels
$\mathcal{F}^{1}$, \ldots, $\mathcal{F}^{6}$, along with the pairing of the
transformed strategy with a quantum co-strategy.}%
\label{fig:phys-trans-strat}%
\end{figure*}

The following data processing inequality for the generalized strategy
divergence is a direct consequence of the definition and the fact that the
underlying generalized divergence $\mathbf{D}$ obeys data processing. This key
property allows for establishing bounds on the general strategy box
transformation problem. Also, it generalizes the data processing inequality for strategy distance and strategy fidelity from \cite{Gutoski2018fidelityofquantum}, but we require physical transformations in order to establish it.

\begin{theorem}
\label{thm:DP-strat-div}
Let $\mathcal{N}^{(n)}$ and $\mathcal{M}^{(n)}$ be
$n$-round quantum strategies, and let $\Theta^{(n\rightarrow m)}$ be a
physical transformation of them, of the form discussed above, that leads to
$m$-round quantum strategies $\Theta^{(n\rightarrow m)}(\mathcal{N}^{(n)})$
and $\Theta^{(n\rightarrow m)}(\mathcal{M}^{(n)})$. Then the following data
processing inequality holds for the generalized quantum strategy divergence:%
\begin{equation}
\mathbf{D}(\mathcal{N}^{(n)}\Vert\mathcal{M}^{(n)})\geq\mathbf{D}%
(\Theta^{(n\rightarrow m)}(\mathcal{N}^{(n)})\Vert\Theta^{(n\rightarrow
m)}(\mathcal{M}^{(n)})). \label{eq:DP-strategy-div}%
\end{equation}

\end{theorem}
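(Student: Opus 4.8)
The plan is to reduce the data processing inequality for the generalized quantum strategy divergence to the already-established data processing inequality for the underlying generalized divergence $\mathbf{D}$ on states. The key observation is that $\mathbf{D}(\mathcal{N}^{(n)}\Vert\mathcal{M}^{(n)})$ is defined as a supremum over all quantum co-strategies (testers) of the state divergence between the two output states, and the output strategies $\Theta^{(n\to m)}(\mathcal{N}^{(n)})$ and $\Theta^{(n\to m)}(\mathcal{M}^{(n)})$ are likewise probed by co-strategies. So the central step is to show that any co-strategy acting on the \emph{output} $m$-round strategy, together with the physical transformation $\Theta^{(n\to m)}$ itself, can be absorbed into a single co-strategy acting on the \emph{input} $n$-round strategy.

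First I would unpack the explicit description of $\Theta^{(n\to m)}(\mathcal{N}^{(n)})$ given just above the statement: it consists of the $m$ channels $\mathcal{K}^{1},\ldots,\mathcal{K}^{m}$ built by interleaving the $n$ channels $\mathcal{N}^{i}$ with the $n+m$ channels $\mathcal{F}^{i}$ of the physical transformation, with memory identifications $M_k' \equiv R_{n+k}$. Now fix an arbitrary co-strategy for the output: an input state $\tau_{S_1 C_1}$ and testing channels $\{\mathcal{B}_{S_j D_j \to S_{j+1} C_{j+1}}^j\}_{j=1}^{m-1}$, producing output states $\Theta(\mathcal{N}^{(n)})$-side state $\widetilde\rho_{S_m D_m}$ and $\Theta(\mathcal{M}^{(n)})$-side state $\widetilde\sigma_{S_m D_m}$. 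The second step is to verify, by direct diagrammatic composition (as illustrated in Figure~\ref{fig:phys-trans-strat}), that the composition of (i) the output co-strategy's $\tau$ and $\mathcal{B}^j$'s, (ii) the channels $\mathcal{F}^1, \mathcal{F}^{n+1}, \ldots, \mathcal{F}^{n+m}$ from $\Theta^{(n\to m)}$ that sit ``outside'' the $\mathcal{N}^i$'s, and (iii) the channels $\mathcal{F}^2,\ldots,\mathcal{F}^{n}$ that sit ``between'' consecutive $\mathcal{N}^i$'s — all of these together constitute a valid \emph{input} co-strategy for the $n$-round strategies $\mathcal{N}^{(n)}$ and $\mathcal{M}^{(n)}$: an input state $\rho_{R_1 A_1}$ and testing channels $\{\mathcal{A}_{R_i B_i \to R_{i+1} A_{i+1}}^i\}_{i=1}^{n-1}$, where $\mathcal{A}^i$ bundles $\mathcal{F}^{i+1}$ with whichever output-side channels have occurred in the interim according to the time-ordering convention adopted in the paper. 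The point is that the input co-strategy carries along a larger memory (encompassing the $S_j$ registers and any intermediate $\mathcal{F}$ memories), but since co-strategies are allowed arbitrary memory sizes, this is legitimate. Crucially, by construction this composite input co-strategy applied to $\mathcal{N}^{(n)}$ produces exactly $\widetilde\rho_{S_m D_m}$, and applied to $\mathcal{M}^{(n)}$ produces exactly $\widetilde\sigma_{S_m D_m}$.

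Given this, the inequality follows immediately: for the arbitrary output co-strategy we have
\begin{equation}
\mathbf{D}(\widetilde\rho_{S_m D_m}\Vert\widetilde\sigma_{S_m D_m}) \leq \sup_{\text{input co-strat.}}\mathbf{D}(\rho_{R_n B_n}\Vert\sigma_{R_n B_n}) = \mathbf{D}(\mathcal{N}^{(n)}\Vert\mathcal{M}^{(n)}),
\end{equation}
since the particular composite input co-strategy constructed above is one of the co-strategies in the supremum. Taking the supremum over all output co-strategies on the left-hand side yields $\mathbf{D}(\Theta^{(n\to m)}(\mathcal{N}^{(n)})\Vert\Theta^{(n\to m)}(\mathcal{M}^{(n)})) \leq \mathbf{D}(\mathcal{N}^{(n)}\Vert\mathcal{M}^{(n)})$, which is \eqref{eq:DP-strategy-div}. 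Note that no appeal to data processing of $\mathbf{D}$ for states is even needed — the inequality is purely a consequence of the supremum structure of the definition together with the compositional closure of co-strategies under the physical transformation; this is why the paper remarks that ``we require physical transformations in order to establish it.''

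I expect the main obstacle to be purely bookkeeping: carefully matching up all the registers, memory systems, and time orderings in the composite co-strategy so that it genuinely has the form of an $n$-round input co-strategy (input state on $R_1 A_1$, testing channels mapping $R_i B_i \to R_{i+1} A_{i+1}$), rather than some more general comb. The time-ordering convention — that $\Theta^{(n\to m)}$ first processes all channels of $\mathcal{N}^{(n)}$ before generating the output strategy — is what makes this clean, since it ensures the $\mathcal{F}^{i}$ for $i \le n$ can all be pushed into the input-side testing channels $\mathcal{A}^i$, while the output co-strategy's $\mathcal{B}^j$'s and the remaining $\mathcal{F}$'s only act afterwards and can be absorbed into the post-processing of the last testing channel $\mathcal{A}^{n-1}$ together with a final measurement; in the worst case one can always enlarge the memory registers to accommodate everything. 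I would present the argument with explicit reference to Figure~\ref{fig:phys-trans-strat} to make the register identifications transparent rather than writing out every composition symbolically.
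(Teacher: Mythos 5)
The first half of your construction is sound and matches the paper's own first step: with the convention that $\Theta^{(n\rightarrow m)}$ processes all of $\mathcal{N}^{(n)}$ first, the output tester's initial state together with $\mathcal{F}^{1}$ and the intermediate channels $\mathcal{F}^{2},\ldots,\mathcal{F}^{n}$ (with the tester's register carried along as spectator memory) do constitute a legitimate co-strategy for discriminating $\mathcal{N}^{(n)}$ from $\mathcal{M}^{(n)}$. Where your argument breaks is the central claim that the \emph{entire} composite---including $\mathcal{F}^{n+1},\ldots,\mathcal{F}^{n+m}$ and the output tester's channels---is itself an input co-strategy producing exactly the final states, so that ``no appeal to data processing of $\mathbf{D}$ for states is even needed.'' Under Definition~\ref{def:gen-strat-div}, a co-strategy consists only of an input state and testing channels $\mathcal{A}^{1},\ldots,\mathcal{A}^{n-1}$, and the state entering the divergence is $\rho_{R_{n}B_{n}}$ as in \eqref{eq:state-rho-seq-ch-1}, i.e.\ the state \emph{immediately} after the last channel $\mathcal{N}^{n}$, with no post-processing allowed. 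All of $\mathcal{F}^{n+1},\ldots,\mathcal{F}^{n+m}$ and all of the output tester's channels act on the system $B_{n}$ produced by that last channel use, hence strictly after $\mathcal{A}^{n-1}$ (which maps $R_{n-1}B_{n-1}\rightarrow R_{n}A_{n}$ and never sees $B_{n}$); your suggestion to absorb them ``into the post-processing of the last testing channel $\mathcal{A}^{n-1}$'' is therefore not available, and the composite is not a co-strategy in the sense of the definition.

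That trailing processing is precisely where data processing of the underlying state divergence is indispensable, and it is how the paper closes the argument: $\mathcal{F}^{n+1}$ together with the interleaving of $\mathcal{F}^{n+2},\ldots,\mathcal{F}^{n+m}$ and the output tester's channels is a single quantum channel $\mathcal{P}$ applied to the co-strategy outputs $\omega_{R_{1}^{\prime}R_{n}B_{n}}$ and $\xi_{R_{1}^{\prime}R_{n}B_{n}}$, so that $\mathbf{D}(\omega_{R_{1}^{\prime}R_{n}B_{n}}\Vert\xi_{R_{1}^{\prime}R_{n}B_{n}})\geq\mathbf{D}(\omega_{R_{m}^{\prime}D_{m}}\Vert\xi_{R_{m}^{\prime}D_{m}})$ by \eqref{eq:DP-gen-div}, while the first factor is bounded by the supremum defining $\mathbf{D}(\mathcal{N}^{(n)}\Vert\mathcal{M}^{(n)})$; taking the supremum over output testers then yields \eqref{eq:DP-strategy-div}. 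Note that if data processing of $\mathbf{D}$ for states were genuinely dispensable, the theorem would have to hold for an arbitrary functional of pairs of states, which it does not; the paper's remark about ``requiring physical transformations'' refers to $\Theta^{(n\rightarrow m)}$ being realizable as an interconnecting strategy (so that the composition above makes sense), not to avoiding data processing. Repair your write-up by splitting the argument exactly as above: absorb only $\mathcal{F}^{1},\ldots,\mathcal{F}^{n}$ into the input co-strategy, and handle everything after $\mathcal{N}^{n}$ with \eqref{eq:DP-gen-div}.
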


\begin{proof}
The physical transformation $\Theta^{(n\rightarrow m)}$ consists of the
channels $\mathcal{F}^{i}$ for $i\in\{1,\ldots,n+m\}$. Set $\mathcal{K}%
^{(m)}=\Theta^{(n\rightarrow m)}(\mathcal{N}^{(n)})$ and $\mathcal{L}%
^{(m)}=\Theta^{(n\rightarrow m)}(\mathcal{M}^{(n)})$. Also, let us consider a
quantum co-strategy for $\mathcal{K}^{(m)}$ and $\mathcal{L}^{(m)}$, which
consists of a state $\rho_{R_{1}^{\prime}C_{1}}$ and a set $\{\mathcal{A}%
_{R_{i}^{\prime}D_{i}\rightarrow R_{i+1}^{\prime}C_{i+1}}^{i}\}_{i=1}^{m-1}$
of channels:%
\begin{equation}
\mathcal{T}:=\left\{  \rho_{R^{\prime}C_{1}},\{\mathcal{A}_{R_{i}^{\prime
}D_{i}\rightarrow R_{i+1}^{\prime}C_{i+1}}^{i}\}_{i=1}^{m-1}\right\}  .
\end{equation}

Suppose first that the physical transformation $\Theta^{(n\rightarrow m)}$
acts on the quantum strategy $\mathcal{N}^{(n)}$. In this case, the first
channel $\mathcal{F}_{C_{1}\rightarrow R_{1}A_{1}}^{1}$\ acts on the state
$\rho_{R_{1}^{\prime}C_{1}}$ and outputs systems $A_{1}$ and $R_{1}$. Then the
channel $\mathcal{N}^1_{A_{1}\rightarrow M_1 B_{1}}$ is applied, and the second
channel $\mathcal{F}_{R_{1}B_{1}\rightarrow R_{2}A_{2}}^{2}$ is applied. This
repeats $n-1$ more times, and the resulting state is as follows:%
\begin{multline}
\omega_{R_{1}^{\prime}R_{n}B_{n}}:=\mathcal{N}_{M_{n-1}A_{n}\rightarrow B_{n}%
}^{n}\\
\circ\left(  \bigcirc_{i=1}^{n-1}\mathcal{F}_{R_{i}B_{i}\rightarrow
R_{i+1}A_{i+1}}^{i+1}\circ\mathcal{N}_{M_{i-1}A_{i}\rightarrow M_{i}B_{i}}%
^{i}\right) \\
\circ\mathcal{F}_{C_{1}\rightarrow R_{1}A_{1}}^{1}(\rho_{R_{1}^{\prime}C_{1}%
}),
\end{multline}%
\begin{equation}
\omega_{R_{1}^{\prime}R_{n+1}D_{1}}:=\mathcal{F}_{R_{n}B_{n}\rightarrow
R_{n+1}D_{1}}^{n+1}(\omega_{R_{1}^{\prime}R_{n}B_{n}}).
\end{equation}

At this point, the other elements of the co-strategy and the remainder of the
transformation $\Theta^{(n\rightarrow m)}$\ are applied, which consists of the
co-strategy channels $\{\mathcal{A}_{R_{i}^{\prime}D_{i}\rightarrow R_{i+1}%
^{\prime}C_{i+1}}^{i}\}_{i=1}^{m-1}$ interleaved by the transformation
channels $\mathcal{F}^{n+2}$, \ldots, $\mathcal{F}^{m}$. The resulting state
is then%
\begin{equation}
\omega_{R_{m}^{\prime}D_{m}}:=\mathcal{P}_{R_{1}^{\prime}L_{1}D_{1}\rightarrow
R_{m}^{\prime}D_{m}}(\omega_{R_{1}^{\prime}R_{n+1}D_{1}}),
\end{equation}
where%
\begin{multline}
\mathcal{P}_{R_{1}^{\prime}R_{n+1}D_{1}\rightarrow R_{m}^{\prime}D_{m}}:=\\
\mathcal{F}_{R_{n+m-1}C_{m}\rightarrow D_{m}}^{n+m}\circ\mathcal{A}%
_{R_{m-1}^{\prime}D_{m-1}\rightarrow R_{m}^{\prime}C_{m}}^{m-1}\circ\\
\bigcirc_{i=1}^{m-2}\mathcal{F}_{R_{n+i}C_{i+1}\rightarrow R_{n+1+i}D_{i+1}%
}^{n+1+i}\circ\mathcal{A}_{R_{i}^{\prime}D_{i}\rightarrow R_{i+1}^{\prime
}C_{i+1}}^{i}.
\end{multline}

We also define the following states for the quantum strategy $\mathcal{M}%
^{(n)}$:%
\begin{multline}
\xi_{R_{1}^{\prime}R_{n}B_{n}}:=\mathcal{M}_{M_{n-1}A_{n}\rightarrow B_{n}%
}^{n}\\
\circ\left(  \bigcirc_{i=1}^{n-1}\mathcal{F}_{R_{i}B_{i}\rightarrow
R_{i+1}A_{i+1}}^{i+1}\circ\mathcal{M}_{M_{i-1}A_{i}\rightarrow M_{i}B_{i}}%
^{i}\right) \\
\circ\mathcal{F}_{C_{1}\rightarrow R_{1}A_{1}}^{1}(\rho_{R_{1}^{\prime}C_{1}%
}),
\end{multline}%
\begin{align}
\xi_{R_{1}^{\prime}R_{n+1}D_{1}}  &  :=\mathcal{F}_{R_{n}B_{n}\rightarrow
R_{n+1}D_{1}}^{n+1}(\xi_{R_{1}^{\prime}R_{n}B_{n}}),\\
\xi_{R_{m}^{\prime}D_{m}}  &  :=\mathcal{P}_{R_{1}^{\prime}L_{1}%
D_{1}\rightarrow R_{m}^{\prime}D_{m}}(\xi_{R_{1}^{\prime}R_{n+1}D_{1}}).
\end{align}

Then consider that%
\begin{align}
\mathbf{D}(\mathcal{N}^{(n)}\Vert\mathcal{M}^{(n)})  &  \geq\mathbf{D}%
(\omega_{R_{1}^{\prime}R_{n}B_{n}}\Vert\xi_{R_{1}^{\prime}R_{n}B_{n}})\\
&  \geq\mathbf{D}(\omega_{R_{m}^{\prime}D_{m}}\Vert\xi_{R_{m}^{\prime}D_{m}}).
\end{align}
The first inequality follows because the state $\rho_{R_{1}^{\prime}C_{1}}$
and the channels $\mathcal{F}^{i}$ for $i\in\{1,\ldots,n\}$ constitute a
particular co-strategy for discriminating $\mathcal{N}^{(n)}$ from
$\mathcal{M}^{(n)}$. The next inequality is a consequence of quantum data
processing for the underlying generalized divergence, given that
$\mathcal{P}_{R_{1}^{\prime}L_{1}D_{1}\rightarrow R_{m}^{\prime}D_{m}}$ is a
quantum channel. Since the inequality holds for all possible co-strategies
$\mathcal{T}$ that could be used to distinguish $\mathcal{K}^{(m)}$ from
$\mathcal{L}^{(m)}$, we conclude \eqref{eq:DP-strategy-div}.
\end{proof}

\begin{remark}
We note that the data processing inequality in \eqref{eq:DP-strategy-div} holds more generally for physical transformations of quantum strategy boxes that do not necessarily proceed in the order that we have fixed (i.e., it holds for other time orderings of physical transformations of strategy boxes). The main idea for establishing it is to use the data processing inequality for the underlying generalized divergence and that a co-strategy for a physically transformed strategy is a special kind of co-strategy for the original strategy.
\end{remark}

\subsection{Quantum strategy box transformation problem}

The goal of this setting is to convert the quantum strategy box $(\mathcal{N}%
^{(n)},\mathcal{M}^{(n)})$ to the strategy box $(\mathcal{K}^{(m)}%
,\mathcal{L}^{(m)})$ by means of common physical transformation $\Theta
^{(n\rightarrow m)}$, subject to the constraint that $\mathcal{K}^{(m)}$ is
realized approximately from $\mathcal{N}^{(n)}$, i.e.,%
\begin{equation}
\Theta^{(n\rightarrow m)}(\mathcal{N}^{(n)})\approx_{\varepsilon}%
\mathcal{K}^{(m)}, \label{eq:approx-notion-sequential}%
\end{equation}
while $\mathcal{L}^{(m)}$ is realized perfectly from $\mathcal{M}^{(n)}$ by
the protocol $\Theta^{(n\rightarrow m)}$, i.e.,%
\begin{equation}
\Theta^{(n\rightarrow m)}(\mathcal{M}^{(n)})=\mathcal{L}^{(m)},
\label{eq:second-exact-conversion-sequential}%
\end{equation}
just as is the case with all of the other transformations that we have
considered in the resource theory of asymmetric distinguishability. The common
physical transformation $\Theta^{(n\rightarrow m)}$ that we consider is as we
discussed in Section~\ref{sec:phys-trans-strats} and is depicted in Figures~\ref{fig:ch1-seque} and
\ref{fig:ch2-seque-exact}. It consists of a general physical processing of the
strategy box $(\mathcal{N}^{(n)},\mathcal{M}^{(n)})$ to convert it
approximately to the strategy box $(\mathcal{K}^{(m)},\mathcal{L}%
^{(m)})$, in the sense given in \eqref{eq:approx-notion-sequential}--\eqref{eq:second-exact-conversion-sequential}.

The notion of approximation that we employ in
\eqref{eq:approx-notion-sequential} is the normalized strategy distance of
\cite{CDP08a,CDP09,G12}, which generalizes the normalized diamond distance to
the  setting of interest here. This quantity is a special case of
the generalized strategy divergence from Definition~\ref{def:gen-strat-div}, with the underlying
divergence set to be the normalized trace distance $\frac{1}{2}\left\Vert
\cdot\right\Vert _{1}$. The motivation for employing the normalized strategy
distance is the same as that which we gave for normalized diamond
distance:\ it quantifies the worst-case statistical error (absolute deviation)
that one could make when trying to distinguish the simulation $\Theta
^{(n\rightarrow m)}(\mathcal{N}^{(n)})$ from the ideal output strategy $\mathcal{K}^{(m)}$ by
any quantum-physical experiment.

\begin{figure*}[ptb]
\begin{center}
\includegraphics[
width=\linewidth
]{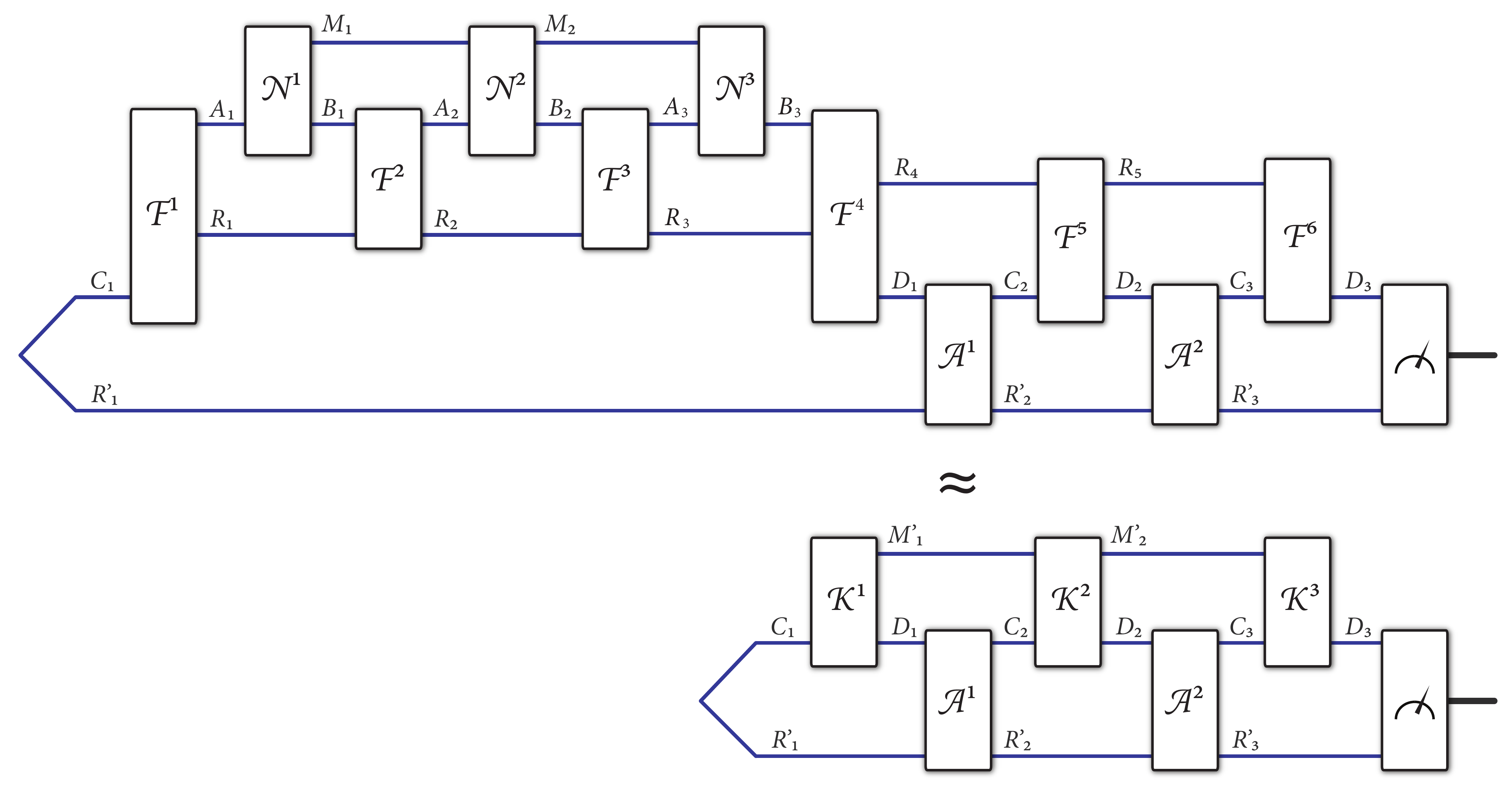}
\end{center}
\caption{Depiction of the physical transformation $\Theta^{(n\rightarrow m)}$
that converts the three-round strategy $\mathcal{N}^{(3)}$ to the three-round strategy $\mathcal{K}^{(3)}$. The physical
transformation $\Theta^{(n\rightarrow m)}$\ consists of the channels
$\mathcal{F}^{1}$, \ldots, $\mathcal{F}^{6}$. A discriminator could in
principle then perform a co-strategy to distinguish the simulation in the top
part of the figure from the ideal implementation of the strategy
$\mathcal{K}^{(3)}$ in the bottom part of the figure. We demand
that the absolute deviation in probability between any measurement outcome in
the top part be no larger than $\varepsilon$ when compared to the same from
the bottom part, i.e., that the normalized strategy distance be no larger than
$\varepsilon$.}%
\label{fig:ch1-seque}%
\end{figure*}

\begin{figure*}[ptb]
\begin{center}
\includegraphics[
width=\linewidth
]
{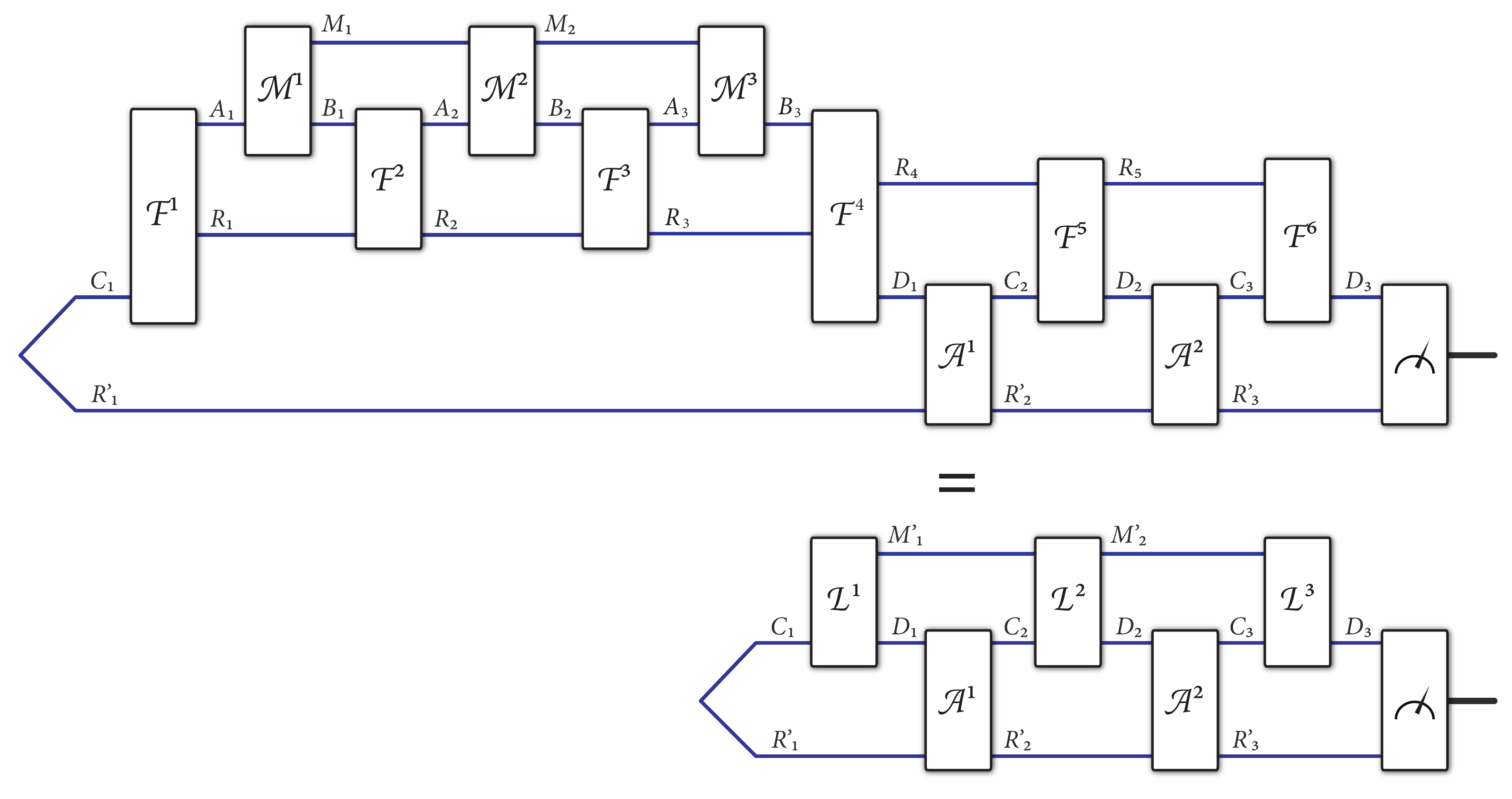}
\end{center}
\caption{Depiction of the physical transformation $\Theta^{(n\rightarrow m)}$
that converts the three-round strategy $\mathcal{M}^{(3)}$ to the three-round strategy $\mathcal{L}^{(3)}$. The physical
transformation $\Theta^{(n\rightarrow m)}$\ is the same as that given in
Figure~\ref{fig:ch1-seque} and consists of the channels $\mathcal{F}^{1}$,
\ldots, $\mathcal{F}^{6}$. A discriminator could in principle then perform a
strategy to distinguish the simulation in the top part of the figure from the
ideal implementation of the three-round strategy $\mathcal{L}^{(3)}$ in the bottom part of the figure. We demand that the absolute deviation in
probability between any measurement outcome in the top part be exactly equal to zero
when compared to the same from the bottom part, i.e., that the normalized
strategy distance be equal to zero, so that the simulation is perfect in this
case.}%
\label{fig:ch2-seque-exact}%
\end{figure*}

We now describe the above in more detail. The general physical transformation
$\Theta^{(n\rightarrow m)}$\ of the first strategy box $(\mathcal{N}%
^{(n)},\mathcal{M}^{(n)})$\ consists of $n+m$ channels, denoted by
$\mathcal{F}^{i}$ for $i\in\left\{  1,\ldots,n+m\right\}  $. To assess the
performance of the transformation%
\begin{equation}
(\mathcal{N}^{(n)},\mathcal{M}^{(n)})\rightarrow(\Theta^{(n\rightarrow
m)}(\mathcal{N}^{(n)}),\Theta^{(n\rightarrow m)}(\mathcal{M}^{(n)}))
\end{equation}
in simulating the strategy $\mathcal{K}^{(m)}$, the resulting quantum strategy
$\Theta^{(n\rightarrow m)}(\mathcal{N}^{(n)})$ is paired up with a quantum
co-strategy $\mathcal{T}$ \cite{GW07} (or tester \cite{CDP08a,CDP09}), which
consists of a state $\rho_{R^{\prime}C_{1}}$, a set $\{\mathcal{A}%
_{R_{i}^{\prime}D_{i}\rightarrow R_{i+1}^{\prime}C_{i+1}}^{i}\}_{i=1}^{m-1}$
of channels, and a final measurement $\{Q_{R_{m}^{\prime}D_{m}},I_{R_{m}%
^{\prime}D_{m}}-Q_{R_{m}^{\prime}D_{m}}\}$:%
\begin{equation}
\mathcal{T}:=\left\{  \rho_{R^{\prime}C_{1}},\{\mathcal{A}_{R_{i}^{\prime
}D_{i}\rightarrow R_{i+1}^{\prime}C_{i+1}}^{i}\}_{i=1}^{m-1},Q_{R_{m}^{\prime
}D_{m}}\right\}  . \label{eq:q-tester}%
\end{equation}

Suppose first that the transformation $\Theta^{(n\rightarrow m)}$ acts on the
quantum strategy $\mathcal{N}^{(n)}$. In this case, the first channel
$\mathcal{F}_{C_{1}\rightarrow R_{1}A_{1}}^{1}$\ acts on the state
$\rho_{R_{1}^{\prime}C_{1}}$ and outputs systems $A_{1}$ and $R_{1}$. Then the
channel $\mathcal{N}_{A_{1}\rightarrow B_{1}}$ is applied, and the second
channel $\mathcal{F}_{R_{1}B_{1}\rightarrow R_{2}A_{2}}^{2}$ is applied. This
repeats $n-1$ more times, and the resulting state is as follows:%
\begin{multline}
\omega_{R_{1}^{\prime}R_{n+1}D_{1}}:=\mathcal{F}_{R_{n}B_{n}\rightarrow
R_{n+1}D_{1}}^{n+1}\circ\mathcal{N}_{M_{n-1}A_{n}\rightarrow B_{n}}\\
\circ\left(  \bigcirc_{i=1}^{n-1}\mathcal{F}_{R_{i}B_{i}\rightarrow
R_{i+1}A_{i+1}}^{i+1}\circ\mathcal{N}_{M_{i-1}A_{i}\rightarrow M_{i}B_{i}%
}\right) \\
\circ\mathcal{F}_{C_{1}\rightarrow R_{1}A_{1}}^{1}(\rho_{R_{1}^{\prime}C_{1}%
}).
\end{multline}
At this point, the other elements of the co-strategy and the remainder of the
simulation are applied, which consists of the testing channels $\{\mathcal{A}%
_{R_{i}^{\prime}D_{i}\rightarrow R_{i+1}^{\prime}C_{i+1}}^{i}\}_{i=1}^{m-1}$
interleaved by the simulation channels $\mathcal{F}^{n+2}$, \ldots,
$\mathcal{F}^{m}$. The resulting state is then%
\begin{equation}
\omega_{R_{m}^{\prime}D_{m}}:=\mathcal{P}_{R_{1}^{\prime}L_{1}D_{1}\rightarrow
R_{m}^{\prime}D_{m}}(\omega_{R_{1}^{\prime}R_{n+1}D_{1}}),
\end{equation}
where%
\begin{multline}
\mathcal{P}_{R_{1}^{\prime}R_{n+1}D_{1}\rightarrow R_{m}^{\prime}D_{m}}:=\\
\mathcal{F}_{R_{n+m-1}C_{m}\rightarrow D_{m}}^{n+m}\circ\mathcal{A}%
_{R_{m-1}^{\prime}D_{m-1}\rightarrow R_{m}^{\prime}C_{m}}^{m-1}\circ\\
\bigcirc_{i=1}^{m-2}\mathcal{F}_{R_{n+i}C_{i+1}\rightarrow R_{n+1+i}D_{i+1}%
}^{n+1+i}\circ\mathcal{A}_{R_{i}^{\prime}D_{i}\rightarrow R_{i+1}^{\prime
}C_{i+1}}^{i}%
\end{multline}

The final state above is then compared with the following state, which results
from the application of the quantum co-strategy $\mathcal{T}$ to the ideal strategy
$\mathcal{K}^{(m)}$:%
\begin{multline}
\tau_{R_{m}^{\prime}D_{m}}:=\mathcal{K}_{M_{m-1}^{\prime}C_{m}\rightarrow
D_{m}}\circ\\
(\bigcirc_{i=1}^{m-1}\mathcal{A}_{R_{i}^{\prime}D_{i}\rightarrow
R_{i+1}^{\prime}C_{i+1}}^{i}\circ\mathcal{K}_{M_{i-1}^{\prime}C_{i}\rightarrow
M_{i}^{\prime}D_{i}})(\rho_{R_{1}^{\prime}C_{1}})
\end{multline}
See Figure~\ref{fig:ch1-seque} for a depiction of these two scenarios.

The simulation has $\varepsilon$ error if the following inequality holds%
\begin{equation}
\sup_{\rho,\{\mathcal{A}^{i}\}_{i=1}^{m-1},Q}\left\vert \operatorname{Tr}%
[Q_{R_{m}^{\prime}D_{m}}\left(  \omega_{R_{m}^{\prime}D_{m}}-\tau
_{R_{m}^{\prime}D_{m}}\right)  ]\right\vert \leq\varepsilon,
\label{eq:sequential-condition-quality}%
\end{equation}
where the optimization is with respect to all quantum co-strategies $\mathcal{T}$ as
defined in \eqref{eq:q-tester}. The expression on the left-hand side above is
in fact equal to the $m$-round normalized quantum strategy distance considered
in \cite{CDP08a,CDP09,G12,Gutoski2018fidelityofquantum}, so that we can write
\eqref{eq:sequential-condition-quality}\ equivalently as%
\begin{equation}
\frac{1}{2}\left\Vert \Theta^{(n\rightarrow m)}(\mathcal{N}^{(n)}%
)-\mathcal{K}^{(m)}\right\Vert _{\diamond m}\leq\varepsilon.
\label{eq:strategy-dist-approx}%
\end{equation}
As a shorthand for the inequality in \eqref{eq:strategy-dist-approx}, we
employ the notation%
\begin{equation}
\Theta^{(n\rightarrow m)}(\mathcal{N}^{(n)})\approx_{\varepsilon}%
\mathcal{K}^{(m)}. \label{eq:strategy-dist-approx-shorthand}%
\end{equation}

It is also demanded that the transformation $\Theta^{(n\rightarrow m)}$\ be
such that $\Theta^{(n\rightarrow m)}(\mathcal{M}^{(n)})=\mathcal{L}^{(m)}$,
which is the same \cite{G12} as demanding that%
\begin{equation}
\frac{1}{2}\left\Vert \Theta^{(n\rightarrow m)}(\mathcal{M}^{(n)}%
)-\mathcal{L}^{(m)}\right\Vert _{\diamond m}=0.
\end{equation}
This is consistent with our prior error criteria in the simpler scenarios for
the resource theory of asymmetric distinguishability.

Thus, the general strategy box transformation problem can be phrased as the
following optimization problem, which is a function of $n,m\in\mathbb{Z}^{+}$
and channels $\mathcal{N}$, $\mathcal{M}$, $\mathcal{K}$, and $\mathcal{L}$:%
\begin{equation}
\inf_{\Theta^{(n\rightarrow m)}}\left\{
\begin{array}
[c]{c}%
\varepsilon\in\left[  0,1\right]  :\Theta^{(n\rightarrow m)}(\mathcal{N}%
^{(n)})\approx_{\varepsilon}\mathcal{K}^{(m)},\\
\Theta^{(n\rightarrow m)}(\mathcal{M}^{(n)})=\mathcal{L}^{(m)}%
\end{array}
\right\}  , \label{eq:gen-seq-ch-box-trans-prob}%
\end{equation}
where the infimum is with respect to physical transformations $\Theta
^{(n\rightarrow m)}$.

We assert here that the optimization problem in
\eqref{eq:gen-seq-ch-box-trans-prob} can be cast as a semi-definite program,
by employing the facts that the quantum strategy distance can be calculated by
a semi-definite program\ and one can write down Choi operators for
$\Theta^{(n\rightarrow m)}$, $\mathcal{N}^{(n)}$, $\mathcal{M}^{(n)}$,
$\mathcal{K}^{(m)}$, and $\mathcal{L}^{(m)}$ \cite{CDP08a,CDP09,G12} along
with various non-signaling constraints to denote the time-orderings involved.
However, we do not elaborate on the details here.

In Appendix~\ref{app:q-strat-trans}, Proposition~\ref{prop:n-to-m-seq-ch-box-trans-conv-bnd} states converse bounds that apply to arbitrary
protocols that transform the $n$-round strategy box $(\mathcal{N}%
^{(n)},\mathcal{M}^{(n)})$ to the $m$-round strategy box $(\mathcal{K}%
^{(m)},\mathcal{L}^{(m)})$ while satisfying $\Theta^{(n\rightarrow
m)}(\mathcal{N}^{(n)})\approx_{\varepsilon}\mathcal{K}^{(m)}$ and
$\Theta^{(n\rightarrow m)}(\mathcal{M}^{(n)})=\mathcal{L}^{(m)}$. The bounds are expressed in terms of strategy R\'enyi divergences, which are defined as special cases of Definition~\ref{def:gen-strat-div} with the underlying divergence fixed to be the 
R\'enyi divergences.

\subsection{Asymptotic setting for sequential channel box transformations}

It does not seem sensible to consider an asymptotic version of the general
strategy box transformation problem, as in general there is no regular
structure associated with arbitrary strategy boxes. However, if we impose some
structure, then it is sensible to do so.

The simplest structure that we can impose is that each strategy box is
actually a sequential channel box, involving sequential uses of the same
quantum channels. Then we can phrase the sequential channel box transformation
problem in an asymptotic, Shannon-theoretic way, similar to how we did for the
parallel channel box transformation problem in
Section~\ref{sec:gen-box-trans-parallel}.

Let $n,m\in\mathbb{Z}^{+}$ and $\varepsilon\in\left[  0,1\right]  $. An
$(n,m,\varepsilon)$ sequential channel box transformation protocol for the
channel boxes $(\mathcal{N},\mathcal{M})$ and $(\mathcal{K},\mathcal{L})$
consists of a physical transformation $\Theta^{(n\rightarrow m)}$, as
described in Section~\ref{sec:phys-trans-strats}, such that 
\begin{align}
\Theta^{(n\rightarrow m)}(\mathcal{N}%
^{(n)})
& \approx_{\varepsilon}\mathcal{K}^{(m)} , \\\Theta^{(n\rightarrow
m)}(\mathcal{M}^{(n)}) & =\mathcal{L}^{(m)},
\end{align}
where $\mathcal{N}^{(n)}$,
$\mathcal{M}^{(n)}$, $\mathcal{K}^{(m)}$, and $\mathcal{L}^{(m)}$ are the
sequential channels corresponding to the channels $\mathcal{N}$, $\mathcal{M}%
$, $\mathcal{K}$, and $\mathcal{L}$, respectively. For clarity, Figure~\ref{fig:seq-ch-box-trans} depicts an example of a sequential channel box transformation protocol.
\begin{figure*}[ptb]
\begin{center}
\includegraphics[
width=\linewidth
]
{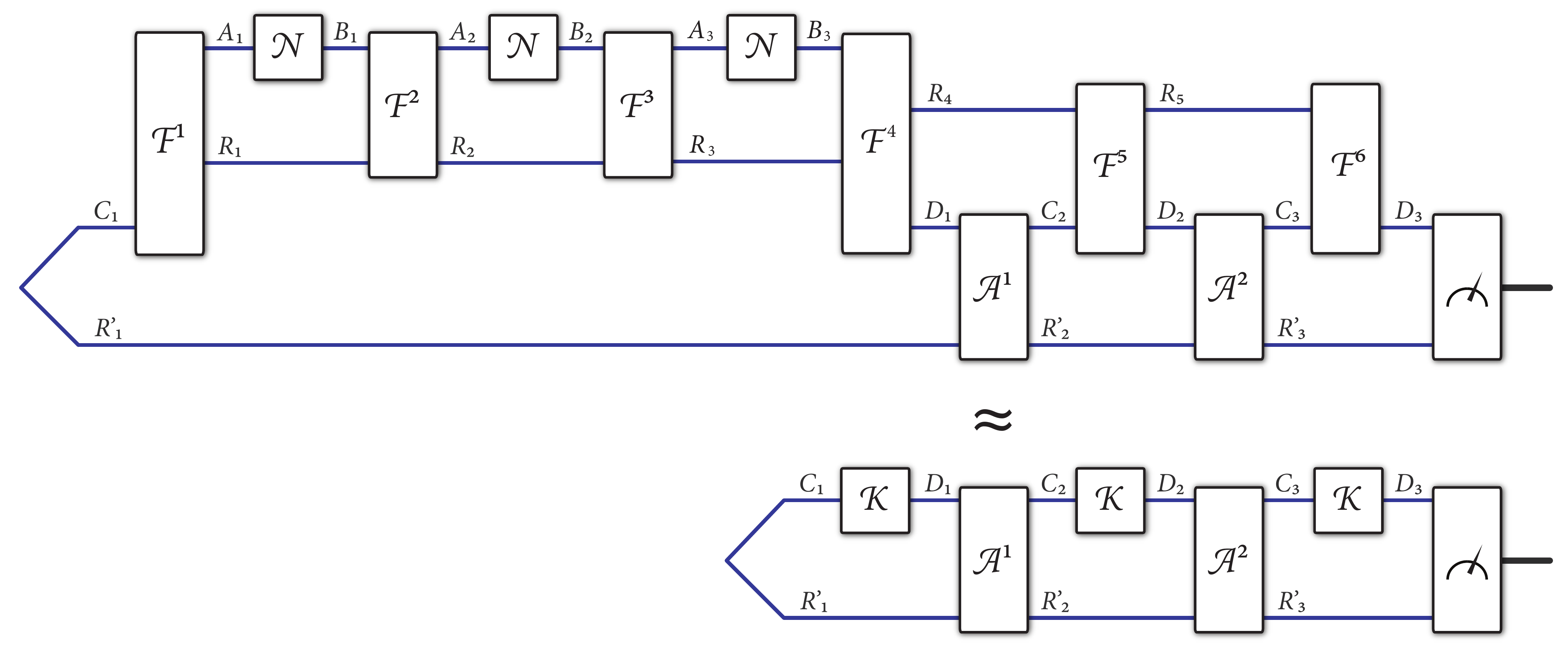}
\includegraphics[
width=\linewidth
]
{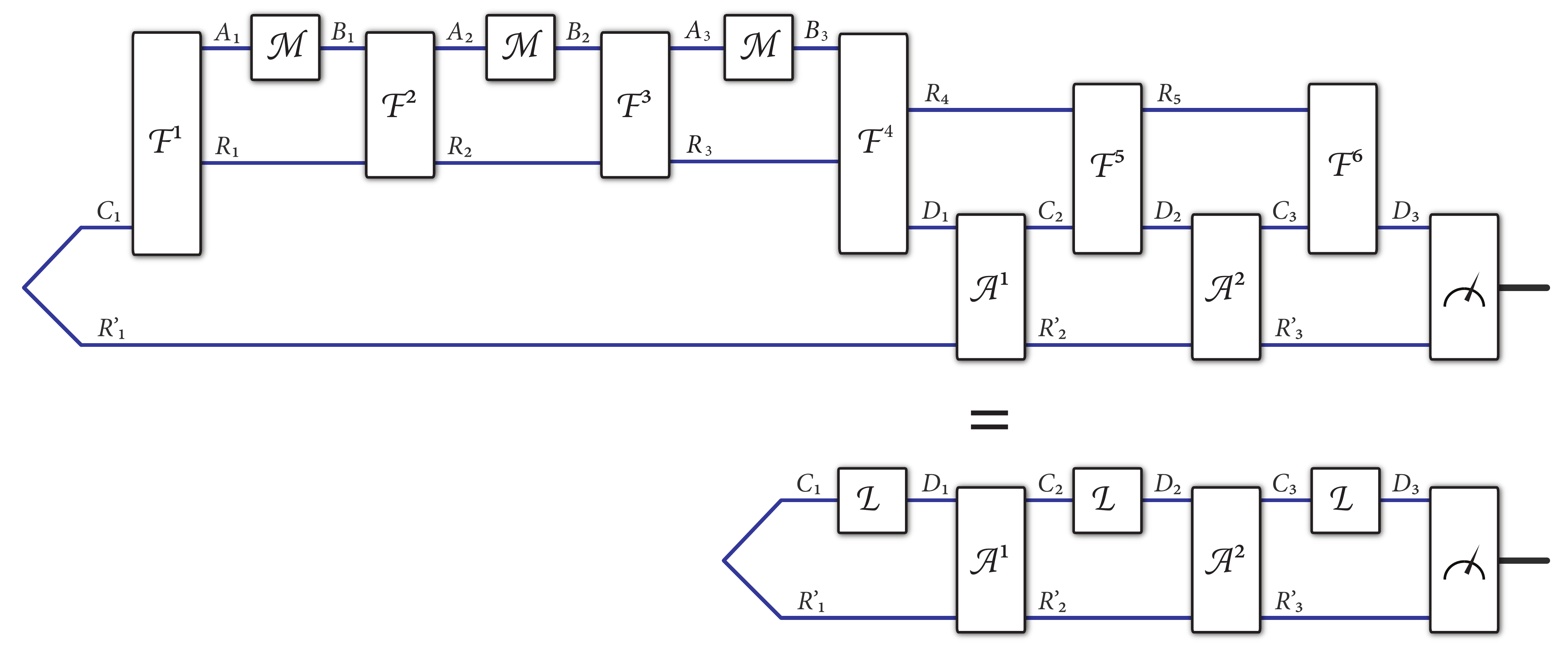}
\end{center}
\caption{Depiction of a sequential channel box transformation protocol. Three sequential uses of the channel $\mathcal{N}$ are converted approximately to three sequential uses of the channel $\mathcal{K}$, while three sequential uses of the channel $\mathcal{M}$ are converted exactly to three sequential uses of the channel $\mathcal{L}$. This is a special case of a strategy box transformation protocol, as depicted in Figures~\ref{fig:ch1-seque} and \ref{fig:ch2-seque-exact}.}%
\label{fig:seq-ch-box-trans}%
\end{figure*}

A rate $R$ is achievable if for all $\varepsilon\in(0,1]$, $\delta>0$, and
sufficiently large $n$, there exists an $\left(  n,n\left[  R-\delta\right]
,\varepsilon\right)  $ sequential channel box transformation protocol. The
optimal sequential channel box transformation rate $R((\mathcal{N}%
,\mathcal{M})\rightarrow(\mathcal{K},\mathcal{L}))$ is equal to the supremum
of all achievable rates.

On the other hand, a rate $R$ is a strong converse rate if for all
$\varepsilon\in\lbrack0,1)$, $\delta>0$, and sufficiently large $n$, there
does not exist an $(n,n\left[  R+\delta\right]  ,\varepsilon)$ sequential
channel box transformation protocol. The strong converse sequential channel
box transformation rate $\widetilde{R}((\mathcal{N},\mathcal{M})\rightarrow
(\mathcal{K},\mathcal{L}))$ is equal to the infimum of all strong converse rates.

The following inequality is a direct consequence of definitions:%
\begin{equation}
R((\mathcal{N},\mathcal{M})\rightarrow(\mathcal{K},\mathcal{L}))\leq
\widetilde{R}((\mathcal{N},\mathcal{M})\rightarrow(\mathcal{K},\mathcal{L})).
\label{eq:sequential-rates-gen-trans}%
\end{equation}

Although it is a challenging question in general to determine the optimal
rates in \eqref{eq:sequential-rates-gen-trans} for arbitrary channel boxes,
there are some special cases for which it is possible to determine them.

\begin{enumerate}
\item If the channel boxes $(\mathcal{N},\mathcal{M})$\ and $(\mathcal{K}%
,\mathcal{L})$ are environment-seizable, then our prior results from
\cite{WW19states} and Corollary~\ref{cor:seq-ch-box-conv} from Appendix~\ref{app:bnds-strat-trans} imply that%
\begin{align}
&  R((\mathcal{N},\mathcal{M})\rightarrow(\mathcal{K},\mathcal{L}))\notag \\
&  =\widetilde{R}((\mathcal{N},\mathcal{M})\rightarrow(\mathcal{K}%
,\mathcal{L}))\label{eq:opt-ach-env-seiz-seq-1}\\
&  =\frac{D(\mathcal{N}\Vert\mathcal{M})}{D(\mathcal{K}\Vert\mathcal{L})}.
\label{eq:opt-ach-env-seiz-seq-2}%
\end{align}
The main reason that this simplification occurs is that the channels involved
for environment-seizable pairs are equivalent to states, so that the prior
achievability results for states \cite{WW19states} apply. Also, the converse
bounds from Appendix~\ref{app:bnds-strat-trans} simplify for the same reason.

\item If the channel boxes $(\mathcal{N},\mathcal{M})$\ and $(\mathcal{K}%
,\mathcal{L})$ are classical--quantum, then the following strong converse
bound holds%
\begin{equation}
\widetilde{R}((\mathcal{N},\mathcal{M})\rightarrow(\mathcal{K},\mathcal{L}%
))\leq\frac{D(\mathcal{N}\Vert\mathcal{M})}{D(\mathcal{K}\Vert\mathcal{L})},
\label{eq:str-conv-cq-seq}%
\end{equation}
as a consequence of  \cite[Lemma~26]{BHKW18} and the discussion in Appendix~\ref{app:seq-ch-trans-amortized-div}. It is
reasonable to conjecture that this bound is saturated---what remains is to
show that $D(\mathcal{N}\Vert\mathcal{M})$ is the optimal rate of
distinguishability dilution for classical--quantum channels.

\item If the channel box $(\mathcal{N},\mathcal{M})$ is classical--quantum and
$(\mathcal{K},\mathcal{L})$ is environment seizable, then the equalities in
\eqref{eq:opt-ach-env-seiz-seq-1}--\eqref{eq:opt-ach-env-seiz-seq-2} hold.
This is a consequence of the upper bound in \eqref{eq:str-conv-cq-seq} holding
in this case, while the lower bound $R((\mathcal{N},\mathcal{M})\rightarrow
(\mathcal{K},\mathcal{L}))\geq\frac{D(\mathcal{N}\Vert\mathcal{M}%
)}{D(\mathcal{K}\Vert\mathcal{L})}$ follows because one can first distill bits
of asymmetric distinguishability from $(\mathcal{N},\mathcal{M})$ at the rate
$D(\mathcal{N}\Vert\mathcal{M})$ and then dilute them to $(\mathcal{K}%
,\mathcal{L})$, in a sequential simulation, with the latter simulation being
possible easily by preparing the environment states for $(\mathcal{K}%
,\mathcal{L})$ and then acting with the relevant common channels on demand
when needed.
\end{enumerate}

\section{Distillation and dilution of quantum strategy and sequential channel
boxes}

\label{sec:dist-dil-strat-seq-boxes}

In this section, we present distillation and dilution of quantum strategy boxes.
A special case of this theory involves distillation and dilution of sequential
channel boxes. Here we are interested in not only in the optimal number but
also \textit{rates} at which one can distill or dilute bits of asymmetric
distinguishability from or to a strategy or sequential channel box,
respectively, both in the exact and approximate cases.

All of the basic definitions in this case represent generalizations of what we have presented previously for one-shot tasks regarding quantum channels. As such, we do not delve into as many details as we did before but mainly state the results and provide brief justifications.

\subsection{Exact case: distillable distinguishability}

\label{sec:exact-distill-distingui-1-shot}

Given a strategy box $(\mathcal{N}^{(n)},\mathcal{M}^{(n)})$, the exact
distillable distinguishability is equal to the largest $M$ such that we can
transform $(\mathcal{N}^{(n)},\mathcal{M}^{(n)})$ to the channel box
$(\mathcal{R}_{C\rightarrow D}^{|0\rangle\langle0|},\mathcal{R}_{C\rightarrow
D}^{\pi_{M}})$ exactly by means of a physical transformation $\Theta
^{(n\rightarrow1)}$. Note that the physical transformation
$\Theta
^{(n\rightarrow1)}$ is a special case of those that we discussed previously in Section~\ref{sec:phys-trans-strats}, taking an $n$-round quantum strategy box to a channel box. Mathematically, the exact
distillable distinguishability is defined as the following
optimization problem:%
\begin{multline}
D_{d}^{0}(\mathcal{N}^{(n)},\mathcal{M}^{(n)}):=\\
\sup_{\Theta^{(n\rightarrow1)}}\left\{
\begin{array}
[c]{c}%
\log_{2}M:\Theta^{(n\rightarrow1)}(\mathcal{N}^{(n)})=\mathcal{R}%
_{C\rightarrow D}^{|0\rangle\langle0|},\\
\Theta^{(n\rightarrow1)}(\mathcal{M}^{(n)})=\mathcal{R}_{C\rightarrow D}%
^{\pi_{M}}%
\end{array}
\right\}  .
\end{multline}

Note that this problem is essentially equivalent to $\left\lfloor D_{d}%
^{0}(\mathcal{N}^{(n)},\mathcal{M}^{(n)})\right\rfloor $, which is the largest
$m$ for which a physical transformation $\Theta^{(n\rightarrow m)}$\ exists
such that%
\begin{align}
\Theta^{(n\rightarrow m)}(\mathcal{N}^{(n)})  &  =(\mathcal{R}_{C\rightarrow
D}^{|0\rangle\langle0|})^{(m)},\\
\Theta^{(n\rightarrow m)}(\mathcal{M}^{(n)})  &  =(\mathcal{R}_{C\rightarrow
D}^{\pi})^{(m)},
\end{align}
where the superscript $(m)$ indicates $m$ sequential channel uses. This is
because the channel box $(\mathcal{R}_{C\rightarrow D}^{|0\rangle\langle
0|},\mathcal{R}_{C\rightarrow D}^{\pi_{2^{m}}})$ and the sequential channel box
$((\mathcal{R}_{C\rightarrow D}^{|0\rangle\langle0|})^{(m)},(\mathcal{R}%
_{C\rightarrow D}^{\pi})^{(m)})$ are equivalent to each other by means of
common quantum strategies, due to the fact that the underlying channel pairs
are environment seizable and thus equivalent to state boxes.

By employing reasoning similar to that which we employed  previously to justify \eqref{eq:exact-distill-dist-min}, we conclude
that%
\begin{equation}
D_{d}^{0}(\mathcal{N}^{(n)},\mathcal{M}^{(n)})=D_{\min}(\mathcal{N}^{(n)}%
\Vert\mathcal{M}^{(n)}),
\end{equation}
where $D_{\min}(\mathcal{N}^{(n)}\Vert\mathcal{M}^{(n)})$ is the quantum
strategy divergence from Definition~\ref{def:gen-strat-div}, with $\mathbf{D}$ therein set to
$D_{\min}$. The main reasons that this equality holds are that 1)\ the optimal
co-strategy for $D_{\min}(\mathcal{N}^{(n)}\Vert\mathcal{M}^{(n)})$ leads to a
protocol for distilling bits of asymmetric distinguishability and 2)\ its
optimality follows from the data processing inequality (Theorem~\ref{thm:DP-strat-div}) for $D_{\min
}(\mathcal{N}^{(n)}\Vert\mathcal{M}^{(n)})$ with respect to an arbitrary
physical transformation $\Theta^{(n\rightarrow1)}$ that produces the channel
box $(\mathcal{R}_{C\rightarrow D}^{|0\rangle\langle0|},\mathcal{R}%
_{C\rightarrow D}^{\pi_{M}})$ exactly.

If the strategy box $(\mathcal{N}^{(n)},\mathcal{M}^{(n)})$ is in fact a
sequential channel box for all $n$, with corresponding channels $\mathcal{N}$
and $\mathcal{M}$, then we define the exact sequential distillable
distinguishability as%
\begin{equation}
D_{d}^{0}(\mathcal{N},\mathcal{M}):=\lim_{n\rightarrow\infty}\frac{1}{n}%
D_{d}^{0}(\mathcal{N}^{(n)},\mathcal{M}^{(n)}).
\end{equation}
Just as with the parallel case discussed in Section~\ref{sec:exact-distill-distingui-1-shot}, the underlying quantity
$D_{d}^{0}(\mathcal{N}^{(n)},\mathcal{M}^{(n)})$ can jump from zero to
$\infty$ as $n$ increases. In fact, this jump can occur in the simplest case
when $n$ goes from one to two \cite{Harrow10}. By the general bound from
\cite{BHKW18}, we have that%
\begin{equation}
D_{d}^{0}(\mathcal{N},\mathcal{M})\leq D_{\max}(\mathcal{N}\Vert\mathcal{M}).
\end{equation}

\subsection{Exact case: distinguishability cost}

Given a strategy box $(\mathcal{N}^{(n)},\mathcal{M}^{(n)})$, the exact
distinguishability cost is equal to the smallest $M$ such that we can
transform the channel box $(\mathcal{R}_{C\rightarrow D}^{|0\rangle\langle
0|},\mathcal{R}_{C\rightarrow D}^{\pi_{M}})$ to $(\mathcal{N}^{(n)}%
,\mathcal{M}^{(n)})$ exactly by means of a physical transformation
$\Theta^{(1\rightarrow n)}$.
Note that the physical transformation
$\Theta
^{(1\rightarrow n)}$ is a special case of those that we discussed previously in Section~\ref{sec:phys-trans-strats}, taking a channel box to an $n$-round quantum strategy box.
Mathematically, the exact
distinguishability cost is defined as the following
optimization problem:%
\begin{multline}
D_{c}^{0}(\mathcal{N}^{(n)},\mathcal{M}^{(n)}):=\\
\inf_{\Theta^{(1\rightarrow n)}}\left\{
\begin{array}
[c]{c}%
\log_{2}M:\mathcal{N}^{(n)}=\Theta^{(1\rightarrow n)}(\mathcal{R}%
_{C\rightarrow D}^{|0\rangle\langle0|}),\\
\mathcal{M}^{(n)}=\Theta^{(1\rightarrow n)}(\mathcal{R}_{C\rightarrow D}%
^{\pi_{M}})
\end{array}
\right\}  .
\end{multline}

For similar reasons stated in the previous section, this problem is
essentially equivalent to $\left\lceil D_{c}^{0}(\mathcal{N}^{(n)}%
,\mathcal{M}^{(n)})\right\rceil $, which is the smallest $m$ for which a
physical transformation $\Theta^{(m\rightarrow n)}$\ exists such that%
\begin{align}
\mathcal{N}^{(n)}  &  =\Theta^{(m\rightarrow n)}((\mathcal{R}_{C\rightarrow
D}^{|0\rangle\langle0|})^{(m)}),\\
\mathcal{M}^{(n)}  &  =\Theta^{(m\rightarrow n)}((\mathcal{R}_{C\rightarrow
D}^{\pi})^{(m)}),
\end{align}
where the superscript $(m)$ again indicates $m$ sequential channel uses.

By employing reasoning similar to that which we used previously to justify \eqref{eq:exact-dist-cost-max}, we conclude
that%
\begin{equation}
D_{c}^{0}(\mathcal{N}^{(n)},\mathcal{M}^{(n)})=D_{\max}(\mathcal{N}^{(n)}%
\Vert\mathcal{M}^{(n)}), \label{eq:max-div-strategy}%
\end{equation}
where $D_{\max}(\mathcal{N}^{(n)}\Vert\mathcal{M}^{(n)})$ is the quantum
strategy divergence from Definition~\ref{def:gen-strat-div}, with $\mathbf{D}$ therein set to
$D_{\max}$. The quantity $D_{\max}(\mathcal{N}^{(n)}\Vert\mathcal{M}^{(n)})$ has already been defined in and studied in \cite{Chiribella_2016}, wherein it was shown that it is equal to the max-relative entropy of the Choi operators of the strategies. Eq.~\eqref{eq:max-div-strategy} gives $D_{\max}(\mathcal{N}^{(n)}\Vert\mathcal{M}^{(n)})$ its fundamental  operational meaning in terms of the exact distinguishability cost of the strategy box $(\mathcal{N}^{(n)},\mathcal{M}^{(n)})$.  

The main reasons that the equality in \eqref{eq:max-div-strategy} holds are that 1)\ an optimal
dilution protocol, generalizing that from Appendix~\ref{app:ch-max-rel-ent-exact-dist-cost}, results from a strategy that outputs strategy $\mathcal{N}^{(n)}$  if $|0\rangle \langle 0 | $ is input and outputs the strategy
\begin{equation}
\frac{2^{D_{\max}(\mathcal{N}^{(n)}\Vert\mathcal{M}^{(n)})}\mathcal{M}^{(n)}-\mathcal{N}^{(n)}}{2^{D_{\max}(\mathcal{N}^{(n)}\Vert\mathcal{M}^{(n)})} - 1}
\end{equation} if $|1\rangle \langle 1 |$ is input  and 2)\ its optimality follows from the data processing inequality (Theorem~\ref{thm:DP-strat-div})
for $D_{\max}(\mathcal{N}^{(n)}\Vert\mathcal{M}^{(n)})$ with respect to an
arbitrary physical transformation $\Theta^{(1\rightarrow n)}$ that produces
the strategy box $(\mathcal{N}^{(n)},\mathcal{M}^{(n)})$\ exactly from the
channel box $(\mathcal{R}_{C\rightarrow D}^{|0\rangle\langle0|},\mathcal{R}%
_{C\rightarrow D}^{\pi_{M}})$.

If the strategy box $(\mathcal{N}^{(n)},\mathcal{M}^{(n)})$ is in fact a
sequential channel box for all $n$, with corresponding channels $\mathcal{N}$
and $\mathcal{M}$, then we define the exact sequential distinguishability cost
as%
\begin{equation}
D_{c}^{0}(\mathcal{N},\mathcal{M}):=\lim_{n\rightarrow\infty}\frac{1}{n}%
D_{c}^{0}(\mathcal{N}^{(n)},\mathcal{M}^{(n)}).
\end{equation}
Note that the following inequality holds%
\begin{equation}
D_{c}^{0}(\mathcal{N},\mathcal{M})\geq D_{c}^{0,p}(\mathcal{N},\mathcal{M}),
\label{eq:seq-par-exact-cost}%
\end{equation}
because a sequential simulation is more stringent than a parallel simulation.
That is, any sequential simulation works as a parallel simulation.

A key result that we have for this problem, strengthening our earlier finding
from \eqref{eq:par-exact-cost-dmax}, is expressed by the following theorem.

\begin{theorem}
For channels $\mathcal{N}$ and $\mathcal{M}$, the exact sequential
distinguishability cost is equal to the channel max-relative entropy:%
\begin{equation}
D_{c}^{0}(\mathcal{N},\mathcal{M})=D_{\max}(\mathcal{N}\Vert\mathcal{M}).
\label{eq:dmax-seq-exact-cost}%
\end{equation}

\end{theorem}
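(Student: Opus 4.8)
The plan is to establish the two inequalities $D_c^0(\mathcal{N},\mathcal{M}) \geq D_{\max}(\mathcal{N}\Vert\mathcal{M})$ and $D_c^0(\mathcal{N},\mathcal{M}) \leq D_{\max}(\mathcal{N}\Vert\mathcal{M})$ separately. The lower bound is immediate from what has already been shown: a sequential simulation is in particular a parallel simulation, so \eqref{eq:seq-par-exact-cost} gives $D_c^0(\mathcal{N},\mathcal{M}) \geq D_c^{0,p}(\mathcal{N},\mathcal{M})$, and \eqref{eq:par-exact-cost-dmax} identifies the right-hand side with $D_{\max}(\mathcal{N}\Vert\mathcal{M})$. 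Equivalently, one can observe directly that a parallel discrimination protocol is a particular co-strategy, so that $D_{\max}(\mathcal{N}^{(n)}\Vert\mathcal{M}^{(n)}) \geq D_{\max}(\mathcal{N}^{\otimes n}\Vert\mathcal{M}^{\otimes n}) = n\,D_{\max}(\mathcal{N}\Vert\mathcal{M})$ by additivity of the channel max-relative entropy.

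For the upper bound, by the identification in \eqref{eq:max-div-strategy} it suffices to prove the single-copy-per-round estimate $D_{\max}(\mathcal{N}^{(n)}\Vert\mathcal{M}^{(n)}) \leq n\,D_{\max}(\mathcal{N}\Vert\mathcal{M})$ for every $n$, since dividing by $n$ and letting $n\to\infty$ then gives the conclusion. If $D_{\max}(\mathcal{N}\Vert\mathcal{M})=\infty$ there is nothing to prove, so set $\lambda := D_{\max}(\mathcal{N}\Vert\mathcal{M}) < \infty$, which is equivalent to the statement that $2^\lambda\mathcal{M} - \mathcal{N}$ is completely positive, and hence that $(\mathrm{id}_S \otimes \mathcal{N}_{A\to B})(Z_{SA}) \leq 2^\lambda (\mathrm{id}_S \otimes \mathcal{M}_{A\to B})(Z_{SA})$ for every reference system $S$ and every $Z_{SA} \geq 0$. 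Fix an arbitrary co-strategy for $(\mathcal{N}^{(n)},\mathcal{M}^{(n)})$, given by a state $\rho_{R_1 A_1}$ and channels $\{\mathcal{A}^i_{R_i B_i \to R_{i+1} A_{i+1}}\}_{i=1}^{n-1}$, and let $\rho_{R_n B_n}$, $\sigma_{R_n B_n}$ be the output states from \eqref{eq:state-rho-seq-ch-1}--\eqref{eq:state-sig-seq-ch-2}. Denote by $\omega^{(k)}$ and $\xi^{(k)}$ the states on $R_{k+1}A_{k+1}$ produced after $k$ rounds of (channel then co-strategy channel) when the channel is $\mathcal{N}$ throughout, respectively $\mathcal{M}$ throughout. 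I would show by induction on $k$ that $\omega^{(k)} \leq 2^{k\lambda}\,\xi^{(k)}$: the base case $k=0$ reads $\rho_{R_1 A_1} \leq \rho_{R_1 A_1}$; for the inductive step, the operator inequality above (with reference $R_{k+1}$) gives $\mathcal{N}_{A_{k+1}\to B_{k+1}}(\omega^{(k)}) \leq 2^\lambda \mathcal{M}_{A_{k+1}\to B_{k+1}}(\omega^{(k)})$, then complete positivity of $\mathcal{M}_{A_{k+1}\to B_{k+1}}$ applied to the induction hypothesis gives $\mathcal{M}_{A_{k+1}\to B_{k+1}}(\omega^{(k)}) \leq 2^{k\lambda}\mathcal{M}_{A_{k+1}\to B_{k+1}}(\xi^{(k)})$, and finally the common co-strategy channel $\mathcal{A}^{k+1}$ is completely positive and preserves the combined inequality. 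After $n$ rounds this produces $\rho_{R_n B_n} \leq 2^{n\lambda}\sigma_{R_n B_n}$, so $D_{\max}(\rho_{R_n B_n}\Vert\sigma_{R_n B_n}) \leq n\lambda$; taking the supremum over all co-strategies yields $D_{\max}(\mathcal{N}^{(n)}\Vert\mathcal{M}^{(n)}) \leq n\lambda$, as desired.

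The step I expect to require the most care is this propagation of the operator inequality through a round: one needs that the Choi-level inequality underlying $D_{\max}(\mathcal{N}\Vert\mathcal{M})=\lambda$ transfers to $(\mathrm{id}\otimes\mathcal{N})(Z) \leq 2^\lambda (\mathrm{id}\otimes\mathcal{M})(Z)$ with an arbitrary reference, and that the co-strategy channels $\mathcal{A}^i$ — which act jointly on the reference and the channel output, and which carry a memory that could in principle amplify errors — nonetheless leave the inequality intact because they are completely positive and trace preserving. This is precisely the ``amortization collapse'' for the channel max-relative entropy; rather than reproving it from scratch, one could alternatively invoke the corresponding statement already established in \cite{BHKW18}. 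Combining the upper and lower bounds yields \eqref{eq:dmax-seq-exact-cost}.
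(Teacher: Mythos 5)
Your proposal is correct, and your lower bound is exactly the paper's (via \eqref{eq:seq-par-exact-cost} and \eqref{eq:par-exact-cost-dmax}), but your upper bound takes a genuinely different route. The paper never passes through \eqref{eq:max-div-strategy}: it constructs an explicit round-by-round sequential simulation, using the one-shot exact dilution result \eqref{eq:exact-dist-cost-max} applied to the box $(\mathcal{R}^{|0\rangle\langle0|}\otimes\mathcal{N},\mathcal{R}^{\pi_{M_{1}}}\otimes\mathcal{M})$ together with additivity of the channel max-relative entropy, so that each round costs $D_{\max}(\mathcal{N}\Vert\mathcal{M})$ bits and the leftover distinguishability is carried forward; the total cost after $n$ rounds is $nD_{\max}(\mathcal{N}\Vert\mathcal{M})$. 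You instead invoke the identification $D_{c}^{0}(\mathcal{N}^{(n)},\mathcal{M}^{(n)})=D_{\max}(\mathcal{N}^{(n)}\Vert\mathcal{M}^{(n)})$ from \eqref{eq:max-div-strategy} and then prove $D_{\max}(\mathcal{N}^{(n)}\Vert\mathcal{M}^{(n)})\leq nD_{\max}(\mathcal{N}\Vert\mathcal{M})$ by propagating the operator inequality $\mathcal{N}\leq2^{\lambda}\mathcal{M}$ (complete positivity of $2^{\lambda}\mathcal{M}-\mathcal{N}$, which is indeed equivalent to the Choi-level statement \eqref{eq:dmax-simplify}) through every round of an arbitrary co-strategy — this is precisely the amortization-collapse argument for $D_{\max}$ of \cite{BHKW18}, and your induction is sound, since co-strategy channels are CPTP and hence preserve the ordering of positive operators. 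What each approach buys: your route yields the finite-$n$ statement $D_{c}^{0}(\mathcal{N}^{(n)},\mathcal{M}^{(n)})\leq nD_{\max}(\mathcal{N}\Vert\mathcal{M})$ (and, combined with the parallel lower bound, additivity of the strategy max-relative entropy) as a byproduct, but it leans on the achievability half of \eqref{eq:max-div-strategy}, whose justification in the paper itself rests on the Choi-operator characterization of the strategy max-relative entropy from \cite{Chiribella_2016}; the paper's protocol-based proof is more self-contained at this point, needing only the one-shot channel result and tensor additivity, which is presumably why the authors chose it.
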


\begin{proof}
The inequality $D_{c}^{0}(\mathcal{N},\mathcal{M})\geq D_{\max}(\mathcal{N}%
\Vert\mathcal{M})$ is a consequence of \eqref{eq:seq-par-exact-cost} and
\eqref{eq:par-exact-cost-dmax}. The other inequality is a consequence of the
following scheme for simulating the sequential channel box $(\mathcal{N}%
^{(n)},\mathcal{M}^{(n)})$, similar to that employed in
\cite{GFWRSCW18,WW19PPT}. In the first round of the sequential simulation, one
starts from the channel box $(\mathcal{R}_{C\rightarrow D}^{|0\rangle
\langle0|},\mathcal{R}_{C\rightarrow D}^{\pi_{M}})$ and simulates the tensor
product channel box $(\mathcal{R}_{C\rightarrow D}^{|0\rangle\langle0|}%
\otimes\mathcal{N},\mathcal{R}_{C\rightarrow D}^{\pi_{M_{1}}}\otimes
\mathcal{M})$. Employing \eqref{eq:exact-dist-cost-max}, the cost for doing so is%
\begin{align}
\log_{2}M  &  =D_{\max}(\mathcal{R}_{C\rightarrow D}^{|0\rangle\langle
0|}\otimes\mathcal{N}\Vert\mathcal{R}_{C\rightarrow D}^{\pi_{M_{1}}}%
\otimes\mathcal{M})\\
&  =D_{\max}(\mathcal{R}_{C\rightarrow D}^{|0\rangle\langle0|}\Vert
\mathcal{R}_{C\rightarrow D}^{\pi_{M_{1}}})+D_{\max}(\mathcal{N}%
\Vert\mathcal{M})\\
&  =\log_{2}M_{1}+D_{\max}(\mathcal{N}\Vert\mathcal{M}).
\end{align}
In the next round, one uses the leftover channel box $(\mathcal{R}%
_{C\rightarrow D}^{|0\rangle\langle0|},\mathcal{R}_{C\rightarrow D}%
^{\pi_{M_{1}}})$ to simulate the channel box $(\mathcal{R}_{C\rightarrow
D}^{|0\rangle\langle0|}\otimes\mathcal{N},\mathcal{R}_{C\rightarrow D}%
^{\pi_{M_{2}}}\otimes\mathcal{M})$. Again employing \eqref{eq:exact-dist-cost-max} and an analysis
similar to the above, the cost for doing so is%
\begin{equation}
\log_{2}M_{1}=\log_{2}M_{2}+D_{\max}(\mathcal{N}\Vert\mathcal{M}).
\end{equation}
This continues until the last round, and adding everything up, the total cost
for the simulation of the sequential channel box $(\mathcal{N}^{(n)}%
,\mathcal{M}^{(n)})$ is $nD_{\max}(\mathcal{N}\Vert\mathcal{M})$. Since this
holds for every $n$, we conclude that $D_{c}^{0}(\mathcal{N},\mathcal{M})\leq
D_{\max}(\mathcal{N}\Vert\mathcal{M})$, and in turn, we conclude \eqref{eq:dmax-seq-exact-cost}.
\end{proof}

\subsection{Approximate case: distillable distinguishability}

Given a strategy box $(\mathcal{N}^{(n)},\mathcal{M}^{(n)})$, the approximate
distillable distinguishability is equal to the largest $M$ such that we can
transform the strategy box $(\mathcal{N}^{(n)},\mathcal{M}^{(n)})$ to the channel box
$(\mathcal{R}_{C\rightarrow D}^{|0\rangle\langle0|},\mathcal{R}_{C\rightarrow
D}^{\pi_{M}})$ approximately by means of a physical transformation
$\Theta^{(n\rightarrow1)}$. Mathematically, it is defined as the following
optimization problem:%
\begin{multline}
D_{d}^{\varepsilon}(\mathcal{N}^{(n)},\mathcal{M}^{(n)}):=\\
\sup_{\Theta^{(n\rightarrow1)}}\left\{
\begin{array}
[c]{c}%
\log_{2}M:\Theta^{(n\rightarrow1)}(\mathcal{N}^{(n)})\approx_{\varepsilon
}\mathcal{R}_{C\rightarrow D}^{|0\rangle\langle0|},\\
\Theta^{(n\rightarrow1)}(\mathcal{M}^{(n)})=\mathcal{R}_{C\rightarrow D}%
^{\pi_{M}}%
\end{array}
\right\}  ,
\end{multline}
where the shorthand $\approx_{\varepsilon}$ is defined in \eqref{eq:strategy-dist-approx}--\eqref{eq:strategy-dist-approx-shorthand} in terms of the normalized strategy distance.

For similar reasons stated in the previous section, this problem is
essentially equivalent to $\left\lfloor D_{d}^{\varepsilon}(\mathcal{N}%
^{(n)},\mathcal{M}^{(n)})\right\rfloor $, which is the largest $m$ for which a
physical transformation $\Theta^{(n\rightarrow m)}$\ exists such that%
\begin{align}
\Theta^{(n\rightarrow m)}(\mathcal{N}^{(n)})  &  \approx_{\varepsilon} (\mathcal{R}_{C\rightarrow
D}^{|0\rangle\langle0|})^{(m)},\\
\Theta^{(n\rightarrow m)}(\mathcal{M}^{(n)})  &  =(\mathcal{R}_{C\rightarrow
D}^{\pi})^{(m)}.
\end{align}

By employing reasoning similar to that which we used previously to justify \eqref{eq:approx-distill-dist-smooth-min}, we conclude
that%
\begin{equation}
D_{d}^{\varepsilon}(\mathcal{N}^{(n)},\mathcal{M}^{(n)})=D_{\min}%
^{\varepsilon}(\mathcal{N}^{(n)}\Vert\mathcal{M}^{(n)}),
\end{equation}
where $D_{\min}^{\varepsilon}(\mathcal{N}^{(n)}\Vert\mathcal{M}^{(n)})$ is the
quantum strategy divergence from Definition~\ref{def:gen-strat-div}, with $\mathbf{D}$ therein set to
$D_{\min}^{\varepsilon}$. The main reasons that this equality holds are that
1)\ the optimal co-strategy for $D_{\min}^{\varepsilon}(\mathcal{N}^{(n)}%
\Vert\mathcal{M}^{(n)})$ leads to a protocol for distilling bits of asymmetric
distinguishability approximately and 2)\ its optimality follows from the data
processing inequality for $D_{\min}^{\varepsilon}(\mathcal{N}^{(n)}%
\Vert\mathcal{M}^{(n)})$ with respect to an arbitrary physical transformation
$\Theta^{(n\rightarrow1)}$ that produces the channel box $(\mathcal{R}%
_{C\rightarrow D}^{|0\rangle\langle0|},\mathcal{R}_{C\rightarrow D}^{\pi_{M}%
})$ approximately.

If the strategy box $(\mathcal{N}^{(n)},\mathcal{M}^{(n)})$ is in fact a
sequential channel box for all $n$, with corresponding channels $\mathcal{N}$
and $\mathcal{M}$, then we define the sequential distillable
distinguishability as%
\begin{equation}
D_{d}(\mathcal{N},\mathcal{M}):=\lim_{\varepsilon\rightarrow0}\lim
_{n\rightarrow\infty}\frac{1}{n}D_{d}^{\varepsilon}(\mathcal{N}^{(n)}%
,\mathcal{M}^{(n)}).
\end{equation}
A key result of our paper is the following formal expression for
$D_{d}(\mathcal{N},\mathcal{M})$ in terms of the amortized channel relative
entropy from \cite{BHKW18}:

\begin{theorem}
\label{thm:seq-distill-distinguish-amortized}For channels $\mathcal{N}$ and
$\mathcal{M}$, the sequential distillable distinguishability is equal to the
amortized channel relative entropy of \cite{BHKW18}:%
\begin{equation}
D_{d}(\mathcal{N},\mathcal{M})=D^{\mathcal{A}}(\mathcal{N}\Vert\mathcal{M}),
\end{equation}
where%
\begin{multline}
D^{\mathcal{A}}(\mathcal{N}\Vert\mathcal{M}):=\\
\sup_{\rho_{RA},\sigma_{RA}}D(\mathcal{N}_{A\rightarrow B}(\rho_{RA}%
)\Vert\mathcal{M}_{A\rightarrow B}(\sigma_{RA}))-D(\rho_{RA}\Vert\sigma_{RA}).
\end{multline}

\end{theorem}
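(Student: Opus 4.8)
The plan is to establish the two inequalities $D_d(\mathcal{N},\mathcal{M})\le D^{\mathcal{A}}(\mathcal{N}\Vert\mathcal{M})$ (converse) and $D_d(\mathcal{N},\mathcal{M})\ge D^{\mathcal{A}}(\mathcal{N}\Vert\mathcal{M})$ (achievability), using throughout the operational identity $D_d^{\varepsilon}(\mathcal{N}^{(n)},\mathcal{M}^{(n)})=D_{\min}^{\varepsilon}(\mathcal{N}^{(n)}\Vert\mathcal{M}^{(n)})$ established just above, where the right-hand side is the quantum strategy divergence of Definition~\ref{def:gen-strat-div} with $\mathbf{D}=D_{\min}^{\varepsilon}$.

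For the converse, I would first pass from the smooth to the non-smooth quantity. Applying the state bound $D_{\min}^{\varepsilon}(\rho\Vert\sigma)\le\frac{1}{1-\varepsilon}[D(\rho\Vert\sigma)+h_2(\varepsilon)]$ to the output states $\rho_{R_nB_n},\sigma_{R_nB_n}$ of an arbitrary co-strategy and then taking the supremum over co-strategies yields
\[
D_{\min}^{\varepsilon}(\mathcal{N}^{(n)}\Vert\mathcal{M}^{(n)})\le\frac{1}{1-\varepsilon}\left[D(\mathcal{N}^{(n)}\Vert\mathcal{M}^{(n)})+h_2(\varepsilon)\right],
\]
where $D(\mathcal{N}^{(n)}\Vert\mathcal{M}^{(n)})$ is Definition~\ref{def:gen-strat-div} with $\mathbf{D}=D$. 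The heart of the converse is the amortization bound $D(\mathcal{N}^{(n)}\Vert\mathcal{M}^{(n)})\le nD^{\mathcal{A}}(\mathcal{N}\Vert\mathcal{M})$ of \cite{BHKW18}, which I would reprove by telescoping: for a fixed co-strategy with input $\rho_{R_1A_1}$ and testing channels $\{\mathcal{A}^{i}\}$, let $\rho^{(i)}$ (resp.\ $\sigma^{(i)}$) be the state right after the $i$-th use of $\mathcal{N}$ (resp.\ $\mathcal{M}$), so $\rho^{(0)}=\sigma^{(0)}=\rho_{R_1A_1}$ forces $D(\rho^{(0)}\Vert\sigma^{(0)})=0$ and
\[
D(\rho^{(n)}\Vert\sigma^{(n)})=\sum_{i=1}^{n}\left[D(\rho^{(i)}\Vert\sigma^{(i)})-D(\rho^{(i-1)}\Vert\sigma^{(i-1)})\right].
\]
Data processing under the common testing channel preceding the $i$-th use bounds the subtracted term $D(\rho^{(i-1)}\Vert\sigma^{(i-1)})$ from below by the relative entropy of the two states $\widetilde{\rho}^{(i-1)},\widetilde{\sigma}^{(i-1)}$ fed into that use, so each summand is at most $D(\mathcal{N}(\widetilde{\rho}^{(i-1)})\Vert\mathcal{M}(\widetilde{\sigma}^{(i-1)}))-D(\widetilde{\rho}^{(i-1)}\Vert\widetilde{\sigma}^{(i-1)})\le D^{\mathcal{A}}(\mathcal{N}\Vert\mathcal{M})$, since the amortized relative entropy optimizes over all input pairs with arbitrarily large reference. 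Dividing by $n$, then sending $n\to\infty$ and $\varepsilon\to0$, gives $D_d(\mathcal{N},\mathcal{M})\le D^{\mathcal{A}}(\mathcal{N}\Vert\mathcal{M})$.

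For achievability, I would use that a sequential channel box is just $n$ independent uses of $\mathcal{N}$ with trivial internal memory, so a physical transformation of it can realize $\mathcal{N}^{\otimes n}$ applied to an arbitrary joint input state: prepare the input on $RA_1\cdots A_n$, route $A_i$ into the $i$-th use while carrying the remaining systems in the memory register, and post-process all outputs at the end. Hence every parallel $n$-shot distillation protocol is a special sequential one, so $D_d^{\varepsilon}(\mathcal{N}^{\otimes n},\mathcal{M}^{\otimes n})\le D_d^{\varepsilon}(\mathcal{N}^{(n)},\mathcal{M}^{(n)})$ for all $n,\varepsilon$, and therefore $D_d(\mathcal{N},\mathcal{M})\ge D_d^{p}(\mathcal{N},\mathcal{M})$. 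By Theorem~\ref{thm:parallel-distillable-dist-regularized-rel-ent}, $D_d^{p}(\mathcal{N},\mathcal{M})=\lim_{m\to\infty}\frac{1}{m}D(\mathcal{N}^{\otimes m}\Vert\mathcal{M}^{\otimes m})$, and by \cite{BHKW18} this regularized channel relative entropy equals $D^{\mathcal{A}}(\mathcal{N}\Vert\mathcal{M})$ (i.e., adaptivity does not enhance the asymptotic channel relative entropy). Combining with the converse, $D^{\mathcal{A}}(\mathcal{N}\Vert\mathcal{M})\le\liminf\le\limsup\le D^{\mathcal{A}}(\mathcal{N}\Vert\mathcal{M})$ for the limit defining $D_d(\mathcal{N},\mathcal{M})$, so it exists and equals $D^{\mathcal{A}}(\mathcal{N}\Vert\mathcal{M})$.

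I expect the telescoping/amortization step of the converse to be the main obstacle: it is what produces the amortized quantity rather than the one-shot channel relative entropy, and one must be careful that the reference systems appearing at each round are legitimate choices in the supremum defining $D^{\mathcal{A}}$ and that the data-processing reduction is applied to the correct common channel. The achievability side is comparatively routine, resting on the already-proven parallel result and the collapse of the regularized relative entropy onto $D^{\mathcal{A}}$; the only point needing a word of justification there is the elementary embedding of parallel protocols into sequential ones.
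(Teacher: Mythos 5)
Your converse direction is fine: the telescoping/amortization argument you sketch is exactly the content of the result the paper invokes from \cite{BHKW18} (their meta-converse), combined with the bound $D_{\min}^{\varepsilon}\leq\frac{1}{1-\varepsilon}[D+h_{2}(\varepsilon)]$ and the operational identity $D_{d}^{\varepsilon}(\mathcal{N}^{(n)},\mathcal{M}^{(n)})=D_{\min}^{\varepsilon}(\mathcal{N}^{(n)}\Vert\mathcal{M}^{(n)})$, so that half matches the paper in substance.

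The achievability direction, however, has a genuine gap. You reduce it to parallel protocols via $D_{d}(\mathcal{N},\mathcal{M})\geq D_{d}^{p}(\mathcal{N},\mathcal{M})=\lim_{m\to\infty}\frac{1}{m}D(\mathcal{N}^{\otimes m}\Vert\mathcal{M}^{\otimes m})$ and then assert that ``by \cite{BHKW18}'' this regularized channel relative entropy equals $D^{\mathcal{A}}(\mathcal{N}\Vert\mathcal{M})$. That equality is not proved in \cite{BHKW18}; what is known there (and what your own telescoping argument gives) is only the one-sided bound $\lim_{m}\frac{1}{m}D(\mathcal{N}^{\otimes m}\Vert\mathcal{M}^{\otimes m})\leq D^{\mathcal{A}}(\mathcal{N}\Vert\mathcal{M})$. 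Indeed, the paper explicitly presents the question of whether $D^{\mathcal{A}}(\mathcal{N}\Vert\mathcal{M})>\lim_{n}\frac{1}{n}D(\mathcal{N}^{\otimes n}\Vert\mathcal{M}^{\otimes n})$ can occur as an open problem (this is precisely the content of \eqref{eq:seq-vs-parallel} and the surrounding discussion); if the equality you invoke were available, adaptive strategies could never outperform parallel ones and that discussion would be vacuous. (The equality was only established later, in the subsequently posted \cite{FFRS19}, via a chain rule for the relative entropy, which is not at your disposal here.) The paper closes the gap differently: it constructs a genuinely sequential protocol, adapted from \cite[Theorem~17]{GFWRSCW18} and \cite[Section~2.4]{BHLS03}, that for arbitrary $\rho_{RA},\sigma_{RA}$ interleaves distillation and dilution---distill bits of asymmetric distinguishability from tensor-power inputs, dilute them to the state box $(\rho_{RA}^{\otimes n},\sigma_{RA}^{\otimes n})$, feed these through $n$ channel uses, re-distill at rate $D(\mathcal{N}_{A\to B}(\rho_{RA})\Vert\mathcal{M}_{A\to B}(\sigma_{RA}))$, reinvest a fraction $D(\rho_{RA}\Vert\sigma_{RA})+\delta$ in the next round, and iterate so that the start-up cost is amortized away---thereby achieving $D(\mathcal{N}_{A\to B}(\rho_{RA})\Vert\mathcal{M}_{A\to B}(\sigma_{RA}))-D(\rho_{RA}\Vert\sigma_{RA})$ for every pair $(\rho_{RA},\sigma_{RA})$ and hence $D^{\mathcal{A}}(\mathcal{N}\Vert\mathcal{M})$ after the supremum. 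Some argument of this adaptive, reinvestment type (or an appeal to a result as strong as the chain rule) is needed; the parallel reduction alone only yields the regularized channel relative entropy as a lower bound.
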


\begin{proof}
The bound
\begin{equation}
D_{d}(\mathcal{N},\mathcal{M})\leq D^{\mathcal{A}}(\mathcal{N}\Vert
\mathcal{M})
\end{equation}
follows from \cite[Proposition~16]{BHKW18}, due to the equivalence between
sequential distillable distinguishability and the optimal rate of the quantum hypothesis testing
problem considered in \cite{BHKW18}. So it remains to establish the opposite inequality.

To do so, here we employ a technique used in the resource theory of coherence
\cite[Theorem~17]{GFWRSCW18}, which was used therein to show that the amortized relative entropy of coherence  
 is equal to the distillable coherence of a quantum channel. A similar technique was also discussed previously in \cite[Section~2.4]{BHLS03}.
 
 Let
$\rho_{RA}$ and $\sigma_{RA}$ be arbitrary quantum states. Let $\psi_{RA}$ be
a state such that
\begin{equation}
D(\mathcal{N}_{A\rightarrow B}(\psi_{RA})\Vert
\mathcal{M}_{A\rightarrow B}(\psi_{RA}))>0.
\end{equation}
(If such a state does not exist,
then $D_{d}(\mathcal{N},\mathcal{M})$ is trivially equal to zero.) The first
step is to send in the tensor-power state $\psi_{RA}^{\otimes m}$ to $m$ parallel calls
of the unknown channel, where%
\begin{equation}
m\gtrsim n\frac{D(\rho_{RA}\Vert\sigma_{RA})+\delta}{D(\mathcal{N}%
_{A\rightarrow B}(\psi_{RA})\Vert\mathcal{M}_{A\rightarrow B}(\psi
_{RA}))-\delta}%
\end{equation}
for $\delta>0$, and distill bits of asymmetric distinguishability at the rate
$D(\mathcal{N}_{A\rightarrow B}(\psi_{RA})\Vert\mathcal{M}_{A\rightarrow
B}(\psi_{RA}))$. Second, we dilute these bits of asymmetric distinguishability
to the state box $(\rho_{RA}^{\otimes n},\sigma_{RA}^{\otimes n})$. Third, we
then send this state box into $n$ uses of the unknown channel, producing the state box
$([\mathcal{N}_{A\rightarrow B}(\rho_{RA})]^{\otimes n},[\mathcal{M}%
_{A\rightarrow B}(\sigma_{RA})]^{\otimes n})$. Fourth, from this state box, we
distill bits of asymmetric distinguishability at the rate $D(\mathcal{N}%
_{A\rightarrow B}(\rho_{RA})\Vert\mathcal{M}_{A\rightarrow B}(\sigma
_{RA}))-\delta$. We output a fraction $R-2\delta$ of these bits, where%
\begin{equation}
R:=D(\mathcal{N}_{A\rightarrow B}(\rho_{RA})\Vert\mathcal{M}_{A\rightarrow
B}(\sigma_{RA}))-D(\rho_{RA}\Vert\sigma_{RA}),
\end{equation}
and then reinvest a fraction $D(\rho_{RA}\Vert\sigma_{RA})+\delta$ for the
next round. We then repeat steps 2) through 4)\ $k$ times. In the last round, a
fraction $R_{f}-\delta$ bits of asymmetric distinguishability are output,
where%
\begin{equation}
R_{f}:=D(\mathcal{N}_{A\rightarrow B}(\rho_{RA})\Vert\mathcal{M}_{A\rightarrow
B}(\sigma_{RA})),
\end{equation}
and no reinvestment is made (because it is the last round). Counting up everything, this protocol calls the
unknown channel $kn+m$ times, while outputting%
\begin{equation}
\left(  k-1\right)  n\left(  R-2\delta\right)  +n\left(  R_{f}-\delta\right)
\end{equation}
bits of asymmetric distinguishability. Thus, the rate of the protocol is given
by%
\begin{equation}
\frac{\left(  k-1\right)  n\left(  R-2\delta\right)  +n\left(  R_{f}%
-\delta\right)  }{kn+m}.
\end{equation}
In the limit as $k\rightarrow\infty$, this rate converges to $R-2\delta$.
Since $\delta>0$ is arbitrary, the rate $R$ is achievable. Note that all of the conversions stated above are approximate, but for large enough $n$ and by employing the triangle inequality, the error vanishes. Finally, since the
states $\rho_{RA}$ and $\sigma_{RA}$ are arbitrary, we can take a supremum over all of them and conclude the inequality%
\begin{equation}
D_{d}(\mathcal{N},\mathcal{M})\geq D^{\mathcal{A}}(\mathcal{N}\Vert
\mathcal{M}),
\end{equation}
thus completing the proof.
\end{proof}

\bigskip

Theorem~\ref{thm:seq-distill-distinguish-amortized} establishes an operational
meaning for the amortized channel relative entropy of \cite{BHKW18}, thus giving it some
distinction in the resource theory of asymmetric distinguishability for
quantum channels. Theorem~\ref{thm:seq-distill-distinguish-amortized} can
alternatively be understood as a formal solution to Stein's lemma for quantum
channels in the sequential setting, thus completing the line of reasoning
put forward in \cite{BHKW18}.

More generally, this result can be used to determine whether a sequential
protocol is truly necessary to attain the optimal distillable
distinguishability. If an amortization collapse occurs for a pair of channels,
so that $D^{\mathcal{A}}(\mathcal{N}\Vert\mathcal{M})=D(\mathcal{N}%
\Vert\mathcal{M})$, then one can conclude that a sequential protocol is not
necessary and one can simply input a tensor-power state $\psi_{RA}^{\otimes
n}$ to distinguish the channels optimally in the asymptotic regime
\cite{BHKW18}. This collapse occurs for both environment-seizable and
classical--quantum channel boxes. It also occurs for channel boxes in which the first channel is arbitrary and the second is a replacer channel \cite{Cooney2016,BHKW18}. What
Theorems~\ref{thm:parallel-distillable-dist-regularized-rel-ent}\ and
\ref{thm:seq-distill-distinguish-amortized}\ add to this story is that the
condition%
\begin{equation}
D^{\mathcal{A}}(\mathcal{N}\Vert\mathcal{M})>\lim_{n\rightarrow\infty}\frac
{1}{n}D(\mathcal{N}^{\otimes n}\Vert\mathcal{M}^{\otimes n})
\label{eq:seq-vs-parallel}
\end{equation}
is necessary and sufficient for an adaptive strategy to have an advantage over
a parallel strategy in the setting of asymmetric channel discrimination, or
equivalently, when distilling bits of asymmetric distinguishability. Determining whether \eqref{eq:seq-vs-parallel} holds for a pair of quantum channels is an interesting and challenging open problem.

It seems that the main idea of \cite[Theorem~17]{GFWRSCW18} (also \cite[Section~2.4]{BHLS03}), as used in the proof of Theorem~\ref{thm:seq-distill-distinguish-amortized}, can be employed for a sequential distillation task in any quantum resource thery for which the static version of the theory (for quantum states) is asymptotically reversible. This is because the interleaving of distillation and dilution plays an essential role in the given protocol, and for an asymptotically reversible resource theory, there is no loss when going back and forth like this. 

\subsection{Approximate case: distinguishability cost}

Given a strategy box $(\mathcal{N}^{(n)},\mathcal{M}^{(n)})$, the approximate
distinguishability cost is equal to the smallest $M$ such that we can
transform the channel box $(\mathcal{R}_{C\rightarrow D}^{|0\rangle\langle
0|},\mathcal{R}_{C\rightarrow D}^{\pi_{M}})$ to $(\mathcal{N}^{(n)}%
,\mathcal{M}^{(n)})$ approximately by means of a physical transformation
$\Theta^{(1\rightarrow n)}$. Mathematically, it is defined as the following
optimization problem:%
\begin{multline}
D_{c}^{\varepsilon}(\mathcal{N}^{(n)},\mathcal{M}^{(n)}):=\\
\inf_{\Theta^{(1\rightarrow n)}}\left\{
\begin{array}
[c]{c}%
\log_{2}M:\mathcal{N}^{(n)}\approx_{\varepsilon}\Theta^{(1\rightarrow
n)}(\mathcal{R}_{C\rightarrow D}^{|0\rangle\langle0|}),\\
\mathcal{M}^{(n)}=\Theta^{(1\rightarrow n)}(\mathcal{R}_{C\rightarrow D}%
^{\pi_{M}})
\end{array}
\right\}  .
\end{multline}

For similar reasons stated previously, this problem is essentially equivalent
to $\left\lceil D_{c}^{\varepsilon}(\mathcal{N}^{(n)},\mathcal{M}%
^{(n)})\right\rceil $, which is the smallest $m$ for which a physical
transformation $\Theta^{(m\rightarrow n)}$\ exists such that%
\begin{align}
\mathcal{N}^{(n)}  &  \approx_{\varepsilon}\Theta^{(m\rightarrow
n)}((\mathcal{R}_{C\rightarrow D}^{|0\rangle\langle0|})^{(m)}),\\
\mathcal{M}^{(n)}  &  =\Theta^{(m\rightarrow n)}((\mathcal{R}_{C\rightarrow
D}^{\pi})^{(m)}),
\end{align}
where the superscript $(m)$ again indicates $m$ sequential channel uses.

By employing reasoning similar to that which we used previously to justify \eqref{eq:approx-dist-cost-smooth-max}, we conclude
that%
\begin{equation}
D_{c}^{\varepsilon}(\mathcal{N}^{(n)},\mathcal{M}^{(n)})=D_{\max}%
^{\varepsilon}(\mathcal{N}^{(n)}\Vert\mathcal{M}^{(n)}),
\end{equation}
where the smooth strategy max-relative entropy is defined as
\begin{multline}
D_{\max}^{\varepsilon}(\mathcal{N}^{(n)}\Vert\mathcal{M}^{(n)}):=\\
\inf_{\frac{1}{2}\left\Vert \widetilde{\mathcal{N}}^{(n)}-\mathcal{N}%
^{(n)}\right\Vert _{\diamond n}\leq\varepsilon}D_{\max}(\widetilde
{\mathcal{N}}^{(n)}\Vert\mathcal{M}^{(n)}),
\label{eq:smooth-dmax-strat}
\end{multline}
and $D_{\max}(\widetilde{\mathcal{N}}^{(n)}\Vert\mathcal{M}^{(n)})$ is defined
in \eqref{eq:max-div-strategy}. The infimum in \eqref{eq:smooth-dmax-strat} is with respect to $n$-round strategies $\widetilde
{\mathcal{N}}^{(n)}$ that are $\varepsilon$-close in normalized strategy distance to the strategy $\mathcal{N}^{(n)}$. The main reasons that this equality holds are
that 1)\ an optimal approximate dilution protocol results from applying an
optimal exact dilution protocol to $\widetilde{\mathcal{N}}^{(n)}$ and
$\mathcal{M}^{(n)}$, where $\widetilde{\mathcal{N}}^{(n)}$ is $\varepsilon
$-close to $\mathcal{N}^{(n)}$ with respect to the normalized strategy
distance and 2)\ its optimality follows from the data processing inequality
for $D_{\max}^{\varepsilon}(\mathcal{N}^{(n)}\Vert\mathcal{M}^{(n)})$ with
respect to an arbitrary physical transformation $\Theta^{(1\rightarrow n)}$
that produces the strategy box $(\mathcal{N}^{(n)},\mathcal{M}^{(n)}%
)$\ approximately from the channel box $(\mathcal{R}_{C\rightarrow
D}^{|0\rangle\langle0|},\mathcal{R}_{C\rightarrow D}^{\pi_{M}})$.

If the strategy box $(\mathcal{N}^{(n)},\mathcal{M}^{(n)})$ is in fact a
sequential channel box for all $n$, with corresponding channels $\mathcal{N}$
and $\mathcal{M}$, then we define the sequential distinguishability cost
as%
\begin{equation}
D_{c}(\mathcal{N},\mathcal{M}):=\lim_{\varepsilon \to 0}\lim_{n\rightarrow\infty}\frac{1}{n}%
D_{c}^{\varepsilon}(\mathcal{N}^{(n)},\mathcal{M}^{(n)}).
\end{equation}
Note that the following inequality holds%
\begin{equation}
D_{c}(\mathcal{N},\mathcal{M})\geq D_{c}^{p}(\mathcal{N},\mathcal{M}),
\label{eq:seq-par-approx-cost}%
\end{equation}
because a sequential simulation is more stringent than a parallel simulation.
That is, any sequential simulation works as a parallel simulation.

As occurred for all other tasks in this paper, the sequential distinguishability cost simplifies for environment-seizable channel boxes.
It remains an interesting open question to understand the sequential
distinguishability cost of quantum channel boxes other than environment-seizable ones.

\section{Conclusion}

In this paper, we generalized the resource theory of asymmetric distinguishability from states \cite{Mats10,M11,WW19states} to channels. In this resource theory, the main constituents are quantum channel boxes that can be manipulated by means of a quantum superchannel, the most general physical transformation that sends quantum channels to quantum channels. Furthermore, the basic units of currency are bits of asymmetric distinguishability \cite{WW19states}.

In the one-shot scenario, we considered the approximate channel box transformation problem and proved that it is characterized by a semi-definite program. As special cases of this, we considered exact and approximate one-shot distillation and dilution of channel boxes, arriving at the following conclusions:
\begin{enumerate}
\item \textit{The exact one-shot distillable distinguishability of a channel box is equal to the channel min-relative entropy.}
\item \textit{The exact one-shot  distinguishability cost of a channel box is equal to the channel max-relative entropy.}
\item \textit{The approximate one-shot distillable distinguishability of a channel box is equal to the smooth channel min-relative entropy.}
\item \textit{The approximate one-shot  distinguishability cost of a channel box is equal to the smooth channel max-relative entropy.}
\end{enumerate}
These results endow these fundamental channel measures of distinguishability with operational interpretations.

We then moved on to consider asymptotic parallel versions of the above tasks, with our key findings here being that the parallel distillable distinguishability is equal to the regularized channel relative entropy and the parallel exact distinguishability cost is equal to the channel max-relative entropy. We solved the asymptotic version of the parallel channel box transformation problem for environment-seizable and classical--quantum channel boxes.

We finally considered the approximate strategy box transformation problem and asserted that it is characterized by a semi-definite program. We introduced the generalized strategy divergence as a way of quantifying distinguishability of quantum strategies and used instantiations of this concept to provide bounds on how well one can convert one strategy box to another. In particular, transformations of sequential channel boxes are a special case of transformations of strategy boxes, so that many of the results for strategy boxes apply directly, and all of the results simplify for environment-seizable or classical--quantum sequential channel boxes.
%Transformations of sequential channel boxes are a special case of transformations of strategy boxes, so that many of the results for strategy boxes apply directly, and all of the results simplify for environment-seizable or classical--quantum sequential channel boxes.

By focusing on distillation and dilution tasks, we proved that the asymptotic sequential distillable distinguishability of a sequential channel box is equal to the amortized channel relative entropy of \cite{BHKW18}, thus endowing this quantity with a fundamental operational meaning. We also proved that the exact sequential distinguishability cost is equal to the channel max-relative entropy.

Going forward from here, there are many open questions for future work. Are there other channel boxes, besides environment-seizable and classical--quantum ones, for which the theory simplifies significantly? Based on the distillation results of \cite{Cooney2016}, and other findings of \cite{FWTB18}, it seems plausible that the channel relative entropy should be the optimal rate for dilution protocols of channel boxes in which the first channel is arbitrary and the second is a replacer channel. Are there examples of channel boxes for which the regularization in the regularized channel relative entropy is necessary? Are there examples of channel boxes for which the amortized channel relative entropy does not collapse to the ordinary channel relative entropy? Answers to these questions would provide insights as to whether general parallel or sequential strategies are helpful in distinguishability distillation. Can we characterize the asymptotic parallel or sequential distinguishability cost, in the case in which the simulation need not be exact but with vanishing error in the asymptotic limit? Is it possible to give a more general theory beyond independent and identically distributed channels, i.e., for memory channels with some structure? These and other questions remain the subject of future investigations.

Note: After our paper appeared online, the preprint \cite{FFRS19} was posted, which has addressed some of the open questions from our paper.

\begin{acknowledgments}
We are grateful to Vishal Katariya for pointing out a problem with our previous formulation of the semi-definite program for the smooth channel max-relative entropy. We are grateful to Andreas Winter for pointing out \cite[Section~2.4]{BHLS03} in the context of Theorem~\ref{thm:seq-distill-distinguish-amortized}.
XW acknowledges support from the Department of Defense, and MMW acknowledges support from the National Science Foundation under Grant No.~1907615.
\end{acknowledgments}

\bibliographystyle{alpha}
\bibliography{Ref}

\newcommand{\etalchar}[1]{$^{#1}$}
\begin{thebibliography}{MLDS{\etalchar{+}}13}

\bibitem[ABJT19]{ABJT19}
Anurag Anshu, Mario Berta, Rahul Jain, and Marco Tomamichel.
\newblock A minimax approach to one-shot entropy inequalities.
\newblock June 2019.
\newblock 1906.00333v1.

\bibitem[Aci01]{Acin01}
Antonio Acin.
\newblock Statistical distinguishability between unitary operations.
\newblock {\em Physical Review Letters}, 87(17):177901, October 2001.
\newblock arXiv:quant-ph/0102064.

\bibitem[AU80]{AU80}
P.~M. Alberti and A.~Uhlmann.
\newblock A problem relating to positive linear maps on matrix algebras.
\newblock {\em Reports on Mathematical Physics}, 18(2):163--176, October 1980.

\bibitem[BaHN{\etalchar{+}}15]{BHNOW15}
Fernando G. S.~L. Brand\~{a}o, Michal Horodecki, Nelly Ng, Jonathan Oppenheim,
  and Stephanie Wehner.
\newblock The second laws of quantum thermodynamics.
\newblock {\em Proceedings of the National Academy of Sciences},
  112(11):3275--3279, March 2015.
\newblock arXiv:1305.5278.

\bibitem[BBCW13]{BBCW13}
Mario Berta, Fernando G. S.~L. Brand{\~a}o, Matthias Christandl, and Stephanie
  Wehner.
\newblock Entanglement cost of quantum channels.
\newblock {\em IEEE Transactions on Information Theory}, 59(10):6779--6795,
  October 2013.
\newblock arXiv:1108.5357.

\bibitem[BCR11]{BCR09}
Mario Berta, Matthias Christandl, and Renato Renner.
\newblock The quantum reverse {Shannon} theorem based on one-shot information
  theory.
\newblock {\em Communications in Mathematical Physics}, 306(3):579--615, August
  2011.
\newblock arXiv:0912.3805.

\bibitem[BD10]{BD10}
Francesco Buscemi and Nilanjana Datta.
\newblock The quantum capacity of channels with arbitrarily correlated noise.
\newblock {\em IEEE Transactions on Information Theory}, 56(3):1447--1460,
  March 2010.
\newblock arXiv:0902.0158.

\bibitem[BD11]{BD11}
Fernando G. S.~L. Brandao and Nilanjana Datta.
\newblock One-shot rates for entanglement manipulation under non-entangling
  maps.
\newblock {\em IEEE Transactions on Information Theory}, 57(3):1754--1760,
  March 2011.
\newblock arXiv:0905.2673.

\bibitem[BD16]{BD16}
Francesco Buscemi and Nilanjana Datta.
\newblock Equivalence between divisibility and monotonic decrease of
  information in classical and quantum stochastic processes.
\newblock {\em Physical Review A}, 93(1):012101, January 2016.
\newblock arXiv:1408.7062.

\bibitem[BDH{\etalchar{+}}14]{BDHSW09}
Charles~H. Bennett, Igor Devetak, Aram~W. Harrow, Peter~W. Shor, and Andreas
  Winter.
\newblock The quantum reverse {Shannon} theorem and resource tradeoffs for
  simulating quantum channels.
\newblock {\em IEEE Transactions on Information Theory}, 60(5):2926--2959, May
  2014.
\newblock arXiv:0912.5537.

\bibitem[BDS14]{BDS14}
Francesco Buscemi, Nilanjana Datta, and Sergii Strelchuk.
\newblock Game-theoretic characterization of antidegradable channels.
\newblock {\em Journal of Mathematical Physics}, 55(9):092202, September 2014.
\newblock arXiv:1404.0277.

\bibitem[BDW18]{BDW18}
Stefan B\"auml, Siddhartha Das, and Mark~M. Wilde.
\newblock Fundamental limits on the capacities of bipartite quantum
  interactions.
\newblock {\em Physical Review Letters}, 121(25):250504, December 2018.
\newblock arXiv:1812.08223.

\bibitem[Bei13]{beigi2013sandwiched}
Salman Beigi.
\newblock Sandwiched {R{\'e}nyi} divergence satisfies data processing
  inequality.
\newblock {\em Journal of Mathematical Physics}, 54(12):122202, December 2013.
\newblock arXiv:1306.5920.

\bibitem[Ben05]{B05}
Charles~H. Bennett.
\newblock Simulated time travel, teleportation without communication, and how
  to conduct a romance with someone who has fallen into a black hole.
\newblock \url{https://www.research.ibm.com/people/b/bennetc/QUPONBshort.pdf},
  May 2005.

\bibitem[Ber13]{B13}
Mario Berta.
\newblock {\em Quantum Side Information: Uncertainty Relations, Extractors,
  Channel Simulations}.
\newblock PhD thesis, ETH Zurich, October 2013.
\newblock arXiv:1310.4581.

\bibitem[BG17]{BG17}
Francesco Buscemi and Gilad Gour.
\newblock Quantum relative {Lorenz} curves.
\newblock {\em Physical Review A}, 95(1):012110, January 2017.
\newblock arXiv:1607.05735.

\bibitem[BGMW17]{BenDana2017}
Khaled {Ben Dana}, Mar{\'{i}}a {Garc{\'{i}}a D{\'{i}}az}, Mohamed Mejatty, and
  Andreas Winter.
\newblock {Resource theory of coherence: Beyond states}.
\newblock {\em Physical Review A}, 95(6):062327, June 2017.
\newblock arXiv:1704.03710.

\bibitem[BGNP01]{PhysRevA.64.052309}
David Beckman, Daniel Gottesman, M.~A. Nielsen, and John Preskill.
\newblock Causal and localizable quantum operations.
\newblock {\em Physical Review A}, 64(5):052309, October 2001.
\newblock arXiv:quant-ph/0102043.

\bibitem[BHKW18]{BHKW18}
Mario Berta, Christoph Hirche, Eneet Kaur, and Mark~M. Wilde.
\newblock Amortized channel divergence for asymptotic quantum channel
  discrimination.
\newblock August 2018.
\newblock arXiv:1808.01498v1.

\bibitem[BHLS03]{BHLS03}
Charles~H. Bennett, Aram~W. Harrow, Debbie~W. Leung, and John~A. Smolin.
\newblock On the capacities of bipartite {Hamiltonians} and unitary gates.
\newblock {\em IEEE Transactions on Information Theory}, 49(8):1895--1911,
  August 2003.
\newblock arXiv:quant-ph/0205057.

\bibitem[Bla53]{B53}
David Blackwell.
\newblock Equivalent comparisons of experiments.
\newblock {\em The Annals of Mathematical Statistics}, 24(2):265--272, June
  1953.

\bibitem[BRW14]{BRW14}
Mario Berta, Joseph~M. Renes, and Mark~M. Wilde.
\newblock Identifying the information gain of a quantum measurement.
\newblock {\em IEEE Transactions on Information Theory}, 60(12):7987--8006,
  December 2014.
\newblock arXiv:1301.1594.

\bibitem[BSST02]{ieee2002bennett}
Charles~H. Bennett, Peter~W. Shor, John~A. Smolin, and Ashish~V. Thapliyal.
\newblock Entanglement-assisted capacity of a quantum channel and the reverse
  {Shannon} theorem.
\newblock {\em IEEE Transactions on Information Theory}, 48(10):2637--2655,
  October 2002.
\newblock arXiv:quant-ph/0106052.

\bibitem[Bus12]{Buscemi2012}
Francesco Buscemi.
\newblock Comparison of quantum statistical models: Equivalent conditions for
  sufficiency.
\newblock {\em Communications in Mathematical Physics}, 310(3):625--647, March
  2012.
\newblock arXiv:1004.3794.

\bibitem[Bus16]{Buscemi2016}
Francesco Buscemi.
\newblock Degradable channels, less noisy channels, and quantum statistical
  morphisms: An equivalence relation.
\newblock {\em Problems of Information Transmission}, 52(3):201--213, July
  2016.
\newblock arXiv:1511.08893.

\bibitem[Bus17]{B17}
Francesco Buscemi.
\newblock Comparison of noisy channels and reverse data-processing theorems.
\newblock {\em 2017 IEEE Information Theory Workshop}, pages 489--493, 2017.
\newblock arXiv:1803.02945.

\bibitem[CDP08a]{CDP08a}
Giulio Chiribella, Giacomo~Mauro D'Ariano, and Paolo Perinotti.
\newblock Memory effects in quantum channel discrimination.
\newblock {\em Physical Review Letters}, 101(18):180501, October 2008.
\newblock arXiv:0803.3237.

\bibitem[CDP08b]{CDP08}
Giulio Chiribella, Giacomo~Mauro D'Ariano, and Paolo Perinotti.
\newblock Transforming quantum operations: Quantum supermaps.
\newblock {\em Europhysics Letters}, 83(3):30004, August 2008.
\newblock arXiv:0804.0180.

\bibitem[CDP09]{CDP09}
Giulio Chiribella, Giacomo~Mauro D'Ariano, and Paolo Perinotti.
\newblock Theoretical framework for quantum networks.
\newblock {\em Physical Review A}, 80(2):022339, August 2009.
\newblock arXiv:0904.4483.

\bibitem[CE16]{Chiribella_2016}
Giulio Chiribella and Daniel Ebler.
\newblock Optimal quantum networks and one-shot entropies.
\newblock {\em New Journal of Physics}, 18(9):093053, September 2016.
\newblock arXiv:1606.02394.

\bibitem[CGLM14]{RevModPhys.86.1203}
Filippo Caruso, Vittorio Giovannetti, Cosmo Lupo, and Stefano Mancini.
\newblock Quantum channels and memory effects.
\newblock {\em Reviews of Modern Physics}, 86(4):1203--1259, December 2014.
\newblock arXiv:1207.5435.

\bibitem[CJW04]{CJW04}
Anthony Chefles, Richard Jozsa, and Andreas Winter.
\newblock On the existence of physical transformations between sets of quantum
  states.
\newblock {\em International Journal of Quantum Information}, 02(01):11--21,
  March 2004.
\newblock arXiv:quant-ph/0307227.

\bibitem[CMW16]{Cooney2016}
Tom Cooney, Mil{\'a}n Mosonyi, and Mark~M. Wilde.
\newblock Strong converse exponents for a quantum channel discrimination
  problem and quantum-feedback-assisted communication.
\newblock {\em Communications in Mathematical Physics}, 344(3):797--829, June
  2016.
\newblock arXiv:1408.3373.

\bibitem[Dat09]{D09}
Nilanjana Datta.
\newblock Min- and max-relative entropies and a new entanglement monotone.
\newblock {\em IEEE Transactions on Information Theory}, 55(6):2816--2826, June
  2009.
\newblock arXiv:0803.2770.

\bibitem[DBW17]{DBW17}
Siddhartha Das, Stefan B\"auml, and Mark~M. Wilde.
\newblock Entanglement and secret-key-agreement capacities of bipartite quantum
  interactions and read-only memory devices.
\newblock December 2017.
\newblock arXiv:1712.00827.

\bibitem[DDM14]{DM14}
Rafal Demkowicz-Dobrza{\'n}ski and Lorenzo Maccone.
\newblock Using entanglement against noise in quantum metrology.
\newblock {\em Physical Review Letters}, 113(25):250801, December 2014.
\newblock arXiv:1407.2934.

\bibitem[DFY09]{Duan09}
Runyao Duan, Yuan Feng, and Mingsheng Ying.
\newblock Perfect distinguishability of quantum operations.
\newblock {\em Physical Review Letters}, 103(21):210501, November 2009.
\newblock arXiv:0908.0119.

\bibitem[DP05]{DP05}
Giacomo~Mauro D'Ariano and Paolo Perinotti.
\newblock Programmable quantum channels and measurements.
\newblock In {\em Workshop on Quantum Information Theory and Quantum
  Statistical Inference}, Tokyo, ERATO Quantum Computation and Information
  Project, November 2005.
\newblock arXiv:quant-ph/0510033.

\bibitem[DW19]{DW17}
Siddhartha Das and Mark~M. Wilde.
\newblock Quantum reading capacity: General definition and bounds.
\newblock {\em IEEE Transactions on Information Theory}, 65(11):7566--7583,
  November 2019.
\newblock arXiv:1703.03706.

\bibitem[ESW02]{ESW02}
T.~Eggeling, D.~Schlingemann, and Reinhard~F. Werner.
\newblock Semicausal operations are semilocalizable.
\newblock {\em Europhysics Letters}, 57(6):782--788, March 2002.
\newblock arXiv:quant-ph/0104027.

\bibitem[Fai19]{Faist2019}
Philippe Faist.
\newblock Unpublished notes.
\newblock April 2019.

\bibitem[FBB19]{FBB18}
Philippe Faist, Mario Berta, and Fernando G. S.~L. Brandao.
\newblock Thermodynamic capacity of quantum processes.
\newblock {\em Physical Review Letters}, 122(20):200601, May 2019.
\newblock arXiv:1807.05610.

\bibitem[FFRS19]{FFRS19}
Kun Fang, Omar Fawzi, Renato Renner, and David Sutter.
\newblock A chain rule for the quantum relative entropy.
\newblock September 2019.
\newblock arXiv:1909.05826.

\bibitem[FL13]{FL13}
Rupert~L. Frank and Elliott~H. Lieb.
\newblock Monotonicity of a relative {R\'enyi} entropy.
\newblock {\em Journal of Mathematical Physics}, 54(12):122201, December 2013.
\newblock arXiv:1306.5358.

\bibitem[FWTB19]{FWTB18}
Kun Fang, Xin Wang, Marco Tomamichel, and Mario Berta.
\newblock Quantum channel simulation and the channel's smooth max-information.
\newblock {\em IEEE Transactions on Information Theory (in press)}, July 2019.
\newblock arXiv:1807.05354.

\bibitem[GFW{\etalchar{+}}18]{GFWRSCW18}
Mar{\'i}a {Garc{\'{\i}}a D{\'{\i}}az}, Kun {Fang}, Xin {Wang}, Matteo {Rosati},
  Michalis {Skotiniotis}, John {Calsamiglia}, and Andreas {Winter}.
\newblock Using and reusing coherence to realize quantum processes.
\newblock {\em Quantum}, 2:100, October 2018.
\newblock arXiv:1805.04045.

\bibitem[GJB{\etalchar{+}}18]{GJBDM18}
Gilad Gour, David Jennings, Francesco Buscemi, Runyao Duan, and Iman Marvian.
\newblock Quantum majorization and a complete set of entropic conditions for
  quantum thermodynamics.
\newblock {\em Nature Communications}, 9(1):5352, December 2018.
\newblock arXiv:1708.04302.

\bibitem[GLN05]{GLN04}
Alexei Gilchrist, Nathan~K. Langford, and Michael~A. Nielsen.
\newblock Distance measures to compare real and ideal quantum processes.
\newblock {\em Physical Review A}, 71(6):062310, June 2005.
\newblock arXiv:quant-ph/0408063.

\bibitem[Gou19]{G18}
Gilad Gour.
\newblock Comparison of quantum channels with superchannels.
\newblock {\em IEEE Transactions on Information Theory}, 65(9):5880--5904,
  September 2019.
\newblock arXiv:1808.02607.

\bibitem[GRS18]{Gutoski2018fidelityofquantum}
Gus Gutoski, Ansis Rosmanis, and Jamie Sikora.
\newblock Fidelity of quantum strategies with applications to cryptography.
\newblock {\em {Quantum}}, 2:89, September 2018.
\newblock arXiv:1704.04033.

\bibitem[Gut09]{G09}
Gus Gutoski.
\newblock {\em Quantum strategies and local operations}.
\newblock PhD thesis, University of Waterloo, 2009.
\newblock arXiv:1003.0038.

\bibitem[Gut12]{G12}
Gus Gutoski.
\newblock On a measure of distance for quantum strategies.
\newblock {\em Journal of Mathematical Physics}, 53(3):032202, March 2012.
\newblock arXiv:1008.4636.

\bibitem[GW07]{GW07}
Gus Gutoski and John Watrous.
\newblock Toward a general theory of quantum games.
\newblock {\em Proceedings of the thirty-ninth annual {ACM} symposium on theory
  of computing}, pages 565--574, 2007.
\newblock arXiv:quant-ph/0611234.

\bibitem[Hay03]{Hay03}
Masahito Hayashi.
\newblock Hypothesis testing approach to quantum information theory.
\newblock In {\em COE Symposium on Quantum Information Theory}, pages 15--16,
  Kyoto, Japan, 2003.

\bibitem[Hay04]{Hay04}
Masahito Hayashi.
\newblock Hypothesis testing approach to quantum information theory.
\newblock In {\em 1st Asia-Pacific Conference on Quantum Information Science},
  National Cheng Kung University, Tainan, Taiwan, 2004.

\bibitem[Hay09]{Hayashi09}
Masahito Hayashi.
\newblock Discrimination of two channels by adaptive methods and its
  application to quantum system.
\newblock {\em IEEE Transactions on Information Theory}, 55(8):3807--3820,
  August 2009.
\newblock arXiv:0804.0686.

\bibitem[Hay17]{Hay17}
Masahito Hayashi.
\newblock Role of hypothesis testing in quantum information theory.
\newblock In {\em Asian Conference on Quantum Information Science (AQIS 17)},
  National University of Singapore, Singapore, September 2017.
\newblock arXiv:1709.07701.

\bibitem[Hel69]{H69}
Carl~W. Helstrom.
\newblock Quantum detection and estimation theory.
\newblock {\em Journal of Statistical Physics}, 1:231--252, 1969.

\bibitem[Hel76]{Hel76}
Carl~W. Helstrom.
\newblock {\em Quantum Detection and Estimation Theory}.
\newblock Academic, New York, 1976.

\bibitem[HHLW10]{Harrow10}
Aram~W. Harrow, Avinatan Hassidim, Debbie~W. Leung, and John Watrous.
\newblock Adaptive versus nonadaptive strategies for quantum channel
  discrimination.
\newblock {\em Physical Review A}, 81(3):032339, March 2010.
\newblock arXiv:0909.0256.

\bibitem[HJRW12]{HJRW12}
Teiko Heinosaari, Maria~A. Jivulescu, David Reeb, and Michael~M. Wolf.
\newblock Extending quantum operations.
\newblock {\em Journal of Mathematical Physics}, 53(10):102208, October 2012.
\newblock arXiv:1205.0641.

\bibitem[HP91]{HP91}
Fumio Hiai and D\'enes Petz.
\newblock The proper formula for relative entropy and its asymptotics in
  quantum probability.
\newblock {\em Communications in Mathematical Physics}, 143(1):99--114,
  December 1991.

\bibitem[JN12]{JN12}
Rahul {Jain} and Ashwin {Nayak}.
\newblock Short proofs of the quantum substate theorem.
\newblock {\em IEEE Transactions on Information Theory}, 58(6):3664--3669, June
  2012.
\newblock arXiv:1103.6067.

\bibitem[JRS{\etalchar{+}}16]{JRSWW16}
Marius {Junge}, Renato {Renner}, David {Sutter}, Mark~M. {Wilde}, and Andreas
  {Winter}.
\newblock Universal recoverability in quantum information.
\newblock In {\em 2016 IEEE International Symposium on Information Theory
  (ISIT)}, pages 2494--2498, July 2016.

\bibitem[JWD{\etalchar{+}}08]{JWDFY08}
Zhengfeng Ji, Guoming Wang, Runyao Duan, Yuan Feng, and Mingsheng Ying.
\newblock Parameter estimation of quantum channels.
\newblock {\em IEEE Transactions on Information Theory}, 54(11):5172--5185,
  November 2008.
\newblock arXiv:quant-ph/0610060.

\bibitem[Kit97]{Kitaev1997}
Alexei~Yu Kitaev.
\newblock {Quantum computations: algorithms and error correction}.
\newblock {\em Russian Mathematical Surveys}, 52(6):1191--1249, December 1997.

\bibitem[KW04]{KWerner04}
Dennis Kretschmann and Reinhard~F. Werner.
\newblock Tema con variazioni: quantum channel capacity.
\newblock {\em New Journal of Physics}, 6(1):26, 2004.
\newblock arXiv:quant-ph/0311037.

\bibitem[KW05]{PhysRevA.72.062323}
Dennis Kretschmann and Reinhard~F. Werner.
\newblock Quantum channels with memory.
\newblock {\em Physical Review A}, 72(6):062323, December 2005.
\newblock arXiv:quant-ph/0502106.

\bibitem[KW17]{KW17}
Eneet Kaur and Mark~M. Wilde.
\newblock Upper bounds on secret key agreement over lossy thermal bosonic
  channels.
\newblock {\em Physical Review A}, 96(6):062318, December 2017.
\newblock arXiv:1706.04590.

\bibitem[LBL18]{LBL18}
Lu~Li, Kaifeng Bu, and Zi-Wen Liu.
\newblock Quantifying the resource content of quantum channels: An operational
  approach.
\newblock December 2018.
\newblock arXiv:1812.02572.

\bibitem[Li14]{li14}
Ke~Li.
\newblock Second order asymptotics for quantum hypothesis testing.
\newblock {\em Annals of Statistics}, 42(1):171--189, February 2014.
\newblock arXiv:1208.1400.

\bibitem[LKDW18]{LKDW18}
Felix Leditzky, Eneet Kaur, Nilanjana Datta, and Mark~M. Wilde.
\newblock Approaches for approximate additivity of the {Holevo} information of
  quantum channels.
\newblock {\em Physical Review A}, 97(1):012332, January 2018.
\newblock arXiv:1709.01111.

\bibitem[LW19]{LW19}
Zi-Wen Liu and Andreas Winter.
\newblock Resource theories of quantum channels and the universal role of
  resource erasure.
\newblock April 2019.
\newblock arXiv:1904.04201v1.

\bibitem[LY19]{LY19}
Yunchao Liu and Xiao Yuan.
\newblock Operational resource theory of quantum channels.
\newblock April 2019.
\newblock arXiv:1904.02680.

\bibitem[Mat10]{Mats10}
Keiji Matsumoto.
\newblock Reverse test and characterization of quantum relative entropy.
\newblock October 2010.
\newblock arXiv:1010.1030.

\bibitem[Mat11]{M11}
Keiji Matsumoto.
\newblock Reverse test and characterization of quantum relative entropy.
\newblock In {\em The Second Nagoya Winter Workshop on Quantum Information,
  Measurement, and Foundations}, 2011.
\newblock Slides available at
  \url{https://sites.google.com/site/nww2011/home/talks-slides/matsumoto.pdf}.

\bibitem[MLDS{\etalchar{+}}13]{muller2013quantum}
Martin M{\"u}ller-Lennert, Fr{\'e}d{\'e}ric Dupuis, Oleg Szehr, Serge Fehr, and
  Marco Tomamichel.
\newblock On quantum {R}{\'e}nyi entropies: a new generalization and some
  properties.
\newblock {\em Journal of Mathematical Physics}, 54(12):122203, December 2013.
\newblock arXiv:1306.3142.

\bibitem[MOA11]{MOA11}
Albert~W. Marshall, Ingram Olkin, and Barry~C. Arnold.
\newblock {\em Inequalities: Theory of Majorization and Its Applications}.
\newblock Springer, second edition, 2011.

\bibitem[MW14]{MW12}
William Matthews and Stephanie Wehner.
\newblock Finite blocklength converse bounds for quantum channels.
\newblock {\em IEEE Transactions on Information Theory}, 60(11):7317--7329,
  November 2014.
\newblock arXiv:1210.4722.

\bibitem[NC97]{NC97}
Michael~A. Nielsen and Isaac~L. Chuang.
\newblock Programmable quantum gate arrays.
\newblock {\em Physical Review Letters}, 79(2):321--324, July 1997.
\newblock arXiv:quant-ph/9703032.

\bibitem[ON00]{ON00}
Tomohiro Ogawa and Hiroshi Nagaoka.
\newblock Strong converse and {Stein's} lemma in quantum hypothesis testing.
\newblock {\em IEEE Transactions on Information Theory}, 46(7):2428--2433,
  November 2000.
\newblock arXiv:quant-ph/9906090.

\bibitem[Pet85]{P85}
D\'enes Petz.
\newblock Quasi-entropies for states of a von {Neumann} algebra.
\newblock {\em Publ. RIMS, Kyoto University}, 21:787--800, 1985.

\bibitem[Pet86]{P86}
D\'enes Petz.
\newblock Quasi-entropies for finite quantum systems.
\newblock {\em Reports in Mathematical Physics}, 23:57--65, 1986.

\bibitem[Ren16]{Ren16}
Joseph~M. Renes.
\newblock Relative submajorization and its use in quantum resource theories.
\newblock {\em Journal of Mathematical Physics}, 57(12):122202, December 2016.
\newblock arXiv:1510.03695.

\bibitem[RW05]{RW05}
Bill Rosgen and John Watrous.
\newblock On the hardness of distinguishing mixed-state quantum computations.
\newblock {\em Proceedings of the 20th IEEE Conference on Computational
  Complexity}, pages 344--354, June 2005.
\newblock arXiv:cs/0407056.

\bibitem[SC19]{Seddon2019}
James~R. Seddon and Earl Campbell.
\newblock {Quantifying magic for multi-qubit operations}.
\newblock {\em Proceedings of the Royal Society A}, 475(2227), July 2019.
\newblock arXiv:1901.03322.

\bibitem[Sio58]{sion58}
M.~Sion.
\newblock On general minimax theorems.
\newblock {\em Pacific Journal of Mathematics}, 8(1):171--176, 1958.

\bibitem[SW12]{SW12}
Naresh Sharma and Naqueeb~Ahmad Warsi.
\newblock On the strong converses for the quantum channel capacity theorems.
\newblock May 2012.
\newblock arXiv:1205.1712.

\bibitem[TCR09]{TCR09}
Marco Tomamichel, Roger Colbeck, and Renato Renner.
\newblock A fully quantum asymptotic equipartition property.
\newblock {\em IEEE Transactions on Information Theory}, 55(12):5840--5847,
  December 2009.
\newblock arXiv:0811.1221.

\bibitem[TEZP19]{TEZP19}
Thomas Theurer, Dario Egloff, Lijian Zhang, and Martin~B. Plenio.
\newblock Quantifying operations with an application to coherence.
\newblock {\em Physical Review Letters}, 122(19):190405, May 2019.
\newblock arXiv:1806.07332.

\bibitem[TH13]{tomamichel2013hierarchy}
Marco Tomamichel and Masahito Hayashi.
\newblock A hierarchy of information quantities for finite block length
  analysis of quantum tasks.
\newblock {\em IEEE Transactions on Information Theory}, 59(11):7693--7710,
  August 2013.
\newblock arXiv:1208.1478.

\bibitem[TR19]{TR19}
Ryuji Takagi and Bartosz Regula.
\newblock General resource theories in quantum mechanics and beyond:
  operational characterization via discrimination tasks.
\newblock January 2019.
\newblock arXiv:1901.08127.

\bibitem[TW16]{TW2016}
Masahiro Takeoka and Mark~M. Wilde.
\newblock Optimal estimation and discrimination of excess noise in thermal and
  amplifier channels.
\newblock November 2016.
\newblock arXiv:1611.09165.

\bibitem[Uhl76]{U76}
Armin Uhlmann.
\newblock The ``transition probability'' in the state space of a *-algebra.
\newblock {\em Reports on Mathematical Physics}, 9(2):273--279, 1976.

\bibitem[Ume62]{U62}
Hisaharu Umegaki.
\newblock Conditional expectations in an operator algebra {IV} (entropy and
  information).
\newblock {\em Kodai Mathematical Seminar Reports}, 14(2):59--85, 1962.

\bibitem[Wat09]{Wat09}
John Watrous.
\newblock Semidefinite programs for completely bounded norms.
\newblock {\em Theory of Computing}, 5(11):217--238, November 2009.
\newblock arXiv:0901.4709.

\bibitem[Wil17]{W17book}
Mark~M. Wilde.
\newblock {\em Quantum Information Theory}.
\newblock Cambridge University Press, March 2017.
\newblock \href{https://arxiv.org/abs/1106.1445}{arXiv:1106.1445v7}.

\bibitem[Wil18]{Wilde2018}
Mark~M. Wilde.
\newblock Entanglement cost and quantum channel simulation.
\newblock {\em Physical Review A}, 98(4):042320, October 2018.
\newblock arXiv:1807.11939.

\bibitem[WR12]{WR12}
Ligong Wang and Renato Renner.
\newblock One-shot classical-quantum capacity and hypothesis testing.
\newblock {\em Physical Review Letters}, 108(20):200501, May 2012.
\newblock arXiv:1007.5456.

\bibitem[WW18]{WW19PPT}
Xin Wang and Mark~M. Wilde.
\newblock Exact entanglement cost of quantum states and channels under
  {PPT}-preserving operations.
\newblock September 2018.
\newblock arXiv:1809.09592.

\bibitem[WW19]{WW19states}
Xin Wang and Mark~M. Wilde.
\newblock Resource theory of asymmetric distinguishability.
\newblock {\em Physical Review Research}, 1(3):033170, December 2019.
\newblock arXiv:1905.11629.

\bibitem[WWS19]{WWS19channels}
Xin Wang, Mark~M. Wilde, and Yuan Su.
\newblock Quantifying the magic of quantum channels.
\newblock {\em New Journal of Physics}, 21(103002), September 2019.
\newblock arXiv:1903.04483.

\bibitem[WWY14]{WWY14}
Mark~M. Wilde, Andreas Winter, and Dong Yang.
\newblock Strong converse for the classical capacity of entanglement-breaking
  and {H}adamard channels via a sandwiched {R}\'enyi relative entropy.
\newblock {\em Communications in Mathematical Physics}, 331(2):593--622,
  October 2014.
\newblock arXiv:1306.1586.

\bibitem[YLZ{\etalchar{+}}19]{YLZRTG19}
Xiao Yuan, Yunchao Liu, Qi~Zhao, Bartosz Regula, Jayne Thompson, and Mile Gu.
\newblock Robustness of quantum memories: An operational resource-theoretic
  approach.
\newblock July 2019.
\newblock arXiv:1907.02521.

\bibitem[Yua19]{Yuan2019}
Xiao Yuan.
\newblock Relative entropies of quantum channels with applications in resource
  theory.
\newblock {\em Physical Review A}, 99(3):032317, March 2019.
\newblock arXiv:1807.05958.

\end{thebibliography}

\appendix

\section{Background}

\subsection{Generalized divergences}

A generalized divergence is a function $\mathbf{D}(\rho\Vert\sigma)$ taking
arbitrary quantum states $\rho$ and $\sigma$ to the non-negative reals and such
that the data processing inequality holds for an arbitrary quantum channel $\mathcal{N}$
\cite{SW12}:%
\begin{equation}
\mathbf{D}(\rho\Vert\sigma)\geq\mathbf{D}(\mathcal{N}(\rho)\Vert
\mathcal{N}(\sigma)). \label{eq:DP-gen-div}%
\end{equation}
Generalized divergences of interest include the trace distance, the negative
logarithm of the fidelity \cite{U76}, the quantum relative entropy \cite{U62},
the Petz--R\'enyi relative entropy \cite{P85,P86}, and the sandwiched R\'enyi
relative entropy \cite{muller2013quantum,WWY14}.

For completeness, we define
the last three quantities now and refer to our companion paper
\cite{WW19states} for further details of their properties. The quantum
relative entropy $D(\rho\Vert\sigma)$ is defined for states $\rho$ and
$\sigma$ as
\begin{equation}
D(\rho\Vert\sigma) := \operatorname{Tr}[\rho(\log_{2} \rho- \log_{2} \sigma)]
\end{equation}
if $\operatorname{supp}(\rho)\subseteq\operatorname{supp}(\sigma)$ and it is
set to $\infty$ otherwise. The Petz--R\'{e}nyi relative entropy is defined for
states $\rho$ and $\sigma$ as \cite{P86}%
\begin{align}
D_{\alpha}(\rho\Vert\sigma)  &  :=\frac{1}{\alpha-1}\log_{2}\operatorname{Tr}%
[\rho^{\alpha}\sigma^{1-\alpha}]
\end{align}
if $\alpha\in(0,1)$ or $\alpha\in(1,\infty)$ and $\operatorname{supp}%
(\rho)\subseteq\operatorname{supp}(\sigma)$. If $\alpha\in(1,\infty)$ and
$\operatorname{supp}(\rho)\not \subseteq \operatorname{supp}(\sigma)$, then
$D_{\alpha}(\rho\Vert\sigma):=\infty$ \cite{TCR09}. The sandwiched R\'enyi
relative entropy is defined for states $\rho$ and $\sigma$ as
\cite{muller2013quantum,WWY14}%
\begin{align}
\widetilde{D}_{\alpha}(\rho\Vert\sigma)  &  :=\frac{1}{\alpha-1}\log
_{2}\operatorname{Tr}[(\sigma^{(1-\alpha)/2\alpha}\rho\sigma^{(1-\alpha
)/2\alpha})^{\alpha}]
\end{align}
if $\alpha\in(0,1)$ or $\alpha\in(1,\infty)$ and $\operatorname{supp}%
(\rho)\subseteq\operatorname{supp}(\sigma)$. If $\alpha\in(1,\infty)$ and
$\operatorname{supp}(\rho)\not \subseteq \operatorname{supp}(\sigma)$, then
$\widetilde{D}_{\alpha}(\rho\Vert\sigma):=\infty$.

A generalized channel divergence is defined from that for states, as presented
above, given by the following function of quantum channels $\mathcal{N}%
_{A\rightarrow B}$ and $\mathcal{M}_{A\rightarrow B}$ \cite{LKDW18}:%
\begin{equation}
\mathbf{D}(\mathcal{N}\Vert\mathcal{M}):=\sup_{\rho_{RA}}\mathbf{D}%
(\mathcal{N}_{A\rightarrow B}(\rho_{RA})\Vert\mathcal{M}_{A\rightarrow B}%
(\rho_{RA})),
\label{eq:gen-ch-div-def}
\end{equation}
where the optimization is with respect to a quantum state $\rho_{RA}$ such
that the reference system is arbitrary. As observed in \cite{LKDW18}, the
following simplification holds%
\begin{equation}
\mathbf{D}(\mathcal{N}\Vert\mathcal{M})=\sup_{\psi_{RA}}\mathbf{D}%
(\mathcal{N}_{A\rightarrow B}(\psi_{RA})\Vert\mathcal{M}_{A\rightarrow B}%
(\psi_{RA})),
\end{equation}
where the optimization is with respect to pure states $\psi_{RA}$ such that
the reference system $R$ is isomorphic to the channel input system $A$.

The data processing inequality holds for the generalized channel divergence,
with respect to a superchannel~$\Theta$:%
\begin{equation}
\mathbf{D}(\mathcal{N}\Vert\mathcal{M})\geq\mathbf{D}(\Theta(\mathcal{N}%
)\Vert\Theta(\mathcal{M})), \label{eq:DP-superch}%
\end{equation}
as proved in \cite{G18}. The inequality in \eqref{eq:DP-superch} follows from
the definition in \eqref{eq:gen-ch-div-def} and the fact that the underlying generalized divergence
$\mathbf{D}$ obeys the data processing inequality in \eqref{eq:DP-gen-div}. Other applications and interpretations of channel divergences were considered in \cite{Yuan2019}.

For an environment-parametrized channel box $(\mathcal{N},\mathcal{M})$ with
environment states $\rho_{E}$ and $\sigma_{E}$, the following inequality holds
\cite{TW2016}%
\begin{equation}
\mathbf{D}(\mathcal{N}\Vert\mathcal{M})\leq\mathbf{D}(\rho_{E}\Vert\sigma
_{E}).
\end{equation}
If the channel box is also environment seizable (see
Section~\ref{sec:env-param-seize}), then the opposite inequality
$\mathbf{D}(\mathcal{N}\Vert\mathcal{M})\geq\mathbf{D}(\rho_{E}\Vert\sigma
_{E})$ holds as well (as a consequence of \eqref{eq:DP-superch}), from which
we conclude the following equality in this case:%
\begin{equation}
\mathbf{D}(\mathcal{N}\Vert\mathcal{M})=\mathbf{D}(\rho_{E}\Vert\sigma_{E}).
\label{eq:env-seizable-gen-div-equal}%
\end{equation}

Particular examples of generalized channel divergences are the channel
min-relative entropy in \eqref{eq:ch-min-rel-ent}, the smooth channel
min-relative entropy in \eqref{eq:ch-HT-rel-ent}, and the channel max-relative
entropy in \eqref{eq:ch-max-rel-ent}. Other examples include those built from
the relative entropy, the Petz--R\'enyi relative entropy, and the sandwiched
R\'enyi relative entropy, as defined in \cite{Cooney2016}. As such, the
inequality in \eqref{eq:DP-superch} holds for all of these channel
divergences, a property that we make extensive use of in what follows.

It is not clear how to write the smooth channel max-relative entropy in
\eqref{eq:ch-smooth-max-rel-ent} as a generalized channel divergence. However,
it does obey the data processing inequality in \eqref{eq:DP-superch}, as the
following simple argument demonstrates. Let $\mathcal{N}$ and $\mathcal{M}$ be
arbitrary channels, and let $\Theta$ be a superchannel. Let $\widetilde
{\mathcal{N}}$ be a channel satisfying $\widetilde{\mathcal{N}}\approx
_{\varepsilon}\mathcal{N}$. Then, from the data processing inequality for the
diamond distance with respect to superchannels, it follows that $\Theta
(\widetilde{\mathcal{N}})\approx_{\varepsilon}\Theta(\mathcal{N})$. We then
have that%
\begin{align}
D_{\max}(\widetilde{\mathcal{N}}\Vert\mathcal{M})  &  \geq D_{\max}%
(\Theta(\widetilde{\mathcal{N}})\Vert\Theta(\mathcal{M}))\\
&  \geq D_{\max}^{\varepsilon}(\Theta(\mathcal{N})\Vert\Theta(\mathcal{M})).
\end{align}
The first inequality follows from the data processing inequality for $D_{\max
}$ of channels, and the second follows from the definition of the 
smooth channel max-relative entropy and the fact that $\Theta(\widetilde{\mathcal{N}%
})\approx_{\varepsilon}\Theta(\mathcal{N})$. Since the inequality holds for
all $\widetilde{\mathcal{N}}$ satisfying $\widetilde{\mathcal{N}}%
\approx_{\varepsilon}\mathcal{N}$, we conclude the desired data processing
inequality:%
\begin{equation}
D_{\max}^{\varepsilon}(\mathcal{N}\Vert\mathcal{M})\geq D_{\max}^{\varepsilon
}(\Theta(\mathcal{N})\Vert\Theta(\mathcal{M})).
\end{equation}

\subsection{Choi isomorphism for quantum channels}

\label{app:choi-iso-ch}The Choi isomorphism is a way of characterizing quantum
channels that is suitable for optimizing over them in semi-definite programs.
For a quantum channel $\mathcal{N}_{A\rightarrow B}$, its Choi operator is
given by%
\begin{equation}
\Gamma_{RB}^{\mathcal{N}}:=\mathcal{N}_{A\rightarrow B}(\Gamma_{RA}),
\end{equation}
where $\Gamma_{RA}=|\Gamma\rangle\langle\Gamma|_{RA}$ and%
\begin{equation}
|\Gamma\rangle_{RA}:=\sum_{i}|i\rangle_{R}|i\rangle_{A},
\end{equation}
with $\{|i\rangle_{R}\}_{i}$ and $\{|i\rangle_{A}\}$ orthonormal bases. The
Choi operator is positive semi-definite $\Gamma_{RB}^{\mathcal{N}}\geq0$,
corresponding to $\mathcal{N}_{A\rightarrow B}$ being completely positive, and
satisfies $\operatorname{Tr}_{B}[\Gamma_{RB}^{\mathcal{N}}]=I_{R}$, the latter
corresponding to $\mathcal{N}_{A\rightarrow B}$ being trace preserving. On the
other hand, given an operator $\Gamma_{RB}^{\mathcal{M}}$ satisfying
$\Gamma_{RB}^{\mathcal{M}}\geq0$ and $\operatorname{Tr}_{B}[\Gamma
_{RB}^{\mathcal{M}}]=I_{R}$, one realizes via postselected teleportation
\cite{B05}\ the following quantum channel:%
\begin{align}
\mathcal{M}_{A\rightarrow B}(\rho_{A})  &  =\langle\Gamma|_{SR}\left(
\rho_{S}\otimes\Gamma_{RB}^{\mathcal{M}}\right)  |\Gamma\rangle_{SR}\\
&  =\operatorname{Tr}_{R}[(T_{R}(\rho_{R})\otimes I_{B})\Gamma_{RB}%
^{\mathcal{M}}], \label{eq:output-from-Choi-oper}%
\end{align}
where systems $S$, $R$, and $A$ are isomorphic and the last line employs the
facts that $\left(  M_{S}\otimes I_{R}\right)  |\Gamma\rangle_{SR}=\left(
I_{S}\otimes T_{R}(M_{R})\right)  |\Gamma\rangle_{SR}$ for $T_{R}$ the
transpose map, defined as%
\begin{equation}
T_{R}(\rho_{R})=\sum_{i,j}|i\rangle\langle j|_{R}\rho_{R}|i\rangle\langle
j|_{R},
\end{equation}
and $\langle\Gamma|_{SR}\left(  I_{S}\otimes X_{RB}\right)  |\Gamma
\rangle_{SR}=\operatorname{Tr}_{R}[X_{RB}]$. We often abbreviate the transpose
map simply as%
\begin{equation}
\rho_{R}^{T}=T_{R}(\rho_{R}).
\end{equation}
Since the constraints $\Gamma_{RB}^{\mathcal{M}}\geq0$ and $\operatorname{Tr}%
_{B}[\Gamma_{RB}^{\mathcal{M}}]=I_{R}$ are semi-definite, this is a useful way
of incorporating optimizations over quantum channels into semi-definite programs.

\subsection{Semi-definite programs for diamond distance}

\label{sec:SDP-diamond-dist}The normalized diamond distance between quantum
channels $\mathcal{N}_{A\rightarrow B}$ and $\mathcal{M}_{A\rightarrow B}$ is
given by the following primal and dual semi-definite programs \cite{Wat09}:%
\begin{align}
&  \frac{1}{2}\left\Vert \mathcal{N}-\mathcal{M}\right\Vert _{\diamond
}\nonumber\\
&  =\sup_{\rho_{R},\Omega_{RB}\geq0}\left\{
\begin{array}
[c]{c}%
\operatorname{Tr}[\Omega_{RB}(\Gamma_{RB}^{\mathcal{N}}-\Gamma_{RB}%
^{\mathcal{M}})]:\\
\Omega_{RB}\leq\rho_{R}\otimes I_{B},\ \operatorname{Tr}[\rho_{R}]=1
\end{array}
\right\} \label{eq:SDP-diamond-dist-primal}\\
&  =\inf_{\mu,Z_{RB}\geq0}\left\{
\begin{array}
[c]{c}%
\mu:Z_{RB}\geq\Gamma_{RB}^{\mathcal{N}}-\Gamma_{RB}^{\mathcal{M}},\\
\mu I_{R}\geq\operatorname{Tr}_{B}[Z_{RB}]
\end{array}
\right\}  . \label{eq:SDP-diamond-dist-dual}%
\end{align}
The latter expression is equal to%
\begin{equation}
\inf_{Z_{RB}\geq0}\left\{  \left\Vert \operatorname{Tr}_{B}[Z_{RB}]\right\Vert
_{\infty}:Z_{RB}\geq\Gamma_{RB}^{\mathcal{N}}-\Gamma_{RB}^{\mathcal{M}%
}\right\}  .
\end{equation}

\subsection{Choi isomorphism for quantum superchannels}

\label{app:Choi-rep-super-ch}Just as there is a Choi isomorphism for quantum
channels, as reviewed in Appendix~\ref{app:choi-iso-ch}, there is a Choi
isomorphism for quantum superchannels \cite{CDP08,G18}. To define it, we can
exploit the known result that a quantum superchannel $\Theta_{\left(
A\rightarrow B\right)  \rightarrow\left(  C\rightarrow D\right)  }$\ is in
one-to-one correspondence with a bipartite channel $\mathcal{L}_{CB\rightarrow
AD}$ that has no-signaling constraints \cite{CDP08,G18}. That is, as stated in
\eqref{eq:pre-post-proc-superch}, every superchannel $\Theta_{\left(
A\rightarrow B\right)  \rightarrow\left(  C\rightarrow D\right)  }$ can be
physically realized by means of pre- and post-processing channels
$\mathcal{E}_{C\rightarrow AM}$ and $\mathcal{D}_{BM\rightarrow D}$,
respectively, such that \eqref{eq:pre-post-proc-superch}\ holds. The bipartite
channel corresponding to $\mathcal{E}_{C\rightarrow AM}$ and $\mathcal{D}%
_{BM\rightarrow D}$ is then given by%
\begin{equation}
\mathcal{L}_{CB\rightarrow AD}=\mathcal{D}_{BM\rightarrow D}\circ
\mathcal{E}_{C\rightarrow AM}, \label{eq:bipartite-ch-no-signaling}%
\end{equation}
i.e., where we do not \textquotedblleft plug in\textquotedblright\ the channel
$\mathcal{N}_{A\rightarrow B}$ to the ports $A$ and $B$, and instead system
$B$ is available as input and system $A$ is available as output. On the other
hand, suppose that $\mathcal{L}_{CB\rightarrow AD}$ is a bipartite channel
with the constraint that it is no-signaling from input system $B$ to output
system $A$. Then there exist channels $\mathcal{E}_{C\rightarrow AM}$ and
$\mathcal{D}_{BM\rightarrow D}$ such that $\mathcal{L}_{CB\rightarrow AD}$ can
be realized as in \eqref{eq:bipartite-ch-no-signaling}, as proved in
\cite{ESW02}, placing superchannels in one-to-one correspondence with
bipartite channels that have a no-signaling constraint.

Using this correspondence, we define the Choi operator of a superchannel
$\Theta_{\left(  A\rightarrow B\right)  \rightarrow\left(  C\rightarrow
D\right)  }$ with corresponding $B\not \rightarrow A$ no-signaling\ bipartite
channel $\mathcal{L}_{CB\rightarrow AD}$ as%
\begin{equation}
\Gamma_{R_{C}R_{B}AD}^{\Theta}:=\mathcal{L}_{CB\rightarrow AD}(\Gamma_{R_{C}%
C}\otimes\Gamma_{R_{B}B}).
\end{equation}
The fact that $\Theta_{\left(  A\rightarrow B\right)  \rightarrow\left(
C\rightarrow D\right)  }$ preserves completely positivity corresponds to the
condition $\Gamma_{R_{C}R_{B}AD}^{\Theta}\geq0$, and the fact that
$\Theta_{\left(  A\rightarrow B\right)  \rightarrow\left(  C\rightarrow
D\right)  }$ preserves trace preservation corresponds to the condition
$\Gamma_{R_{C}R_{B}}^{\Theta}=I_{R_{C}R_{B}}$. The no-signaling condition
corresponds to $\Gamma_{R_{C}R_{B}A}^{\Theta}=\Gamma_{R_{C}A}^{\Theta}%
\otimes\pi_{R_{B}}$, where $\pi_{R_{B}}$ is the maximally mixed state.
Furthermore, as an extension of \eqref{eq:output-from-Choi-oper}, the Choi
operator $\Gamma_{R_{C}D}^{\mathcal{K}}$\ for the output channel
$\mathcal{K}_{C\rightarrow D}$\ of the superchannel $\Theta_{\left(
A\rightarrow B\right)  \rightarrow\left(  C\rightarrow D\right)  }$, when the
input is a channel $\mathcal{N}_{A\rightarrow B}$\ with Choi operator
$\Gamma_{R_{A}B}^{\mathcal{N}}$, is as follows:%
\begin{equation}
\Gamma_{R_{C}D}^{\mathcal{K}}=\operatorname{Tr}_{R_{A}B}[(T_{R_{A}B}%
(\Gamma_{R_{A}B}^{\mathcal{N}})\otimes I_{R_{C}D})\Gamma_{R_{C}R_{B}%
AD}^{\Theta}].
\end{equation}
This kind of formulation of a superchannel allows for incorporating
optimizations over superchannels into semi-definite programs, as we do in
Appendix~\ref{app:gen-ch-box-trans-SDP}.

\section{General channel box transformation problem as a semi-definite
program}

\label{app:gen-ch-box-trans-SDP}Here we prove the statement claimed at the end
of Section~\ref{sec:ch-box-tr-problem}, that the general channel box
transformation problem stated in \eqref{eq:approx-box-trans-prob}\ can be
solved by means of a semi-definite program. By employing the Choi
representation of superchannels from Appendix~\ref{app:Choi-rep-super-ch}, as
well as the semi-definite program for the diamond distance in
Appendix~\ref{sec:SDP-diamond-dist}, we find that
\eqref{eq:approx-box-trans-prob}, as a function of channels $\mathcal{N}%
_{A\rightarrow B}$, $\mathcal{M}_{A\rightarrow B}$, $\mathcal{K}_{C\rightarrow
D}$, and $\mathcal{L}_{C\rightarrow D}$,\ can be written as the following
semi-definite program:%
\begin{equation}
\inf_{Z_{CD},\ \Gamma_{CBAD}^{\Theta}\geq0}\left\Vert \operatorname{Tr}%
_{D}[Z_{CD}]\right\Vert _{\infty}, \label{eq:SDP-primal-box-tr-1}%
\end{equation}
subject to%
\begin{align}
Z_{CD}  &  \geq\Gamma_{CD}^{\mathcal{K}}-\operatorname{Tr}_{AB}[(\Gamma
_{AB}^{\mathcal{N}})^{T}\Gamma_{CBAD}^{\Theta}],\nonumber\\
\Gamma_{CD}^{\mathcal{L}}  &  =\operatorname{Tr}_{AB}[(\Gamma_{AB}%
^{\mathcal{M}})^{T}\Gamma_{CBAD}^{\Theta}],\nonumber\\
\Gamma_{CB}^{\Theta}  &  =I_{CB},\nonumber\\
\Gamma_{CBA}^{\Theta}  &  =\Gamma_{CA}^{\Theta}\otimes I_B/\left\vert B\right\vert ,
\label{eq:SDP-primal-box-tr-2}%
\end{align}
where we employ the shorthand%
\begin{align}
\Gamma_{CD}^{\mathcal{K}}  &  :=\mathcal{K}_{C^{\prime}\rightarrow D}%
(\Gamma_{CC^{\prime}}),\\
\Gamma_{AB}^{\mathcal{N}}  &  :=\mathcal{N}_{A^{\prime}\rightarrow B}%
(\Gamma_{AA^{\prime}}),\\
\Gamma_{AB}^{\mathcal{M}}  &  :=\mathcal{M}_{A^{\prime}\rightarrow B}%
(\Gamma_{AA^{\prime}}),
\end{align}
with system $C^{\prime}$ isomorphic to system $C$ and system $A^{\prime}$
isomorphic to system $A$.

The dual of the semi-definite program in
\eqref{eq:SDP-primal-box-tr-1}--\eqref{eq:SDP-primal-box-tr-2} is given by%
\begin{equation}
\sup\operatorname{Tr}[\Gamma_{CD}^{\mathcal{K}}Y_{CD}]+\operatorname{Tr}%
[\Gamma_{CD}^{\mathcal{L}}W_{CD}]+\operatorname{Tr}[S_{CB}],
\label{eq:SDP-dual-ch-box-trans}%
\end{equation}
subject to%
\begin{align}
Y_{CD},M_{C}  &  \geq0,\\
W_{CD},S_{CB},L_{CBA}  &  \in\text{Herm},
\end{align}
\vspace{-.3in}%
\begin{equation}
\operatorname{Tr}[M_{C}]\leq1,\qquad Y_{CD}\leq M_{C}\otimes I_{D},
\end{equation}
\vspace{-.3in}%
\begin{multline}
I_{BD}\otimes\operatorname{Tr}_{B}[L_{CBA}]/\left\vert B\right\vert \geq
Y_{CD}\otimes(\Gamma_{AB}^{\mathcal{N}})^{T}\\
+W_{CD}\otimes(\Gamma_{AB}^{\mathcal{M}})^{T}+S_{CB}\otimes I_{AD}\\
+L_{CBA}\otimes I_{D}.
\end{multline}
By employing strong duality, it follows that the optimal value of
\eqref{eq:SDP-primal-box-tr-1}--\eqref{eq:SDP-primal-box-tr-2}\ is equal to
the optimal value of \eqref{eq:SDP-dual-ch-box-trans}.

\section{One-shot distillation and dilution of channel boxes}

\subsection{Channel min-relative entropy as exact one-shot distillable
distinguishability}

\label{app:ch-min-rel-ent-exact-distillable-dist}To establish
\eqref{eq:exact-distill-dist-min}, we first prove the inequality%
\begin{equation}
D_{d}^{0}(\mathcal{N},\mathcal{M})\geq D_{\min}(\mathcal{N}\Vert\mathcal{M}).
\label{eq:exact-distill-dist-lower-bnd-dmin}%
\end{equation}
Let $\Theta_{\left(  A\rightarrow B\right)  \rightarrow\left(  C\rightarrow
D\right)  }$ be the superchannel that traces out the input $C$, prepares the
pure state $\psi_{RA}$, transmits $A$ through the unknown channel
$\mathcal{N}$ or $\mathcal{M}$, and then applies the following channel to
systems $R$ and $B$, where $B$ is the output of the unknown channel:%
\begin{multline}
\omega_{RB}\rightarrow\operatorname{Tr}[\Pi_{RB}^{\mathcal{N}(\psi)}%
\omega_{RB}]|0\rangle\langle0|_{D}\label{eq:exact-distill-channel}\\
+\operatorname{Tr}[(I_{RB}-\Pi_{RB}^{\mathcal{N}(\psi)})\omega_{RB}%
]|1\rangle\langle1|_{D},
\end{multline}
and $\Pi_{RB}^{\mathcal{N}(\psi)}$ is the projection onto the support of the
state $\mathcal{N}_{A\rightarrow B}(\psi_{RA})$. By construction, if the
unknown channel is $\mathcal{N}_{A\rightarrow B}$, then the channel realized
by the superchannel delineated above is the replacer channel $\mathcal{R}%
_{C\rightarrow D}^{|0\rangle\langle0|}$. On the other hand, note that if
$\omega_{RB}=\mathcal{M}_{A\rightarrow B}(\psi_{RA})$ is the input to the
channel in \eqref{eq:exact-distill-channel}, then the output is the state
$\pi_{M}$, where%
\begin{equation}
\log_{2}M=D_{\min}(\mathcal{N}_{A\rightarrow B}(\psi_{RA})\Vert\mathcal{M}%
_{A\rightarrow B}(\psi_{RA})).
\end{equation}
So the output in this latter case is the replacer channel $\mathcal{R}%
_{C\rightarrow D}^{\pi_{M}}$. Taking a supremum over all input states
$\psi_{RA}$ then establishes the inequality in \eqref{eq:exact-distill-dist-lower-bnd-dmin}.

The opposite inequality%
\begin{equation}
D_{\min}(\mathcal{N}\Vert\mathcal{M})\geq D_{d}^{0}(\mathcal{N},\mathcal{M})
\label{eq:exact-distill-dist-upper-bnd-dmin}%
\end{equation}
follows from the data processing inequality for $D_{\min}(\mathcal{N}%
\Vert\mathcal{M})$ under the action of a superchannel. Let $\Theta$ be an
arbitrary superchannel satisfying%
\begin{align}
\Theta(\mathcal{N}_{A\rightarrow B})  &  =\mathcal{R}_{C\rightarrow
D}^{|0\rangle\langle0|},\\
\Theta(\mathcal{M}_{A\rightarrow B})  &  =\mathcal{R}_{C\rightarrow D}%
^{\pi_{M}}.
\end{align}
Then it follows from \eqref{eq:DP-superch} that%
\begin{align}
D_{\min}(\mathcal{N}\Vert\mathcal{M})  &  \geq D_{\min}(\Theta(\mathcal{N}%
)\Vert\Theta(\mathcal{M}))\\
&  =D_{\min}(\mathcal{R}_{C\rightarrow D}^{|0\rangle\langle0|}\Vert
\mathcal{R}_{C\rightarrow D}^{\pi_{M}})\\
&  = D_{\min}(|0\rangle\langle0|\Vert\pi_{M})\\
&  =\log_{2}M,
\end{align}
where the second-to-last equality follows from \eqref{eq:env-seizable-gen-div-equal}, given
that pairs of replacer channels are environment seizable, and the last equality follows by direct evaluation. Since the
exact distillable distinguishability involves an optimization over all
superchannels, the inequality in \eqref{eq:exact-distill-dist-upper-bnd-dmin}
follows, and combined with \eqref{eq:exact-distill-dist-lower-bnd-dmin}, we
conclude \eqref{eq:exact-distill-dist-min}.

\subsection{Channel max-relative entropy as exact one-shot distinguishability
cost}

\label{app:ch-max-rel-ent-exact-dist-cost}

To establish
\eqref{eq:exact-dist-cost-max}, we first prove the inequality%
\begin{equation}
D_{c}^{0}(\mathcal{N},\mathcal{M})\leq D_{\max}(\mathcal{N}\Vert\mathcal{M}).
\label{eq:exact-dist-cost-ach-part}%
\end{equation}
Recall the characterization of $D_{\max}(\mathcal{N}\Vert\mathcal{M})$ from
\eqref{eq:dmax-simplify}. Let $\lambda$ be such that%
\begin{equation}
\mathcal{N}_{A\rightarrow B}(\Phi_{RA})\leq2^{\lambda}\mathcal{M}%
_{A\rightarrow B}(\Phi_{RA}). \label{eq:dmax-ch-condition}%
\end{equation}
Then this means that $2^{\lambda}\mathcal{M}_{A\rightarrow B}(\Phi
_{RA})-\mathcal{N}_{A\rightarrow B}(\Phi_{RA})\geq0$, so that%
\begin{equation}
\omega_{RA}:=\frac{2^{\lambda}\mathcal{M}_{A\rightarrow B}(\Phi_{RA}%
)-\mathcal{N}_{A\rightarrow B}(\Phi_{RA})}{2^{\lambda}-1}%
\end{equation}
is a quantum state. Furthermore, since%
\begin{equation}
\operatorname{Tr}_{B}\left[  \frac{2^{\lambda}\mathcal{M}_{A\rightarrow
B}(\Phi_{RA})-\mathcal{N}_{A\rightarrow B}(\Phi_{RA})}{2^{\lambda}-1}\right]
=\pi_{R},
\end{equation}
where $\pi_{R}$ is the maximally mixed state on system $R$, it follows that
$\omega_{RA}$ is the Choi state of a quantum channel $\mathcal{N}%
_{A\rightarrow B}^{\prime}$, so that%
\begin{equation}
\omega_{RA}=\mathcal{N}_{A\rightarrow B}^{\prime}(\Phi_{RA}).
\end{equation}
Furthermore, by linearity, we have that%
\begin{equation}
\mathcal{N}_{A\rightarrow B}^{\prime}=\frac{2^{\lambda}\mathcal{M}%
_{A\rightarrow B}-\mathcal{N}_{A\rightarrow B}}{2^{\lambda}-1}.
\end{equation}

Then we construct the superchannel $\Theta_{\left(  C\rightarrow D\right)
\rightarrow\left(  A\rightarrow B\right)  }$ as follows. Let $\tau_{C}$ be a
fixed state that is input to the unknown replacer channel $\mathcal{R}%
_{C\rightarrow D}^{|0\rangle\langle0|}$ or $\mathcal{R}_{C\rightarrow D}%
^{\pi_{M}}$, where $M=2^{\lambda}$. Then we perform the following channel on
the output system $D$ and the input system $A$:%
\begin{multline}
\mathcal{P}_{AD}(\rho_{A}\otimes\sigma_{D}):=\mathcal{N}_{A\rightarrow B}%
(\rho_{A})\langle0|\sigma_{D}|0\rangle_{D}\\
+\mathcal{N}_{A\rightarrow B}^{\prime}(\rho_{A})\langle1|\sigma_{D}%
|1\rangle_{D}.
\end{multline}
In the case that the unknown channel is $\mathcal{R}_{C\rightarrow
D}^{|0\rangle\langle0|}$, the channel realized by this process is
$\mathcal{N}_{A\rightarrow B}$. In the case that the unknown channel is
$\mathcal{R}_{C\rightarrow D}^{\pi_{M}}$, the channel realized by this process
is%
\begin{equation}
2^{-\lambda}\mathcal{N}_{A\rightarrow B}+\left(  1-2^{-\lambda}\right)
\mathcal{N}_{A\rightarrow B}^{\prime}=\mathcal{M}_{A\rightarrow B},
\end{equation}
demonstrating that%
\begin{equation}
D_{c}^{0}(\mathcal{N},\mathcal{M})\leq\lambda.
\end{equation}
Now taking an infimum over all $\lambda$ such that
\eqref{eq:dmax-ch-condition} holds, we conclude the inequality in \eqref{eq:exact-dist-cost-ach-part}.

The opposite inequality%
\begin{equation}
D_{\max}(\mathcal{N}\Vert\mathcal{M})\leq D_{c}^{0}(\mathcal{N},\mathcal{M})
\label{eq:dist-cost-one-shot-optimality}%
\end{equation}
follows from the data processing inequality for $D_{\max}(\mathcal{N}%
\Vert\mathcal{M})$ under the action of a superchannel. Let $\Theta$ be an
arbitrary superchannel satisfying%
\begin{align}
\Theta(\mathcal{R}_{C\rightarrow D}^{|0\rangle\langle0|})  &  =\mathcal{N}%
_{A\rightarrow B},\\
\Theta(\mathcal{R}_{C\rightarrow D}^{\pi_{M}})  &  =\mathcal{M}_{A\rightarrow
B}.
\end{align}
Then it follows from \eqref{eq:DP-superch} that%
\begin{align}
\log_{2}M  &  = D_{\max}(|0\rangle\langle0| \Vert\pi_{M})\\
&  = D_{\max}(\mathcal{R}_{C\rightarrow D}^{|0\rangle\langle0|}\Vert
\mathcal{R}_{C\rightarrow D}^{\pi_{M}})\\
&  \geq D_{\max}(\Theta(\mathcal{R}_{C\rightarrow D}^{|0\rangle\langle
0|})\Vert\Theta(\mathcal{R}_{C\rightarrow D}^{\pi_{M}}))\\
&  =D_{\max}(\mathcal{N}\Vert\mathcal{M}),
\end{align}
The first equality follows by direct evaluation, and the second follows
from \eqref{eq:env-seizable-gen-div-equal}, given
that pairs of replacer channels are environment seizable. Since the exact
distinguishability cost involves an optimization over all
superchannels, the inequality in \eqref{eq:dist-cost-one-shot-optimality}
follows, and combined with \eqref{eq:exact-dist-cost-ach-part}, we conclude \eqref{eq:exact-dist-cost-max}.

\subsection{Semi-definite programs for smooth channel min- and max-relative
entropies}

\label{app:SDPs-channel-smooth-entropies}

In this appendix, we prove that the smooth channel min- and max-relative
entropies are characterized by semi-definite programs, starting with the
former. We note that Proposition~\ref{prop:smooth-min-ch-rel-ent-SDP}\ below
was also found in \cite{Faist2019}.

\begin{proposition}
\label{prop:smooth-min-ch-rel-ent-SDP}Let $\mathcal{N}_{A\rightarrow B}$ and
$\mathcal{M}_{A\rightarrow B}$ be quantum channels and $\varepsilon\in\left[
0,1\right]  $. The smooth channel min-relative entropy is given by the
following primal semi-definite program:%
\begin{multline}
D_{\min}^{\varepsilon}(\mathcal{N}\Vert\mathcal{M}%
)=\label{eq:primal-SDP-ch-sm-min-rel-ent}\\
-\log_{2}\inf_{\rho_{R},\Omega_{RB}\geq0}\left\{
\begin{array}
[c]{c}%
\operatorname{Tr}[\Omega_{RB}\Gamma_{RB}^{\mathcal{M}}]:\\
\operatorname{Tr}[\Omega_{RB}\Gamma_{RB}^{\mathcal{N}}]\geq1-\varepsilon,\\
\Omega_{RB}\leq\rho_{R}\otimes I_{B},\ \operatorname{Tr}[\rho_{R}]=1
\end{array}
\right\}  .
\end{multline}
The dual semi-definite program is given by%
\begin{multline}
D_{\min}^{\varepsilon}(\mathcal{N}\Vert\mathcal{M}%
)=\label{eq:dual-SDP-ch-sm-min-rel-ent}\\
-\log_{2}\sup_{\lambda,\mu,Y_{RB}\geq0}\left\{
\begin{array}
[c]{c}%
\mu\left(  1-\varepsilon\right)  -\lambda:\\
\mu\Gamma_{RB}^{\mathcal{N}}\leq\Gamma_{RB}^{\mathcal{M}}+Y_{RB},\\
\operatorname{Tr}_{B}[Y_{RB}]\leq\lambda I_{R}%
\end{array}
\right\}  .
\end{multline}

\end{proposition}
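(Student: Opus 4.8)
The plan is to reduce the channel-level optimization defining $D_{\min}^{\varepsilon}(\mathcal{N}\Vert\mathcal{M})$ to a single semi-definite program by two moves: first absorbing the supremum over input states into the infimum over test operators, and then using the Choi isomorphism of Appendix~\ref{app:choi-iso-ch} to linearize the resulting expression in one matrix variable.

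First I would observe that, since $-\log_{2}$ is monotone decreasing, the overall minus sign in the definition \eqref{eq:ch-HT-rel-ent} of $D_{\min}^{\varepsilon}(\mathcal{N}\Vert\mathcal{M})$ turns the supremum over pure states $\psi_{RA}$ into an infimum, so that $D_{\min}^{\varepsilon}(\mathcal{N}\Vert\mathcal{M}) = -\log_{2}\inf_{\psi_{RA},\,\Lambda_{RB}} \operatorname{Tr}[\Lambda_{RB}\mathcal{M}_{A\to B}(\psi_{RA})]$, where the joint infimum is over pure states $\psi_{RA}$ with $R$ isomorphic to $A$ together with operators $\Lambda_{RB}$ satisfying $0\le\Lambda_{RB}\le I_{RB}$ and $\operatorname{Tr}[\Lambda_{RB}\mathcal{N}_{A\to B}(\psi_{RA})]\ge 1-\varepsilon$. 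Next I would write each such input as $\psi_{RA}=(Z_{R}\otimes I_{A})\Gamma_{RA}(Z_{R}^{\dagger}\otimes I_{A})$ with $\Gamma_{RA}=|\Gamma\rangle\langle\Gamma|_{RA}$ and $Z_{R}$ an operator on $R$ obeying $\operatorname{Tr}[Z_{R}^{\dagger}Z_{R}]=1$; by the transpose identity recalled in Appendix~\ref{app:choi-iso-ch} this parametrization is exhaustive, and it gives $\mathcal{N}_{A\to B}(\psi_{RA})=(Z_{R}\otimes I_{B})\Gamma_{RB}^{\mathcal{N}}(Z_{R}^{\dagger}\otimes I_{B})$ and likewise for $\mathcal{M}$.

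Setting $\rho_{R}:=Z_{R}^{\dagger}Z_{R}$ (so $\rho_{R}\ge 0$, $\operatorname{Tr}[\rho_{R}]=1$) and $\Omega_{RB}:=(Z_{R}^{\dagger}\otimes I_{B})\Lambda_{RB}(Z_{R}\otimes I_{B})\ge 0$, direct substitution shows that $\Lambda_{RB}\le I_{RB}$ is equivalent to $\Omega_{RB}\le\rho_{R}\otimes I_{B}$, that $\operatorname{Tr}[\Lambda_{RB}\mathcal{N}(\psi_{RA})]=\operatorname{Tr}[\Omega_{RB}\Gamma_{RB}^{\mathcal{N}}]$, and that the objective equals $\operatorname{Tr}[\Omega_{RB}\Gamma_{RB}^{\mathcal{M}}]$; this establishes that the defining infimum is no smaller than the infimum in \eqref{eq:primal-SDP-ch-sm-min-rel-ent}. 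For the reverse inequality, given a feasible pair $(\rho_{R},\Omega_{RB})$ of the SDP I would put $Z_{R}:=\rho_{R}^{1/2}$, take $\psi_{RA}$ as above, and set $\Lambda_{RB}:=(\rho_{R}^{-1/2}\otimes I_{B})\Omega_{RB}(\rho_{R}^{-1/2}\otimes I_{B})$ with $\rho_{R}^{-1/2}$ the generalized inverse; the key point is that $0\le\Omega_{RB}\le\rho_{R}\otimes I_{B}$ forces the support of $\Omega_{RB}$ into the support of $\rho_{R}$ tensored with $B$, so $\Lambda_{RB}$ is well defined, obeys $0\le\Lambda_{RB}\le\Pi_{\rho_{R}}\otimes I_{B}\le I_{RB}$, has the same objective value, and satisfies $\operatorname{Tr}[\Lambda_{RB}\mathcal{N}(\psi_{RA})]\ge 1-\varepsilon$. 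This yields \eqref{eq:primal-SDP-ch-sm-min-rel-ent}; it is the channel analogue of the state SDP of \cite{WR12} and is structurally parallel to the diamond-norm SDP of \cite{Wat09}.

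Finally I would obtain the dual \eqref{eq:dual-SDP-ch-sm-min-rel-ent} by Lagrangian duality: introduce $\mu\ge 0$ for the constraint $\operatorname{Tr}[\Omega_{RB}\Gamma_{RB}^{\mathcal{N}}]\ge 1-\varepsilon$, a positive semi-definite multiplier $Y_{RB}$ for $\rho_{R}\otimes I_{B}-\Omega_{RB}\ge 0$, and a real $\lambda$ for $\operatorname{Tr}[\rho_{R}]=1$; minimizing the Lagrangian over $\Omega_{RB}\ge 0$ forces $\mu\Gamma_{RB}^{\mathcal{N}}\le\Gamma_{RB}^{\mathcal{M}}+Y_{RB}$, minimizing over $\rho_{R}\ge 0$ forces $\operatorname{Tr}_{B}[Y_{RB}]\le\lambda I_{R}$ (whence $\lambda\ge 0$ automatically from $Y_{RB}\ge 0$), and the surviving constant is $\mu(1-\varepsilon)-\lambda$. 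Strong duality, and hence the claimed equality \eqref{eq:dual-SDP-ch-sm-min-rel-ent}, follows from Slater's condition, a strictly feasible SDP point being easy to exhibit for $\varepsilon\in(0,1)$, with the endpoints $\varepsilon\in\{0,1\}$ handled by direct evaluation. The step I expect to be the main obstacle is making the two-way correspondence between the pairs $(\psi_{RA},\Lambda_{RB})$ and $(\rho_{R},\Omega_{RB})$ airtight in the rank-deficient case — in particular verifying that the support condition implied by $\Omega_{RB}\le\rho_{R}\otimes I_{B}$ is precisely what is needed to invert the construction — and, secondarily, confirming Slater's condition so that the primal and dual optimal values coincide.
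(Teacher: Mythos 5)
Your proposal is correct and follows essentially the same route as the paper: both convert the joint $\sup$--$\inf$ into a single infimum, parametrize pure inputs as $(Z_R\otimes I_A)\Gamma_{RA}(Z_R^\dagger\otimes I_A)$, change variables to $\rho_R=Z_R^\dagger Z_R$ and $\Omega_{RB}=(Z_R^\dagger\otimes I_B)\Lambda_{RB}(Z_R\otimes I_B)$ to obtain the primal SDP, and then pass to the stated dual by strong duality. The only differences are cosmetic improvements in rigor: where the paper handles rank-deficient marginals by a density argument over $\psi_R>0$, you invert the correspondence directly via the generalized inverse and the support containment forced by $\Omega_{RB}\leq\rho_R\otimes I_B$, and you spell out the Lagrangian/Slater details that the paper leaves implicit.
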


\begin{proof}
By definition, we have that%
\begin{equation}
D_{\min}^{\varepsilon}(\mathcal{N}\Vert\mathcal{M})=\sup_{\psi_{RA}}D_{\min
}^{\varepsilon}(\mathcal{N}_{A\rightarrow B}(\psi_{RA})\Vert\mathcal{M}%
_{A\rightarrow B}(\psi_{RA})),
\end{equation}
where%
\begin{equation}
D_{\min}^{\varepsilon}(\rho\Vert\sigma)=-\log_{2}\inf_{\Lambda\geq0}\left\{
\operatorname{Tr}[\Lambda\sigma]:\operatorname{Tr}[\Lambda\rho]\geq
1-\varepsilon,\Lambda\leq I\right\}  .
\end{equation}
This then means that%
\begin{multline}
D_{\min}^{\varepsilon}(\mathcal{N}\Vert\mathcal{M})=\\
-\log_{2}\inf_{\psi_{RA},\Lambda_{RB}\geq0}\left\{
\begin{array}
[c]{c}%
\operatorname{Tr}[\Lambda_{RB}\mathcal{M}_{A\rightarrow B}(\psi_{RA})]:\\
\operatorname{Tr}[\Lambda_{RB}\mathcal{N}_{A\rightarrow B}(\psi_{RA}%
)]\geq1-\varepsilon,\\
\Lambda_{RB}\leq I_{RB},\ \operatorname{Tr}[\psi_{RA}]=1,\\
\operatorname{Tr}[\psi_{RA}^{2}]=1,\ \psi_{RA}\geq0
\end{array}
\right\}  .
\end{multline}
Consider that we can restrict the infimum above to being over all pure states
$\psi_{RA}$ such that the reduced state $\psi_{R}$ is positive definite, i.e.,
$\psi_{R}>0$, due to the fact that the set of all such states is dense in the
set of all pure bipartite states.\ Note that we can write all such states as
$\psi_{RA}=X_{R}\Gamma_{RA}X_{R}^{\dag}$ for some operator $X_{R}$ such that
$\left\vert X_{R}\right\vert >0$ and $\operatorname{Tr}[X_{R}^{\dag}X_{R}]=1$.
Then it follows that%
\begin{align}
\operatorname{Tr}[\Lambda_{RB}\mathcal{M}_{A\rightarrow B}(\psi_{RA})]  &
=\operatorname{Tr}[\Omega_{RB}\Gamma_{RB}^{\mathcal{M}}],\\
\operatorname{Tr}[\Lambda_{RB}\mathcal{N}_{A\rightarrow B}(\psi_{RA})]  &
=\operatorname{Tr}[\Omega_{RB}\Gamma_{RB}^{\mathcal{N}}],\\
0\leq\Lambda_{RB}\leq I_{RB}\quad &  \Longleftrightarrow\quad0\leq\Omega
_{RB}\leq\rho_{R}\otimes I_{B}.
\end{align}
where we have defined $\Omega_{RB}:=X_{R}^{\dag}\Lambda_{RB}X_{R}$ and
$\rho_{R}:=X_{R}^{\dag}X_{R}$. Then we can rewrite as%
\begin{multline}
D_{\min}^{\varepsilon}(\mathcal{N}\Vert\mathcal{M})=\\
-\log_{2}\inf_{\rho_{R}>0,\Omega_{RB}\geq0}\left\{
\begin{array}
[c]{c}%
\operatorname{Tr}[\Omega_{RB}\Gamma_{RB}^{\mathcal{M}}]:\\
\operatorname{Tr}[\Omega_{RB}\Gamma_{RB}^{\mathcal{N}}]\geq1-\varepsilon,\\
\Omega_{RB}\leq\rho_{R}\otimes I_{B},\ \operatorname{Tr}[\rho_{R}]=1
\end{array}
\right\}  .
\end{multline}
Again using the fact that the set of positive-definite density operators is
dense in the set of all density operators, we conclude \eqref{eq:primal-SDP-ch-sm-min-rel-ent}.

The dual SDP\ is given by \eqref{eq:dual-SDP-ch-sm-min-rel-ent}, and its
optimal value is equal to the optimal value of the primal SDP\ in
\eqref{eq:primal-SDP-ch-sm-min-rel-ent}\ by strong duality.
\end{proof}

\bigskip Semi-definite programs for the channel min-relative entropy $D_{\min
}(\mathcal{N}\Vert\mathcal{M})$ are recovered by setting $\varepsilon=0$ in
\eqref{eq:primal-SDP-ch-sm-min-rel-ent} and \eqref{eq:dual-SDP-ch-sm-min-rel-ent}.

\begin{proposition}
\label{prop:smooth-max-ch-rel-ent-SDP}Let $\mathcal{N}_{A\rightarrow B}$ and
$\mathcal{M}_{A\rightarrow B}$ be quantum channels and $\varepsilon\in\left[
0,1\right]  $. The smooth channel max-relative entropy is given by the
following primal semi-definite program:%
\begin{multline}
D_{\max}^{\varepsilon}(\mathcal{N}\Vert\mathcal{M}%
)=\label{eq:primal-SDP-smooth-max-rel-ent-ch}\\
\log_{2}\inf_{\substack{\lambda,Z_{RB},\\Y_{RB}\geq0}}\left\{
\begin{array}
[c]{c}%
\lambda:\\
Y_{RB}\leq\lambda \Gamma_{RB}^{\mathcal{M}},\ \operatorname{Tr}_{B}%
[Y_{RB}]=I_{R}\\
\varepsilon I_{R}\geq\operatorname{Tr}_{B}[Z_{RB}],\\
Z_{RB}\geq\Gamma_{RB}^{\mathcal{N}}-Y_{RB}%
\end{array}
\right\}  .
\end{multline}
The dual semi-definite program is given by%
\begin{multline}
D_{\max}^{\varepsilon}(\mathcal{N}\Vert\mathcal{M}%
)=\label{eq:dual-SDP-smooth-max-rel-ent-ch}\\
\log_2 \sup_{\substack{L_{RB},P_{R},\\Y_{RB}\geq0,\\Z_{R}\in\operatorname{Herm}%
}}\left\{
\begin{array}
[c]{c}%
\operatorname{Tr}[Z_{R}]-\varepsilon\operatorname{Tr}[P_{R}
]+\operatorname{Tr}[Y_{RB}\Gamma_{RB}^{\mathcal{N}}]:\\
\operatorname{Tr}[\Gamma_{RB}^{\mathcal{M}}L_{RB}]\leq1,\\
Y_{RB}\leq P_{R}\otimes I_{B},\\
Z_{R}\otimes I_{B}+Y_{RB}\leq L_{RB}%
\end{array}
\right\}  .
\end{multline}

\end{proposition}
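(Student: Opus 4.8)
The plan is to unfold the definition \eqref{eq:ch-smooth-max-rel-ent} of $D_{\max}^{\varepsilon}(\mathcal{N}\Vert\mathcal{M})=\inf_{\widetilde{\mathcal{N}}\approx_{\varepsilon}\mathcal{N}}D_{\max}(\widetilde{\mathcal{N}}\Vert\mathcal{M})$ into a single optimization over Choi operators and then merge it with the known semi-definite program for the diamond distance. First I would use the simplification \eqref{eq:dmax-simplify}, the Choi isomorphism of Appendix~\ref{app:choi-iso-ch}, and scale invariance of $D_{\max}$ for states to rewrite, for any channel $\widetilde{\mathcal{N}}$ with Choi operator $\Gamma^{\widetilde{\mathcal{N}}}_{RB}$, the quantity $D_{\max}(\widetilde{\mathcal{N}}\Vert\mathcal{M})=D_{\max}(\Gamma^{\widetilde{\mathcal{N}}}_{RB}\Vert\Gamma^{\mathcal{M}}_{RB})$, so that $2^{D_{\max}(\widetilde{\mathcal{N}}\Vert\mathcal{M})}=\inf\{\lambda:\Gamma^{\widetilde{\mathcal{N}}}_{RB}\le\lambda\,\Gamma^{\mathcal{M}}_{RB}\}$. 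Substituting $Y_{RB}:=\Gamma^{\widetilde{\mathcal{N}}}_{RB}$ and recalling that $\widetilde{\mathcal{N}}$ ranges over all channels from $A$ to $B$ exactly when $Y_{RB}$ ranges over all operators with $Y_{RB}\ge0$ and $\operatorname{Tr}_{B}[Y_{RB}]=I_{R}$, this accounts for the two constraints $Y_{RB}\le\lambda\,\Gamma^{\mathcal{M}}_{RB}$ and $\operatorname{Tr}_{B}[Y_{RB}]=I_{R}$ appearing in \eqref{eq:primal-SDP-smooth-max-rel-ent-ch}.

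Next I would encode the smoothing constraint $\widetilde{\mathcal{N}}\approx_{\varepsilon}\mathcal{N}$ by means of the dual semi-definite program for the normalized diamond distance recorded in Appendix~\ref{sec:SDP-diamond-dist}: since the diamond distance is symmetric, $\tfrac{1}{2}\Vert\widetilde{\mathcal{N}}-\mathcal{N}\Vert_{\diamond}\le\varepsilon$ holds if and only if there exists $Z_{RB}\ge0$ with $Z_{RB}\ge\Gamma^{\mathcal{N}}_{RB}-\Gamma^{\widetilde{\mathcal{N}}}_{RB}=\Gamma^{\mathcal{N}}_{RB}-Y_{RB}$ and $\operatorname{Tr}_{B}[Z_{RB}]\le\varepsilon I_{R}$. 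Using this characterization of the constraint inside the definition of $D_{\max}^{\varepsilon}$, together with the elementary fact that a nested infimum of infima is a joint infimum over all of the auxiliary variables, I conclude that $2^{D_{\max}^{\varepsilon}(\mathcal{N}\Vert\mathcal{M})}$ equals the infimum of $\lambda$ over $\lambda\ge0$, $Y_{RB}\ge0$, and $Z_{RB}\ge0$ subject to $Y_{RB}\le\lambda\,\Gamma^{\mathcal{M}}_{RB}$, $\operatorname{Tr}_{B}[Y_{RB}]=I_{R}$, $Z_{RB}\ge\Gamma^{\mathcal{N}}_{RB}-Y_{RB}$, and $\operatorname{Tr}_{B}[Z_{RB}]\le\varepsilon I_{R}$. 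Taking base-two logarithms yields the primal program \eqref{eq:primal-SDP-smooth-max-rel-ent-ch}; in effect it is the semi-definite program for $D_{\max}$ of channels combined with the one for the diamond distance.

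Finally, the dual program \eqref{eq:dual-SDP-smooth-max-rel-ent-ch} would be obtained by the standard Lagrangian recipe for semi-definite programs: introduce a positive semi-definite multiplier $L_{RB}$ for the constraint $Y_{RB}\le\lambda\,\Gamma^{\mathcal{M}}_{RB}$, a Hermitian multiplier $Z_{R}$ for the equality $\operatorname{Tr}_{B}[Y_{RB}]=I_{R}$, a positive semi-definite $P_{R}$ for $\operatorname{Tr}_{B}[Z_{RB}]\le\varepsilon I_{R}$, and a positive semi-definite multiplier (reusing the name $Y_{RB}$, as in the statement) for $Z_{RB}\ge\Gamma^{\mathcal{N}}_{RB}-Y_{RB}$; forming the Lagrangian and collecting the coefficients of the primal variables $\lambda$, of the primal $Y_{RB}$, and of $Z_{RB}$ then produces, respectively, the dual constraints $\operatorname{Tr}[\Gamma^{\mathcal{M}}_{RB}L_{RB}]\le1$, $Z_{R}\otimes I_{B}+Y_{RB}\le L_{RB}$, and $Y_{RB}\le P_{R}\otimes I_{B}$, with dual objective $\operatorname{Tr}[Z_{R}]-\varepsilon\operatorname{Tr}[P_{R}]+\operatorname{Tr}[Y_{RB}\Gamma^{\mathcal{N}}_{RB}]$. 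Equality of the primal and dual optimal values would then follow from weak duality together with a Slater-type strict feasibility condition, which one can verify by exhibiting a strictly feasible primal point (for instance $Y_{RB}=I_{R}\otimes\sigma_{B}$ with $\sigma_{B}$ full rank, and $\lambda$, $Z_{RB}$ taken sufficiently large) when $\varepsilon>0$ and $\Gamma^{\mathcal{M}}_{RB}$ has full rank, with the remaining cases handled by continuity. I expect the main obstacle to be bookkeeping rather than conceptual: keeping the directions of the various operator inequalities and the roles of the several $Y$'s, $Z$'s, and $\lambda$ mutually consistent while splicing the $D_{\max}$- and diamond-distance programs together, and checking the strict feasibility required for strong duality.
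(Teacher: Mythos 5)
Your proposal is correct and follows essentially the same route as the paper's proof: unfold the definition, express $D_{\max}(\widetilde{\mathcal{N}}\Vert\mathcal{M})$ via Choi operators, splice in the dual semi-definite program for the diamond distance with $Y_{RB}$ playing the role of the Choi operator of $\widetilde{\mathcal{N}}$, and pass to the dual by strong duality. Your Lagrangian bookkeeping reproduces the stated dual constraints and objective, and you supply somewhat more detail (explicit multipliers and a Slater check) than the paper, which simply asserts strong duality.
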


\begin{proof}
The primal form in \eqref{eq:primal-SDP-smooth-max-rel-ent-ch}\ follows from
the SDP formulation of the max-relative entropy and the SDP\ formulation of the
diamond distance of two channels in \eqref{eq:SDP-diamond-dist-dual}. By
definition, we have that%
\begin{equation}
D_{\max}^{\varepsilon}(\mathcal{N}\Vert\mathcal{M})=\inf_{\widetilde
{\mathcal{N}}:\frac{1}{2}\left\Vert \widetilde{\mathcal{N}}-\mathcal{N}%
\right\Vert _{\diamond}\leq\varepsilon}D_{\max}(\widetilde{\mathcal{N}}%
\Vert\mathcal{M}).
\end{equation}
Considering that%
\begin{align}
D_{\max}(\widetilde{\mathcal{N}}\Vert\mathcal{M})  &  =\log_{2}\inf\left\{
\lambda:\Gamma_{RB}^{\widetilde{\mathcal{N}}}\leq\lambda\Gamma_{RB}%
^{\mathcal{M}}\right\}  ,\\
\frac{1}{2}\left\Vert \widetilde{\mathcal{N}}-\mathcal{N}\right\Vert
_{\diamond}  &  =\inf_{\mu,Z_{RB}\geq0}\left\{
\begin{array}
[c]{c}%
\mu:Z_{RB}\geq\Gamma_{RB}^{\mathcal{N}}-\Gamma_{RB}^{\widetilde{\mathcal{N}}},\\
\mu I_{R}\geq\operatorname{Tr}_{B}[Z_{RB}]
\end{array}
\right\}  ,
\end{align}
the optimization in \eqref{eq:primal-smooth-max-rel-ent-ch-almost-SDP} below
follows by combining these, with $Y_{RB}$ understood as the Choi operator for
the channel $\widetilde{\mathcal{N}}$ being optimized:%
\begin{multline}
D_{\max}^{\varepsilon}(\mathcal{N}\Vert\mathcal{M}%
)=\label{eq:primal-smooth-max-rel-ent-ch-almost-SDP}\\
\log_{2}\inf_{\substack{\lambda,Z_{RB},\\Y_{RB}\geq0}}\left\{
\begin{array}
[c]{c}%
\lambda:\\
Y_{RB}\leq\lambda \Gamma_{RB}^{\mathcal{M}},\ \operatorname{Tr}_{B}%
[Y_{RB}]=I_{R}\\
\varepsilon I_{R}\geq\operatorname{Tr}_{B}[Z_{RB}],\\
Z_{RB}\geq\Gamma_{RB}^{\mathcal{N}}-Y_{RB}%
\end{array}
\right\}  .
\end{multline}

The dual program is given by \eqref{eq:dual-SDP-smooth-max-rel-ent-ch}, and
its optimal value is equal to the optimal value of
\eqref{eq:primal-SDP-smooth-max-rel-ent-ch}\ by strong duality.
\end{proof}

\bigskip Semi-definite programs for the channel max-relative entropy $D_{\max
}(\mathcal{N}\Vert\mathcal{M})$ are recovered by setting $\varepsilon=0$ in
\eqref{eq:primal-SDP-smooth-max-rel-ent-ch} and \eqref{eq:dual-SDP-smooth-max-rel-ent-ch}.

\subsection{Smooth channel min-relative entropy as approximate one-shot
distillable distinguishability}

\label{app:ch-smooth-min-rel-ent-approx-distillable-dist}In order to establish
the equality in \eqref{eq:approx-distill-dist-smooth-min}, we first prove the
following inequality:%
\begin{equation}
D_{d}^{\varepsilon}(\mathcal{N},\mathcal{M})\geq D_{\min}^{\varepsilon
}(\mathcal{N}\Vert\mathcal{M}). \label{eq:approx-distill-dist-smooth-min-ach}%
\end{equation}
Let $\psi_{RA}$ be an arbitrary pure state and $\Lambda_{RB}$ a corresponding
measurement operator satisfying $0\leq\Lambda_{RB}\leq I_{RB}$ and%
\begin{equation}
\operatorname{Tr}[\Lambda_{RB}\mathcal{N}_{A\rightarrow B}(\psi_{RA}%
)]\geq1-\varepsilon.
\end{equation}
Let $\Theta_{\left(  A\rightarrow B\right)  \rightarrow\left(  C\rightarrow
D\right)  }$ be the superchannel that traces out the input $C$, prepares the
pure state $\psi_{RA}$, transmits system $A$ through the unknown channel
$\mathcal{N}$ or $\mathcal{M}$, and then applies the following channel
$\mathcal{P}_{RB\rightarrow X}$ to systems $R$ and $B$, where $B$ is the
output of the unknown channel:%
\begin{multline}
\mathcal{P}_{RB\rightarrow X}(\omega_{RB}):=\operatorname{Tr}[\Lambda
_{RB}\omega_{RB}]|0\rangle\langle0|_{X}\\
+\operatorname{Tr}[\left(  I_{RB}-\Lambda_{RB}\right)  \omega_{RB}%
]|1\rangle\langle1|_{X}.
\end{multline}
With this construction, it follows that both $\Theta_{\left(  A\rightarrow
B\right)  \rightarrow\left(  C\rightarrow D\right)  }(\mathcal{N}%
_{A\rightarrow B})$ and $\Theta_{\left(  A\rightarrow B\right)  \rightarrow
\left(  C\rightarrow D\right)  }(\mathcal{M}_{A\rightarrow B})$ are replacer
channels, and we find that%
\begin{equation}
\Theta_{\left(  A\rightarrow B\right)  \rightarrow\left(  C\rightarrow
D\right)  }(\mathcal{N}_{A\rightarrow B})\approx_{\varepsilon}\mathcal{R}%
_{C\rightarrow D}^{|0\rangle\langle0|}. \label{eq:approx-eq-distill-ch}%
\end{equation}
Furthermore, the following equality holds%
\begin{equation}
\Theta_{\left(  A\rightarrow B\right)  \rightarrow\left(  C\rightarrow
D\right)  }(\mathcal{M}_{A\rightarrow B})=\mathcal{R}_{C\rightarrow D}%
^{\pi_{M}},
\end{equation}
for%
\begin{equation}
M=\frac{1}{\operatorname{Tr}[\Lambda_{RB}\mathcal{M}_{A\rightarrow B}%
(\psi_{RA})]}.
\end{equation}
The equality in \eqref{eq:approx-eq-distill-ch} follows from the reasoning in
\cite[Appendix~F-1]{WW19states}. It then follows that%
\begin{equation}
D_{d}^{\varepsilon}(\mathcal{N},\mathcal{M})\geq-\log_{2}\operatorname{Tr}%
[\Lambda_{RB}\mathcal{M}_{A\rightarrow B}(\psi_{RA})].
\end{equation}
Optimizing over all such $\psi_{RA}$ and $\Lambda_{RB}$ satisfying the
constraints above, we conclude that%
\begin{equation}
D_{d}^{\varepsilon}(\mathcal{N},\mathcal{M})\geq D_{\min}^{\varepsilon
}(\mathcal{N}\Vert\mathcal{M}).
\end{equation}

We now prove the opposite inequality:%
\begin{equation}
D_{\min}^{\varepsilon}(\mathcal{N}\Vert\mathcal{M})\geq D_{d}^{\varepsilon
}(\mathcal{N},\mathcal{M})\label{eq:approx-distill-dist-smooth-min-optimality}%
\end{equation}
Now let $\Theta_{\left(  A\rightarrow B\right)  \rightarrow\left(
C\rightarrow D\right)  }$ be an arbitrary superchannel satisfying%
\begin{align}
\Theta_{\left(  A\rightarrow B\right)  \rightarrow\left(  C\rightarrow
D\right)  }(\mathcal{N}_{A\rightarrow B}) &  \approx_{\varepsilon}%
\mathcal{R}_{C\rightarrow D}^{|0\rangle\langle0|}%
,\label{eq:conv-smooth-min-ch-1}\\
\Theta_{\left(  A\rightarrow B\right)  \rightarrow\left(  C\rightarrow
D\right)  }(\mathcal{M}_{A\rightarrow B}) &  =\mathcal{R}_{C\rightarrow
D}^{\pi_{M}}.\label{eq:conv-smooth-min-ch-2}%
\end{align}
Then we find that%
\begin{align}
D_{\min}^{\varepsilon}(\mathcal{N}\Vert\mathcal{M}) &  \geq D_{\min
}^{\varepsilon}(\Theta(\mathcal{N})\Vert\Theta(\mathcal{M}))\\
&  =D_{\min}^{\varepsilon}(\Theta(\mathcal{N})\Vert\mathcal{R}_{C\rightarrow
D}^{\pi_{M}})\\
&  \geq\log_{2}M.\label{eq:approx-distill-one-shot-final-ineq-conv}%
\end{align}
The first inequality follows from \eqref{eq:DP-superch} and the second
equality from \eqref{eq:conv-smooth-min-ch-2}. The last inequality follows
from reasoning similar to that in \cite[Appendix~F-1]{WW19states}. Let
$\Delta(\cdot)=|0\rangle\langle0|(\cdot)|0\rangle\langle0|+|1\rangle
\langle1|(\cdot)|1\rangle\langle1|$ denote the completely dephasing channel.
Since $\Theta(\mathcal{N})\approx_{\varepsilon}\mathcal{R}_{C\rightarrow
D}^{|0\rangle\langle0|}$, we find from the data processing inequality for
normalized trace distance and an arbitrary input state $\psi_{RC}$ that%
\begin{align}
\varepsilon & \geq\frac{1}{2}\left\Vert \Theta(\mathcal{N})(\psi
_{RC})-\mathcal{R}_{C\rightarrow D}^{|0\rangle\langle0|}(\psi_{RC})\right\Vert
_{1}\\
& \geq\frac{1}{2}\left\Vert \Theta(\mathcal{N})(\psi_{C})-\mathcal{R}%
_{C\rightarrow D}^{|0\rangle\langle0|}(\psi_{C})\right\Vert _{1}\\
& =\frac{1}{2}\left\Vert \Theta(\mathcal{N})(\psi_{C})-|0\rangle
\langle0|\right\Vert _{1}\\
& \geq\frac{1}{2}\left\Vert \Delta(\Theta(\mathcal{N})(\psi_{C}))-\Delta
(|0\rangle\langle0|)\right\Vert _{1}\\
& =1-\langle0|\Theta(\mathcal{N})(\psi_{C})|0\rangle,
\end{align}
which implies that $\langle0|\Theta(\mathcal{N})(\psi_{C})|0\rangle
\geq1-\varepsilon$ for all input states $\psi_{RC}$. Thus, we can take $\Lambda_{RD}%
=I_{R}\otimes|0\rangle\langle0|_{D}$ in the definition of $D_{\min
}^{\varepsilon}(\Theta(\mathcal{N})\Vert\mathcal{R}_{C\rightarrow D}^{\pi_{M}%
})$, and we have that $\operatorname{Tr}[\Lambda_{RD}\Theta(\mathcal{N}%
)(\psi_{RC})]\geq1-\varepsilon$ while $\operatorname{Tr}[\Lambda
_{RD}\mathcal{R}_{C\rightarrow D}^{\pi_M}(\psi_{RC})]=1/M$ for
all input states $\psi_{RC}$. Since $D_{\min}^{\varepsilon}(\Theta
(\mathcal{N})\Vert\mathcal{R}_{C\rightarrow D}^{\pi_{M}})$ involves an
optimization over all measurement operators $\Lambda_{RD}$\ and states
$\psi_{RC}$ satisfying $\operatorname{Tr}[\Lambda_{RD}\Theta(\mathcal{N}%
)(\psi_{RC})]\geq1-\varepsilon$, we conclude the inequality in
\eqref{eq:approx-distill-one-shot-final-ineq-conv}. Since the inequality holds
for an arbitrary superchannel $\Theta_{\left(  A\rightarrow B\right)
\rightarrow\left(  C\rightarrow D\right)  }$ satisfying
\eqref{eq:conv-smooth-min-ch-1}--\eqref{eq:conv-smooth-min-ch-2}, we conclude
\eqref{eq:approx-distill-dist-smooth-min-optimality}. 

Putting together \eqref{eq:approx-distill-dist-smooth-min-ach}\ and
\eqref{eq:approx-distill-dist-smooth-min-optimality}, we conclude the equality
in \eqref{eq:approx-distill-dist-smooth-min}, i.e., $D_{\min}^{\varepsilon
}(\mathcal{N}\Vert\mathcal{M})=D_{d}^{\varepsilon}(\mathcal{N},\mathcal{M})$.

\subsection{Smooth channel max-relative entropy as approximate one-shot
distinguishability cost}

\label{app:ch-smooth-max-rel-ent-approx-dist-cost}In order to establish the
equality in \eqref{eq:approx-dist-cost-smooth-max}, we first prove the
following inequality:%
\begin{equation}
D_{c}^{\varepsilon}(\mathcal{N},\mathcal{M})\leq D_{\max}^{\varepsilon
}(\mathcal{N}\Vert\mathcal{M}). \label{eq:approx-dist-cost-smooth-max-ach}%
\end{equation}
Let $\widetilde{\mathcal{N}}_{A\rightarrow B}$ be a quantum channel satisfying
$\widetilde{\mathcal{N}}_{A\rightarrow B}\approx_{\varepsilon}\mathcal{N}%
_{A\rightarrow B}$. Then by constructing a superchannel as we did in
Appendix~\ref{app:ch-max-rel-ent-exact-dist-cost}, but for $\widetilde{\mathcal{N}}_{A\rightarrow B}$ instead of
$\mathcal{N}_{A\rightarrow B}$, we conclude the following inequality:%
\begin{equation}
D_{c}^{\varepsilon}(\mathcal{N},\mathcal{M})\leq D_{\max}(\widetilde
{\mathcal{N}}\Vert\mathcal{M}).
\end{equation}
Then taking the infimum over all such channels $\widetilde{\mathcal{N}%
}_{A\rightarrow B}$ satisfying $\widetilde{\mathcal{N}}_{A\rightarrow
B}\approx_{\varepsilon}\mathcal{N}_{A\rightarrow B}$, we conclude the
inequality in \eqref{eq:approx-dist-cost-smooth-max-ach}.

For the opposite inequality%
\begin{equation}
D_{c}^{\varepsilon}(\mathcal{N},\mathcal{M})\geq D_{\max}^{\varepsilon
}(\mathcal{N}\Vert\mathcal{M}), \label{eq:approx-dist-cost-smooth-max-conv}%
\end{equation}
let $\Theta$ be an arbitrary superchannel satisfying%
\begin{align}
\Theta(\mathcal{R}_{C\rightarrow D}^{|0\rangle\langle0|})  &  \approx
_{\varepsilon}\mathcal{N}_{A\rightarrow B}%
,\label{eq:super-ch-conditions-approx-dilut-1}\\
\Theta(\mathcal{R}_{C\rightarrow D}^{\pi_{M}})  &  =\mathcal{M}_{A\rightarrow
B}. \label{eq:super-ch-conditions-approx-dilut-2}%
\end{align}
Then consider that%
\begin{align}
\log_{2}M  &  =D_{\max}(|0\rangle\langle0|\Vert\pi_{M})\\
&  =D_{\max}(\mathcal{R}_{C\rightarrow D}^{|0\rangle\langle0|}\Vert
\mathcal{R}_{C\rightarrow D}^{\pi_{M}})\\
&  \geq D_{\max}(\Theta(\mathcal{R}_{C\rightarrow D}^{|0\rangle\langle
0|})\Vert\Theta(\mathcal{R}_{C\rightarrow D}^{\pi_{M}}))\\
&  =D_{\max}(\Theta(\mathcal{R}_{C\rightarrow D}^{|0\rangle\langle0|}%
)\Vert\mathcal{M}_{A\rightarrow B})\\
&  \geq D_{\max}^{\varepsilon}(\mathcal{N}_{A\rightarrow B}\Vert
\mathcal{M}_{A\rightarrow B}).
\end{align}
The second equality follows from \eqref{eq:env-seizable-gen-div-equal}, given
that pairs of replacer channels are environment seizable. The first inequality
follows from \eqref{eq:DP-superch}. The last inequality follows from the
definition in \eqref{eq:ch-smooth-max-rel-ent}. Since the chain of
inequalities holds for all superchannels $\Theta$ satisfying
\eqref{eq:super-ch-conditions-approx-dilut-1}--\eqref{eq:super-ch-conditions-approx-dilut-2},
we conclude \eqref{eq:approx-dist-cost-smooth-max-conv}. 

Putting together
\eqref{eq:approx-dist-cost-smooth-max-ach} and
\eqref{eq:approx-dist-cost-smooth-max-conv}, we conclude the equality in
\eqref{eq:approx-dist-cost-smooth-max}, i.e., $D_{c}^{\varepsilon}%
(\mathcal{N},\mathcal{M})=D_{\max}^{\varepsilon}(\mathcal{N}\Vert\mathcal{M})$.

\subsection{Limits of smooth channel min- and max-relative entropy}

\label{app:limits-smooth-min-max-to-exact}Here we provide an alternate proof
of the limits stated in
\eqref{eq:smooth-to-exact-limits}--\eqref{eq:smooth-to-exact-limits-2},
starting with \eqref{eq:smooth-to-exact-limits}. These proofs use some of the
results from \cite[Appendix~A-3]{WW19states} as a starting point. Let
$\psi_{RA}$ be an arbitrary bipartite state. By the inequality $D_{\min
}^{\varepsilon}(\rho\Vert\sigma)\geq D_{\min}(\rho\Vert\sigma)$, which holds
for all states $\rho$ and $\sigma$ and $\varepsilon\in(0,1)$, we conclude that%
\begin{multline}
D_{\min}^{\varepsilon}(\mathcal{N}_{A\rightarrow B}(\psi_{RA})\Vert
\mathcal{M}_{A\rightarrow B}(\psi_{RA}))\\
\geq D_{\min}(\mathcal{N}_{A\rightarrow B}(\psi_{RA})\Vert\mathcal{M}%
_{A\rightarrow B}(\psi_{RA}))
\end{multline}
for all $\varepsilon\in(0,1)$. Now taking a supremum over all $\psi_{RA}$, we
find that%
\begin{equation}
D_{\min}^{\varepsilon}(\mathcal{N}\Vert\mathcal{M})\geq D_{\min}%
(\mathcal{N}\Vert\mathcal{M}),
\end{equation}
for all $\varepsilon\in(0,1)$. Taking the limit, we conclude that%
\begin{equation}
\liminf_{\varepsilon\rightarrow0}D_{\min}^{\varepsilon}(\mathcal{N}%
\Vert\mathcal{M})\geq D_{\min}(\mathcal{N}\Vert\mathcal{M}).
\label{eq:lim-smooth-dmin-ch-exact-liminf}%
\end{equation}
For the other limit, recall the following inequality from \cite[Appendix~A-3]%
{WW19states}, holding for all states $\rho$ and $\sigma$, for $\varepsilon
\in(0,1)$, and $\alpha\in(0,1)$:%
\begin{multline}
D_{\alpha}(\rho\Vert\sigma)\geq\\
\frac{1}{\alpha-1}\log_{2}\!\left[
\begin{array}
[c]{c}%
\left(  1-\varepsilon\right)  ^{\alpha}\left(  2^{-D_{\min}^{\varepsilon}%
(\rho\Vert\sigma)}\right)  ^{1-\alpha}\\
+\ \varepsilon^{\alpha}\left(  1-2^{-D_{\min}^{\varepsilon}(\rho\Vert\sigma
)}\right)  ^{1-\alpha}%
\end{array}
\right]  .
\end{multline}
Taking an optimization over all input states $\psi_{RA}$ to the channels
$\mathcal{N}_{A\rightarrow B}$ and $\mathcal{M}_{A\rightarrow B}$, we conclude
that%
\begin{multline}
D_{\alpha}(\mathcal{N}\Vert\mathcal{M})\geq\\
\frac{1}{\alpha-1}\log_{2}\!\left[
\begin{array}
[c]{c}%
\left(  1-\varepsilon\right)  ^{\alpha}\left(  2^{-D_{\min}^{\varepsilon
}(\mathcal{N}\Vert\mathcal{M})}\right)  ^{1-\alpha}\\
+\ \varepsilon^{\alpha}\left(  1-2^{-D_{\min}^{\varepsilon}(\mathcal{N}%
\Vert\mathcal{M})}\right)  ^{1-\alpha}%
\end{array}
\right]  .
\end{multline}
Taking the limit as $\varepsilon\rightarrow0$, we conclude that%
\begin{equation}
D_{\alpha}(\mathcal{N}\Vert\mathcal{M})\geq\limsup_{\varepsilon\rightarrow
0}D_{\min}^{\varepsilon}(\mathcal{N}\Vert\mathcal{M})
\end{equation}
for all $\alpha\in(0,1)$. Now taking the limit of the left-hand side as
$\alpha\rightarrow0$, and applying arguments similar to those needed for
\cite[Lemma~10]{Cooney2016}, we conclude that%
\begin{equation}
D_{\min}(\mathcal{N}\Vert\mathcal{M})\geq\limsup_{\varepsilon\rightarrow
0}D_{\min}^{\varepsilon}(\mathcal{N}\Vert\mathcal{M}).
\label{eq:lim-smooth-dmin-ch-exact-limsup}%
\end{equation}
Combining \eqref{eq:lim-smooth-dmin-ch-exact-liminf}\ and
\eqref{eq:lim-smooth-dmin-ch-exact-limsup}, we conclude the limit stated in \eqref{eq:smooth-to-exact-limits}.

Another proof for the inequality in \eqref{eq:smooth-to-exact-limits-2} goes
as follows. By taking $\widetilde{\mathcal{N}}=\mathcal{N}$, we conclude that
$\widetilde{\mathcal{N}}\approx_{\varepsilon}\mathcal{N}$, so that applying
definitions gives%
\begin{equation}
D_{\max}(\mathcal{N}\Vert\mathcal{M})\geq D_{\max}^{\varepsilon}%
(\mathcal{N}\Vert\mathcal{M})
\end{equation}
for all $\varepsilon\in(0,1)$. Then applying a limit gives%
\begin{equation}
D_{\max}(\mathcal{N}\Vert\mathcal{M})\geq\limsup_{\varepsilon\rightarrow
0}D_{\max}^{\varepsilon}(\mathcal{N}\Vert\mathcal{M}).
\label{eq:lim-smooth-dmax-dmax-limsup}%
\end{equation}
Now suppose that $\widetilde{\mathcal{N}}$ is a channel satisfying
$\widetilde{\mathcal{N}}\approx_{\varepsilon}\mathcal{N}$ for $\varepsilon
\in(0,1)$. Then this implies that%
\begin{equation}
\frac{1}{2}\left\Vert \widetilde{\mathcal{N}}_{A\rightarrow B}(\Phi
_{RA})-\mathcal{N}_{A\rightarrow B}(\Phi_{RA})\right\Vert _{1}\leq\varepsilon,
\end{equation}
and applying an inequality from \cite[Appendix~A-3]{WW19states}, we find that%
\begin{multline}
D_{\max}(\widetilde{\mathcal{N}}\Vert\mathcal{M})\geq\\
\log_{2}\!\left(
\begin{array}
[c]{c}%
\left\Vert \left[  \mathcal{M}_{A\rightarrow B}(\Phi_{RA})\right]  ^{-\frac
{1}{2}}\left[  \mathcal{N}_{A\rightarrow B}(\Phi_{RA})\right]  ^{\frac{1}{2}%
}\right\Vert _{\infty}\\
-2\varepsilon\left\Vert \left(  \mathcal{M}_{A\rightarrow B}(\Phi
_{RA})\right)  ^{-1}\right\Vert _{\infty}%
\end{array}
\right)
\end{multline}
Since this bound holds uniformly for all channels $\widetilde{\mathcal{N}}$
satisfying $\widetilde{\mathcal{N}}\approx_{\varepsilon}\mathcal{N}$, we
conclude that%
\begin{multline}
D_{\max}^{\varepsilon}(\mathcal{N}\Vert\mathcal{M})\geq\\
\log_{2}\!\left(
\begin{array}
[c]{c}%
\left\Vert \left[  \mathcal{M}_{A\rightarrow B}(\Phi_{RA})\right]  ^{-\frac
{1}{2}}\left[  \mathcal{N}_{A\rightarrow B}(\Phi_{RA})\right]  ^{\frac{1}{2}%
}\right\Vert _{\infty}\\
-2\varepsilon\left\Vert \left(  \mathcal{M}_{A\rightarrow B}(\Phi
_{RA})\right)  ^{-1}\right\Vert _{\infty}%
\end{array}
\right)
\end{multline}
Now taking the limit $\varepsilon\rightarrow0$, we find that%
\begin{equation}
\liminf_{\varepsilon\rightarrow0}D_{\max}^{\varepsilon}(\mathcal{N}%
\Vert\mathcal{M})\geq D_{\max}(\mathcal{N}\Vert\mathcal{M}).
\label{eq:lim-smooth-dmax-dmax-liminf}%
\end{equation}
Combining \eqref{eq:lim-smooth-dmax-dmax-limsup}\ and
\eqref{eq:lim-smooth-dmax-dmax-liminf}, we conclude the inequality in \eqref{eq:smooth-to-exact-limits-2}.

\section{Upper bound on smooth max-relative entropy of classical--quantum
channels}

\label{app:smooth-max-rel-ent-bnd-cq-ch}

The main purpose of this appendix is to prove
Proposition~\ref{prop:cq-smooth-dmax-up-bnd}, which establishes an upper bound
on the smooth max-relative entropy of classical--quantum channels. We begin by
noting a simple lemma:

\begin{lemma}
Let $\{\rho_{B}^{x}\}_{x\in\mathcal{X}}$ and $\{\sigma_{B}^{x}\}_{x\in
\mathcal{X}}$ be the output states of classical--quantum channels
$\mathcal{N}_{X\rightarrow B}$ and $\mathcal{M}_{X\rightarrow B}$,
respectively, as defined in \eqref{eq:cq-ch-def-1}--\eqref{eq:cq-ch-def-2}.
Then we have that%
\begin{align}
\left\Vert \mathcal{N}_{X\rightarrow B}-\mathcal{M}_{X\rightarrow
B}\right\Vert _{\diamond}  &  =\sup_{x}\left\Vert \rho_{B}^{x}-\sigma_{B}%
^{x}\right\Vert _{1},\\
\widetilde{D}_{\alpha}(\mathcal{N}_{X\rightarrow B}\Vert\mathcal{M}%
_{X\rightarrow B})  &  =\sup_{x}\widetilde{D}_{\alpha}(\rho_{B}^{x}\Vert
\sigma_{B1}^{x}),
\end{align}
where the latter equality holds for all $\alpha \in [1/2,1)\cup(1,\infty)$.

\end{lemma}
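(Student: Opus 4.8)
The plan is to reduce both identities to the known facts for states by exploiting that a classical--quantum channel dephases its input. First I would handle the ``$\geq$'' directions. For a fixed symbol $x^{\ast}$, inputting $|x^{\ast}\rangle\langle x^{\ast}|_{X}$ (with trivial reference) into $\mathcal{N}_{X\to B}$ and $\mathcal{M}_{X\to B}$ yields $\rho_{B}^{x^{\ast}}$ and $\sigma_{B}^{x^{\ast}}$, respectively. Since the diamond norm is a supremum over inputs, and since $\widetilde{D}_{\alpha}(\mathcal{N}_{X\to B}\Vert\mathcal{M}_{X\to B})$ is by definition a supremum over inputs, we get $\Vert\mathcal{N}_{X\to B}-\mathcal{M}_{X\to B}\Vert_{\diamond}\geq\Vert\rho_{B}^{x^{\ast}}-\sigma_{B}^{x^{\ast}}\Vert_{1}$ and $\widetilde{D}_{\alpha}(\mathcal{N}_{X\to B}\Vert\mathcal{M}_{X\to B})\geq\widetilde{D}_{\alpha}(\rho_{B}^{x^{\ast}}\Vert\sigma_{B}^{x^{\ast}})$, and optimizing over $x^{\ast}$ gives the ``$\geq$'' part of both claims.

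For the ``$\leq$'' directions, take an arbitrary input $\psi_{RX}$. Because the action of each channel only involves the diagonal blocks $\langle x|\cdot|x\rangle$ of the $X$ system, we have $\mathcal{N}_{X\to B}(\psi_{RX})=\mathcal{N}_{X\to B}(\Delta_{X}(\psi_{RX}))$ with $\Delta_{X}$ the completely dephasing channel, so we may assume $\psi_{RX}=\sum_{x}p(x)\,\widehat{\psi}_{R}^{\,x}\otimes|x\rangle\langle x|_{X}$ for some probability distribution $p$ and density operators $\widehat{\psi}_{R}^{\,x}$. Then $\mathcal{N}_{X\to B}(\psi_{RX})=\sum_{x}p(x)\,\widehat{\psi}_{R}^{\,x}\otimes\rho_{B}^{x}$ and $\mathcal{M}_{X\to B}(\psi_{RX})=\sum_{x}p(x)\,\widehat{\psi}_{R}^{\,x}\otimes\sigma_{B}^{x}$. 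The key observation is that both of these states are obtained by applying the single classical-to-quantum preparation channel $\mathcal{C}_{X'\to R}(\cdot):=\sum_{x}\langle x|\cdot|x\rangle\,\widehat{\psi}_{R}^{\,x}$ to the block-diagonal states $\bar{\rho}_{X'B}:=\sum_{x}p(x)|x\rangle\langle x|_{X'}\otimes\rho_{B}^{x}$ and $\bar{\sigma}_{X'B}:=\sum_{x}p(x)|x\rangle\langle x|_{X'}\otimes\sigma_{B}^{x}$, respectively.

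For the diamond-norm claim I would then use contractivity of the trace norm under $\mathcal{C}_{X'\to R}$ together with additivity of the trace norm over orthogonal direct summands: $\Vert\mathcal{N}_{X\to B}(\psi_{RX})-\mathcal{M}_{X\to B}(\psi_{RX})\Vert_{1}\leq\Vert\bar{\rho}_{X'B}-\bar{\sigma}_{X'B}\Vert_{1}=\sum_{x}p(x)\Vert\rho_{B}^{x}-\sigma_{B}^{x}\Vert_{1}\leq\sup_{x}\Vert\rho_{B}^{x}-\sigma_{B}^{x}\Vert_{1}$, and taking the supremum over $\psi_{RX}$ finishes that case. For the Rényi claim I would invoke the data-processing inequality for $\widetilde{D}_{\alpha}$, which holds precisely for $\alpha\in[1/2,1)\cup(1,\infty)$, to obtain $\widetilde{D}_{\alpha}(\mathcal{N}_{X\to B}(\psi_{RX})\Vert\mathcal{M}_{X\to B}(\psi_{RX}))\leq\widetilde{D}_{\alpha}(\bar{\rho}_{X'B}\Vert\bar{\sigma}_{X'B})$, and then evaluate the right-hand side by the direct-sum formula, which follows from a direct computation since $\bar{\rho}_{X'B}$ and $\bar{\sigma}_{X'B}$ are block-diagonal with the same weights $p(x)$: $\widetilde{D}_{\alpha}(\bar{\rho}_{X'B}\Vert\bar{\sigma}_{X'B})=\tfrac{1}{\alpha-1}\log_{2}\sum_{x}p(x)\,2^{(\alpha-1)\widetilde{D}_{\alpha}(\rho_{B}^{x}\Vert\sigma_{B}^{x})}$. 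A short case analysis on the sign of $\alpha-1$ then shows this is at most $\sup_{x}\widetilde{D}_{\alpha}(\rho_{B}^{x}\Vert\sigma_{B}^{x})$: for $\alpha>1$ bound each exponential above by $2^{(\alpha-1)\sup_{x}\widetilde{D}_{\alpha}(\rho_{B}^{x}\Vert\sigma_{B}^{x})}$; for $\alpha<1$ the same estimate on the exponentials becomes a lower bound on the sum, but the negative prefactor $\tfrac{1}{\alpha-1}$ flips the inequality back. Taking the supremum over $\psi_{RX}$ completes the proof.

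I expect the main obstacle to be the ``$\leq$'' direction of the Rényi identity, precisely because the reference states $\widehat{\psi}_{R}^{\,x}$ need not be mutually orthogonal, so $\mathcal{N}_{X\to B}(\psi_{RX})$ is itself not block-diagonal; passing to the more distinguishable block-diagonal states $\bar{\rho}_{X'B},\bar{\sigma}_{X'B}$ through the preparation channel $\mathcal{C}_{X'\to R}$ and data processing is what makes the argument work, and one must also invoke the restriction $\alpha\geq1/2$ exactly at this step. The degenerate situation in which $\widetilde{D}_{\alpha}(\rho_{B}^{x}\Vert\sigma_{B}^{x})=+\infty$ for some $x$ with $\alpha>1$ is handled separately and trivially, since then both sides of the claimed identity equal $+\infty$.
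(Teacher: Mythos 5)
Your proof is correct. For the trace-norm identity it is essentially the paper's argument: the paper computes the output states $\sum_x p(x)\psi_R^x\otimes\rho_B^x$ and $\sum_x p(x)\psi_R^x\otimes\sigma_B^x$ and bounds their trace distance by $\sum_x p(x)\Vert\rho_B^x-\sigma_B^x\Vert_1\le\sup_x\Vert\rho_B^x-\sigma_B^x\Vert_1$ via the triangle inequality, then achieves the bound with the optimal classical input; your detour through the block-diagonal states $\bar\rho_{X'B},\bar\sigma_{X'B}$ and the preparation channel $\mathcal{C}_{X'\to R}$ reaches the same convex-combination bound with one extra (harmless) contractivity step. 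The real difference is in the R\'enyi identity: the paper does not prove it at all but cites \cite[Lemma~26]{BHKW18}, whereas you supply a self-contained argument, and it is sound --- the pullback through $\mathcal{C}_{X'\to R}$ is exactly what circumvents the non-orthogonality of the conditional reference states $\psi_R^x$, the data-processing inequality for $\widetilde{D}_\alpha$ is invoked precisely on the range $\alpha\in[1/2,1)\cup(1,\infty)$ where it is known to hold, and the direct-sum evaluation $\widetilde{D}_\alpha(\bar\rho\Vert\bar\sigma)=\tfrac{1}{\alpha-1}\log_2\sum_x p(x)2^{(\alpha-1)\widetilde{D}_\alpha(\rho_B^x\Vert\sigma_B^x)}$ is valid because the two block-diagonal states carry the same weights $p(x)$, with your sign analysis in $\alpha-1$ and the infinite-divergence case handled correctly. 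What your route buys is a proof of the lemma that does not outsource the harder half to an external reference; what the paper's route buys is brevity and uniformity, since the cited lemma covers the channel R\'enyi divergences in the form needed elsewhere in the paper.
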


\begin{proof}
The second equality follows from \cite[Lemma~26]{BHKW18}. The proof of the
first equality is similar to the proof of the second one. For completeness, we
provide a proof. Let $\psi_{RX}$ be an arbitrary pure bipartite quantum state
($X$ is a quantum system here). Then the states resulting from the action of
the classical--quantum channels on this state are as follows:%
\begin{align}
\omega_{RB}  &  :=\sum_{x}p(x)\psi_{R}^{x}\otimes\rho_{B}^{x},\\
\tau_{RB}  &  :=\sum_{x}p(x)\psi_{R}^{x}\otimes\sigma_{B}^{x},
\end{align}
where%
\begin{align}
p(x)  &  :=\operatorname{Tr}[|x\rangle\langle x|_{X}\psi_{RX}],\\
\psi_{R}^{x}  &  :=\frac{1}{p(x)}\operatorname{Tr}_{X}[|x\rangle\langle
x|_{X}\psi_{RX}].
\end{align}
Then it follows that%
\begin{align}
&  \left\Vert \mathcal{N}_{X\rightarrow B}(\psi_{RX})-\mathcal{M}%
_{X\rightarrow B}(\psi_{RX})\right\Vert _{1}\nonumber\\
&  =\left\Vert \omega_{RB}-\tau_{RB}\right\Vert _{1}\\
&  \leq\sum_{x}p(x)\left\Vert \psi_{R}^{x}\otimes\rho_{B}^{x}-\psi_{R}%
^{x}\otimes\sigma_{B}^{x}\right\Vert _{1}\\
&  =\sum_{x}p(x)\left\Vert \rho_{B}^{x}-\sigma_{B}^{x}\right\Vert _{1}\\
&  \leq\sup_{x}\left\Vert \rho_{B}^{x}-\sigma_{B}^{x}\right\Vert _{1}.
\end{align}
So we have established a uniform upper bound for any possible bipartite input
state. The upper bound is achieved by calculating the value of $x$ that
achieves the optimum and inputting $|x\rangle\langle x|_{X}$ to the channel box.
\end{proof}

\bigskip
The following proposition generalizes \cite[Proposition~11]{WW19states}\ and
\cite[Theorem~3]{ABJT19}. The main proof idea ultimately still has its roots in \cite{JN12}.

\begin{proposition}
\label{prop:cq-smooth-dmax-up-bnd}Let $\{\rho_{B}^{x}\}_{x\in\mathcal{X}}$ and
$\{\sigma_{B}^{x}\}_{x\in\mathcal{X}}$ be the output states of
classical--quantum channels $\mathcal{N}_{X\rightarrow B}$ and $\mathcal{M}%
_{X\rightarrow B}$, respectively. Then the following bound holds for all
$\alpha>1$ and $\varepsilon\in(0,1)$:%
\begin{multline}
D_{\max}^{\varepsilon}(\mathcal{N}_{X\rightarrow B}\Vert\mathcal{M}%
_{X\rightarrow B})\leq\widetilde{D}_{\alpha}(\mathcal{N}_{X\rightarrow B}%
\Vert\mathcal{M}_{X\rightarrow B})\label{eq:dmax-bnd-cq-channels}\\
+\frac{1}{\alpha-1}\log_{2}\!\left(  \frac{1}{\varepsilon^{2}}\right)  +\log
_{2}\!\left(  \frac{1}{1-\varepsilon^{2}}\right)  .
\end{multline}

\end{proposition}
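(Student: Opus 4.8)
The plan is to reduce the channel-level statement to the state-level bound \cite[Proposition~11]{WW19states}, applied branch by branch to the classical--quantum channels, and then to lift it back up to channels by means of the two identities established in the lemma just above. First, fix $\alpha>1$ and $\varepsilon\in(0,1)$. For each $x\in\mathcal{X}$, invoke the state version of the bound, which provides a state $\widetilde{\rho}_B^x$ satisfying $\frac{1}{2}\Vert\widetilde{\rho}_B^x-\rho_B^x\Vert_1\leq\varepsilon$ and
\begin{equation}
D_{\max}(\widetilde{\rho}_B^x\Vert\sigma_B^x)\leq\widetilde{D}_\alpha(\rho_B^x\Vert\sigma_B^x)+\frac{1}{\alpha-1}\log_2\!\left(\frac{1}{\varepsilon^2}\right)+\log_2\!\left(\frac{1}{1-\varepsilon^2}\right);
\end{equation}
note that $\alpha>1$ lies in the interval $[1/2,1)\cup(1,\infty)$ for which the lemma's $\widetilde{D}_\alpha$ identity applies, and that if $\widetilde{D}_\alpha(\mathcal{N}_{X\to B}\Vert\mathcal{M}_{X\to B})=\infty$ the claimed bound is trivial, so we may assume it is finite.

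Next, I would assemble a classical--quantum channel $\widetilde{\mathcal{N}}_{X\to B}$ whose branch output states are exactly the $\widetilde{\rho}_B^x$. By the first identity of the lemma, $\tfrac{1}{2}\Vert\widetilde{\mathcal{N}}_{X\to B}-\mathcal{N}_{X\to B}\Vert_\diamond=\sup_x\tfrac{1}{2}\Vert\widetilde{\rho}_B^x-\rho_B^x\Vert_1\leq\varepsilon$, so $\widetilde{\mathcal{N}}_{X\to B}\approx_\varepsilon\mathcal{N}_{X\to B}$ and hence $\widetilde{\mathcal{N}}_{X\to B}$ is feasible for the infimum defining $D_{\max}^\varepsilon(\mathcal{N}_{X\to B}\Vert\mathcal{M}_{X\to B})$. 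It then remains to bound $D_{\max}(\widetilde{\mathcal{N}}_{X\to B}\Vert\mathcal{M}_{X\to B})$. Setting $\lambda:=\sup_x D_{\max}(\widetilde{\rho}_B^x\Vert\sigma_B^x)$, we have $\widetilde{\rho}_B^x\leq2^\lambda\sigma_B^x$ for every $x$; consequently, for any input state $\psi_{RX}$, writing the output operators as $\widetilde{\mathcal{N}}_{X\to B}(\psi_{RX})=\sum_x p(x)\,\psi_R^x\otimes\widetilde{\rho}_B^x$ and $\mathcal{M}_{X\to B}(\psi_{RX})=\sum_x p(x)\,\psi_R^x\otimes\sigma_B^x$ exactly as in the lemma's proof, one gets $\widetilde{\mathcal{N}}_{X\to B}(\psi_{RX})\leq2^\lambda\,\mathcal{M}_{X\to B}(\psi_{RX})$, whence $D_{\max}(\widetilde{\mathcal{N}}_{X\to B}\Vert\mathcal{M}_{X\to B})\leq\lambda$.

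Combining these pieces, $D_{\max}^\varepsilon(\mathcal{N}_{X\to B}\Vert\mathcal{M}_{X\to B})\leq D_{\max}(\widetilde{\mathcal{N}}_{X\to B}\Vert\mathcal{M}_{X\to B})\leq\sup_x D_{\max}(\widetilde{\rho}_B^x\Vert\sigma_B^x)$, and bounding each term $D_{\max}(\widetilde{\rho}_B^x\Vert\sigma_B^x)$ by the displayed state-level inequality (with error terms independent of $x$) yields $\sup_x D_{\max}(\widetilde{\rho}_B^x\Vert\sigma_B^x)\leq\sup_x\widetilde{D}_\alpha(\rho_B^x\Vert\sigma_B^x)+\frac{1}{\alpha-1}\log_2(1/\varepsilon^2)+\log_2(1/(1-\varepsilon^2))$; finally the second identity of the lemma gives $\sup_x\widetilde{D}_\alpha(\rho_B^x\Vert\sigma_B^x)=\widetilde{D}_\alpha(\mathcal{N}_{X\to B}\Vert\mathcal{M}_{X\to B})$, which is \eqref{eq:dmax-bnd-cq-channels}. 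I do not anticipate a genuine obstacle here: the substance is already packaged in \cite[Proposition~11]{WW19states} (whose argument traces back to \cite{JN12}) together with the two reduction identities of the lemma. The one place requiring mild care is the passage from the per-letter $D_{\max}$ bounds to a bound on $D_{\max}$ of the assembled channel, but since only the direction $D_{\max}(\widetilde{\mathcal{N}}_{X\to B}\Vert\mathcal{M}_{X\to B})\leq\sup_x D_{\max}(\widetilde{\rho}_B^x\Vert\sigma_B^x)$ is needed, this follows immediately from the operator inequality above, with no appeal to a limit in $\alpha$.
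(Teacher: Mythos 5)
Your proof is correct, but it takes a genuinely different route from the paper's. The paper proves Proposition~\ref{prop:cq-smooth-dmax-up-bnd} by redoing the smoothing argument directly at the channel level: it restricts the smoothing to classical--quantum channels, rewrites $2^{D_{\max}}$ via its SDP characterization, introduces an auxiliary distribution $p(x)$ so that the $\sup_x$ becomes an optimization over a convex set, invokes the Sion minimax theorem \cite{sion58} to interchange the infimum over smoothed branch states with the suprema, and then builds measurement-dependent smoothed states from spectral projections of the dual operators $\Lambda_B^x$, with the threshold $\lambda$ set by the channel quantity $\widetilde{D}_{\alpha}(\mathcal{N}\Vert\mathcal{M})$ --- a channel-level adaptation of the argument of \cite{JN12,ABJT19}. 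You instead use the state-level bound \cite[Proposition~11]{WW19states} as a black box on each branch, assemble the resulting states into a classical--quantum channel $\widetilde{\mathcal{N}}$, and lift via the two identities of the preceding lemma together with the elementary observation that $\widetilde{\rho}_B^x\leq 2^{\lambda}\sigma_B^x$ for all $x$ implies $\widetilde{\mathcal{N}}(\psi_{RX})\leq 2^{\lambda}\mathcal{M}(\psi_{RX})$ for every input, so that $D_{\max}(\widetilde{\mathcal{N}}\Vert\mathcal{M})\leq\sup_x D_{\max}(\widetilde{\rho}_B^x\Vert\sigma_B^x)$. This is more modular and avoids the minimax step entirely, since your smoothed states do not depend on any measurement; the paper's route, by contrast, is self-contained at the channel level (it does not presuppose the state result, which it merely generalizes) and mirrors the technique of \cite{ABJT19}, and in it the constructed $\widetilde{\rho}_B^x$ only control $\operatorname{Tr}[\Lambda_B^x\widetilde{\rho}_B^x]$ for the fixed dual operator rather than the full $D_{\max}$, which is precisely why the exchange of optimizations is needed there. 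The only small point you glossed over is attainment of the infimum defining $D_{\max}^{\varepsilon}(\rho_B^x\Vert\sigma_B^x)$ when selecting $\widetilde{\rho}_B^x$; this is routine (the feasible set is compact and $D_{\max}(\cdot\Vert\sigma_B^x)$ is lower semicontinuous, or one works with $\delta$-optimal states and lets $\delta\rightarrow 0$ at the end, the error terms being independent of $x$), so it does not affect the validity of your argument.
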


\begin{proof}
Consider that%
\begin{align}
&  2^{D_{\max}^{\varepsilon}(\mathcal{N}\Vert\mathcal{M})}\nonumber\\
&  =\inf_{\widetilde{\mathcal{N}}:\frac{1}{2}\left\Vert \widetilde
{\mathcal{N}}-\mathcal{N}\right\Vert _{\diamond}\leq\varepsilon}2^{D_{\max
}(\widetilde{\mathcal{N}}\Vert\mathcal{M})}\\
&  \leq\inf_{\substack{\{\widetilde{\rho}_{B}^{x}\}_{x\in\mathcal{X}}%
:\\\frac{1}{2}\left\Vert \widetilde{\rho}_{B}^{x}-\rho_{B}^{x}\right\Vert
_{1}\leq\varepsilon}}2^{D_{\max}\left(  \bigoplus\limits_{x}\widetilde{\rho
}_{B}^{x}\middle\Vert\bigoplus\limits_{x}\sigma_{B}^{x}\right)  }\\
&  =\inf_{\substack{\{\widetilde{\rho}_{B}^{x}\}_{x\in\mathcal{X}}:\\\frac
{1}{2}\left\Vert \widetilde{\rho}_{B}^{x}-\rho_{B}^{x}\right\Vert _{1}%
\leq\varepsilon}}\sup_{x\in\mathcal{X}}2^{D_{\max}\left(  \widetilde{\rho}%
_{B}^{x}\Vert\sigma_{B}^{x}\right)  }\\
&  =\inf_{\substack{\{\widetilde{\rho}_{B}^{x}\}_{x\in\mathcal{X}}:\\\frac
{1}{2}\left\Vert \widetilde{\rho}_{B}^{x}-\rho_{B}^{x}\right\Vert _{1}%
\leq\varepsilon}}\sup_{x\in\mathcal{X}}\sup_{\substack{\Lambda_{B}^{x}%
\geq0,\\\operatorname{Tr}[\Lambda_{B}^{x}\sigma_{B}^{x}]\leq1}%
}\operatorname{Tr}[\Lambda_{B}^{x}\widetilde{\rho}_{B}^{x}]\\
%&  =\inf_{\substack{\{\widetilde{\rho}_{B}^{x}\}_{x\in\mathcal{X}}:\\\frac
%{1}{2}\left\Vert \widetilde{\rho}_{B}^{x}-\rho_{B}^{x}\right\Vert _{1}%
%\leq\varepsilon}}\sup_{x\in\mathcal{X}}\sup_{\substack{\Lambda_{B}^{x}%
%\geq0,\\\operatorname{Tr}[\Lambda_{B}^{x}\sigma_{B}^{x}]\leq1}%
%}\operatorname{Tr}[\Lambda_{B}^{x}\widetilde{\rho}_{B}^{x}]\\
&  =\inf_{\substack{\{\widetilde{\rho}_{B}^{x}\}_{x\in\mathcal{X}}:\\\frac
{1}{2}\left\Vert \widetilde{\rho}_{B}^{x}-\rho_{B}^{x}\right\Vert _{1}%
\leq\varepsilon}}\sup_{\{p(x)\}_{x\in\mathcal{X}}}\sup_{\substack{\Lambda
_{B}^{x}\geq0,\\\operatorname{Tr}[\Lambda_{B}^{x}\sigma_{B}^{x}]\leq1}%
}\sum_{x}p(x)\operatorname{Tr}[\Lambda_{B}^{x}\widetilde{\rho}_{B}^{x}]\\
&  =\sup_{\{p(x)\}_{x\in\mathcal{X}}}\sup_{\substack{\Lambda_{B}^{x}%
\geq0,\\\operatorname{Tr}[\Lambda_{B}^{x}\sigma_{B}^{x}]\leq1}}\inf
_{\substack{\{\widetilde{\rho}_{B}^{x}\}_{x\in\mathcal{X}}:\\\frac{1}%
{2}\left\Vert \widetilde{\rho}_{B}^{x}-\rho_{B}^{x}\right\Vert _{1}%
\leq\varepsilon}}\sum_{x}p(x)\operatorname{Tr}[\Lambda_{B}^{x}\widetilde{\rho
}_{B}^{x}].
\end{align}
The first inequality follows by restricting the optimization to be over
classical--quantum channels. The last equality follows because the objective
function $\sum_{x}p(x)\operatorname{Tr}[\Lambda_{B}^{x}\widetilde{\rho}%
_{B}^{x}]$ is jointly concave with respect to $\{p(x)\}_{x\in\mathcal{X}}$ and
$\{\Lambda_{B}^{x}\}_{x\in\mathcal{X}}$, and it is convex with respect to the
states $\{\widetilde{\rho}_{B}^{x}\}_{x\in\mathcal{X}}$. Also, the sets over
which we are optimizing are convex and compact. Thus, the Sion minimax theorem applies \cite{sion58}. For each operator $\Lambda
_{B}^{x}$, let its spectral decomposition be given as%
\begin{equation}
\Lambda_{B}^{x}=\sum_{y}\lambda_{x,y}|\phi_{x,y}\rangle\langle\phi_{x,y}|_{B}.
\end{equation}
Then define the set $\mathcal{S}_{x}$ and the projection $\Pi_{B}^{x}$ as%
\begin{align}
\mathcal{S}_{x}  &  :=\left\{  y:\langle\phi_{x,y}|_{B}\rho_{B}^{x}|\phi
_{x,y}\rangle_{B}>2^{\lambda}\langle\phi_{x,y}|_{B}\sigma_{B}^{x}|\phi
_{x,y}\rangle_{B}\right\}  ,\\
\lambda &  :=\widetilde{D}_{\alpha}(\mathcal{N}\Vert\mathcal{M})+\frac
{1}{\alpha-1}\log_{2}\!\left(  \frac{1}{\varepsilon^{2}}\right)  ,\\
\Pi_{B}^{x}  &  :=\sum_{y\in\mathcal{S}_{x}}|\phi_{x,y}\rangle\langle
\phi_{x,y}|_{B}.
\end{align}
The above implies for all $x \in \mathcal{X}$ that%
\begin{equation}
\frac{\operatorname{Tr}[\Pi_{B}^{x}\rho_{B}^{x}]}{\operatorname{Tr}[\Pi
_{B}^{x}\sigma_{B}^{x}]}>2^{\lambda}.
\end{equation}
Then from the data processing inequality of the sandwiched R\'enyi relative
entropy for $\alpha>1$ \cite{FL13,beigi2013sandwiched} and by dropping terms, we find that%
\begin{align}
&  \widetilde{D}_{\alpha}(\mathcal{N}\Vert\mathcal{M})\nonumber\\
&  \geq\frac{1}{\alpha-1}\log_{2}\!\left(  \operatorname{Tr}[\Pi_{B}^{x}\rho
_{B}^{x}]\left(  \frac{\operatorname{Tr}[\Pi_{B}^{x}\rho_{B}^{x}%
]}{\operatorname{Tr}[\Pi_{B}^{x}\sigma_{B}^{x}]}\right)  ^{\alpha-1}\right) \\
&  \geq\frac{1}{\alpha-1}\log_{2}\!\left(  \operatorname{Tr}[\Pi_{B}^{x}\rho
_{B}^{x}]\left(  2^{\lambda}\right)  ^{\alpha-1}\right) \\
&  =\frac{1}{\alpha-1}\log_{2}\operatorname{Tr}[\Pi_{B}^{x}\rho_{B}%
^{x}]+\lambda,
\end{align}
which in turn implies that%
\begin{equation}
\operatorname{Tr}[\Pi_{B}^{x}\rho_{B}^{x}]\leq\varepsilon^{2},
\end{equation}
for all $x\in\mathcal{X}$. Then we find that%
\begin{equation}
\operatorname{Tr}[\hat{\Pi}_{B}^{x}\rho_{B}^{x}]\geq1-\varepsilon^{2},
\end{equation}
for all $x\in\mathcal{X}$, where%
\begin{equation}
\hat{\Pi}_{B}^{x}:=I_{B}-\Pi_{B}^{x}.
\end{equation}
We define the states%
\begin{align}
\widetilde{\rho}_{B}^{x}  &  :=\frac{\hat{\Pi}_{B}^{x}\rho_{B}^{x}\hat{\Pi
}_{B}^{x}}{\operatorname{Tr}[\hat{\Pi}_{B}^{x}\rho_{B}^{x}]}\\
&  \leq\frac{\hat{\Pi}_{B}^{x}\rho_{B}^{x}\hat{\Pi}_{B}^{x}}{1-\varepsilon
^{2}},
\end{align}
and we note that $F(\widetilde{\rho}_{B}^{x},\rho_{B}^{x})\geq1-\varepsilon
^{2}$ implies%
\begin{equation}
\frac{1}{2}\left\Vert \widetilde{\rho}_{B}^{x}-\rho_{B}^{x}\right\Vert
_{1}\leq\varepsilon
\end{equation}
for all $x\in\mathcal{X}$. Then we find that%
\begin{align}
&  \!\!\!\!\!\left(  1-\varepsilon^{2}\right)  \operatorname{Tr}[\Lambda
_{B}^{x}\widetilde{\rho}_{B}^{x}]\nonumber\\
&  \leq\operatorname{Tr}[\Lambda_{B}^{x}\hat{\Pi}_{B}^{x}\rho_{B}^{x}\hat{\Pi
}_{B}^{x}]\\
&  =\operatorname{Tr}[\hat{\Pi}_{B}^{x}\Lambda_{B}^{x}\hat{\Pi}_{B}^{x}%
\rho_{B}^{x}]\\
&  =\sum_{y\notin\mathcal{S}_{x}}\lambda_{x,y}\langle\phi_{x,y}|_{B}\rho
_{B}^{x}|\phi_{x,y}\rangle_{B}\\
&  \leq2^{\lambda}\sum_{y\notin\mathcal{S}_{x}}\lambda_{x,y}\langle\phi
_{x,y}|_{B}\sigma_{B}^{x}|\phi_{x,y}\rangle_{B}\\
&  \leq2^{\lambda}\operatorname{Tr}[\Lambda_{B}^{x}\sigma_{B}^{x}]\\
&  \leq2^{\lambda}.
\end{align}
So this means that we have the following bound holding for all $x\in
\mathcal{X}$:%
\begin{equation}
\operatorname{Tr}[\Lambda_{B}^{x}\widetilde{\rho}_{B}^{x}]\leq2^{\lambda
+\log_{2}\!\left(  \frac{1}{1-\varepsilon^{2}}\right)  }.
\end{equation}
From this, we conclude \eqref{eq:dmax-bnd-cq-channels}.
\end{proof}

\section{Bounds for general one-shot or $n$-shot parallel channel box
transformations}

\label{app:gen-box-trans-bnds}In this appendix, we establish some bounds for
general channel box transformations, by generalizing the results of
\cite{WW19states}\ from states to channels. We begin with the following proposition:

\begin{proposition}
\label{prop:pseudo-continuity-channel-sand}Let $\mathcal{N}_{A\rightarrow
B}^{0}$, $\mathcal{N}_{A\rightarrow B}^{1}$, and $\mathcal{M}_{A\rightarrow
B}$ be channels such that $D_{\max}(\mathcal{N}^{0}\Vert\mathcal{M})<\infty$.
Then for $\alpha\in(1/2,1)$ and $\beta:=\beta(\alpha)=\alpha/\left(
2\alpha-1\right)  >1$, the following inequality holds%
\begin{equation}
\widetilde{D}_{\beta}(\mathcal{N}^{0}\Vert\mathcal{M})-\widetilde{D}_{\alpha
}(\mathcal{N}^{1}\Vert\mathcal{M})\geq\frac{\alpha}{1-\alpha}\log
_{2}F(\mathcal{N}^{0},\mathcal{N}^{1}),
\label{eq:pseudo-continuity-channel-sandwiched}%
\end{equation}
where $\widetilde{D}_{\beta}(\mathcal{N}^{0}\Vert\mathcal{M})$ and $\widetilde{D}_{\alpha
}(\mathcal{N}^{1}\Vert\mathcal{M})$ are channel sandwiched R\'enyi relative entropies and $F(\mathcal{N}^{0},\mathcal{N}^{1})$ is the channel fidelity, each of which is defined from \eqref{eq:gen-ch-div-def} and the underlying functions of states.

\end{proposition}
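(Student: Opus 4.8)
The plan is to reduce the channel inequality to its state-level analog and then exploit the fact that the two channel R\'enyi divergences in \eqref{eq:pseudo-continuity-channel-sandwiched} are suprema over input states while the channel fidelity is an infimum over input states. Recall that, via \eqref{eq:gen-ch-div-def} and the pure-state simplification ($R\cong A$), we have for $\gamma\in\{\alpha,\beta\}$ and $j\in\{0,1\}$
\begin{equation}
\widetilde{D}_{\gamma}(\mathcal{N}^{j}\Vert\mathcal{M})=\sup_{\psi_{RA}}\widetilde{D}_{\gamma}(\mathcal{N}^{j}_{A\rightarrow B}(\psi_{RA})\Vert\mathcal{M}_{A\rightarrow B}(\psi_{RA})),
\end{equation}
whereas the channel fidelity, built from the fidelity $F$ (equivalently the generalized divergence $-\log_{2}F$), satisfies
\begin{equation}
F(\mathcal{N}^{0},\mathcal{N}^{1})=\inf_{\psi_{RA}}F\bigl(\mathcal{N}^{0}_{A\rightarrow B}(\psi_{RA}),\mathcal{N}^{1}_{A\rightarrow B}(\psi_{RA})\bigr).
\end{equation}
The single ingredient I would import is the state-level pseudo-continuity bound of the companion paper \cite{WW19states}: for states $\rho^{0},\rho^{1},\sigma$ with $D_{\max}(\rho^{0}\Vert\sigma)<\infty$, and $\alpha\in(1/2,1)$, $\beta=\alpha/(2\alpha-1)$,
\begin{equation}
\widetilde{D}_{\beta}(\rho^{0}\Vert\sigma)-\widetilde{D}_{\alpha}(\rho^{1}\Vert\sigma)\geq\frac{\alpha}{1-\alpha}\log_{2}F(\rho^{0},\rho^{1}).
\end{equation}

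Concretely, I would first dispose of the trivial case $F(\mathcal{N}^{0},\mathcal{N}^{1})=0$, in which the right-hand side of \eqref{eq:pseudo-continuity-channel-sandwiched} is $-\infty$; so assume $F(\mathcal{N}^{0},\mathcal{N}^{1})>0$. Fix an arbitrary pure input $\psi_{RA}$ and set $\rho^{0}:=\mathcal{N}^{0}_{A\rightarrow B}(\psi_{RA})$, $\rho^{1}:=\mathcal{N}^{1}_{A\rightarrow B}(\psi_{RA})$, $\sigma:=\mathcal{M}_{A\rightarrow B}(\psi_{RA})$. Since $D_{\max}(\mathcal{N}^{0}\Vert\mathcal{M})<\infty$ forces $D_{\max}(\rho^{0}\Vert\sigma)<\infty$, the state bound applies; moreover $F(\rho^{0},\rho^{1})\geq F(\mathcal{N}^{0},\mathcal{N}^{1})>0$ together with $\operatorname{supp}(\rho^{0})\subseteq\operatorname{supp}(\sigma)$ shows that $\operatorname{supp}(\rho^{1})$ is not orthogonal to $\operatorname{supp}(\sigma)$, so $\widetilde{D}_{\alpha}(\rho^{1}\Vert\sigma)$ is finite and no $\infty-\infty$ arises. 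Combining $\widetilde{D}_{\beta}(\mathcal{N}^{0}\Vert\mathcal{M})\geq\widetilde{D}_{\beta}(\rho^{0}\Vert\sigma)$, the state bound, and $\log_{2}F(\rho^{0},\rho^{1})\geq\log_{2}F(\mathcal{N}^{0},\mathcal{N}^{1})$ (using $\alpha/(1-\alpha)>0$ for $\alpha\in(1/2,1)$), I obtain, for every $\psi_{RA}$,
\begin{equation}
\widetilde{D}_{\alpha}\bigl(\mathcal{N}^{1}_{A\rightarrow B}(\psi_{RA})\Vert\mathcal{M}_{A\rightarrow B}(\psi_{RA})\bigr)\leq\widetilde{D}_{\beta}(\mathcal{N}^{0}\Vert\mathcal{M})-\frac{\alpha}{1-\alpha}\log_{2}F(\mathcal{N}^{0},\mathcal{N}^{1}).
\end{equation}
The right-hand side is a finite constant independent of $\psi_{RA}$, so taking the supremum over $\psi_{RA}$ on the left yields $\widetilde{D}_{\alpha}(\mathcal{N}^{1}\Vert\mathcal{M})$ (which is thereby seen to be finite), and rearranging gives exactly \eqref{eq:pseudo-continuity-channel-sandwiched}.

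I expect the only substantive content to be the state-level inequality itself, which is however already available from \cite{WW19states}; everything else is the routine maneuver of applying a state divergence inequality input-by-input, bounding the fidelity term below by its infimum over inputs (the channel fidelity), and then taking the supremum over inputs of the remaining R\'enyi term. The one place to be careful is the handling of infinite values: the reduction must be carried out before the supremum over inputs is taken, and one should verify (as above) that assuming $F(\mathcal{N}^{0},\mathcal{N}^{1})>0$ together with $D_{\max}(\mathcal{N}^{0}\Vert\mathcal{M})<\infty$ excludes the degenerate support configurations, so that the per-input estimate is a genuine finite uniform upper bound; no compactness or continuity argument for the optimization over inputs is needed.
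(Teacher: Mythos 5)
Your proof is correct and follows essentially the same route as the paper: apply the state-level bound of \cite{WW19states} to the outputs $\mathcal{N}^{0}(\psi_{RA})$, $\mathcal{N}^{1}(\psi_{RA})$, $\mathcal{M}(\psi_{RA})$ for each pure input, lower-bound the fidelity term by the channel fidelity (an infimum over inputs), and then take the supremum over inputs to recover the channel R\'enyi divergences. Your extra care with the degenerate cases ($F(\mathcal{N}^{0},\mathcal{N}^{1})=0$ and possible infinities) is a harmless refinement of the same argument.
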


\begin{proof}
Recall the following inequality from \cite[Lemma~1]{WW19states} for states
$\rho_{0}$, $\rho_{1}$, and $\sigma$:%
\begin{equation}
\widetilde{D}_{\beta}(\rho_{0}\Vert\sigma)-\widetilde{D}_{\alpha}(\rho
_{1}\Vert\sigma)\geq\frac{\alpha}{1-\alpha}\log_{2}F(\rho_{0},\rho_{1}).
\end{equation}
Let $\psi_{RA}$ be a pure bipartite state. Applying the above inequality for
states, we find that%
\begin{multline}
\widetilde{D}_{\beta}(\mathcal{N}_{A\rightarrow B}^{0}(\psi_{RA}%
)\Vert\mathcal{M}_{A\rightarrow B}(\psi_{RA}))\\
\geq\widetilde{D}_{\alpha}(\mathcal{N}_{A\rightarrow B}^{1}(\psi_{RA}%
)\Vert\mathcal{M}_{A\rightarrow B}(\psi_{RA}))\\
+\frac{\alpha}{1-\alpha}\log_{2}F(\mathcal{N}_{A\rightarrow B}^{0}(\psi
_{RA}),\mathcal{N}_{A\rightarrow B}^{1}(\psi_{RA})).
\end{multline}
Taking a supremum over all input states $\psi_{RA}$ on the left-hand side, and
an infimum on the right-hand side, we find that%
\begin{multline}
\widetilde{D}_{\beta}(\mathcal{N}^{0}\Vert\mathcal{M})\geq\widetilde
{D}_{\alpha}(\mathcal{N}_{A\rightarrow B}^{1}(\psi_{RA})\Vert\mathcal{M}%
_{A\rightarrow B}(\psi_{RA}))\\
+\frac{\alpha}{1-\alpha}\log_{2}F(\mathcal{N}^{0},\mathcal{N}^{1}).
\end{multline}
Since the above inequality holds for all input states $\psi_{RA}$, we finally
take another supremum to conclude \eqref{eq:pseudo-continuity-channel-sandwiched}.
\end{proof}

\begin{proposition}
\label{prop:pseudo-continuity-channel-petz}Let $\mathcal{N}_{A\rightarrow
B}^{0}$, $\mathcal{N}_{A\rightarrow B}^{1}$, and $\mathcal{M}_{A\rightarrow
B}$ be channels such that $D_{\max}(\mathcal{N}^{0}\Vert\mathcal{M})<\infty$.
Then for $\alpha\in(0,1)$ and $\beta:=\beta(\alpha)=2-\alpha>1$, the following
inequality holds%
\begin{multline}
D_{\beta}(\mathcal{N}^{0}\Vert\mathcal{M})-D_{\alpha}(\mathcal{N}^{1}%
\Vert\mathcal{M})\\
\geq\frac{2}{1-\alpha}\log_{2}\!\left[  1-\frac{1}{2}\left\Vert \mathcal{N}%
^{0}-\mathcal{N}^{1}\right\Vert _{\diamond}\right]  .
\end{multline}
where $D_{\beta}(\mathcal{N}^{0}\Vert\mathcal{M})$ and $D_{\alpha
}(\mathcal{N}^{1}\Vert\mathcal{M})$ are channel
Petz--R\'enyi relative entropies, each of which is defined from \eqref{eq:gen-ch-div-def} and the underlying functions of states.

\end{proposition}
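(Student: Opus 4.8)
The plan is to follow exactly the template of the proof of Proposition~\ref{prop:pseudo-continuity-channel-sand}, lifting a state-level inequality from the companion paper \cite{WW19states} to the level of channel divergences via the definition in \eqref{eq:gen-ch-div-def}. First, I would recall the Petz--R\'enyi counterpart (in \cite{WW19states}) of \cite[Lemma~1]{WW19states}, namely that for states $\rho_{0}$, $\rho_{1}$, and $\sigma$ with $D_{\max}(\rho_{0}\Vert\sigma)<\infty$, and for $\alpha\in(0,1)$, $\beta=2-\alpha>1$,
\begin{equation}
D_{\beta}(\rho_{0}\Vert\sigma)-D_{\alpha}(\rho_{1}\Vert\sigma)\geq\frac{2}{1-\alpha}\log_{2}\!\left[1-\frac{1}{2}\left\Vert \rho_{0}-\rho_{1}\right\Vert_{1}\right].
\end{equation}

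Next, I would fix a pure bipartite state $\psi_{RA}$ with $R$ isomorphic to $A$ and apply this inequality with $\rho_{0}=\mathcal{N}^{0}_{A\rightarrow B}(\psi_{RA})$, $\rho_{1}=\mathcal{N}^{1}_{A\rightarrow B}(\psi_{RA})$, and $\sigma=\mathcal{M}_{A\rightarrow B}(\psi_{RA})$. The hypothesis $D_{\max}(\mathcal{N}^{0}\Vert\mathcal{M})<\infty$ guarantees $D_{\max}(\mathcal{N}^{0}_{A\rightarrow B}(\psi_{RA})\Vert\mathcal{M}_{A\rightarrow B}(\psi_{RA}))<\infty$ for every such $\psi_{RA}$, so the state lemma applies and the left-hand quantities are well-defined and finite. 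Since $\left\Vert \mathcal{N}^{0}_{A\rightarrow B}(\psi_{RA})-\mathcal{N}^{1}_{A\rightarrow B}(\psi_{RA})\right\Vert_{1}\leq\left\Vert \mathcal{N}^{0}-\mathcal{N}^{1}\right\Vert_{\diamond}$, monotonicity of $\log_{2}$ together with positivity of $2/(1-\alpha)$ lets me replace the trace-distance term by the diamond-distance term in the lower bound (one must simply check, as in Proposition~\ref{prop:pseudo-continuity-channel-sand}, that this replacement goes the correct way). This yields, for every $\psi_{RA}$,
\begin{multline}
D_{\beta}(\mathcal{N}^{0}_{A\rightarrow B}(\psi_{RA})\Vert\mathcal{M}_{A\rightarrow B}(\psi_{RA}))\geq D_{\alpha}(\mathcal{N}^{1}_{A\rightarrow B}(\psi_{RA})\Vert\mathcal{M}_{A\rightarrow B}(\psi_{RA}))\\+\frac{2}{1-\alpha}\log_{2}\!\left[1-\frac{1}{2}\left\Vert \mathcal{N}^{0}-\mathcal{N}^{1}\right\Vert_{\diamond}\right].
\end{multline}

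Finally, I would upper bound the left-hand side by $D_{\beta}(\mathcal{N}^{0}\Vert\mathcal{M})$ (its supremum over all such $\psi_{RA}$), so that the resulting inequality holds with $D_{\beta}(\mathcal{N}^{0}\Vert\mathcal{M})$ on the left for \emph{every} $\psi_{RA}$; then taking a supremum over $\psi_{RA}$ of the right-hand side — which produces $D_{\alpha}(\mathcal{N}^{1}\Vert\mathcal{M})$, the diamond-distance term being independent of $\psi_{RA}$ — gives the claimed bound. I expect the only delicate point to be this sup/inf bookkeeping: one cannot optimize over $\psi_{RA}$ simultaneously on both sides, so the argument must first bound the left-hand side by $D_{\beta}(\mathcal{N}^{0}\Vert\mathcal{M})$ for a fixed $\psi_{RA}$ and only afterwards optimize the right-hand side, precisely the maneuver used in the proof of Proposition~\ref{prop:pseudo-continuity-channel-sand}. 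Everything else is routine.
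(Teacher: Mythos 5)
Your proposal is correct and follows essentially the same route as the paper: the paper's proof simply invokes the state-level inequality (Lemma~4 of \cite{WW19states}) and says to repeat the reasoning of Proposition~\ref{prop:pseudo-continuity-channel-sand}, which is exactly the argument you spell out, including the correct observation that replacing $\left\Vert \mathcal{N}^{0}(\psi_{RA})-\mathcal{N}^{1}(\psi_{RA})\right\Vert _{1}$ by the larger $\left\Vert \mathcal{N}^{0}-\mathcal{N}^{1}\right\Vert _{\diamond}$ only weakens the lower bound, and the sup/inf bookkeeping of fixing $\psi_{RA}$, bounding the left side by $D_{\beta}(\mathcal{N}^{0}\Vert\mathcal{M})$, and only then optimizing the remaining $D_{\alpha}$ term. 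No gaps.
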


\begin{proof}
This is a consequence of the following inequality from \cite[Lemma~4]%
{WW19states}, for states $\rho_{0}$, $\rho_{1}$, and $\sigma$, and the same
reasoning as in the proof of
Proposition~\ref{prop:pseudo-continuity-channel-sand}:%
\begin{equation}
D_{\beta}(\rho_{0}\Vert\sigma)-D_{\alpha}(\rho_{1}\Vert\sigma)\geq\frac
{2}{1-\alpha}\log_{2}\!\left[  1-\frac{1}{2}\left\Vert \rho_{0}-\rho
_{1}\right\Vert _{1}\right]  ,
\end{equation}
concluding the proof.
\end{proof}

\bigskip
We can then use the above bounds for channels to establish converse bounds for
general channel box transformation protocols.

\begin{proposition}
\label{prop:one-shot-box-trans-conv-bnd}Let $\mathcal{N}_{A\rightarrow B}$,
$\mathcal{M}_{A\rightarrow B}$, $\mathcal{K}_{C\rightarrow D}$, $\mathcal{L}%
_{C\rightarrow D}$ be quantum channels, and let $\Theta_{\left(  A\rightarrow
B\right)  \rightarrow\left(  C\rightarrow D\right)  }$ be a superchannel such
that $\Theta(\mathcal{M})=\mathcal{L}$. For $\alpha\in(1/2,1)$ and
$\beta:=\beta(\alpha)=\alpha/\left(  2\alpha-1\right)  >1$, the following inequality holds
\begin{align}
\widetilde{D}_{\beta}(\mathcal{N}\Vert\mathcal{M})  &  \geq\widetilde
{D}_{\alpha}(\mathcal{K}\Vert\mathcal{L})+\frac{\alpha}{1-\alpha}\log
_{2}F(\Theta(\mathcal{N}),\mathcal{K}),
\end{align}
and for
$\alpha^{\prime}\in(0,1)$ and $\beta^{\prime}:=\beta^{\prime}(\alpha^{\prime
}):=2-\alpha^{\prime}\in(1,2)$, the following inequality holds
\begin{align}
D_{\beta^{\prime}}(\mathcal{N}\Vert\mathcal{M})  &  \geq D_{\alpha^{\prime}%
}(\mathcal{K}\Vert\mathcal{L})\nonumber\\
&  \qquad+\frac{2}{1-\alpha^{\prime}}\log_{2}\!\left[  1-\frac{1}{2}\left\Vert
\Theta(\mathcal{N})-\mathcal{K}\right\Vert _{\diamond}\right]  .
\label{eq:one-shot-converse-petz-gen-ch-box}%
\end{align}

\end{proposition}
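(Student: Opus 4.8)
The plan is to derive Proposition~\ref{prop:one-shot-box-trans-conv-bnd} directly from Propositions~\ref{prop:pseudo-continuity-channel-sand} and \ref{prop:pseudo-continuity-channel-petz}, combined with the data processing inequality for generalized channel divergences under superchannels \eqref{eq:DP-superch}. The key observation is that if $\Theta$ is a superchannel with $\Theta(\mathcal{M})=\mathcal{L}$, then setting $\mathcal{N}^0 := \mathcal{K}$, $\mathcal{N}^1 := \Theta(\mathcal{N})$, and using $\mathcal{L}$ in the role of the second channel lets us apply the ``pseudo-continuity'' bounds of the earlier propositions, after first transporting the reference channel $\mathcal{N}$ through $\Theta$.

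First I would check the finiteness hypothesis needed to invoke Proposition~\ref{prop:pseudo-continuity-channel-sand}: one needs $D_{\max}(\mathcal{N}^0 \Vert \mathcal{L}) = D_{\max}(\mathcal{K}\Vert\mathcal{L}) < \infty$, which we may assume (or the bound is vacuous since the left-hand side is infinite by data processing whenever the right-hand side is). Then apply \eqref{eq:DP-superch} for the channel sandwiched R\'enyi relative entropy of order $\beta>1$:
\begin{equation}
\widetilde{D}_{\beta}(\mathcal{N}\Vert\mathcal{M}) \geq \widetilde{D}_{\beta}(\Theta(\mathcal{N})\Vert\Theta(\mathcal{M})) = \widetilde{D}_{\beta}(\Theta(\mathcal{N})\Vert\mathcal{L}).
\end{equation}
Next, apply Proposition~\ref{prop:pseudo-continuity-channel-sand} with the identifications $\mathcal{N}^0 \mapsto \Theta(\mathcal{N})$, $\mathcal{N}^1 \mapsto \mathcal{K}$, and $\mathcal{M}\mapsto\mathcal{L}$, which gives
\begin{equation}
\widetilde{D}_{\beta}(\Theta(\mathcal{N})\Vert\mathcal{L}) - \widetilde{D}_{\alpha}(\mathcal{K}\Vert\mathcal{L}) \geq \frac{\alpha}{1-\alpha}\log_2 F(\Theta(\mathcal{N}),\mathcal{K}).
\end{equation}
Chaining these two inequalities yields the first claimed bound. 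For the second bound I would proceed identically, but invoking data processing \eqref{eq:DP-superch} for the channel Petz--R\'enyi relative entropy of order $\beta' = 2-\alpha' \in (1,2)$ and then applying Proposition~\ref{prop:pseudo-continuity-channel-petz} with the same identifications, producing the diamond-distance term $\frac{2}{1-\alpha'}\log_2[1-\frac{1}{2}\|\Theta(\mathcal{N})-\mathcal{K}\|_{\diamond}]$ as in \eqref{eq:one-shot-converse-petz-gen-ch-box}.

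The proof is essentially a two-line assembly, so there is no serious obstacle; the only subtlety worth flagging explicitly is the ordering of the two moves — one must apply data processing \emph{first} (to pass from $\mathcal{M}$ to $\mathcal{L}=\Theta(\mathcal{M})$ on both sides simultaneously), and only then invoke the pseudo-continuity proposition, whose validity requires the second argument to be a \emph{fixed} channel common to both R\'enyi terms, which here is $\mathcal{L}$. I would also remark that the finiteness condition $D_{\max}(\Theta(\mathcal{N})\Vert\mathcal{L})<\infty$ needed in Proposition~\ref{prop:pseudo-continuity-channel-sand} follows from $D_{\max}(\mathcal{N}\Vert\mathcal{M})<\infty$ by data processing, so no extra hypothesis is required beyond what is implicit in the statement. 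These converse bounds are then ready to be specialized to the $n$-shot parallel case by the substitution $(\mathcal{N},\mathcal{M},\mathcal{K},\mathcal{L})\to(\mathcal{N}^{\otimes n},\mathcal{M}^{\otimes n},\mathcal{K}^{\otimes m},\mathcal{L}^{\otimes m})$, dividing by $n$, and taking limits.
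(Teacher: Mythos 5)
Your proposal is correct and follows essentially the same route as the paper: apply the data processing inequality \eqref{eq:DP-superch} to replace $\mathcal{M}$ by $\Theta(\mathcal{M})=\mathcal{L}$, then invoke Propositions~\ref{prop:pseudo-continuity-channel-sand} and \ref{prop:pseudo-continuity-channel-petz} with $\mathcal{N}^{0}=\Theta(\mathcal{N})$, $\mathcal{N}^{1}=\mathcal{K}$, and second argument $\mathcal{L}$. The only blemish is the opening paragraph's swapped identification ($\mathcal{N}^{0}:=\mathcal{K}$, $\mathcal{N}^{1}:=\Theta(\mathcal{N})$, and the attendant finiteness check on $D_{\max}(\mathcal{K}\Vert\mathcal{L})$), which is harmless since your detailed derivation uses the correct assignment and correctly notes that $D_{\max}(\Theta(\mathcal{N})\Vert\mathcal{L})<\infty$ is the relevant condition.
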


\begin{proof}
As a consequence of the data processing inequality for channel divergences
with respect to superchannels, we find that%
\begin{align}
&  \widetilde{D}_{\beta}(\mathcal{N}\Vert\mathcal{M})\nonumber\\
&  \geq\widetilde{D}_{\beta}(\Theta(\mathcal{N})\Vert\Theta(\mathcal{M}))\\
&  =\widetilde{D}_{\beta}(\Theta(\mathcal{N})\Vert\mathcal{L})\\
&  \geq\widetilde{D}_{\alpha}(\mathcal{K}\Vert\mathcal{L})+\frac{\alpha
}{1-\alpha}\log_{2}F(\Theta(\mathcal{N}),\mathcal{K}),
\end{align}
where the last inequality follows from
Proposition~\ref{prop:pseudo-continuity-channel-sand}.

The inequality in \eqref{eq:one-shot-converse-petz-gen-ch-box}\ follows
similarly from data processing but then using
Proposition~\ref{prop:pseudo-continuity-channel-petz}.
\end{proof}

We can now use these one-shot bounds to establish converse bounds on the rate
at which it is possible to convert the $n$-fold channel box $(\mathcal{N}%
^{\otimes n},\mathcal{M}^{\otimes n})$ to the $m$-fold channel box
$(\mathcal{K}^{\otimes m},\mathcal{L}^{\otimes m})$.

\begin{proposition}
\label{prop:n-to-m-ch-box-trans-conv-bnd}Let channels $\mathcal{N}%
_{A\rightarrow B}$, $\mathcal{M}_{A\rightarrow B}$, $\mathcal{K}_{C\rightarrow
D}$, $\mathcal{L}_{C\rightarrow D}$ be given and suppose that there exists an
$(n,m,\varepsilon)$ channel box transformation protocol (i.e., a superchannel
$\Theta^{(n)}$ such that $\Theta^{(n)}(\mathcal{N}^{\otimes n})\approx
_{\varepsilon}\mathcal{K}^{\otimes m}$ and $\Theta^{(n)}(\mathcal{M}^{\otimes
n})=\mathcal{L}^{\otimes m}$). Then for $\alpha\in(1/2,1)$ and $\beta
:=\beta(\alpha)=\alpha/\left(  2\alpha-1\right)  >1$, the following bound
holds%
\begin{multline}
\frac{\widetilde{D}_{\beta}^{(n)}(\mathcal{N}\Vert\mathcal{M})}{\widetilde
{D}_{\alpha}^{(m)}(\mathcal{K}\Vert\mathcal{L})}\geq\frac{m}{n}\\
+\frac{2\alpha}{n\left(  1-\alpha\right)  \widetilde{D}_{\alpha}%
^{(m)}(\mathcal{K}\Vert\mathcal{L})}\log_{2}(1-\varepsilon).
\end{multline}
For $\alpha^{\prime}\in(0,1)$ and $\beta^{\prime}:=\beta^{\prime}%
(\alpha^{\prime}):=2-\alpha^{\prime}\in(1,2)$, the following bound holds%
\begin{multline}
\frac{D_{\beta^{\prime}}^{(n)}(\mathcal{N}\Vert\mathcal{M})}{D_{\alpha
^{\prime}}^{(m)}(\mathcal{K}\Vert\mathcal{L})}\geq\frac{m}{n}\\
+\frac{2}{n\left(  1-\alpha'\right)  D_{\alpha^{\prime}}^{(m)}%
(\mathcal{K}\Vert\mathcal{L})}\log_{2}(1-\varepsilon).
\end{multline}
In the above,%
\begin{align}
\widetilde{D}_{\beta}^{(n)}(\mathcal{N}\Vert\mathcal{M})  &  :=\frac{1}%
{n}\widetilde{D}_{\beta}(\mathcal{N}^{\otimes n}\Vert\mathcal{M}^{\otimes
n}),\\
D_{\beta'}^{(n)}(\mathcal{N}\Vert\mathcal{M})  &  :=\frac{1}{n}D_{\beta'
}(\mathcal{N}^{\otimes n}\Vert\mathcal{M}^{\otimes n}),
\end{align}
with a similar definition for the other quantities.
\end{proposition}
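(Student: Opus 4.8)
The plan is to derive Proposition~\ref{prop:n-to-m-ch-box-trans-conv-bnd} as a direct ``tensorization'' of the one-shot converse bounds in Proposition~\ref{prop:one-shot-box-trans-conv-bnd}, using the additivity of the sandwiched and Petz--R\'enyi channel divergences on tensor-power channels. First I would fix an $(n,m,\varepsilon)$ protocol, i.e.\ a superchannel $\Theta^{(n)}$ with $\Theta^{(n)}(\mathcal{M}^{\otimes n})=\mathcal{L}^{\otimes m}$ and $\Theta^{(n)}(\mathcal{N}^{\otimes n})\approx_{\varepsilon}\mathcal{K}^{\otimes m}$, and apply Proposition~\ref{prop:one-shot-box-trans-conv-bnd} with the substitutions $\mathcal{N}\to\mathcal{N}^{\otimes n}$, $\mathcal{M}\to\mathcal{M}^{\otimes n}$, $\mathcal{K}\to\mathcal{K}^{\otimes m}$, $\mathcal{L}\to\mathcal{L}^{\otimes m}$, and $\Theta\to\Theta^{(n)}$. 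In the sandwiched case this gives
\begin{equation}
\widetilde{D}_{\beta}(\mathcal{N}^{\otimes n}\Vert\mathcal{M}^{\otimes n})\geq\widetilde{D}_{\alpha}(\mathcal{K}^{\otimes m}\Vert\mathcal{L}^{\otimes m})+\frac{\alpha}{1-\alpha}\log_{2}F(\Theta^{(n)}(\mathcal{N}^{\otimes n}),\mathcal{K}^{\otimes m}),
\end{equation}
and analogously in the Petz case with the diamond-distance term.

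Next I would invoke additivity of the channel sandwiched R\'enyi relative entropy, $\widetilde{D}_{\beta}(\mathcal{N}^{\otimes n}\Vert\mathcal{M}^{\otimes n})=n\,\widetilde{D}_{\beta}(\mathcal{N}\Vert\mathcal{M})$ (and likewise for $\widetilde{D}_{\alpha}$ and for the Petz divergence $D_{\beta'}$), which is exactly what the shorthand notation $\widetilde{D}_{\beta}^{(n)}(\mathcal{N}\Vert\mathcal{M}):=\tfrac{1}{n}\widetilde{D}_{\beta}(\mathcal{N}^{\otimes n}\Vert\mathcal{M}^{\otimes n})$ normalizes against; this additivity is the one external ingredient I would cite (it follows from the single-letterization of these channel divergences on i.i.d.\ channels, in the spirit of \eqref{eq:dmax-simplify}, or can be quoted from \cite{Cooney2016}). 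Then I would convert the approximation condition into a bound on the error term: since $\tfrac12\|\Theta^{(n)}(\mathcal{N}^{\otimes n})-\mathcal{K}^{\otimes m}\|_{\diamond m}\leq\varepsilon$, the channel fidelity obeys $F(\Theta^{(n)}(\mathcal{N}^{\otimes n}),\mathcal{K}^{\otimes m})\geq(1-\varepsilon)^2$ (via the Fuchs--van de Graaf inequality applied to the optimizing input state), so $\tfrac{\alpha}{1-\alpha}\log_2 F\geq\tfrac{2\alpha}{1-\alpha}\log_2(1-\varepsilon)$ since $\tfrac{\alpha}{1-\alpha}>0$ for $\alpha\in(1/2,1)$; and for the Petz case $1-\tfrac12\|\Theta^{(n)}(\mathcal{N}^{\otimes n})-\mathcal{K}^{\otimes m}\|_{\diamond}\geq 1-\varepsilon$ directly. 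Substituting these in, dividing by $n\,\widetilde{D}_{\alpha}^{(m)}(\mathcal{K}\Vert\mathcal{L})$ (which is finite and positive under the implicit non-degeneracy assumptions), and rearranging yields
\begin{equation}
\frac{\widetilde{D}_{\beta}^{(n)}(\mathcal{N}\Vert\mathcal{M})}{\widetilde{D}_{\alpha}^{(m)}(\mathcal{K}\Vert\mathcal{L})}\geq\frac{m}{n}+\frac{2\alpha}{n(1-\alpha)\widetilde{D}_{\alpha}^{(m)}(\mathcal{K}\Vert\mathcal{L})}\log_{2}(1-\varepsilon),
\end{equation}
and the Petz-type bound follows in the same way from \eqref{eq:one-shot-converse-petz-gen-ch-box}.

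The only genuine subtlety — and the step I expect to need the most care — is justifying the additivity $\widetilde{D}_{\beta}(\mathcal{N}^{\otimes n}\Vert\mathcal{M}^{\otimes n})=n\widetilde{D}_{\beta}(\mathcal{N}\Vert\mathcal{M})$ for the relevant range of orders: super-additivity ($\geq$) is automatic from optimizing with product input states, but sub-additivity ($\leq$) requires the known single-letterization results for these channel R\'enyi divergences. Everything else is bookkeeping: one applies the already-proven one-shot converse, substitutes tensor powers, uses additivity to pull out the $1/n$ and $1/m$ normalizations, and bounds the fidelity or diamond-distance error term by $(1-\varepsilon)^2$ respectively $(1-\varepsilon)$. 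I would also remark that, exactly as in \cite{WW19states}, taking $n\to\infty$ in these bounds (so the error term vanishes) and optimizing over $\alpha\uparrow 1$, $\beta\downarrow 1$ recovers the converse part of the single-letter rate formula $R^p\leq D(\mathcal{N}\Vert\mathcal{M})/D(\mathcal{K}\Vert\mathcal{L})$ claimed in \eqref{eq:main-result-single-letter-parallel-trans} for the cases where the channel relative entropy is additive.
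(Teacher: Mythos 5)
Your core derivation is the same as the paper's: apply Proposition~\ref{prop:one-shot-box-trans-conv-bnd} with the substitutions $\mathcal{N}\rightarrow\mathcal{N}^{\otimes n}$, $\mathcal{M}\rightarrow\mathcal{M}^{\otimes n}$, $\mathcal{K}\rightarrow\mathcal{K}^{\otimes m}$, $\mathcal{L}\rightarrow\mathcal{L}^{\otimes m}$, $\Theta\rightarrow\Theta^{(n)}$, bound the error term via $\sqrt{F}(\Theta^{(n)}(\mathcal{N}^{\otimes n}),\mathcal{K}^{\otimes m})\geq 1-\tfrac{1}{2}\Vert\Theta^{(n)}(\mathcal{N}^{\otimes n})-\mathcal{K}^{\otimes m}\Vert_{\diamond}\geq 1-\varepsilon$ (and $1-\varepsilon$ directly in the Petz case), then divide and rearrange. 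That part is fine.

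The problem is the step you single out as ``the one external ingredient'' and ``the only genuine subtlety'': the additivity $\widetilde{D}_{\beta}(\mathcal{N}^{\otimes n}\Vert\mathcal{M}^{\otimes n})=n\,\widetilde{D}_{\beta}(\mathcal{N}\Vert\mathcal{M})$. First, it is not needed at all: the proposition is stated purely in terms of $\widetilde{D}_{\beta}^{(n)}(\mathcal{N}\Vert\mathcal{M}):=\tfrac{1}{n}\widetilde{D}_{\beta}(\mathcal{N}^{\otimes n}\Vert\mathcal{M}^{\otimes n})$ and $\widetilde{D}_{\alpha}^{(m)}(\mathcal{K}\Vert\mathcal{L}):=\tfrac{1}{m}\widetilde{D}_{\alpha}(\mathcal{K}^{\otimes m}\Vert\mathcal{L}^{\otimes m})$, which are definitions rather than consequences of single-letterization; dividing the tensor-power inequality by $n\,\widetilde{D}_{\alpha}^{(m)}(\mathcal{K}\Vert\mathcal{L})$ immediately yields the stated bound. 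Second, the additivity you propose to cite is not a known general fact for channel R\'enyi divergences at these orders: channel divergences can be badly non-additive (the paper's example via \cite{Acin01} for $D_{\min}$, and the need for regularization in Theorem~\ref{thm:parallel-distillable-dist-regularized-rel-ent}, illustrate this), the simplification \eqref{eq:dmax-simplify} is special to $D_{\max}$, and \cite{Cooney2016} does not supply the claimed single-letterization. This is exactly why the paper proves the present proposition without any additivity and only afterwards, in the corollary, \emph{assumes} \eqref{eq:additivity-sandwiched-Renyi}--\eqref{eq:additivity-sandwiched-Renyi-2}, noting that they hold for classical--quantum and environment-seizable boxes \cite{BHKW18}. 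As written, your argument either proves only that restricted corollary or rests on an unproven claim; delete the additivity step and your proof coincides with the paper's.
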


\begin{proof}
Applying Proposition~\ref{prop:one-shot-box-trans-conv-bnd}, we conclude that%
\begin{align}
&  \widetilde{D}_{\beta}(\mathcal{N}^{\otimes n}\Vert\mathcal{M}^{\otimes
n})\nonumber\\
&  \geq\widetilde{D}_{\alpha}(\mathcal{K}^{\otimes m}\Vert\mathcal{L}^{\otimes
m})+\frac{\alpha}{1-\alpha}\log_{2}F(\Theta^{(n)}(\mathcal{N}^{\otimes
n}),\mathcal{K}^{\otimes m})\nonumber\\
&  =\widetilde{D}_{\alpha}(\mathcal{K}^{\otimes m}\Vert\mathcal{L}^{\otimes
m})+\frac{2\alpha}{1-\alpha}\log_{2}\sqrt{F}(\Theta^{(n)}(\mathcal{N}^{\otimes
n}),\mathcal{K}^{\otimes m})\nonumber\\
&  \geq\widetilde{D}_{\alpha}(\mathcal{K}^{\otimes m}\Vert\mathcal{L}^{\otimes
m})+\frac{2\alpha}{1-\alpha}\log_{2}(1-\varepsilon).
\end{align}
The second inequality follows from the fact that%
\begin{equation}
\sqrt{F}(\Theta^{(n)}(\mathcal{N}^{\otimes n}),\mathcal{K}^{\otimes m}%
)\geq1-\frac{1}{2}\left\Vert \Theta^{(n)}(\mathcal{N}^{\otimes n}%
)-\mathcal{K}^{\otimes m}\right\Vert _{\diamond}.
\end{equation}
The other inequality follows from similar reasoning but instead using data
processing and \eqref{eq:one-shot-converse-petz-gen-ch-box}.
\end{proof}

\begin{corollary}
Let $(\mathcal{N},\mathcal{M})$ and $(\mathcal{K},\mathcal{L})$ be channel
boxes such that%
\begin{align}
\widetilde{D}_{\beta}^{(n)}(\mathcal{N}\Vert\mathcal{M})  &  =\widetilde
{D}_{\beta}(\mathcal{N}\Vert\mathcal{M}%
),\label{eq:additivity-sandwiched-Renyi}\\
\widetilde{D}_{\alpha}^{(m)}(\mathcal{K}\Vert\mathcal{L})  &  =\widetilde
{D}_{\alpha}(\mathcal{K}\Vert\mathcal{L}),
\label{eq:additivity-sandwiched-Renyi-2}%
\end{align}
for $n,m\geq1$, $\alpha\in(1/2,1)$, and $\beta:=\beta(\alpha)=\alpha/\left(
2\alpha-1\right)  >1$. Then the following bound applies to an
$(n,m,\varepsilon)$ general channel box transformation protocol:%
\begin{multline}
\frac{\widetilde{D}_{\beta}(\mathcal{N}\Vert\mathcal{M})}{\widetilde
{D}_{\alpha}(\mathcal{K}\Vert\mathcal{L})}\geq\frac{m}{n}\\
+\frac{2\alpha}{n\left(  1-\alpha\right)  \widetilde{D}_{\alpha}%
(\mathcal{K}\Vert\mathcal{L})}\log_{2}(1-\varepsilon).
\end{multline}
Alternatively, suppose that $(\mathcal{N},\mathcal{M})$ and $(\mathcal{K}%
,\mathcal{L})$ satisfy%
\begin{align}
D_{\beta^{\prime}}^{(n)}(\mathcal{N}\Vert\mathcal{M})  &  =D_{\beta^{\prime}%
}(\mathcal{N}\Vert\mathcal{M}),\label{eq:additivity-Petz-Renyi}\\
D_{\alpha^{\prime}}^{(m)}(\mathcal{K}\Vert\mathcal{L})  &  =D_{\alpha^{\prime
}}(\mathcal{K}\Vert\mathcal{L}), \label{eq:additivity-Petz-Renyi-2}%
\end{align}
for $n,m\geq1$, $\alpha^{\prime}\in(0,1)$, and $\beta^{\prime}:=\beta^{\prime
}(\alpha^{\prime}):=2-\alpha^{\prime}\in(1,2)$. Then the following bound holds%
\begin{multline}
\frac{D_{\beta^{\prime}}(\mathcal{N}\Vert\mathcal{M})}{D_{\alpha^{\prime}%
}(\mathcal{K}\Vert\mathcal{L})}\geq\frac{m}{n}\\
+\frac{2\alpha}{n\left(  1-\alpha\right)  D_{\alpha^{\prime}}(\mathcal{K}%
\Vert\mathcal{L})}\log_{2}(1-\varepsilon).
\end{multline}

\end{corollary}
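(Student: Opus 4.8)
The plan is to obtain the Corollary as a direct specialization of Proposition~\ref{prop:n-to-m-ch-box-trans-conv-bnd}. The hypothesis that an $(n,m,\varepsilon)$ general channel box transformation protocol exists means precisely that there is a superchannel $\Theta^{(n)}$ with $\Theta^{(n)}(\mathcal{N}^{\otimes n})\approx_{\varepsilon}\mathcal{K}^{\otimes m}$ and $\Theta^{(n)}(\mathcal{M}^{\otimes n})=\mathcal{L}^{\otimes m}$, which is exactly the situation covered by Proposition~\ref{prop:n-to-m-ch-box-trans-conv-bnd}. Hence both bounds proved there apply verbatim, and the only remaining work is to feed in the stated additivity relations.

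First I would treat the sandwiched R\'enyi case. For $\alpha\in(1/2,1)$ and $\beta=\alpha/(2\alpha-1)$, Proposition~\ref{prop:n-to-m-ch-box-trans-conv-bnd} gives
\begin{equation}
\frac{\widetilde{D}_{\beta}^{(n)}(\mathcal{N}\Vert\mathcal{M})}{\widetilde{D}_{\alpha}^{(m)}(\mathcal{K}\Vert\mathcal{L})}\geq\frac{m}{n}+\frac{2\alpha}{n(1-\alpha)\widetilde{D}_{\alpha}^{(m)}(\mathcal{K}\Vert\mathcal{L})}\log_{2}(1-\varepsilon),
\end{equation}
where $\widetilde{D}_{\beta}^{(n)}(\mathcal{N}\Vert\mathcal{M})=\tfrac{1}{n}\widetilde{D}_{\beta}(\mathcal{N}^{\otimes n}\Vert\mathcal{M}^{\otimes n})$ and $\widetilde{D}_{\alpha}^{(m)}(\mathcal{K}\Vert\mathcal{L})=\tfrac{1}{m}\widetilde{D}_{\alpha}(\mathcal{K}^{\otimes m}\Vert\mathcal{L}^{\otimes m})$. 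Invoking the additivity assumptions \eqref{eq:additivity-sandwiched-Renyi}--\eqref{eq:additivity-sandwiched-Renyi-2}, i.e.\ $\widetilde{D}_{\beta}^{(n)}(\mathcal{N}\Vert\mathcal{M})=\widetilde{D}_{\beta}(\mathcal{N}\Vert\mathcal{M})$ and $\widetilde{D}_{\alpha}^{(m)}(\mathcal{K}\Vert\mathcal{L})=\widetilde{D}_{\alpha}(\mathcal{K}\Vert\mathcal{L})$, and substituting these single-copy values into every occurrence above immediately produces the first displayed bound of the Corollary. The Petz--R\'enyi case is identical in structure: I would use the second bound of Proposition~\ref{prop:n-to-m-ch-box-trans-conv-bnd}, valid for $\alpha'\in(0,1)$ and $\beta'=2-\alpha'$, and then substitute \eqref{eq:additivity-Petz-Renyi}--\eqref{eq:additivity-Petz-Renyi-2} to replace $D_{\beta'}^{(n)}(\mathcal{N}\Vert\mathcal{M})$ by $D_{\beta'}(\mathcal{N}\Vert\mathcal{M})$ and $D_{\alpha'}^{(m)}(\mathcal{K}\Vert\mathcal{L})$ by $D_{\alpha'}(\mathcal{K}\Vert\mathcal{L})$, yielding the second displayed bound.

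Since the whole argument is just substitution of hypotheses into an already-established inequality, there is essentially no obstacle. The only item that warrants a line of care is that the ratios appearing in the statement implicitly require $\widetilde{D}_{\alpha}(\mathcal{K}\Vert\mathcal{L})$ (respectively $D_{\alpha'}(\mathcal{K}\Vert\mathcal{L})$) to be positive and finite so that the division is meaningful; this is inherited from the corresponding assumptions in Proposition~\ref{prop:n-to-m-ch-box-trans-conv-bnd}, which in turn rests on the finiteness condition $D_{\max}(\mathcal{N}^{0}\Vert\mathcal{M})<\infty$ that enters Propositions~\ref{prop:pseudo-continuity-channel-sand} and~\ref{prop:pseudo-continuity-channel-petz} and is propagated through the data processing inequality for channel divergences under superchannels. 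No additional estimates are needed beyond what has already been established.
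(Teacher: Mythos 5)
Your proposal is correct and matches the paper's own proof exactly: the Corollary is obtained as a direct consequence of Proposition~\ref{prop:n-to-m-ch-box-trans-conv-bnd} (whose hypotheses coincide with the existence of the $(n,m,\varepsilon)$ protocol) together with substitution of the assumed additivity relations. The extra remark about finiteness of the denominators is a reasonable, if unessential, addition.
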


\begin{proof}
This is a direct consequence of
Proposition~\ref{prop:n-to-m-ch-box-trans-conv-bnd} and the additivity
relations assumed in
\eqref{eq:additivity-sandwiched-Renyi}--\eqref{eq:additivity-sandwiched-Renyi-2}
and \eqref{eq:additivity-Petz-Renyi}--\eqref{eq:additivity-Petz-Renyi-2}.
\end{proof}

\begin{remark}
The desired additivity relations in
\eqref{eq:additivity-sandwiched-Renyi}--\eqref{eq:additivity-sandwiched-Renyi-2}
and \eqref{eq:additivity-Petz-Renyi}--\eqref{eq:additivity-Petz-Renyi-2}\ hold
for channel boxes that are classical--quantum or environment seizable
\cite{BHKW18}. Thus, by applying reasoning similar to that given in \cite[Appendix~J]{WW19states}, we conclude the following strong converse bound for these channel boxes:
\begin{equation}
\widetilde{R}^p((\mathcal{N},\mathcal{M})\rightarrow(\mathcal{K},\mathcal{L}%
)) \leq 
  \frac{D(\mathcal{N}\Vert\mathcal{M})}{D(\mathcal{K}\Vert\mathcal{L})}.
\end{equation}
The lower bound (achievability)
\begin{equation}
R^p((\mathcal{N},\mathcal{M})\rightarrow(\mathcal{K},\mathcal{L}%
)) \geq 
  \frac{D(\mathcal{N}\Vert\mathcal{M})}{D(\mathcal{K}\Vert\mathcal{L})}.
\end{equation}
follows from combining a distillation  protocol with a dilution protocol for these channel boxes, as well as reasoning similar to that given in \cite[Appendix~J]{WW19states}, and along with the fact that the rates $D(\mathcal{N}\Vert\mathcal{M})$ and $D(\mathcal{K}\Vert\mathcal{L})$ are achievable for these tasks and these channel boxes.
Thus, 
the asymptotic parallel box transformation problem
has a simple solution for these channel boxes.
\end{remark}

\section{Bounding the smooth channel max-relative entropy in terms of channel
relative entropies}

\label{app:smooth-max-lower-bnds}

In this appendix, we provide lower bounds for the smooth channel max-relative
entropy in terms of the channel sandwiched and Petz--R\'enyi relative entropies.

\begin{proposition}
Let $\mathcal{N}_{A\rightarrow B}$ and $\mathcal{M}_{A\rightarrow B}$ be
quantum channels. Then the following bound holds for all $\alpha\in
\lbrack1/2,1)$ and $\varepsilon\in\lbrack0,1)$:%
\begin{equation}
D_{\max}^{\varepsilon}(\mathcal{N}\Vert\mathcal{M})\geq\widetilde{D}_{\alpha
}(\mathcal{N}\Vert\mathcal{M})+\frac{2\alpha}{\alpha-1}\log_{2}\!\left(
\frac{1}{1-\varepsilon}\right)  . \label{eq:smooth-dmax-lower-bnd-alpha}%
\end{equation}

\end{proposition}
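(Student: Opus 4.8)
The plan is to deduce this from the channel ``pseudo-continuity'' bound of Proposition~\ref{prop:pseudo-continuity-channel-sand} together with the Fuchs--van de Graaf relation between fidelity and trace distance. First I would dispose of the trivial case: if $D_{\max}^{\varepsilon}(\mathcal{N}\Vert\mathcal{M})=\infty$ there is nothing to prove, so assume it is finite. Then by the definition \eqref{eq:ch-smooth-max-rel-ent} of $D_{\max}^{\varepsilon}$ it suffices to show that for \emph{every} channel $\widetilde{\mathcal{N}}$ with $\widetilde{\mathcal{N}}\approx_{\varepsilon}\mathcal{N}$ and $D_{\max}(\widetilde{\mathcal{N}}\Vert\mathcal{M})<\infty$ one has
\[
D_{\max}(\widetilde{\mathcal{N}}\Vert\mathcal{M})\geq\widetilde{D}_{\alpha}(\mathcal{N}\Vert\mathcal{M})+\frac{2\alpha}{\alpha-1}\log_{2}\!\left(\frac{1}{1-\varepsilon}\right),
\]
and then take the infimum over all such $\widetilde{\mathcal{N}}$.

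Fix $\alpha\in(1/2,1)$ and set $\beta=\beta(\alpha)=\alpha/(2\alpha-1)>1$. Since the sandwiched R\'enyi relative entropy of states satisfies $\widetilde{D}_{\beta}(\rho\Vert\sigma)\leq D_{\max}(\rho\Vert\sigma)$ for all $\beta\in[1/2,\infty)$, taking a supremum over pure input states gives $D_{\max}(\widetilde{\mathcal{N}}\Vert\mathcal{M})\geq\widetilde{D}_{\beta}(\widetilde{\mathcal{N}}\Vert\mathcal{M})$. Next I would apply Proposition~\ref{prop:pseudo-continuity-channel-sand} with $\mathcal{N}^{0}=\widetilde{\mathcal{N}}$ and $\mathcal{N}^{1}=\mathcal{N}$ (its hypothesis $D_{\max}(\mathcal{N}^{0}\Vert\mathcal{M})<\infty$ being precisely our standing assumption on $\widetilde{\mathcal{N}}$), obtaining
\[
\widetilde{D}_{\beta}(\widetilde{\mathcal{N}}\Vert\mathcal{M})\geq\widetilde{D}_{\alpha}(\mathcal{N}\Vert\mathcal{M})+\frac{\alpha}{1-\alpha}\log_{2}F(\widetilde{\mathcal{N}},\mathcal{N}).
\]
To bound the channel fidelity I would use $\sqrt{F(\widetilde{\mathcal{N}},\mathcal{N})}\geq 1-\frac{1}{2}\Vert\widetilde{\mathcal{N}}-\mathcal{N}\Vert_{\diamond}\geq 1-\varepsilon$, where the first step is the Fuchs--van de Graaf inequality applied to every input state followed by an infimum (the same inequality invoked in the proof of Proposition~\ref{prop:n-to-m-ch-box-trans-conv-bnd}); hence $\log_{2}F(\widetilde{\mathcal{N}},\mathcal{N})\geq 2\log_{2}(1-\varepsilon)$. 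Since $\alpha/(1-\alpha)>0$, chaining the three displays and rewriting $\frac{2\alpha}{1-\alpha}\log_{2}(1-\varepsilon)=\frac{2\alpha}{\alpha-1}\log_{2}(1/(1-\varepsilon))$ yields the claim for $\alpha\in(1/2,1)$ after the infimum over $\widetilde{\mathcal{N}}$.

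For the endpoint $\alpha=1/2$ I would argue by a limit: for every $\alpha'\in(1/2,1)$, monotonicity of the sandwiched R\'enyi relative entropy in its order (applied statewise, then taking suprema over inputs) gives $\widetilde{D}_{\alpha'}(\mathcal{N}\Vert\mathcal{M})\geq\widetilde{D}_{1/2}(\mathcal{N}\Vert\mathcal{M})$, so the already-proved inequality implies $D_{\max}^{\varepsilon}(\mathcal{N}\Vert\mathcal{M})\geq\widetilde{D}_{1/2}(\mathcal{N}\Vert\mathcal{M})+\frac{2\alpha'}{\alpha'-1}\log_{2}(1/(1-\varepsilon))$; letting $\alpha'\to(1/2)^{+}$ and using $\frac{2\alpha'}{\alpha'-1}\to -2=\frac{2\cdot(1/2)}{(1/2)-1}$ recovers \eqref{eq:smooth-dmax-lower-bnd-alpha} at $\alpha=1/2$. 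The argument is essentially an assembly of previously established ingredients; the only points requiring mild care are the finiteness hypothesis of Proposition~\ref{prop:pseudo-continuity-channel-sand} and keeping the (squared) normalization of the fidelity consistent in the Fuchs--van de Graaf step, so I do not anticipate a genuine obstacle.
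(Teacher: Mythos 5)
Your proposal is correct and follows essentially the same route as the paper's proof: lower bound $D_{\max}(\widetilde{\mathcal{N}}\Vert\mathcal{M})$ by $\widetilde{D}_{\beta}(\widetilde{\mathcal{N}}\Vert\mathcal{M})$, apply Proposition~\ref{prop:pseudo-continuity-channel-sand} with $\mathcal{N}^{0}=\widetilde{\mathcal{N}}$, $\mathcal{N}^{1}=\mathcal{N}$, bound the channel fidelity via $1-\sqrt{F}\leq\frac{1}{2}\Vert\widetilde{\mathcal{N}}-\mathcal{N}\Vert_{\diamond}\leq\varepsilon$, take the infimum over $\widetilde{\mathcal{N}}$, and handle $\alpha=1/2$ by a limit. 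Your explicit treatment of the finiteness hypothesis and of the $\alpha=1/2$ endpoint only makes more precise what the paper leaves implicit.
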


\begin{proof}
First fix $\alpha\in(1/2,1)$. Then pick $\widetilde{\mathcal{N}}$ to be a
channel such that $\widetilde{\mathcal{N}}\approx_{\varepsilon}\mathcal{N}$.
We find for $\beta:=\alpha/(2\alpha-1)$ that%
\begin{align}
&  D_{\max}(\widetilde{\mathcal{N}}\Vert\mathcal{M})\nonumber\\
&  \geq\widetilde{D}_{\beta}(\widetilde{\mathcal{N}}\Vert\mathcal{M})\\
&  \geq\widetilde{D}_{\alpha}(\mathcal{N}\Vert\mathcal{M})+\frac{\alpha
}{1-\alpha}\log_{2}F(\widetilde{\mathcal{N}},\mathcal{N})\\
&  =\widetilde{D}_{\alpha}(\mathcal{N}\Vert\mathcal{M})+\frac{2\alpha
}{1-\alpha}\log_{2}\sqrt{F}(\widetilde{\mathcal{N}},\mathcal{N})\\
&  \geq\widetilde{D}_{\alpha}(\mathcal{N}\Vert\mathcal{M})+\frac{2\alpha
}{1-\alpha}\log_{2}(1-\varepsilon).
\end{align}
The first inequality follows from the fact that the sandwiched R\'enyi relative entropies are monotone \cite{muller2013quantum} and $\lim_{\alpha \to \infty} \widetilde{D}_\alpha = D_{\max}$ \cite{muller2013quantum}. The second inequality follows
from Proposition~\ref{prop:pseudo-continuity-channel-sand}. The final
inequality follows because%
\begin{equation}
1-\sqrt{F}(\widetilde{\mathcal{N}},\mathcal{N})\leq\frac{1}{2}\left\Vert
\widetilde{\mathcal{N}}-\mathcal{N}\right\Vert _{\diamond}.
\end{equation}
Since the bound holds for an arbitrary channel $\widetilde{\mathcal{N}}$
satisfying $\widetilde{\mathcal{N}}\approx_{\varepsilon}\mathcal{N}$, we
conclude \eqref{eq:smooth-dmax-lower-bnd-alpha}. The inequality for
$\alpha=1/2$ follows by taking a limit.
\end{proof}

Another lower bound on the smooth channel max-relative entropy is as follows:

\begin{proposition}
Let $\mathcal{N}_{A\rightarrow B}$ and $\mathcal{M}_{A\rightarrow B}$ be
quantum channels. Then the following bound holds for all $\alpha\in[0,1)$ and
$\varepsilon\in\lbrack0,1)$:%
\begin{equation}
D_{\max}^{\varepsilon}(\mathcal{N}\Vert\mathcal{M})\geq D_{\alpha}%
(\mathcal{N}\Vert\mathcal{M})+\frac{2}{\alpha-1}\log_{2}\!\left(  \frac
{1}{1-\varepsilon}\right)  . \label{eq:smooth-dmax-lower-bnd-alpha-petz}%
\end{equation}

\end{proposition}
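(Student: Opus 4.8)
The plan is to mirror the proof of the companion bound \eqref{eq:smooth-dmax-lower-bnd-alpha}, replacing the sandwiched R\'enyi pseudo-continuity estimate (Proposition~\ref{prop:pseudo-continuity-channel-sand}) by its Petz--R\'enyi analogue (Proposition~\ref{prop:pseudo-continuity-channel-petz}). First I would fix $\alpha\in(0,1)$ and set $\beta:=\beta(\alpha)=2-\alpha\in(1,2)$. If there is no channel $\widetilde{\mathcal{N}}_{A\rightarrow B}$ with $\widetilde{\mathcal{N}}\approx_{\varepsilon}\mathcal{N}$ and $D_{\max}(\widetilde{\mathcal{N}}\Vert\mathcal{M})<\infty$, then $D_{\max}^{\varepsilon}(\mathcal{N}\Vert\mathcal{M})=\infty$ and \eqref{eq:smooth-dmax-lower-bnd-alpha-petz} holds trivially; otherwise, pick any such $\widetilde{\mathcal{N}}$ and run the chain
\begin{align}
D_{\max}(\widetilde{\mathcal{N}}\Vert\mathcal{M}) & \geq D_{\beta}(\widetilde{\mathcal{N}}\Vert\mathcal{M}) \\
& \geq D_{\alpha}(\mathcal{N}\Vert\mathcal{M}) + \frac{2}{1-\alpha}\log_{2}\!\left[1-\frac{1}{2}\left\Vert \widetilde{\mathcal{N}}-\mathcal{N}\right\Vert _{\diamond}\right] \\
& \geq D_{\alpha}(\mathcal{N}\Vert\mathcal{M}) + \frac{2}{1-\alpha}\log_{2}(1-\varepsilon).
\end{align}
Here the first inequality is the statement that the Petz--R\'enyi channel divergences of order in $(1,2]$ are dominated by $D_{\max}$, the second is Proposition~\ref{prop:pseudo-continuity-channel-petz} applied with $\mathcal{N}^{0}=\widetilde{\mathcal{N}}$ and $\mathcal{N}^{1}=\mathcal{N}$, and the third uses $\frac{1}{2}\Vert\widetilde{\mathcal{N}}-\mathcal{N}\Vert_{\diamond}\leq\varepsilon<1$ together with monotonicity of $\log_{2}$ and $\frac{2}{1-\alpha}>0$. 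Taking the infimum of the left-hand side over all admissible $\widetilde{\mathcal{N}}$ gives $D_{\max}^{\varepsilon}(\mathcal{N}\Vert\mathcal{M})$, and rewriting $\frac{2}{1-\alpha}\log_{2}(1-\varepsilon)=\frac{2}{\alpha-1}\log_{2}(1/(1-\varepsilon))$ yields \eqref{eq:smooth-dmax-lower-bnd-alpha-petz} for $\alpha\in(0,1)$.

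For the first inequality I would argue at the level of states that $D_{\beta}(\rho\Vert\sigma)\leq D_{\max}(\rho\Vert\sigma)$: since the Petz--R\'enyi divergences are non-decreasing in the order, it suffices to treat the extreme case $\beta=2$, where $2^{D_{2}(\rho\Vert\sigma)}=\operatorname{Tr}[\rho^{2}\sigma^{-1}]=\operatorname{Tr}[X^{2}\sigma]$ with $X:=\sigma^{-1/2}\rho\sigma^{-1/2}\leq 2^{D_{\max}(\rho\Vert\sigma)}I$, so that $X^{2}\leq 2^{D_{\max}(\rho\Vert\sigma)}X$ and $\operatorname{Tr}[X^{2}\sigma]\leq 2^{D_{\max}(\rho\Vert\sigma)}\operatorname{Tr}[X\sigma]=2^{D_{\max}(\rho\Vert\sigma)}$. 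Optimizing over channel inputs lifts this to $D_{\beta}(\widetilde{\mathcal{N}}\Vert\mathcal{M})\leq D_{\max}(\widetilde{\mathcal{N}}\Vert\mathcal{M})$, since $D_{\max}$ for channels dominates $D_{\max}$ evaluated on any fixed input state. For the endpoint $\alpha=0$, I would combine the inequality just proved with the order-monotonicity bound $D_{\alpha}(\mathcal{N}\Vert\mathcal{M})\geq D_{0}(\mathcal{N}\Vert\mathcal{M})$ to get $D_{\max}^{\varepsilon}(\mathcal{N}\Vert\mathcal{M})\geq D_{0}(\mathcal{N}\Vert\mathcal{M})+\frac{2}{\alpha-1}\log_{2}(1/(1-\varepsilon))$ for every $\alpha\in(0,1)$, and then let $\alpha\to 0^{+}$, exactly as the $\alpha=1/2$ endpoint was dispatched in the preceding proposition (alternatively invoking the convergence $\lim_{\alpha\to0^{+}}D_{\alpha}(\mathcal{N}\Vert\mathcal{M})=D_{0}(\mathcal{N}\Vert\mathcal{M})$ noted in Appendix~\ref{app:limits-smooth-min-max-to-exact}).

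The main obstacle is the first inequality $D_{\beta}\leq D_{\max}$: unlike the sandwiched family, for which the paper has already recorded monotonicity and $\lim_{\alpha\to\infty}\widetilde{D}_{\alpha}=D_{\max}$, the analogous domination for the Petz--R\'enyi family fails for large orders, so it must be stated carefully and restricted to $\beta\in(1,2]$ --- which is precisely why the substitution $\beta=2-\alpha$ with $\alpha\in(0,1)$ is the correct one. Everything else is a routine transcription of the companion argument, with attention only to the signs of $\frac{2}{\alpha-1}$ and $\log_{2}(1-\varepsilon)$ and to the degenerate case $D_{\max}^{\varepsilon}(\mathcal{N}\Vert\mathcal{M})=\infty$.
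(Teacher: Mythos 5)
Your argument is correct and is essentially the paper's own proof: fix $\alpha\in(0,1)$, set $\beta=2-\alpha$, chain $D_{\max}(\widetilde{\mathcal{N}}\Vert\mathcal{M})\geq D_{\beta}(\widetilde{\mathcal{N}}\Vert\mathcal{M})\geq D_{\alpha}(\mathcal{N}\Vert\mathcal{M})+\frac{2}{1-\alpha}\log_{2}(1-\varepsilon)$ via Proposition~\ref{prop:pseudo-continuity-channel-petz}, take the infimum over $\widetilde{\mathcal{N}}\approx_{\varepsilon}\mathcal{N}$, and handle $\alpha=0$ by a limit. The only differences are cosmetic: you prove $D_{2}\leq D_{\max}$ directly instead of citing it, and you are slightly more careful about the finiteness hypothesis needed for the pseudo-continuity lemma and about the constant (the paper's final displayed line has a stray factor of $\alpha$ that your version correctly omits).
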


\begin{proof}
First fix $\alpha\in(0,1)$. Then pick $\widetilde{\mathcal{N}}$ to be a
channel such that $\widetilde{\mathcal{N}}\approx_{\varepsilon}\mathcal{N}$.
We find for $\beta:=2-\alpha$ that%
\begin{align}
&  D_{\max}(\widetilde{\mathcal{N}}\Vert\mathcal{M})\nonumber\\
&  \geq D_{\beta}(\widetilde{\mathcal{N}}\Vert\mathcal{M})\\
&  \geq D_{\alpha}(\mathcal{N}\Vert\mathcal{M})+\frac{2}{1-\alpha}\log
_{2}\left[  1-\frac{1}{2}\left\Vert \widetilde{\mathcal{N}}-\mathcal{N}%
\right\Vert _{\diamond}\right] \\
&  \geq D_{\alpha}(\mathcal{N}\Vert\mathcal{M})+\frac{2\alpha}{1-\alpha}%
\log_{2}(1-\varepsilon).
\end{align}
The first inequality follows from the fact that $D_{\max} \geq D_2$ \cite{JRSWW16,WW19states} and the Petz--R\'enyi relative entropies are monotone with respect to $\beta$ \cite{TCR09}. The second inequality follows
from Proposition~\ref{prop:pseudo-continuity-channel-petz}. Since the bound
holds for an arbitrary channel $\widetilde{\mathcal{N}}$ satisfying
$\widetilde{\mathcal{N}}\approx_{\varepsilon}\mathcal{N}$, we conclude
\eqref{eq:smooth-dmax-lower-bnd-alpha-petz}. The inequality for $\alpha=0$
follows by taking a limit.
\end{proof}

\section{Quantum strategy and sequential channel box transformations}

\label{app:q-strat-trans}

\subsection{Bounds for general $n$-round strategy box transformations}

\label{app:bnds-strat-trans}

In this appendix, we provide bounds for general $n$-round strategy box
transformations. These bounds are similar in some regards to those given in
Appendix~\ref{app:gen-box-trans-bnds}, following essentially the same line of
reasoning to establish them.

\begin{proposition}
\label{prop:pseudo-continuity-channel-sand-seq}Let $\mathcal{N}^{(n)}$,
$\mathcal{L}^{(n)}$, and $\mathcal{M}^{(n)}$ be quantum strategies such that
$D_{\max}(\mathcal{N}^{(n)}\Vert\mathcal{M}^{(n)})<\infty$. Then for
$\alpha\in(1/2,1)$ and $\beta:=\beta(\alpha)=\alpha/\left(  2\alpha-1\right)
>1$, the following inequality holds%
\begin{multline}
\widetilde{D}_{\beta}(\mathcal{N}^{(n)}\Vert\mathcal{M}^{(n)})-\widetilde
{D}_{\alpha}(\mathcal{L}^{(n)}\Vert\mathcal{M}^{(n)}%
)\label{eq:pseudo-continuity-seq-channel-sandwiched}\\
\geq\frac{\alpha}{1-\alpha}\log_{2}F(\mathcal{N}^{(n)},\mathcal{L}^{(n)}),
\end{multline}
where $F(\mathcal{N}^{(n)},\mathcal{L}^{(n)})$ is the strategy fidelity of
\cite{Gutoski2018fidelityofquantum}.
\end{proposition}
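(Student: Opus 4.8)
The plan is to transcribe the proof of Proposition~\ref{prop:pseudo-continuity-channel-sand}, with quantum co-strategies (testers) playing the role that pure bipartite input states played there. The starting point is the state inequality \cite[Lemma~1]{WW19states}, which asserts that for states $\rho_{0}$, $\rho_{1}$, and $\sigma$ with $D_{\max}(\rho_{0}\Vert\sigma)<\infty$,
\[
\widetilde{D}_{\beta}(\rho_{0}\Vert\sigma)-\widetilde{D}_{\alpha}(\rho_{1}\Vert\sigma)\geq\frac{\alpha}{1-\alpha}\log_{2}F(\rho_{0},\rho_{1}).
\]

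First I would fix an arbitrary co-strategy $\mathcal{T}$, consisting of an input state $\rho_{R_{1}A_{1}}$ together with testing channels $\{\mathcal{A}^{i}_{R_{i}B_{i}\rightarrow R_{i+1}A_{i+1}}\}_{i=1}^{n-1}$. Pairing $\mathcal{T}$ with each of the three strategies $\mathcal{N}^{(n)}$, $\mathcal{L}^{(n)}$, $\mathcal{M}^{(n)}$ (via the recipe in \eqref{eq:state-rho-seq-ch-1}--\eqref{eq:state-sig-seq-ch-2}) produces output states on systems $R_{n}B_{n}$, which I denote $\rho^{\mathcal{N}}$, $\rho^{\mathcal{L}}$, and $\rho^{\mathcal{M}}$. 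Since $D_{\max}(\mathcal{N}^{(n)}\Vert\mathcal{M}^{(n)})<\infty$ and $D_{\max}(\rho^{\mathcal{N}}\Vert\rho^{\mathcal{M}})\leq D_{\max}(\mathcal{N}^{(n)}\Vert\mathcal{M}^{(n)})$ by Definition~\ref{def:gen-strat-div}, the state inequality applies with $\rho_{0}=\rho^{\mathcal{N}}$, $\rho_{1}=\rho^{\mathcal{L}}$, $\sigma=\rho^{\mathcal{M}}$, giving
\[
\widetilde{D}_{\beta}(\rho^{\mathcal{N}}\Vert\rho^{\mathcal{M}})\geq\widetilde{D}_{\alpha}(\rho^{\mathcal{L}}\Vert\rho^{\mathcal{M}})+\frac{\alpha}{1-\alpha}\log_{2}F(\rho^{\mathcal{N}},\rho^{\mathcal{L}}).
\]
Next I would bound the left side from above by $\widetilde{D}_{\beta}(\mathcal{N}^{(n)}\Vert\mathcal{M}^{(n)})$, since $\mathcal{T}$ is one particular co-strategy, and bound the fidelity term using $F(\rho^{\mathcal{N}},\rho^{\mathcal{L}})\geq F(\mathcal{N}^{(n)},\mathcal{L}^{(n)})$, which holds because the strategy fidelity of \cite{Gutoski2018fidelityofquantum} is the infimum of the state fidelity over co-strategies (equivalently, $-\log_{2}F$ of strategies is the generalized strategy divergence of Definition~\ref{def:gen-strat-div} with $\mathbf{D}=-\log_{2}F$), together with $\alpha/(1-\alpha)>0$ for $\alpha\in(1/2,1)$. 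This yields, for every co-strategy $\mathcal{T}$,
\[
\widetilde{D}_{\beta}(\mathcal{N}^{(n)}\Vert\mathcal{M}^{(n)})\geq\widetilde{D}_{\alpha}(\rho^{\mathcal{L}}\Vert\rho^{\mathcal{M}})+\frac{\alpha}{1-\alpha}\log_{2}F(\mathcal{N}^{(n)},\mathcal{L}^{(n)}).
\]
Finally, taking the supremum over all co-strategies $\mathcal{T}$ on the right-hand side converts $\widetilde{D}_{\alpha}(\rho^{\mathcal{L}}\Vert\rho^{\mathcal{M}})$ into $\widetilde{D}_{\alpha}(\mathcal{L}^{(n)}\Vert\mathcal{M}^{(n)})$, establishing \eqref{eq:pseudo-continuity-seq-channel-sandwiched}.

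The routine calculations here are negligible; the only points that deserve care are the identification $F(\mathcal{N}^{(n)},\mathcal{L}^{(n)})=\inf_{\mathcal{T}}F(\rho^{\mathcal{N}},\rho^{\mathcal{L}})$ and the bookkeeping that the \emph{same} co-strategy $\mathcal{T}$ is used for all three strategies throughout, so that the ordering of the suprema and infimum is legitimate. Once these definitional matters are in place, the argument is a direct transcription of the channel proof. An analogous statement replacing the sandwiched R\'enyi divergences by the Petz--R\'enyi divergences and the fidelity term by the normalized strategy distance follows in exactly the same way, starting instead from \cite[Lemma~4]{WW19states}.
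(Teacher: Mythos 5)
Your proposal is correct and follows essentially the same route as the paper's proof: apply the state-level inequality of \cite[Lemma~1]{WW19states} to the three output states generated by a common co-strategy, lower-bound the fidelity term by the strategy fidelity (the infimum over co-strategies), upper-bound the left-hand side by the strategy divergence, and then take the supremum over co-strategies to recover $\widetilde{D}_{\alpha}(\mathcal{L}^{(n)}\Vert\mathcal{M}^{(n)})$. Your explicit bookkeeping of the single shared co-strategy and of the finiteness condition is merely a more careful spelling-out of what the paper does implicitly.
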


\begin{proof}
Recall the following inequality from \cite[Lemma~1]{WW19states} for states
$\rho_{0}$, $\rho_{1}$, and $\sigma$:%
\begin{equation}
\widetilde{D}_{\beta}(\rho_{0}\Vert\sigma)-\widetilde{D}_{\alpha}(\rho
_{1}\Vert\sigma)\geq\frac{\alpha}{1-\alpha}\log_{2}F(\rho_{0},\rho_{1}).
\end{equation}
Applying the above inequality for states, and defining $\tau_{R_{n}B_{n}}$
from $\mathcal{L}^{(n)}$ in an analogous fashion to $\rho_{R_{n}B_{n}}$ and
$\sigma_{R_{n}B_{n}}$ in \eqref{eq:state-rho-seq-ch-1}\ and
\eqref{eq:state-sig-seq-ch-2}, respectively, we find that%
\begin{multline}
\widetilde{D}_{\beta}(\rho_{R_{n}B_{n}}\Vert\sigma_{R_{n}B_{n}})\\
\geq\widetilde{D}_{\alpha}(\tau_{R_{n}B_{n}}\Vert\sigma_{R_{n}B_{n}})\\
+\frac{\alpha}{1-\alpha}\log_{2}F(\rho_{R_{n}B_{n}},\tau_{R_{n}B_{n}}).
\end{multline}
Taking a supremum over all co-strategies $\{\rho_{R_{1}A_{1}},\{\mathcal{A}%
_{R_{i}B_{i}\rightarrow R_{i+1}A_{i+1}}^{i}\}_{i=1}^{n-1}\}$ on the left-hand
side, and an infimum on the right-hand side, we find that%
\begin{multline}
\widetilde{D}_{\beta}(\mathcal{N}^{(n)}\Vert\mathcal{M}^{(n)})\geq
\widetilde{D}_{\alpha}(\tau_{R_{n}B_{n}}\Vert\sigma_{R_{n}B_{n}})\\
+\frac{\alpha}{1-\alpha}\log_{2}F(\mathcal{N}^{(n)},\mathcal{L}^{(n)}).
\end{multline}
Since the above inequality holds for all co-strategies $\{\rho_{R_{1}A_{1}%
},\{\mathcal{A}_{R_{i}B_{i}\rightarrow R_{i+1}A_{i+1}}^{i}\}_{i=1}^{n-1}\}$,
we finally take another supremum to conclude \eqref{eq:pseudo-continuity-seq-channel-sandwiched}.
\end{proof}

\begin{proposition}
\label{prop:pseudo-continuity-channel-petz-seq}Let $\mathcal{N}^{(n)}$,
$\mathcal{L}^{(n)}$, and $\mathcal{M}^{(n)}$ be quantum strategies such that
$D_{\max}(\mathcal{N}^{(n)}\Vert\mathcal{M}^{(n)})<\infty$. Then for
$\alpha\in(0,1)$ and $\beta:=\beta(\alpha)=2-\alpha>1$, the following
inequality holds%
\begin{multline}
D_{\beta}(\mathcal{N}^{(n)}\Vert\mathcal{M}^{(n)})-D_{\alpha}(\mathcal{L}%
^{(n)}\Vert\mathcal{M}^{(n)})\\
\geq\frac{2}{1-\alpha}\log_{2}\!\left[  1-\frac{1}{2}\left\Vert \mathcal{N}%
^{(n)}-\mathcal{L}^{(n)}\right\Vert _{\diamond n}\right]  ,
\end{multline}
where $\left\Vert \mathcal{N}^{(n)}-\mathcal{L}^{(n)}\right\Vert _{\diamond
n}$ denotes the quantum strategy distance of \cite{CDP08a,CDP09,G12}.
\end{proposition}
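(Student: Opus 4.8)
The plan is to follow the same route as the proof of Proposition~\ref{prop:pseudo-continuity-channel-sand-seq}, but starting from the Petz--R\'enyi input inequality of \cite[Lemma~4]{WW19states}: for states $\rho_{0}$, $\rho_{1}$, and $\sigma$,
\begin{equation}
D_{\beta}(\rho_{0}\Vert\sigma)-D_{\alpha}(\rho_{1}\Vert\sigma)\geq\frac{2}{1-\alpha}\log_{2}\!\left[1-\frac{1}{2}\left\Vert \rho_{0}-\rho_{1}\right\Vert _{1}\right].
\end{equation}
First I would fix an arbitrary co-strategy $\{\rho_{R_{1}A_{1}},\{\mathcal{A}_{R_{i}B_{i}\rightarrow R_{i+1}A_{i+1}}^{i}\}_{i=1}^{n-1}\}$ and let $\rho_{R_{n}B_{n}}$ and $\sigma_{R_{n}B_{n}}$ be the output states of running it against $\mathcal{N}^{(n)}$ and $\mathcal{M}^{(n)}$ as in \eqref{eq:state-rho-seq-ch-1}--\eqref{eq:state-sig-seq-ch-2}, and let $\tau_{R_{n}B_{n}}$ be the output state of running the same co-strategy against $\mathcal{L}^{(n)}$. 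Applying the displayed inequality with $\rho_{0}=\rho_{R_{n}B_{n}}$, $\rho_{1}=\tau_{R_{n}B_{n}}$, and $\sigma=\sigma_{R_{n}B_{n}}$ yields
\begin{equation}
D_{\beta}(\rho_{R_{n}B_{n}}\Vert\sigma_{R_{n}B_{n}})\geq D_{\alpha}(\tau_{R_{n}B_{n}}\Vert\sigma_{R_{n}B_{n}})+\frac{2}{1-\alpha}\log_{2}\!\left[1-\frac{1}{2}\left\Vert \rho_{R_{n}B_{n}}-\tau_{R_{n}B_{n}}\right\Vert _{1}\right].
\end{equation}

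The next step is to convert this state-level bound into the strategy-level bound by performing the appropriate optimizations. On the left, $D_{\beta}(\rho_{R_{n}B_{n}}\Vert\sigma_{R_{n}B_{n}})\leq D_{\beta}(\mathcal{N}^{(n)}\Vert\mathcal{M}^{(n)})$ by Definition~\ref{def:gen-strat-div} with $\mathbf{D}=D_{\beta}$. For the error term, I would use that $x\mapsto\log_{2}[1-x/2]$ is decreasing, that $2/(1-\alpha)>0$ for $\alpha\in(0,1)$, and that $\left\Vert \rho_{R_{n}B_{n}}-\tau_{R_{n}B_{n}}\right\Vert _{1}\leq\left\Vert \mathcal{N}^{(n)}-\mathcal{L}^{(n)}\right\Vert _{\diamond n}$, the latter being exactly the statement that the strategy distance of \cite{CDP08a,CDP09,G12} is the generalized strategy divergence of Definition~\ref{def:gen-strat-div} with underlying divergence $\frac{1}{2}\left\Vert\cdot\right\Vert_{1}$, i.e.\ the supremum of the output trace distance over all co-strategies. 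Hence the error term only decreases under this replacement, and for every co-strategy we obtain
\begin{equation}
D_{\beta}(\mathcal{N}^{(n)}\Vert\mathcal{M}^{(n)})\geq D_{\alpha}(\tau_{R_{n}B_{n}}\Vert\sigma_{R_{n}B_{n}})+\frac{2}{1-\alpha}\log_{2}\!\left[1-\frac{1}{2}\left\Vert \mathcal{N}^{(n)}-\mathcal{L}^{(n)}\right\Vert _{\diamond n}\right].
\end{equation}
Taking the supremum over all co-strategies of the term $D_{\alpha}(\tau_{R_{n}B_{n}}\Vert\sigma_{R_{n}B_{n}})$ identifies it with $D_{\alpha}(\mathcal{L}^{(n)}\Vert\mathcal{M}^{(n)})$ (Definition~\ref{def:gen-strat-div} with $\mathbf{D}=D_{\alpha}$), which gives the claim.

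The one step that requires care---and effectively the only obstacle---is the matching of the optimizations: one must verify that the single positive prefactor $2/(1-\alpha)$ together with the monotonicity of $\log_{2}[1-x/2]$ is exactly what allows the output trace distance to be replaced by the worst-case strategy distance, and that the co-strategy used to generate $\rho_{R_{n}B_{n}}$, $\tau_{R_{n}B_{n}}$, and $\sigma_{R_{n}B_{n}}$ is a common one, so that the strategy-distance characterization applies without modification. The hypothesis $D_{\max}(\mathcal{N}^{(n)}\Vert\mathcal{M}^{(n)})<\infty$ is used only to guarantee the support conditions needed for $D_{\beta}(\mathcal{N}^{(n)}\Vert\mathcal{M}^{(n)})$ to be finite and for the algebraic steps to be legitimate; beyond this, the argument is the same bookkeeping as in the channel-level Proposition~\ref{prop:pseudo-continuity-channel-petz}, carried out directly for strategies using the co-strategy optimizations of Definition~\ref{def:gen-strat-div} and the data processing the generalized divergences satisfy.
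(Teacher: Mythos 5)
Your proposal is correct and follows essentially the same route as the paper: the paper's proof simply invokes the state-level Petz--R\'enyi inequality of \cite[Lemma~4]{WW19states} and the same co-strategy optimization bookkeeping used for the sandwiched case in Proposition~\ref{prop:pseudo-continuity-channel-sand-seq}, which is exactly what you spell out, including the correct handling of the positive prefactor $2/(1-\alpha)$, the monotonicity of $\log_{2}[1-x/2]$, and the identification of the worst-case output trace distance with the strategy distance.
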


\begin{proof}
This is a consequence of the following inequality from \cite[Lemma~4]%
{WW19states}, for states $\rho_{0}$, $\rho_{1}$, and $\sigma$, and the same
reasoning as in the proof of
Proposition~\ref{prop:pseudo-continuity-channel-sand-seq}:%
\begin{equation}
D_{\beta}(\rho_{0}\Vert\sigma)-D_{\alpha}(\rho_{1}\Vert\sigma)\geq\frac
{2}{1-\alpha}\log_{2}\!\left[  1-\frac{1}{2}\left\Vert \rho_{0}-\rho
_{1}\right\Vert _{1}\right]  ,
\end{equation}
concluding the proof.
\end{proof}

\bigskip
We can then use the above bounds for quantum strategies to establish converse
bounds for general strategy box transformation protocols.

\begin{proposition}
\label{prop:n-shot-seq-box-trans-conv-bnd}Let $\mathcal{N}^{(n)}$,
$\mathcal{M}^{(n)}$, $\mathcal{K}^{(m)}$, $\mathcal{L}^{(m)}$ be quantum
strategies, and let $\Theta^{(n\rightarrow m)}$ be a physical transformation
such that $\Theta^{(n\rightarrow m)}(\mathcal{M}^{(n)})=\mathcal{L}^{(m)}$.
For $\alpha\in(1/2,1)$ and $\beta:=\beta(\alpha)=\alpha/\left(  2\alpha
-1\right)  >1$, the following inequality holds%
\begin{multline}
\widetilde{D}_{\beta}(\mathcal{N}^{(n)}\Vert\mathcal{M}^{(n)})\geq
\widetilde{D}_{\alpha}(\mathcal{K}^{(m)}\Vert\mathcal{L}^{(m)})\\
+\frac{\alpha}{1-\alpha}\log_{2}F(\Theta^{(n\rightarrow m)}(\mathcal{N}%
^{(n)}),\mathcal{K}^{(m)}).
\end{multline}
For $\alpha^{\prime}\in(0,1)$ and $\beta^{\prime}:=\beta^{\prime}%
(\alpha^{\prime}):=2-\alpha^{\prime}\in(1,2)$, the following inequality holds%
\begin{multline}
D_{\beta^{\prime}}(\mathcal{N}^{(n)}\Vert\mathcal{M}^{(n)})\geq D_{\alpha
^{\prime}}(\mathcal{K}^{(m)}\Vert\mathcal{L}^{(m)})\\
+\frac{2}{1-\alpha^{\prime}}\log_{2}\!\left[  1-\frac{1}{2}\left\Vert
\Theta^{(n\rightarrow m)}(\mathcal{N}^{(n)})-\mathcal{K}^{(m)}\right\Vert
_{\diamond}\right]  . \label{eq:n-shot-seq-converse-petz-gen-ch-box}%
\end{multline}

\end{proposition}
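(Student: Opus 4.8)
The plan is to prove Proposition~\ref{prop:n-shot-seq-box-trans-conv-bnd} in direct analogy with the parallel-case Proposition~\ref{prop:one-shot-box-trans-conv-bnd}, but with two substitutions: the superchannel data processing inequality \eqref{eq:DP-superch} is replaced by the data processing inequality for generalized quantum strategy divergences established in Theorem~\ref{thm:DP-strat-div}, and the channel pseudo-continuity bounds of Propositions~\ref{prop:pseudo-continuity-channel-sand} and~\ref{prop:pseudo-continuity-channel-petz} are replaced by their strategy counterparts, Propositions~\ref{prop:pseudo-continuity-channel-sand-seq} and~\ref{prop:pseudo-continuity-channel-petz-seq}. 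Throughout I work under the (interesting) case $D_{\max}(\mathcal{N}^{(n)}\Vert\mathcal{M}^{(n)})<\infty$, since otherwise the stated bounds are vacuous; this finiteness propagates where it is needed, because data processing for $D_{\max}$ gives $D_{\max}(\Theta^{(n\rightarrow m)}(\mathcal{N}^{(n)})\Vert\mathcal{L}^{(m)})=D_{\max}(\Theta^{(n\rightarrow m)}(\mathcal{N}^{(n)})\Vert\Theta^{(n\rightarrow m)}(\mathcal{M}^{(n)}))\leq D_{\max}(\mathcal{N}^{(n)}\Vert\mathcal{M}^{(n)})<\infty$.

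For the sandwiched-R\'enyi bound, I would instantiate Theorem~\ref{thm:DP-strat-div} with the underlying generalized divergence $\mathbf{D}$ taken to be the sandwiched R\'enyi relative entropy $\widetilde{D}_{\beta}$ (valid since $\beta>1\geq 1/2$ and $\widetilde{D}_{\beta}$ obeys state data processing), applied to the physical transformation $\Theta^{(n\rightarrow m)}$, and then use the hypothesis $\Theta^{(n\rightarrow m)}(\mathcal{M}^{(n)})=\mathcal{L}^{(m)}$:
\begin{align}
\widetilde{D}_{\beta}(\mathcal{N}^{(n)}\Vert\mathcal{M}^{(n)})
&\geq \widetilde{D}_{\beta}\!\left(\Theta^{(n\rightarrow m)}(\mathcal{N}^{(n)})\,\middle\Vert\,\Theta^{(n\rightarrow m)}(\mathcal{M}^{(n)})\right)\notag\\
&= \widetilde{D}_{\beta}\!\left(\Theta^{(n\rightarrow m)}(\mathcal{N}^{(n)})\,\middle\Vert\,\mathcal{L}^{(m)}\right).
\end{align}
Next I would apply Proposition~\ref{prop:pseudo-continuity-channel-sand-seq} with the assignment $\mathcal{N}^{(n)}\mapsto\Theta^{(n\rightarrow m)}(\mathcal{N}^{(n)})$, $\mathcal{L}^{(n)}\mapsto\mathcal{K}^{(m)}$, and $\mathcal{M}^{(n)}\mapsto\mathcal{L}^{(m)}$ (the proof of that proposition is insensitive to the number of rounds), which yields
\begin{equation}
\widetilde{D}_{\beta}\!\left(\Theta^{(n\rightarrow m)}(\mathcal{N}^{(n)})\,\middle\Vert\,\mathcal{L}^{(m)}\right)\geq \widetilde{D}_{\alpha}(\mathcal{K}^{(m)}\Vert\mathcal{L}^{(m)})+\frac{\alpha}{1-\alpha}\log_{2}F\!\left(\Theta^{(n\rightarrow m)}(\mathcal{N}^{(n)}),\mathcal{K}^{(m)}\right).
\end{equation}
Chaining the two displays gives the first inequality of the proposition. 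For the Petz-R\'enyi bound \eqref{eq:n-shot-seq-converse-petz-gen-ch-box} I would run the identical argument with $\widetilde{D}_{\beta}$ replaced by the Petz-R\'enyi relative entropy $D_{\beta'}$ (valid since $\beta'\in(1,2)$ lies in the data processing range) in Theorem~\ref{thm:DP-strat-div}, and with Proposition~\ref{prop:pseudo-continuity-channel-petz-seq} invoked in place of Proposition~\ref{prop:pseudo-continuity-channel-sand-seq}.

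I do not expect a serious obstacle: the argument is a two-step chaining of data processing with the strategy pseudo-continuity lemmas, exactly mirroring the parallel case. The only points that require genuine care are (i) confirming that the strategy divergences $\widetilde{D}_{\beta}(\mathcal{N}^{(n)}\Vert\mathcal{M}^{(n)})$ and $D_{\beta'}(\mathcal{N}^{(n)}\Vert\mathcal{M}^{(n)})$, defined as instances of Definition~\ref{def:gen-strat-div}, satisfy the data processing inequality of Theorem~\ref{thm:DP-strat-div}—this follows immediately because their state-level counterparts satisfy \eqref{eq:DP-gen-div} for the relevant R\'enyi parameters—and (ii) tracking the finiteness condition required by Propositions~\ref{prop:pseudo-continuity-channel-sand-seq} and~\ref{prop:pseudo-continuity-channel-petz-seq}, which, as noted above, is inherited from $D_{\max}(\mathcal{N}^{(n)}\Vert\mathcal{M}^{(n)})<\infty$ via data processing for $D_{\max}$.
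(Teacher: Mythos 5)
Your proposal is correct and follows essentially the same route as the paper's proof: apply the strategy data-processing inequality (Theorem~\ref{thm:DP-strat-div}) with $\widetilde{D}_{\beta}$ (resp.\ $D_{\beta'}$), use $\Theta^{(n\rightarrow m)}(\mathcal{M}^{(n)})=\mathcal{L}^{(m)}$, and then invoke Proposition~\ref{prop:pseudo-continuity-channel-sand-seq} (resp.\ Proposition~\ref{prop:pseudo-continuity-channel-petz-seq}) with first argument $\Theta^{(n\rightarrow m)}(\mathcal{N}^{(n)})$. Your explicit tracking of the finiteness condition via data processing for $D_{\max}$ is a small extra precaution that the paper leaves implicit, but it changes nothing essential.
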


\begin{proof}
As a consequence of the data processing inequality for the quantum strategy
divergence with respect to physical transformations
(Theorem~\ref{thm:DP-strat-div}), we find that%
\begin{align}
&  \widetilde{D}_{\beta}(\mathcal{N}^{(n)}\Vert\mathcal{M}^{(n)})\nonumber\\
&  \geq\widetilde{D}_{\beta}(\Theta^{(n\rightarrow m)}(\mathcal{N}^{(n)}%
)\Vert\Theta^{(n\rightarrow m)}(\mathcal{M}^{(n)}))\\
&  =\widetilde{D}_{\beta}(\Theta^{(n\rightarrow m)}(\mathcal{N}^{(n)}%
)\Vert\mathcal{L}^{(m)})\\
&  \geq\widetilde{D}_{\alpha}(\mathcal{K}^{(m)}\Vert\mathcal{L}^{(m)}%
)\nonumber\\
&  \qquad+\frac{\alpha}{1-\alpha}\log_{2}F(\Theta^{(n\rightarrow
m)}(\mathcal{N}^{(n)}),\mathcal{K}^{(m)}),
\end{align}
where the last inequality follows from
Proposition~\ref{prop:pseudo-continuity-channel-sand-seq}.

The inequality in \eqref{eq:n-shot-seq-converse-petz-gen-ch-box}\ follows
similarly from data processing but then using
Proposition~\ref{prop:pseudo-continuity-channel-petz-seq}.
\end{proof}

We can now use these bounds to establish converse bounds on the rate at which
it is possible to convert the quantum strategy box $(\mathcal{N}%
^{(n)},\mathcal{M}^{(n)})$ to the strategy box $(\mathcal{K}^{(m)}%
,\mathcal{L}^{(m)})$.

\begin{proposition}
\label{prop:n-to-m-seq-ch-box-trans-conv-bnd}Let quantum strategies
$\mathcal{N}^{(n)}$, $\mathcal{M}^{(n)}$, $\mathcal{K}^{(m)}$, $\mathcal{L}%
^{(m)}$ be given and suppose that there exists an $(n,m,\varepsilon)$ strategy
box transformation protocol (i.e., a physical transformation $\Theta
^{(n\rightarrow m)}$ such that $\Theta^{(n\rightarrow m)}(\mathcal{N}%
^{(n)})\approx_{\varepsilon}\mathcal{K}^{(m)}$ and $\Theta^{(n\rightarrow
m)}(\mathcal{M}^{(n)})=\mathcal{L}^{(m)}$). Then for $\alpha\in(1/2,1)$ and
$\beta:=\beta(\alpha)=\alpha/\left(  2\alpha-1\right)  >1$, the following
bound holds%
\begin{multline}
\frac{\widetilde{D}_{\beta}(\mathcal{N}^{(n)}\Vert\mathcal{M}^{(n)}%
)/n}{\widetilde{D}_{\alpha}(\mathcal{K}^{(m)}\Vert\mathcal{L}^{(m)})/m}%
\geq\frac{m}{n}\\
+\frac{2\alpha}{n\left(  1-\alpha\right)  \widetilde{D}_{\alpha}%
(\mathcal{K}^{(m)}\Vert\mathcal{L}^{(m)})/m}\log_{2}(1-\varepsilon).
\end{multline}
For $\alpha^{\prime}\in(0,1)$ and $\beta^{\prime}:=\beta^{\prime}%
(\alpha^{\prime}):=2-\alpha^{\prime}\in(1,2)$, the following bound holds%
\begin{multline}
\frac{D_{\beta^{\prime}}(\mathcal{N}^{(n)}\Vert\mathcal{M}^{(n)})/n}%
{D_{\alpha^{\prime}}(\mathcal{K}^{(m)}\Vert\mathcal{L}^{(m)})/m}\geq\frac
{m}{n}\\
+\frac{2}{n\left(  1-\alpha'\right)  D_{\alpha^{\prime}}(\mathcal{K}%
\Vert\mathcal{L}^{(m)})/m}\log_{2}(1-\varepsilon).
\end{multline}

\end{proposition}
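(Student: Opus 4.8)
The plan is to follow the same line of reasoning used to prove Proposition~\ref{prop:n-to-m-ch-box-trans-conv-bnd}, with the strategy data processing inequality (Theorem~\ref{thm:DP-strat-div}) and the pseudo-continuity bounds for quantum strategies (Propositions~\ref{prop:pseudo-continuity-channel-sand-seq} and~\ref{prop:pseudo-continuity-channel-petz-seq}) now playing the roles that data processing for superchannels and Propositions~\ref{prop:pseudo-continuity-channel-sand} and~\ref{prop:pseudo-continuity-channel-petz} played there. First I would apply Proposition~\ref{prop:n-shot-seq-box-trans-conv-bnd} to the given $(n,m,\varepsilon)$ strategy box transformation protocol $\Theta^{(n\rightarrow m)}$, obtaining
\begin{equation}
\widetilde{D}_{\beta}(\mathcal{N}^{(n)}\Vert\mathcal{M}^{(n)})\geq\widetilde{D}_{\alpha}(\mathcal{K}^{(m)}\Vert\mathcal{L}^{(m)})+\frac{\alpha}{1-\alpha}\log_{2}F(\Theta^{(n\rightarrow m)}(\mathcal{N}^{(n)}),\mathcal{K}^{(m)})
\end{equation}
in the sandwiched case, together with the analogous Petz-type inequality featuring the term $\frac{2}{1-\alpha'}\log_{2}[1-\tfrac{1}{2}\|\Theta^{(n\rightarrow m)}(\mathcal{N}^{(n)})-\mathcal{K}^{(m)}\|_{\diamond n}]$.

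Next I would invoke the Fuchs--van de Graaf-type relation between the strategy fidelity and strategy distance of~\cite{Gutoski2018fidelityofquantum},
\begin{equation}
\sqrt{F}(\Theta^{(n\rightarrow m)}(\mathcal{N}^{(n)}),\mathcal{K}^{(m)})\geq1-\tfrac{1}{2}\|\Theta^{(n\rightarrow m)}(\mathcal{N}^{(n)})-\mathcal{K}^{(m)}\|_{\diamond n}\geq1-\varepsilon,
\end{equation}
where the last inequality is the defining approximation condition $\Theta^{(n\rightarrow m)}(\mathcal{N}^{(n)})\approx_{\varepsilon}\mathcal{K}^{(m)}$; this relation itself follows from the state-level Fuchs--van de Graaf inequality once one observes that the strategy fidelity is an infimum over co-strategies whereas the strategy distance is a supremum over them. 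Substituting this bound turns the two displayed inequalities into
\begin{equation}
\widetilde{D}_{\beta}(\mathcal{N}^{(n)}\Vert\mathcal{M}^{(n)})\geq\widetilde{D}_{\alpha}(\mathcal{K}^{(m)}\Vert\mathcal{L}^{(m)})+\frac{2\alpha}{1-\alpha}\log_{2}(1-\varepsilon),
\end{equation}
and likewise $D_{\beta'}(\mathcal{N}^{(n)}\Vert\mathcal{M}^{(n)})\geq D_{\alpha'}(\mathcal{K}^{(m)}\Vert\mathcal{L}^{(m)})+\frac{2}{1-\alpha'}\log_{2}(1-\varepsilon)$.

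Finally I would divide both sides by $\widetilde{D}_{\alpha}(\mathcal{K}^{(m)}\Vert\mathcal{L}^{(m)})$ (resp.\ by $D_{\alpha'}(\mathcal{K}^{(m)}\Vert\mathcal{L}^{(m)})$) --- valid since these quantities are positive and $\log_{2}(1-\varepsilon)\leq0$, so the inequality direction is preserved --- and then multiply through by $m/n$; this rearranges exactly into the two claimed bounds. I do not expect any real obstacle here, since the proof is essentially a transcription of the channel argument; the only points needing a little care are confirming that the hypothesis $D_{\max}(\mathcal{N}^{(n)}\Vert\mathcal{M}^{(n)})<\infty$ in Propositions~\ref{prop:pseudo-continuity-channel-sand-seq} and~\ref{prop:pseudo-continuity-channel-petz-seq} is harmless --- if it fails, the left-hand sides are $+\infty$ and the bounds hold trivially --- and keeping straight the supremum-versus-infimum conventions when passing between the strategy distance and the strategy fidelity.
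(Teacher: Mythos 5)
Your proposal is correct and follows essentially the same route as the paper's proof: apply Proposition~\ref{prop:n-shot-seq-box-trans-conv-bnd} to the given transformation, lower bound the strategy fidelity via the Fuchs--van de Graaf-type relation $\sqrt{F}\geq 1-\tfrac{1}{2}\Vert\cdot\Vert_{\diamond}\geq 1-\varepsilon$ from \cite{Gutoski2018fidelityofquantum}, and treat the Petz case analogously with the strategy-distance term. The final division by $\widetilde{D}_{\alpha}(\mathcal{K}^{(m)}\Vert\mathcal{L}^{(m)})/m$ (resp.\ $D_{\alpha'}(\mathcal{K}^{(m)}\Vert\mathcal{L}^{(m)})/m$) is the same trivial rearrangement implicit in the paper's statement, and your remarks on the $D_{\max}<\infty$ hypothesis and positivity are harmless clarifications.
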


\begin{proof}
Applying Proposition~\ref{prop:n-shot-seq-box-trans-conv-bnd}, we conclude
that%
\begin{align}
&  \widetilde{D}_{\beta}(\mathcal{N}^{(n)}\Vert\mathcal{M}^{(n)})\nonumber\\
&  \geq\widetilde{D}_{\alpha}(\mathcal{K}^{(m)}\Vert\mathcal{L}^{(m)}%
)\nonumber\\
&  \qquad+\frac{\alpha}{1-\alpha}\log_{2}F(\Theta^{(n\rightarrow
m)}(\mathcal{N}^{(n)}),\mathcal{K}^{(m)})\\
&  =\widetilde{D}_{\alpha}(\mathcal{K}^{(m)}\Vert\mathcal{L}^{(m)})\nonumber\\
&  \qquad+\frac{2\alpha}{1-\alpha}\log_{2}\sqrt{F}(\Theta^{(n\rightarrow
m)}(\mathcal{N}^{(n)}),\mathcal{K}^{(m)})\\
&  \geq\widetilde{D}_{\alpha}(\mathcal{K}^{(m)}\Vert\mathcal{L}^{(m)}%
)+\frac{2\alpha}{1-\alpha}\log_{2}(1-\varepsilon).
\end{align}
The second inequality follows from the fact that
\cite{Gutoski2018fidelityofquantum}%
\begin{multline}
\sqrt{F}(\Theta^{(n)}(\mathcal{N}^{(n)}),\mathcal{K}^{(m)})\\
\geq1-\frac{1}{2}\left\Vert \Theta^{(n)}(\mathcal{N}^{(n)})-\mathcal{K}%
^{(m)}\right\Vert _{\diamond}.
\end{multline}
The other inequality follows from similar reasoning but instead using data
processing and \eqref{eq:n-shot-seq-converse-petz-gen-ch-box}.
\end{proof}

\begin{corollary}
\label{cor:seq-ch-box-conv}
Let $(\mathcal{N}^{(n)},\mathcal{M}^{(n)})$ and $(\mathcal{K}^{(m)}%
,\mathcal{L}^{(m)})$ be sequential channel boxes such that%
\begin{align}
\widetilde{D}_{\beta}(\mathcal{N}^{(n)}\Vert\mathcal{M}^{(n)})/n  &
=\widetilde{D}_{\beta}(\mathcal{N}\Vert\mathcal{M}%
),\label{eq:seq-additivity-sandwiched-Renyi}\\
\widetilde{D}_{\alpha}(\mathcal{K}^{(m)}\Vert\mathcal{L}^{(m)})/m  &
=\widetilde{D}_{\alpha}(\mathcal{K}\Vert\mathcal{L}),
\label{eq:seq-additivity-sandwiched-Renyi-2}%
\end{align}
for $n,m\geq1$, $\alpha\in(1/2,1)$, and $\beta:=\beta(\alpha)=\alpha/\left(
2\alpha-1\right)  >1$. Then the following bound applies to an
$(n,m,\varepsilon)$ general channel box transformation protocol:%
\begin{multline}
\frac{\widetilde{D}_{\beta}(\mathcal{N}\Vert\mathcal{M})}{\widetilde
{D}_{\alpha}(\mathcal{K}\Vert\mathcal{L})}\geq\frac{m}{n}%
\label{eq:bnd-seq-simplified-1}\\
+\frac{2\alpha}{n\left(  1-\alpha\right)  \widetilde{D}_{\alpha}%
(\mathcal{K}\Vert\mathcal{L})}\log_{2}(1-\varepsilon).
\end{multline}
Alternatively, suppose that $(\mathcal{N}^{(n)},\mathcal{M}^{(n)})$ and
$(\mathcal{K}^{(m)},\mathcal{L}^{(m)})$ satisfy%
\begin{align}
D_{\beta^{\prime}}(\mathcal{N}^{(n)}\Vert\mathcal{M}^{(n)})/n  &
=D_{\beta^{\prime}}(\mathcal{N}\Vert\mathcal{M}%
),\label{eq:seq-additivity-Petz-Renyi}\\
D_{\alpha^{\prime}}(\mathcal{K}^{(m)}\Vert\mathcal{L}^{(m)})/m  &
=D_{\alpha^{\prime}}(\mathcal{K}\Vert\mathcal{L}),
\label{eq:seq-additivity-Petz-Renyi-2}%
\end{align}
for $n,m\geq1$, $\alpha^{\prime}\in(0,1)$, and $\beta^{\prime}:=\beta^{\prime
}(\alpha^{\prime}):=2-\alpha^{\prime}\in(1,2)$. Then the following bound holds%
\begin{multline}
\frac{D_{\beta^{\prime}}(\mathcal{N}\Vert\mathcal{M})}{D_{\alpha^{\prime}%
}(\mathcal{K}\Vert\mathcal{L})}\geq\frac{m}{n} \label{eq:bnd-seq-simplified-2}%
\\
+\frac{2}{n\left(  1-\alpha'\right)  D_{\alpha^{\prime}}(\mathcal{K}%
\Vert\mathcal{L})}\log_{2}(1-\varepsilon).
\end{multline}

\end{corollary}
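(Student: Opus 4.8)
The plan is to obtain this corollary as an immediate specialization of Proposition~\ref{prop:n-to-m-seq-ch-box-trans-conv-bnd}. That proposition already establishes, for any $(n,m,\varepsilon)$ strategy box transformation protocol, a lower bound on the ratio of $\widetilde{D}_{\beta}(\mathcal{N}^{(n)}\Vert\mathcal{M}^{(n)})/n$ to $\widetilde{D}_{\alpha}(\mathcal{K}^{(m)}\Vert\mathcal{L}^{(m)})/m$ in terms of $m/n$ together with a correction involving $\log_{2}(1-\varepsilon)$, as well as the analogous statement for the Petz--R\'enyi quantities. Since a sequential channel box transformation protocol is by definition a special case of a strategy box transformation protocol---each strategy being a sequence of uses of the same channel---Proposition~\ref{prop:n-to-m-seq-ch-box-trans-conv-bnd} applies verbatim to the setting at hand.

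First I would invoke the sandwiched R\'enyi form of Proposition~\ref{prop:n-to-m-seq-ch-box-trans-conv-bnd} for the sequential channel boxes $(\mathcal{N}^{(n)},\mathcal{M}^{(n)})$ and $(\mathcal{K}^{(m)},\mathcal{L}^{(m)})$, and then substitute the additivity hypotheses \eqref{eq:seq-additivity-sandwiched-Renyi}--\eqref{eq:seq-additivity-sandwiched-Renyi-2} to replace $\widetilde{D}_{\beta}(\mathcal{N}^{(n)}\Vert\mathcal{M}^{(n)})/n$ by $\widetilde{D}_{\beta}(\mathcal{N}\Vert\mathcal{M})$ and $\widetilde{D}_{\alpha}(\mathcal{K}^{(m)}\Vert\mathcal{L}^{(m)})/m$ by $\widetilde{D}_{\alpha}(\mathcal{K}\Vert\mathcal{L})$; this yields \eqref{eq:bnd-seq-simplified-1} directly. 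The second step repeats the argument with the Petz--R\'enyi form of the proposition and the hypotheses \eqref{eq:seq-additivity-Petz-Renyi}--\eqref{eq:seq-additivity-Petz-Renyi-2}, producing \eqref{eq:bnd-seq-simplified-2}. Throughout, one keeps the pairing $\beta=\alpha/(2\alpha-1)$ in the sandwiched case and $\beta'=2-\alpha'$ in the Petz case, which is the same pairing carried by the proposition, so this is only routine bookkeeping.

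There is no real obstacle in the proof itself: it is essentially a one-line substitution into an already-proved inequality. The substantive point, which lies outside the corollary's statement, is that the assumed additivity relations are not vacuous; as noted in the remark following Proposition~\ref{prop:n-to-m-ch-box-trans-conv-bnd}, they hold for classical--quantum and environment-seizable sequential channel boxes by the results of \cite{BHKW18}, which is precisely what makes this corollary useful. In particular, combined with a matching achievability argument, it is what ultimately yields the single-letter conversion rate \eqref{eq:opt-ach-env-seiz-seq-2} for those channel boxes, as indicated in Section~\ref{sec:seq-case-gen-trans}.
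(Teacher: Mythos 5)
Your proof is correct and matches the paper's own argument: the paper likewise obtains Corollary~\ref{cor:seq-ch-box-conv} as a direct consequence of Proposition~\ref{prop:n-to-m-seq-ch-box-trans-conv-bnd} together with the assumed relations \eqref{eq:seq-additivity-sandwiched-Renyi}--\eqref{eq:seq-additivity-sandwiched-Renyi-2} and \eqref{eq:seq-additivity-Petz-Renyi}--\eqref{eq:seq-additivity-Petz-Renyi-2}. Your additional remarks on when the additivity hypotheses hold (classical--quantum and environment-seizable boxes, via the amortization collapse) are consistent with the paper's surrounding discussion but are not needed for the proof itself.
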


\begin{proof}
This is a direct consequence of
Proposition~\ref{prop:n-to-m-seq-ch-box-trans-conv-bnd} and the relations
assumed in
\eqref{eq:seq-additivity-sandwiched-Renyi}--\eqref{eq:seq-additivity-sandwiched-Renyi-2}
and \eqref{eq:seq-additivity-Petz-Renyi}--\eqref{eq:seq-additivity-Petz-Renyi-2}.
\end{proof}

\subsection{Sequential channel box transformations and amortized channel
divergence}

\label{app:seq-ch-trans-amortized-div}

In \cite{BHKW18}, the notion of amortized channel divergence of a channel box
$(\mathcal{N},\mathcal{M})$ was introduced as%
\begin{multline}
\mathbf{D}^{\mathcal{A}}(\mathcal{N}\Vert\mathcal{M}):=\\
\sup_{\rho_{RA},\sigma_{RA}}\mathbf{D}(\mathcal{N}_{A\rightarrow B}(\rho
_{RA})\Vert\mathcal{M}_{A\rightarrow B}(\sigma_{RA}))-\mathbf{D}(\rho
_{RA}\Vert\sigma_{RA}),
\end{multline}
where the optimization is with respect to input states $\rho_{RA}$ and
$\sigma_{RA}$, and the system $R$ has unbounded dimension. The intuition
behind this quantity is that it represents the largest net distinguishability
that can be generated by the channels $\mathcal{N}$ and $\mathcal{M}$ if we
are allowed to start with some distinguishability to begin with, in the form
of the state box $(\rho_{RA},\sigma_{RA})$.

Suppose now that we have a sequential channel box $(\mathcal{N}^{(n)}%
,\mathcal{M}^{(n)})$, where $\mathcal{N}^{(n)}$ consists of a sequence of $n$
uses of $\mathcal{N}$ and $\mathcal{M}^{(n)}$ consists of a sequence of $n$
uses of $\mathcal{M}$. As stated earlier, this sequential channel box is a
special kind of strategy box. Then by employing the same reasoning as in the
proof of \cite[Lemma~14]{BHKW18}, we conclude that the amortized channel
divergence is an upper bound on the normalized strategy divergence of
$(\mathcal{N}^{(n)},\mathcal{M}^{(n)})$:%
\begin{equation}
\frac{1}{n}\mathbf{D}(\mathcal{N}^{(n)}\Vert\mathcal{M}^{(n)})\leq
\mathbf{D}^{\mathcal{A}}(\mathcal{N}\Vert\mathcal{M}).
\label{eq:up-bnd-strat-div-amortized-div}%
\end{equation}

For some channel boxes and choices of divergences, the inequality in
\eqref{eq:up-bnd-strat-div-amortized-div}\ is saturated as a consequence of
the amortized channel divergence collapsing to the usual channel divergence
\cite{BHKW18}. This occurs for all classical--quantum or environment-seizable
channel boxes paired up with the Petz--R\'enyi relative entropy, the
sandwiched R\'enyi relative entropy, or the quantum relative entropy
\cite{BHKW18}. Thus, for these channels, the desired relations in
\eqref{eq:seq-additivity-sandwiched-Renyi}--\eqref{eq:seq-additivity-sandwiched-Renyi-2}
and
\eqref{eq:seq-additivity-Petz-Renyi}--\eqref{eq:seq-additivity-Petz-Renyi-2}\ hold,
so that the bounds in \eqref{eq:bnd-seq-simplified-1} and
\eqref{eq:bnd-seq-simplified-2} hold for these channels.

\end{document}